\documentclass{article}
\usepackage{amsmath}
\usepackage{amsfonts}
\usepackage{amssymb}
\usepackage{braket}
\usepackage[utf8]{inputenc}
\usepackage[T1]{fontenc}
\usepackage[colorlinks=true]{hyperref}
\hypersetup{%
  colorlinks = true,
  linkcolor  = black
}
\usepackage{graphicx}
\usepackage[dvipsnames]{xcolor}
\usepackage{authblk}
\usepackage{soul}
\usepackage{cite}

\usepackage{amsthm}
\usepackage{comment}

\usepackage{enumitem}

\newtheorem{lemma}{Lemma}
\newtheorem{proposition}{Proposition}
\newtheorem{corol}{Corollary}

\newtheorem{theorem}{Theorem}
\newtheorem{definition}{Definition}
\newtheorem{example}{Example}

\def\be{\begin{equation}}
\def\ee{\end{equation}}
\def\ba{\begin{eqnarray}}
\def\ea{\end{eqnarray}}

\newcommand\q{\quad}
%

%
\def\Nl{{\mathchoice
{\setbox0=\hbox{$\displaystyle\rm N$}\hbox{\hbox to0pt
{\kern0.4\wd0\vrule height0.9\ht0\hss}\box0}}
{\setbox0=\hbox{$\textstyle\rm N$}\hbox{\hbox to0pt
{\kern0.4\wd0\vrule height0.9\ht0\hss}\box0}}
{\setbox0=\hbox{$\scriptstyle\rm N$}\hbox{\hbox to0pt
{\kern0.4\wd0\vrule height0.9\ht0\hss}\box0}}
{\setbox0=\hbox{$\scriptscriptstyle\rm N$}\hbox{\hbox to0pt
{\kern0.4\wd0\vrule height0.9\ht0\hss}\box0}}}}
%
\def\Zl{{\mathchoice
{\setbox0=\hbox{$\displaystyle\rm Z$}\hbox{\hbox to0pt
{\kern0.4\wd0\vrule height0.9\ht0\hss}\box0}}
{\setbox0=\hbox{$\textstyle\rm Z$}\hbox{\hbox to0pt
{\kern0.4\wd0\vrule height0.9\ht0\hss}\box0}}
{\setbox0=\hbox{$\scriptstyle\rm Z$}\hbox{\hbox to0pt
{\kern0.4\wd0\vrule height0.9\ht0\hss}\box0}}
{\setbox0=\hbox{$\scriptscriptstyle\rm Z$}\hbox{\hbox to0pt
{\kern0.4\wd0\vrule height0.9\ht0\hss}\box0}}}}
%
\def\Ql{{\mathchoice
{\setbox0=\hbox{$\displaystyle\rm Q$}\hbox{\hbox to0pt
{\kern0.4\wd0\vrule height0.9\ht0\hss}\box0}}
{\setbox0=\hbox{$\textstyle\rm Q$}\hbox{\hbox to0pt
{\kern0.4\wd0\vrule height0.9\ht0\hss}\box0}}
{\setbox0=\hbox{$\scriptstyle\rm Q$}\hbox{\hbox to0pt
{\kern0.4\wd0\vrule height0.9\ht0\hss}\box0}}
{\setbox0=\hbox{$\scriptscriptstyle\rm Q$}\hbox{\hbox to0pt
{\kern0.4\wd0\vrule height0.9\ht0\hss}\box0}}}}
%
\def\Rl{{\mathchoice
{\setbox0=\hbox{$\displaystyle\rm R$}\hbox{\hbox to0pt
{\kern0.4\wd0\vrule height0.9\ht0\hss}\box0}}
{\setbox0=\hbox{$\textstyle\rm R$}\hbox{\hbox to0pt
{\kern0.4\wd0\vrule height0.9\ht0\hss}\box0}}
{\setbox0=\hbox{$\scriptstyle\rm R$}\hbox{\hbox to0pt
{\kern0.4\wd0\vrule height0.9\ht0\hss}\box0}}
{\setbox0=\hbox{$\scriptscriptstyle\rm R$}\hbox{\hbox to0pt
{\kern0.4\wd0\vrule height0.9\ht0\hss}\box0}}}}
%
\def\Cl{{\mathchoice
{\setbox0=\hbox{$\displaystyle\rm C$}\hbox{\hbox to0pt
{\kern0.4\wd0\vrule height0.9\ht0\hss}\box0}}
{\setbox0=\hbox{$\textstyle\rm C$}\hbox{\hbox to0pt
{\kern0.4\wd0\vrule height0.9\ht0\hss}\box0}}
{\setbox0=\hbox{$\scriptstyle\rm C$}\hbox{\hbox to0pt
{\kern0.4\wd0\vrule height0.9\ht0\hss}\box0}}
{\setbox0=\hbox{$\scriptscriptstyle\rm C$}\hbox{\hbox to0pt
{\kern0.4\wd0\vrule height0.9\ht0\hss}\box0}}}}
%
\def\Hl{{\mathchoice
{\setbox0=\hbox{$\displaystyle\rm H$}\hbox{\hbox to0pt
{\kern0.4\wd0\vrule height0.9\ht0\hss}\box0}}
{\setbox0=\hbox{$\textstyle\rm H$}\hbox{\hbox to0pt
{\kern0.4\wd0\vrule height0.9\ht0\hss}\box0}}
{\setbox0=\hbox{$\scriptstyle\rm H$}\hbox{\hbox to0pt
{\kern0.4\wd0\vrule height0.9\ht0\hss}\box0}}
{\setbox0=\hbox{$\scriptscriptstyle\rm H$}\hbox{\hbox to0pt
{\kern0.4\wd0\vrule height0.9\ht0\hss}\box0}}}}
%
\def\Ol{{\mathchoice
{\setbox0=\hbox{$\displaystyle\rm O$}\hbox{\hbox to0pt
{\kern0.4\wd0\vrule height0.9\ht0\hss}\box0}}
{\setbox0=\hbox{$\textstyle\rm O$}\hbox{\hbox to0pt
{\kern0.4\wd0\vrule height0.9\ht0\hss}\box0}}
{\setbox0=\hbox{$\scriptstyle\rm O$}\hbox{\hbox to0pt
{\kern0.4\wd0\vrule height0.9\ht0\hss}\box0}}
{\setbox0=\hbox{$\scriptscriptstyle\rm O$}\hbox{\hbox to0pt
{\kern0.4\wd0\vrule height0.9\ht0\hss}\box0}}}}
%

\newcommand{\ca}{\mathcal A}

\newcommand{\cb}{\mathcal B}

\newcommand{\cg}{\mathcal G}
\newcommand{\ch}{\mathcal H}
\newcommand{\ci}{\mathcal I}

\newcommand{\cl}{\mathcal L}

\newcommand{\cn}{\mathcal N}

\newcommand{\calr}{\mathcal R}

\newcommand{\ct}{\mathcal T}

\newcommand{\cv}{\mathcal V}


\newcommand{\leon}[1]{{\color{Maroon}[Leon: #1]}}
\newcommand{\ip}[2]{\left\langle\,#1\,|\,#2\,\right\rangle} 




\def\nn{\nonumber}

\newcommand{\eqa}{\begin{eqnarray}}
\newcommand{\neqa}{\end{eqnarray}}


\def\f{\frac}

\usepackage{bm}
\usepackage{bbm}
\def\C{{\mathbbm C}}

\newcommand{\lalg}[1]{\mathfrak{#1}}  
\newcommand{\SU}{\mathrm{SU}}
\newcommand{\SO}{\mathrm{SO}}
\newcommand{\U}{\mathrm{U}}
\newcommand{\su}{\lalg{su}}
\renewcommand{\sl}{\lalg{sl}}


\def\q{{\quad}}

\newcommand{\ketbra}[2] {
	| #1 \rangle \! \langle #2 |}
\def\Hom{{\rm Hom}}
\def\dim{{\rm dim}}

\def\I{{\mathbf 1}}
\def\A{{\sf A}}
\def\B{{\sf B}}
\def\C{{\sf C}}
\def\D{{\sf D}}
\def\spann{{\rm span}}
\def\phys{{\rm phys}}
\def\kin{{\rm kin}}
\def\sl2c{{\mathfrak{sl}_2 \Cl}} 
\def\su2{{\mathfrak{su}_2}}

\definecolor{darkgreen}{rgb}{0.0, 0.5, 0.13}

\usepackage{geometry}
\geometry{margin=2.5cm}

\begin{document}

\title{Perspective-neutral approach to quantum frame covariance for general symmetry groups}

\author[1,2]{Anne-Catherine de la Hamette\thanks{\texttt{annecatherine.delahamette@univie.ac.at}}}
\author[3]{Thomas D. Galley\thanks{\texttt{tgalley1@perimeterinstitute.ca}}}
\author[4,5]{Philipp A.\ H\"ohn\thanks{\texttt{philipp.hoehn@oist.jp}}}
\author[6]{Leon Loveridge\thanks{\texttt{leon.d.loveridge@usn.no}}}
\author[1,2,3]{Markus P.\ M\"uller\thanks{\texttt{markus.mueller@oeaw.ac.at}}}
\affil[1]{\small Vienna Center for Quantum Science and Technology (VCQ), Faculty of Physics, University of Vienna, Boltzmanngasse 5, 1090 Vienna, Austria}
\affil[2]{\small Institute for Quantum Optics and Quantum Information, Austrian Academy of Sciences,\newline Boltzmanngasse 3, 1090 Vienna, Austria}
\affil[3]{\small Perimeter Institute for Theoretical Physics, 31 Caroline Street North, Waterloo ON N2L 2Y5, Canada}
\affil[4]{\small Okinawa Institute of Science and Technology Graduate University, Onna, Okinawa 904 0495, Japan}
\affil[5]{\small Department of Physics and Astronomy, University College London, London, United Kingdom}
\affil[6]{\small Quantum Technology Group, Department of Science and Industry Systems, University of South-Eastern Norway, 3616 Kongsberg, Norway}

\date{October 26, 2021}

\maketitle

\begin{abstract}
In the absence of external relata, internal quantum reference frames (QRFs) appear widely in the literature on quantum gravity, gauge theories and quantum foundations. Here, we extend the perspective-neutral approach to QRF covariance to general unimodular Lie groups. This is a framework that links internal QRF perspectives via a manifestly gauge-invariant Hilbert space in the form of ``quantum coordinate transformations'', and we clarify how it is a quantum extension of special covariance.
   We model the QRF orientations as coherent states which give rise to a covariant positive operator-valued measure, furnishing a consistent probability interpretation and encompassing non-ideal QRFs whose orientations are not perfectly distinguishable. We generalize the construction of relational observables, establish a variety of their algebraic properties and equip them with a transparent conditional probability interpretation. We import the distinction between gauge transformations and physical symmetries from gauge theories and identify the latter as QRF reorientations. 
The ``quantum coordinate maps'' into an internal QRF perspective are constructed via a conditioning on the QRF's orientation, generalizing the Page-Wootters formalism and a symmetry reduction procedure to general groups. We also find two types of QRF transformations: the gauge induced ``quantum coordinate transformations'' as passive changes of description, which we show are always unitary, and symmetry induced active changes of relational observables from one QRF to another. We then reveal novel physical effects: (i) QRFs whose orientations admit non-trivial isotropy groups can only resolve isotropy-group-invariant properties of other subsystems; (ii) when the QRF does \emph{not} admit symmetries, its internal perspective Hilbert space is not fixed and ``rotates'' through the kinematical subsystem Hilbert space as the QRF changes orientation. We also invoke the symmetries to extend a recent observation on the quantum relativity of subsystems to general groups, which explains the QRF dependence of entanglement and other physical properties. Finally, we compare with other approaches to QRF covariance and illustrate our findings in various examples. 
\end{abstract}

\newpage

\tableofcontents

\section{Introduction}

How does one describe a composite quantum system ``from the inside''? Such a description being tied to a choice of reference frame, this asks for the association of a reference frame with one of its subsystems: an \emph{internal} quantum reference frame. Since a composite system typically does not feature a distinguished choice of subsystem that may serve as a quantum reference frame, this leads to a multiple choice problem and a collection of quantum reference frames that may be in relative superposition. How, then, does one ``jump into the perspective'' of an internal subsystem to describe the remaining ones relative to it? In particular, given the multitude of quantum frame choices, how can one change internal perspective on the composite system by changing the quantum reference frame? And how does the description of physical properties of the composite system depend on the choice of internal quantum frame?

These questions are of relevance for composite quantum systems (incl.\ in field theory) in the presence of a symmetry principle whenever an idealized \emph{external} reference frame is either unavailable or possibly fictitious, and one thus has to resort to ``dynamical rods and clocks''. For example, questions of this kind arise in the foundations of quantum theory when invoking the idea that reference frames are in practice always associated with physical systems and should thereby ultimately be subject to the laws of quantum theory \cite{Aharonov:1967zza,Aharonov:1967zz,Aharonov:1984zz,Araki:1960zz,Yanase161,angeloPhysicsQuantumReference2011a,smithQuantumReferenceFrames2016,miyadera2016approximating,loveridge2017relativity,loveridgeSymmetryReferenceFrames2018a,loveridge2011measurement,busch2011position,Loveridge:2019phw,loveridge2020relational,Hoehn:2014vua}. They also arise in quantum gravity and cosmology \cite{DeWitt:1967yk,rovelliQuantumGravity2004,thiemannModernCanonicalQuantum2008,Tambornino:2011vg,Rovelli:1989jn,Rovelli:1990jm,Rovelli:1990ph,Rovelli:1990pi,dittrichPartialCompleteObservables2007,Dittrich:2005kc,Czech:2018kvg,Hardy:2018kbp,Hardy:2019cef,Donnelly:2015hta} and gauge theories \cite{Rovelli:2013fga,Rovelli:2020mpk,Donnelly:2015hta,Carrozza:2021sbk}, where the diffeomorphism/gauge symmetry forces one to resort to dynamical reference degrees of freedom to build gauge-invariant and thus physically meaningful quantities.\footnote{In a more operational context, quantum reference frames have also been extensively studied in 
quantum information theory \cite{Bartlett:2007zz,bartlett2009quantum,gour2008resource,gour2009measuring,Palmer:2013zza,smithCommunicatingSharedReference2019} and quantum thermodynamics \cite{aaberg2014catalytic,lostaglio2015description,lostaglio2015quantum,lostaglio2019coherence,marvian2019no,cwiklinski2015limitations,woods2019autonomous,woods2019resource}.  However, the problem of quantum frame covariance as studied here is concerned with a different class of foundational questions; see~\cite{Krumm:2020fws} for more discussion on the relation between these approaches.}

This has inspired a recent wave of efforts to establish a notion of quantum frame covariance of physical properties \cite{giacominiQuantumMechanicsCovariance2019,Bojowald:2010qw,Bojowald:2010xp,Hohn:2011us,Hoehn:2017gst,Vanrietvelde:2018pgb,Vanrietvelde:2018dit,hoehnHowSwitchRelational2018,Hoehn:2018whn,Hoehn:2019owq,Hoehn:2020epv,Hoehn:2021wet,Krumm:2020fws,Hoehn:2021flk,periodic,castro-ruizTimeReferenceFrames2019,Giacomini:2021gei,yang2020switching,giacominiRelativisticQuantumReference2019,streiter2021relativistic,hamette2020quantum,Ballesteros:2020lgl,Giacomini:2020ahk,Giacomini:2021aof,Barbado:2020snx,Mikusch:2021kro,Savi:2020qdl,Chataignier:2019kof,Chataignier:2020fys}.  In particular, this has led to the revelation of a quantum frame dependence of correlations and superpositions \cite{giacominiQuantumMechanicsCovariance2019,Vanrietvelde:2018pgb,hamette2020quantum}, the comparison of clock readings \cite{hoehnHowSwitchRelational2018,Hoehn:2020epv} (see also \cite{Bojowald:2010xp,Hohn:2011us}), temporal (non-)locality \cite{castro-ruizTimeReferenceFrames2019,Hoehn:2019owq}, certain resources \cite{Savi:2020qdl} and even singularity resolution in quantum cosmology \cite{Gielen:2020abd,Gielen:2021igw}. Much of this quantum frame dependence is explained by the observation that the very notion of subsystem is contingent on the choice of quantum frame, leading to the notion of a quantum relativity of subsystems \cite{Hoehn:2021wet}.\footnote{The dependence of singularity resolution on the choice of internal time in \cite{Gielen:2020abd,Gielen:2021igw} is not explained by a relativity of subsystems, but an internal-time-dependent choice of gauge-invariant Hilbert space.}

The focus of the present paper is the so-called \emph{perspective-neutral approach to quantum frame covariance} \cite{Hoehn:2017gst,Vanrietvelde:2018pgb,Vanrietvelde:2018dit,hoehnHowSwitchRelational2018,Hoehn:2018whn,Hoehn:2019owq,Hoehn:2020epv,Hoehn:2021wet,Krumm:2020fws,Hoehn:2021flk,periodic,castro-ruizTimeReferenceFrames2019,Giacomini:2021gei,yang2020switching} (see also \cite{Bojowald:2010qw,Bojowald:2010xp,Hohn:2011us} for its semiclassical precursor). What distinguishes this approach is that it not only works with the internal quantum reference frame perspectives, but also links them via a  perspective-neutral structure that encodes all these internal perspectives. The ensuing quantum reference frame changes thereby assume the form of ``quantum coordinate changes'', in close analogy to special relativity, as we elucidate shortly in Sec.~\ref{ssec_SR}. This is achieved by taking manifest gauge-invariance as the starting point, which renders the approach fully relational by construction and independent of external (or fictitious) reference structures. This makes it applicable to the foundations of quantum theory, gauge theories and gravity alike.

Our main achievement in this paper is to extend the perspective-neutral approach to quantum reference frames for arbitrary unimodular Lie groups (including finite ones). This approach had previously been developed only for Abelian groups \cite{Vanrietvelde:2018pgb,hoehnHowSwitchRelational2018,Hoehn:2018whn,Hoehn:2019owq,Hoehn:2020epv,Hoehn:2021wet,Krumm:2020fws,Hoehn:2021flk,periodic,castro-ruizTimeReferenceFrames2019,Giacomini:2021gei,yang2020switching} and the Euclidean group \cite{Vanrietvelde:2018dit}. Unimodular Lie groups are characterized by their Haar measure being both left- and right-invariant and encompass most finite-dimensional groups of physical interest, such as the Euclidean, Galilei or Poincar\'e spacetime groups, $\U(1)$ phase symmetry, temporal reparametrization invariance in finite dimensions, and most gauge groups. They do not, however, encompass the infinite-dimensional diffeomorphism group which, in its canonical incarnation, is not a proper Lie group \cite{Isham:1984sb,Isham:1984rz,thiemannModernCanonicalQuantum2008}; our results thus cannot be directly applied in full quantum gravity yet.

The only previous work on quantum frame covariance that encompasses general symmetry groups is \cite{hamette2020quantum}, which, however, uses a different formalism and does not discuss observables. While we recover the form of the quantum frame transformations for states derived in \cite{hamette2020quantum} in the form of quantum coordinate transformations, we shall see that the perspective-neutral approach extended here always yields unitary frame changes, while the approach of \cite{hamette2020quantum} only does so for ideal frames. The two approaches are thus not in general equivalent. Nevertheless, one can view the present work, in a certain sense, as connecting the framework of \cite{hamette2020quantum} with the perspective-neutral machinery.

Along the way, we establish a number of other results of which we provide a synopsis here.

\subsection{Summary of the results}

To formalize the perspective-neutral approach for general symmetry groups, we invoke group averaging procedures in the form of Refined Algebraic Quantization \cite{Giulini:1998kf,Giulini:1998rk,thiemannModernCanonicalQuantum2008}, which is intimately intertwined with Dirac's constraint quantization program of gauge systems \cite{diracLecturesQuantumMechanics1964,Henneaux:1992ig}, to construct the gauge-invariant (and perspective-neutral) states and observables. In contrast to most previous work on the perspective-neutral approach, we shall here directly work at the level of group representations, rather than at the algebraic level of their generators, which are the constraints of the theory.

\begin{itemize}
    \item \emph{We model the orientation states of quantum reference frames for general symmetry groups as coherent states that constitute a covariant positive operator-valued measure} (POVM) \cite{holevoProbabilisticStatisticalAspects1982,buschOperationalQuantumPhysics}, extending previous constructions \cite{loveridge2017relativity,loveridgeSymmetryReferenceFrames2018a,Loveridge:2019phw,Hoehn:2019owq,Hoehn:2020epv,periodic,Smith:2017pwx,Smith:2019imm}. In particular, this equips the frame orientation states with a consistent probabilistic interpretation and encompasses what we call \emph{ideal, complete} and \emph{incomplete reference frames}, where the latter admit non-trivial isotropy groups of the frame orientation states and all frames that are not ideal are characterized by orientation states that are not perfectly distinguishable.
    \item Relational observables play a key role in quantum gravity/cosmology and gauge systems more generally \cite{Rovelli:1989jn,Rovelli:1990jm,Rovelli:1990ph,Rovelli:1990pi,rovelliQuantumGravity2004,dittrichPartialCompleteObservables2007,Dittrich:2005kc,Dittrich:2006ee,Dittrich:2007jx,thiemannModernCanonicalQuantum2008,Tambornino:2011vg,Carrozza:2021sbk}, however, their quantization is notoriously difficult. Here, we extend the quantum construction procedure of relational observables developed in \cite{Hoehn:2019owq,Hoehn:2020epv,Hoehn:2021wet,Krumm:2020fws,Hoehn:2021flk,periodic,Chataignier:2019kof,Chataignier:2020fys} to general unimodular groups. This also generalizes the construction of \cite{Bartlett:2007zz} to non-ideal frames, non-compact groups and arbitrary orientations and of \cite{loveridgeSymmetryReferenceFrames2018a,miyadera2016approximating,loveridge2017relativity} to arbitrary frame orientations. \emph{Relational observables take the form of frame-orientation-conditional gauge transformations of the operator intended to be described relative to the frame}; in the language of gauge theories, they constitute a form of frame-dressed observables. This takes the form of a particular incoherent group averaging procedure (the so-called $G$-twirl). We establish a variety of their algebraic properties and, invoking the covariant frame POVM and generalizing the Page-Wootters formalism (see below), also \emph{equip these relational observables with a clear conditional probability interpretation}. This extends the construction of \cite{Hoehn:2019owq,Hoehn:2020epv} (see also \cite{Chataignier:2020fys}) to general groups. 
    \item Each relational observable family, describing some other subsystem relative to the frame for all its possible orientations, defines an orbit in the algebra. The dimension of this orbit depends on the choice of non-invariant subsystem observable and the isotropy group of the reference frame.
    We note that \emph{relational observables relative to a fixed choice of quantum reference frame stratify the algebra of gauge-invariant observables} and different frame choices thus lead to different stratifications.
    \item We draw connections with gauge theories and import their distinction between (I) \emph{gauge transformations} and (II) \emph{symmetries}  \cite{Donnelly:2016auv,Geiller:2019bti,Carrozza:2021sbk} into the setting of quantum reference frames. This distinction is particularly important in the context of so-called edge modes, which are degrees of freedom that appear on finite boundaries and account for the fact that such boundaries break \emph{a priori} gauge invariance \cite{Donnelly:2016auv,Donnelly:2011hn,Donnelly:2014fua,Donnelly:2014gva,Geiller:2019bti,Carrozza:2021sbk,Freidel:2020xyx,Wieland:2021vef,Wieland:2017cmf,Riello:2021lfl}. Gauge transformations and symmetries commute as they act on ``opposite sides'' of these edge modes.
    Recently, edge modes have been identified in classical gauge theories as dynamical reference frames for the local gauge group in the same sense as the quantum reference frames here \cite{Carrozza:2021sbk}; specifically, upon quantization, edge modes are quantum field theory versions of quantum reference frames. They are external reference frames for a subregion delimited by the finite boundary and describe how that subregion relates to its complement. Importantly, symmetries (II) have been identified in \cite{Carrozza:2021sbk} as edge mode frame reorientations. \\
    Here, we show that one can similarly introduce symmetries as quantum frame reorientations that act from the right on the frame orientation states if the gauge transformations act from the left (or vice versa). They are important for relational observables, which are gauge- but not symmetry-invariant: as we show, the \emph{frame reorientations constitute precisely the transformations that translate along the relational observable orbits} in the algebra. We also exploit the notion of symmetries for arriving at other insights (see below).
    \item Quantum reference frames giving rise to both gauge transformations and symmetries require systems of coherent states that admit a group action from both the left and right. For compact groups, we fully characterize such frames.
    \item We generalize the ``trinity of relational quantum dynamics'', an equivalence of three \emph{a priori} different formulations of relational dynamics \cite{Hoehn:2019owq,Hoehn:2020epv,periodic} (see also \cite{Hoehn:2021flk}), to quantum reference frames for general unimodular Lie groups. This involves generalizing the Page-Wootters formalism \cite{pageEvolutionEvolutionDynamics1983,pageClockTimeEntropy1994,giovannettiQuantumTime2015,Smith:2017pwx,Smith:2019imm,Hoehn:2019owq,Hoehn:2020epv,Hoehn:2021wet,castro-ruizTimeReferenceFrames2019} and a quantum symmetry reduction procedure \cite{Vanrietvelde:2018pgb,Vanrietvelde:2018dit,hoehnHowSwitchRelational2018,Hoehn:2018whn,Hoehn:2019owq,Hoehn:2020epv,Hoehn:2021wet,Giacomini:2021gei}, both of which define ``quantum coordinate maps'' from the perspective-neutral Hilbert space into the internal frame perspectives, to general symmetry groups. \emph{These quantum coordinate maps amount to a particular gauge-fixing of the frame orientation and provide the answer to the opening question of ``how to jump into the perspective of an internal subsystem?''}. The result yields a relational ``Schr\"odinger'' and ``Heisenberg picture'' relative to the quantum frame, respectively. We put these names in quotation marks as they refer to whether states or observables transform unitarily under reorientations of the quantum frame, which only in the case of a temporal frame corresponds to time evolution. Since the quantum coordinate maps are invertible, these Schr\"odinger and Heisenberg descriptions relative to the internal frame perspectives are unitarily equivalent to the perspective-neutral picture of relational observables and gauge-invariant states, establishing a ``trinity of relational quantum physics''.
    \item When a non-trivial isotropy group arises for the frame orientations, we demonstrate that \emph{the quantum frame can only resolve those properties of the remaining subsystems that are also invariant under its isotropy group}. This holds both at the level of relational observables and in the internal frame perspective. Through gauge invariance, the quantum frame thereby enforces properties on the remaining subsystems that they may not have originally featured. This can be interpreted as a certain form of coarse-graining of the kinematical subsystem information.
    \item We reveal a striking novel physical effect for non-Abelian theories: \emph{when the quantum reference frame does \emph{not} admit symmetries (II), the reduced Hilbert space constituting the ``quantum coordinate description'' of the remaining subsystems is not fixed, but actually depends on the reference frame orientation}. As the frame changes orientation, the internal perspective Hilbert space ``rotates'' through the kinematical subsystem Hilbert space. The same applies to the reduced observable algebras, but the perspective-neutral Hilbert space and algebra remain fixed. This extends the notion of quantum subsystem relativity of \cite{Hoehn:2021wet}: not only is a gauge-invariant notion of subsystem frame-dependent, but it is even frame-orientation-dependent when the frame admits no symmetries.
    \item Since we have two types of group actions, (I) gauge transformations and (II) symmetries as frame reorientations, we also find \emph{two distinct types of quantum reference frame transformations}. Using the above mentioned ``quantum coordinate maps'', one can construct the ``quantum coordinate transformations'' that take one from one internal quantum frame perspective into another one. These amount to \emph{frame-relation-conditional gauge transformations} and, since the quantum coordinate maps are invertible, are always unitary, regardless of whether the frame is ideal or even incomplete. This has to do with the gauge-induced degeneracy in the description of perspective-neutral (physical) states in terms of kinematical data. This result differs from the quantum frame changes in \cite{hamette2020quantum}, which are only unitary for ideal frames. These quantum coordinate changes amount to \emph{passive changes of description} of the same invariant states and observables. All previously reported quantum reference frame transformations are of such passive form.\\
    We then also show how \emph{frame-relation-conditional symmetries} give rise to a new active type of quantum reference frame transformations. These are transformations at the perspective-neutral level that transform from the relational observables relative to one frame to those relative to another. That is, they map \emph{between} the different stratifications of the perspective-neutral observable algebra in terms of relational observables defined by the two frames. The super-operator implementing this transformation is a particular incoherent group averaging ($G$-twirl) over the symmetry (rather than gauge) group action.
    \item It has been shown in \cite{Hoehn:2021wet} that a gauge-invariant notion of subsystems is contingent on the choice of quantum frame. We \emph{extend this quantum relativity of subsystems to general symmetry groups} and arrive at this result in an arguably simpler manner than in \cite{Hoehn:2021wet}, invoking the newly introduced notion of symmetries (II) and specifically the quantum reference frame transformations in the form of frame-relation-conditional symmetries. Together with \cite{Hoehn:2021wet}, this provides a transparent explanation for the above mentioned quantum frame dependence of physical properties such as correlations and superposition.
    \item We elucidate why the perspective-neutral approach to quantum frame covariance, further developed here, is not in general equivalent to the purely perspective-dependent approaches \cite{giacominiQuantumMechanicsCovariance2019,giacominiRelativisticQuantumReference2019,streiter2021relativistic,hamette2020quantum,Ballesteros:2020lgl,Barbado:2020snx,Mikusch:2021kro}. The latter start with a \emph{fixed} subsystem Hilbert space in a given internal quantum frame perspective that is not in general compatible with gauge-invariance under the symmetry group. In contrast to here, this leads in general situations to non-unitary quantum reference frame transformations \cite{hamette2020quantum}.
    
\end{itemize}

Given the many tools and concepts that this paper borrows from gauge theories and the fact that it encompasses general (unimodular) gauge groups, it lays out a path towards the application of quantum reference frames in quantum gauge field theories, especially together with the classical observation that dynamical reference frames are incarnated, for example, as edge modes in the field theory context \cite{Carrozza:2021sbk}. Since edge modes only appear on the boundary of a finite subregion and not everywhere in its bulk, this would also constitute a somewhat more controlled step into the field theory context, where one does not have to quantize the frame field ``everywhere''.

Altogether, the perspective-neutral approach, together with its subsystem relativity, can also be seen as a manifestation of the view on gauge theories proposed in \cite{Rovelli:2013fga,Rovelli:2020mpk} (see also \cite{Gomes:2019xhu,Gomesgauge}): more than mathematical redundancy, the presence of gauge symmetry encodes the fact that physical quantities in our universe are relational. Gauge-dependent quantities, such as the frame orientations, while not predictable, admit a physical interpretation as the relata that compose to form gauge-invariant and predictable relational quantities; they provide the ``handles'' through which subsystems can couple in different ways. Indeed, in the perspective-neutral approach, we exploit the gauge induced redundancy to establish the linking structure of all internal (hence relative) perspectives in the first place and describe all the subsystems relative to one another in various orientations while maintaining manifest gauge-invariance. 
In particular, also the gauge-dependent frame orientations are instrumental in formulating physically meaningful and gauge-invariant questions, such as ``what is the value of this subsystem observable, given that the frame is in orientation such and such?'', as encoded in relational observables. In fact, the gauge induced redundancy is \emph{the} key to why the perspective-neutral approach works and why it yields unitary quantum reference frame transformations.

\subsection{Organization of the paper}
To motivate the general framework of the perspective-neutral approach, we begin by drawing an analogy with special relativity and its concept of covariance in Sec.~\ref{ssec_SR}. In Sec.~\ref{sec_QRF}, we explain what we mean by quantum reference frames for general symmetry groups, distinguishing between complete, ideal and incomplete frames. We show how to model their orientation states in terms of generalized coherent states, which also gives rise to a covariant positive operator-valued measure. This in turn equips the frame orientations with a consistent probabilistic interpretation. Next, in Sec.~\ref{sec_ph}, we establish the perspective-neutral Hilbert space, which is the physical Hilbert space of gauge-invariant states, before discussing gauge-invariant observables in Sec.~\ref{sec_diracobs} and specifically constructing relational observables in Sec.~\ref{ssec_relobs} via frame-orientation-conditional gauge transformations of the quantity of interest, similar to the special relativity example in Sec.~\ref{ssec_SR} below. We establish their key algebraic properties and show how a reference frame subject to a non-trivial frame orientation isotropy group can only resolve properties of the remaining degrees of freedom that are invariant under this isotropy group too. We also demonstrate a novel striking physical effect present only in non-Abelian theories: when the quantum frame does not admit a symmetry group action (in contrast to the special relativity example above), the physical system Hilbert space is not just dependent on the choice of reference frame, but also on the orientation state of that reference frame.

In Sec.~\ref{sec:phys_cond}, we show that the physical inner product on the physical Hilbert space can be re-written as a conditional inner product, a fact which will prove useful in extending the Page-Wootters formalism to general groups. In Sec.~\ref{ssec_LR} we consider the family of quantum reference frames which admit of both a symmetry and a gauge group $G$. We provide a full classification of all such quantum reference frames for compact $G$. In Sec.~\ref{ssec_symgauge} we further discuss the difference between gauge and symmetry, showing that for reference systems which admit both gauge and symmetry groups, the symmetry group is a generator of orbits of relational observables in the space of physical observables.

In Sec.~\ref{sec:gen_PW}, we extend the Page-Wootters formalism to general groups, providing a `relational Schr{\"o}dinger  picture' for general groups, where the states of the system (but not the observables) depend on the frame orientation. We continue this line of reasoning in Sec.~\ref{ssec_condprob} by providing a conditional probability interpretation for physical states. This allows us to show that relational observables between a frame and system ask the precise question: ``what is the probability that a certain quantity on the system takes a given value when the frame is in a given orientation?''. In Sec.~\ref{sec_Heisenberg} we adopt a complementary approach, invoking a symmetry reduction procedure that results in a unitarily equivalent `relational Heisenberg picture', where the observables of a system (rather than the states) are frame orientation dependent. The unitary equivalence of these two \emph{a priori} distinct reductions was previously only known for Abelian groups.

In Sec.~\ref{sec: QRF transformations as quantum coordinate changes}, we study the case of two (or more) possible quantum reference frames, deriving in Sec.~\ref{sec:QRF_change_S} changes of quantum reference frame for general groups as ``quantum coordinate transformations'' in the `relational Schr{\" o}dinger picture'. This recovers as a special case the known change of quantum reference frame for ideal frames. In Sec.~\ref{sec:QRF_change_H} we derive the corresponding change of quantum reference frame in the  `relational Heisenberg picture'. In Sec.~\ref{sec_relobschanges}, we explore transformations between relational observables relative to one frame and relational observables relative to a second frame. Here only, we restrict our attention to the regular representation and show that they are in this case related by physical symmetries  which are conditional on the relative orientations of the two frames. In Sec.~\ref{sec:subsys_rel} we extend the known relativity of subsystems to ideal reference frames for general groups and give a novel perspective on it in terms of symmetries. In Sec.~\ref{sec_examples}, we provide a number of examples of non-ideal quantum reference frames. We begin in Sec.~\ref{sec: U(1) example} with an example for the group $\U(1)$, showing how to explicitly determine the physical Hilbert space as well as the physical system Hilbert spaces, which provide an example for quantum subsystem relativity.  In Sec.~\ref{sec:j_1_coherent_states}, we introduce a system of coherent states with an  $\SU(2)$  gauge symmetry but not an $\SU(2)$ physical symmetry. This allows us in Sec.~\ref{sec: SU(2) example with 3 particles} and \ref{sec: SU(2) example with 4 particles} to give examples which, for the first time, explicitly exhibit reference frame orientation dependence of the physical system (internal perspective) Hilbert spaces.
In Sec.~\ref{sec:perspective_neutral_vs_dependent}, we compare the perspective-neutral approach of the present work to the purely perspective-dependent approach to quantum reference frames. Finally, we conclude in Sec.~\ref{sec:conclusion}, providing suggestions for future research directions. Certain technical details have been moved to various appendices.

\section{Analogy with special covariance: perspective-neutral structure in special relativity}\label{ssec_SR}

To understand the intuition behind the perspective-neutral approach, it is worthwhile to draw an analogy with special relativity and recall the concept of special covariance.\footnote{The same analogy can be drawn with general covariance, which, however, invokes the infinite-dimensional diffeomorphism group and thus necessitates field theory. Since the present paper focuses on quantum reference frames in quantum mechanics, rather than quantum field theory, the analogy with inertial observers and the finite-dimensional Lorentz group is here more appropriate for our purposes.} Colloquially, it asserts that all the laws of physics should be the same in every inertial reference frame. This leads to the laws of physics being formulated in terms of tensor equations. There are two (equivalent) ways in which one can think of tensors as (inertial frame) perspective-neutral structure that encodes and links all the internal frame perspectives. Let us now illustrate this in some detail, as this will help to understand the perspective-neutral approach in quantum theory later, where we will encounter entirely analogous structures and relations between them as here in special relativity. We begin with the way more familiar to physicists, where we identify only Lorentz-invariant quantities, i.e.\ Lorentz scalars, as inertial frame perspective-neutral objects that however can be expressed in different inertial coordinate systems (internal frame perspectives).\\~

\noindent\textbf{(A)} Suppose we are given a Lorentz observer $O$, whose orthonormal frame is specified by $e^\mu_A$, where $\mu=0,\ldots,3$ is the spacetime index and $A=0,\ldots,3$ labels the vectors setting up the tetrad with $e_0^\mu$ the timelike vector tangent to $O$'s worldline. Carrying two types of indices, the frame $e_A^\mu$ also carries \emph{two} types of representations of the Lorentz group: 
\begin{itemize}
    \item[(I)] $\Lambda^\mu{}_\nu$, acting on the spacetime index, is a passive transformation that acts on every spacetime tensor, incl.\ $O$'s frame vectors $e^\mu_A$, and changes the \emph{description} of it in terms of some inertial coordinates $x^\mu$; suggestively for later, let us call these ``gauge transformations''.
    \item[(II)] $\Lambda^A{}_B$, acting on the frame index, is an active transformation that acts only on $O$'s frame $e^\mu_A$, changing its \emph{orientation} (e.g.\ boosting or rotating it); suggestively for later, let us call these ``symmetries''.
\end{itemize}
It is clear that the two Lorentz group actions (I) and (II) \emph{commute} on the frame.
In fact, there is even a third representation of the Lorentz group since the reference frame $e_A^\mu$ is itself group-valued: $\Lambda^\mu{}_A:=e^\mu_A$ is also a Lorentz transformation, since $\eta_{AB}=\Lambda^\mu{}_A\Lambda^\nu{}_B\eta_{\mu\nu}$ is thanks to the frame orthonormality also of the form $\rm{diag}(-1,1,1,1)$. Hence, through (I) and (II) the Lorentz group acts regularly on itself. In the context of this article, such frames are referred to as \emph{ideal} reference frames and we shall see its quantum analog later (where we also consider non-ideal frames).

The energy-momentum tensor measured by $O$ is $T_{AB}=e^\mu_A T_{\mu\nu} e^{\nu}_B$, where $T_{\mu\nu}$ is the spacetime energy-momentum tensor; e.g.\ energy- and momentum-flux measured by $O$ are $E=e_0^\mu T_{\mu\nu}e_0^\nu$ and $p_i=e_0^\mu T_{\mu\nu}e^\nu_i$, $i=1,2,3$, respectively.
Now every component of $T_{AB}$ is invariant under Lorentz (gauge) transformations $\Lambda^\mu{}_\nu$ (I); as a spacetime scalar, every component $T_{AB}$ is a tensor and  perspective-neutral. Indeed, all inertial observers will agree that $O$ will measure the energy $E$ in her/his frame, and we have not yet invoked any coordinate system $x^\mu$ constituting the \emph{description} of the spacetime scalars $T_{AB}$ from the internal perspective of some inertial frame. $T_{AB}$ thus defines a set of observables encoding in a (gauge-)invariant\,---\,and hence internal frame perspective-neutral\,---\,manner the energy-momentum properties relative to $O$'s frame. Suggestively for later, let us call such contracted objects ``relational observables''. For later, we also note that, since $e^\mu_A$ is both the frame orientation \emph{and} a Lorentz group element, the relational observable $T_{AB}$ can be viewed as a frame-orientation-conditional Lorentz transformation of the non-invariant $T_{\mu\nu}$, or a frame dressed observable. (In the quantum theory, we will see that relational observables are quantum-frame-orientation-conditional gauge transformations too.)
These relational observables are, however, not invariant under symmetries (II): $T_{AB}$ transforms as a rank-two tensor under reorientations $\Lambda^A{}_B$ of $O$'s frame. This changes the physical situation; for example, a boosted $O$ will measure a distinct energy-flux than the unboosted $O$. 

Let us now consider the questions of reference frame changes and frame covariance. To this end, we introduce a second inertial observer $O'$, specified by the frame ${e'}_{A'}^\mu$. Since we have two types of Lorentz transformations, (I) gauge transformations and (II) symmetries, we will also have two types of frame transformations. We will begin with the active frame transformations, which proceed at the perspective-neutral level and change relational observables from one frame to another; these are induced by the symmetries (II). We restrict attention to the energy-momentum relational observables. The transformation from $T_{AB}$, the energy-momentum relational observables associated with the first observer $O$, to ${T'}_{A'B'}={e'}_{A'}^\mu T_{\mu\nu}{e'}_{B'}^\nu$, the corresponding relational observables associated with the second observer $O'$ reads
\ba 
T_{AB}=\Lambda_A{}^{A'}T'_{A'B'}\Lambda^{B'}{}_B=\Lambda_A{}^{A'}{e'}_{A'}^\mu T_{\mu\nu}{e'}_{B'}^\nu\Lambda^{B'}{}_B\,,\label{lorentz}
\ea 
where 
\ba 
\Lambda^{A'}{}_A:=e^\mu_Ae'^{A'}_\mu\,.\label{lorentz2}
\ea 
This is a \emph{relation-conditional reorientation} of $O'$'s frame $e'^\mu_{A'}$: $\Lambda^{A'}{}_A$ acts only on $O'$'s frame on the right hand side and it depends on the relative orientation of the two frames. Indeed, the Lorentz symmetries $\Lambda^{A'}{}_A$ are the relational observables describing $O$'s frame components relative to $O'$'s frame.  Notice that so far no choice of coordinate system has been invoked and everything is independent of inertial coordinates. Eq.~\eqref{lorentz} is an active change from one set of relational observables to another at the inertial frame perspective-neutral level.

Next, we explore reference frame transformations induced by the gauge transformations (I). These are passive transformations and will only change the \emph{coordinate descriptions} of relational observables, corresponding to internal frame perspectives. First, we note that ``jumping into the internal frame perspective'' of an inertial observer means choosing a coordinate system that amounts to a gauge-fixing of the Lorentz group action (I). The relational observables $T_{AB}$ can then be expressed in any such coordinate system and, while all inertial observers will agree on the numbers $T_{AB}$, say $E=e^\mu_0 T_{\mu\nu}e^\nu_0$, their \emph{description} in terms of spacetime tensor components $e^\mu_A,T_{\mu\nu}$ will be different. For example, the inertial coordinate system corresponding to $O$ is the gauge-fixing such that $e^\mu_A=\delta^\mu_A$. In this coordinate system, the invariant relational observable $T_{AB}=e^\mu_A T_{\mu\nu} e^\nu_B$ looks particularly simple: it coincides with the form of the non-invariant spacetime tensor $T_{\mu\nu}$. Let $\Lambda^{\mu'}{}_\mu$ denote the Lorentz (gauge-)transformation between $O'$'s coordinate system, in which now $e'^{\mu'}_{A'}=\delta^{\mu'}{}_{A'}$, and $O$'s original coordinate system. Then we can write the \emph{same} relational observables in $O'$'s perspective as 
\ba 
T_{AB}=e^\mu_A T_{\mu\nu} e^\nu_B=e^{\mu}_A\Lambda_\mu{}^{\mu'}T_{\mu'\nu'}\Lambda^{\nu'}{}_{\nu}e^\nu_B\,,
\ea 
where $T_{\mu'\nu'}$ is the energy-momentum tensor in $O'$'s coordinates. In particular,  $\Lambda^{\mu'}{}_\mu$ is generally not the identity matrix and so $T_{AB}$ will have a less simple form in $O'$'s coordinates, appearing as a non-trivial Lorentz transformation of the spacetime tensor $T_{\mu'\nu'}$.
Since $T_{\mu'\nu'}$ coincides as matrix with $T'_{A'B'}$, just like $T_{\mu\nu}$ coincides with $T_{AB}$, it is clear that as matrix $\Lambda^{\mu'}{}_\mu$ coincides with $\Lambda^{A'}{}_{A}$ given in Eq.~\eqref{lorentz2}. However, in contrast to the latter, $\Lambda^{\mu'}{}_\mu$ acts on the spacetime index and is thus a \emph{frame-relation-conditional} gauge transformation. It is the coordinate transformation from $O'$'s internal perspective into the one of $O$.

In conclusion, in special relativity, \emph{passive/active changes of inertial frame appear as relation-conditional Lorentz gauge/symmetry transformations}.\\~

\noindent\textbf{(B)} Finally, let us now discuss the alternative, somewhat more abstract way in which one can think about tensors as perspective-neutral structure and special covariance. Abstractly, tensors are multilinear maps from copies of the tangent (and cotangent) space into the real numbers. As such, they encode the physics as experienced in any local spacetime reference frame at once: an abstract tensor has to be contracted with a vector frame, in order to determine the numbers that a corresponding observer would find in a measurement of the quantities embodied by the tensor. In this sense, tensors abstractly constitute a description of the local physics \emph{before} a choice of reference frame has been made; they are reference frame perspective-neutral structures. 

Let us return to the example of the energy-momentum tensor $T$ and tetrad $e_A$, corresponding to $O$'s inertial frame. We invoke Penrose's abstract index notation in which $T_{\mu\nu}$ does now \emph{not} denote the component of $T$ in some arbitrary coordinate system, but merely the fact that it has to be contracted with two tangent vectors to produce a number, i.e.\ that (at event $x$ in Minkowski space) it is a map $T:T_x\mathbb{R}^{1,3}\times T_x\mathbb{R}^{1,3}\rightarrow\mathbb{R}$. Similarly, $e^\mu_A$ is not the component of $e_A$ in some coordinate system, but highlights that this object is a vector. As such, the ``index contraction'' $T_{AB}=e_A^\mu T_{\mu\nu}e^\nu_B$ means now nothing else than $T_{AB}=T(e_A,e_B)\in\mathbb{R}$. However, as before, these are the numbers that $O$ would measure when probing the energy-momentum tensor $T$. Accordingly, we can view the tetrad vectors $e^\mu_A$ as the coordinate maps from the abstract, perspective-neutral tensor description into the internal perspective of $O$. The inverse coordinate maps, mapping $O$'s internal perspective back into the perspective-neutral level are the inverse tetrads $e_\mu^A$, satisfying $e_\mu^Ae^\mu_B=\delta^A{}_B$ and so $T_{\mu\nu}=e_\mu^AT_{AB}e^B_\nu$, again understood in abstract notation.

Similarly, we can view a second observer $O'$'s tetrad $e'^\mu_{A'}$ as the coordinate map from the perspective-neutral description into $O'$'s internal perspective. It is then clear that the reference frame transformation between $O$'s and $O'$'s internal perspectives is again a \emph{relation-conditional} Lorentz transformation $\Lambda^{A'}{}_A=e_A^\mu e_\mu^{A'}$ of the form appearing in Eqs.~\eqref{lorentz} and~\eqref{lorentz2}, except that we now interpret all ingredients in terms of abstract index notation. Note that this Lorentz transformation indeed has the compositional form of a coordinate transformation, linking the internal perspectives of $O$ and $O'$ \emph{via} the perspective-neutral structure (assuming the linking role of the manifold in a coordinate transformation).
Indeed, since 
\ba 
T_{AB}=\Lambda_A{}^{A'}T_{A'B'}\Lambda^{B'}{}_B=e_A^\mu e_\mu^{A'}T_{A'B'}e^{B'}_\nu e^\nu_B= e_A^\mu T_{\mu\nu} e^\nu_B \,,
\ea 
the transformation first maps $T_{A'B'}$ back into the abstract, perspective-neutral object $T_{\mu\nu}$, then mapping it into $O$'s internal perspective using the corresponding coordinate map. This is a more abstract, but also arguably more immediate incarnation of special covariance.\\~

We will see completely analogous structures, maps and transformations as above in the quantum theory (including the fact that a relational observable appears simpler in the internal perspective of its associated frame, than in other internal frame perspectives). The quantum frame covariance established by the perspective-neutral approach can thus be viewed as a direct quantum analog of special covariance. (A field theory extension would establish a quantum version of general covariance.)

\section{What are quantum reference frames for general symmetry groups?}\label{sec_QRF}

Suppose we are given some (gauge) symmetry group $G$ that acts on a system of interest. 
In a nutshell, a quantum reference frame for $G$, or $G$-frame, is a subsystem which ideally is ``as non-invariant as possible'' under the action of $G$. The idea is to use its configurations as a \emph{dynamical coordinate system} to parametrize (or gauge fix) the orbits of $G$ in Hilbert space or the observable algebra; clearly, this requires the quantum frame degrees of freedom to be non-invariant in some manner. The goal is then to describe in a (gauge-) invariant way how the remaining degrees of freedom transform under $G$ relative to this dynamical frame. That is, the non-invariant reference degrees of freedom serve to construct invariant observables out of the remaining degrees of freedom.

Let us now be more precise. For definiteness, we shall henceforth assume that $G$ is an arbitrary \emph{unimodular Lie group}, i.e.\ a Lie group whose left- and right-invariant Haar measures coincide. Our construction will encompass both compact and non-compact groups. The reason for restricting to unimodular groups will become clear in Sec.~\ref{sec_ph} below; essentially, non-unimodular groups will complicate the construction of the physical Hilbert space of gauge-invariant states. However, most groups of physical interest are unimodular; in particular, all Abelian, compact~\cite{Freitag} or semi-simple groups are unimodular~\cite[Sec.\ IV]{Glasner}, as are the Lorentz~\cite{Schroeck} and Galilei groups~\cite{Cassinelli}. Unimodularity is thus not a severe restriction for most physical purposes.
While much of our construction below would also  work for unimodular topological groups, we shall focus on Lie groups as their generators have physical significance as observables of the theory. For example, in gauge theories these generators constitute the constraints or charges of the theory, depending on whether $G$ acts as a gauge or symmetry group.

Despite the fact that the generators often have physical significance, our construction does not rely on any assumptions of connectedness of the group. In particular, it also applies to \emph{finite groups} which are zero-dimensional Lie groups. The counting measure is both a left and right invariant Haar measure making the group unimodular. Hence by suitable replacement of $\int_G dg$ by $\sum_{g \in G}$ below, one can apply the results of this paper to discrete groups (for quantum reference frames for finite groups, see also \cite{hamette2020quantum,Krumm:2020fws,Hoehn:2021flk}).

In much of the sequel, we shall consider a quantum reference frame $R$ together with a system $S$ of interest to be described relative to $R$. The total kinematical Hilbert space, which can also be interpreted as the Hilbert space of $R$ and $S$ described relative to some (possibly fictitious) \emph{external} frame, then takes the product form $\ch_{\rm kin}=\ch_R\otimes\ch_S$. We shall assume that $\ch_{\rm kin}$ carries a unitary tensor product representation of the unimodular Lie group $G$, i.e.\ for all $g\in G$ and $\ket{\psi_{\rm kin}}$ we have $U_{RS}(g)\,\ket{\psi_{\rm kin}}=U_R(g)\otimes U_S(g)\,\ket{\psi_{\rm kin}}$, where $g\mapsto U_R(g)$ and $g\mapsto U_S(g)$ are strongly continuous unitary representations. In gauge theory terminology, the globally acting representation $U_{RS}$ constitutes the action  of the gauge group; later, we will also encounter representations of the same group $G$ that only act on the frame and which are called symmetries. 
We shall not restrict the unitary representation $U_S$ on $\ch_S$ further (e.g., it may be reducible). Even though symmetries in quantum theory are generally implemented via \emph{projective representations}, this setup does not restrict generality: every projective representation of $G$ comes from a unitary representation of a central extension $G'$ of $G$~\cite{Schottenloher}. Thus, we may restrict our attention to unitary representations without loss of generality.

On the reference frame Hilbert space $\ch_R$, on the other hand, we shall impose a few further assumptions. Firstly, while the representation $U_R$ of $G$ need not be irreducible, we shall require the existence of a system of coherent states, denoted $\{U_R,\ket{\phi(g})_R\}$, such that $G$ acts transitively on the coherent state system. That is, for all $g,g'\in G$ we have $\ket{\phi(g'g)}_R= U_R(g')\ket{\phi(g)}_R$. In particular, we have $\ket{\phi(g)}_R=U_R(g)\ket{\phi(e)}$, i.e.\ the coherent state system is the orbit of $\ket{\phi(e)}$ under the representation, where $e$ denotes the identity in $G$. Note that this defines an equivariant map $\tilde E:G\rightarrow\ch_R$, $\tilde E(g)=\ket{\phi(g)}_R$, as it satisfies $\tilde E(gg')=U_{R}(g)\,\ket{\phi(g')}_R$. The ``seed'' state $\ket{\phi(e)}_R$ of the coherent state system is some fixed vector in $\ch_R$ such that the equivariance condition and transitivity hold true (we will impose further conditions below); specifically, it is convention which of the states in a coherent state set one calls $\ket{\phi(e)}_R$. This definition of generalized coherent states largely follows the one given in \cite{Perelomov}.

We shall interpret the coherent states $\ket{\phi(g)}_R$ as the different possible orientation states of the $G$-frame and thus often call $g$ the quantum frame orientation.
We emphasize that the coherent states will generally not be orthogonal, that is $\braket{\phi(g)|\phi(g')}\nsim\delta(g,g')$, where $\delta(g,g')$ is the delta distribution on the unimodular $G$ with respect to its (left and right) Haar measure $dg$. The frame orientations will thus generally not be perfectly distinguishable.

Nevertheless, the frame orientations can be used to completely parametrize the $G$ orbits in Hilbert space or the observable algebra, provided $G$ acts not only transitively on the coherent state system, but also freely, i.e.\ the isotropy group for each frame orientation is trivial. In this case, $U_R$ defines a regular action on the coherent state system, so that the latter constitutes a principle homogeneous space for $G$ on which the equivariant map $\tilde E$ above is invertible. Frames $R$ with such properties have been called \emph{complete reference frames} in \cite{Bartlett:2007zz}.

Note that a regular action of $G$ does \emph{not} mean a regular representation, which is the special case when $\ch_R\simeq L^2(G,dg)$,\footnote{The left regular representation is defined in $L^2(G)$ by $(U_R(g)f)(h) = f(g^{-1}h)$, with the right regular representation $(U_R(g)f)(h) = f(hg)$. We will often refer to the carrier space $L^2(G)$ as the regular representation, with the defining action left implicit.} i.e.,\ when the group provides the configuration manifold on which $R$ is quantized. In this case, the group $G$ acts regularly on itself and we have orthogonal, i.e.\ perfectly distinguishable frame orientation states $\braket{g|g'}=\delta(g,g')$. For this reason, this special class of complete frames is  called \emph{ideal reference frames}. 

By contrast, when $G$ acts transitively, but not freely on the coherent state system, i.e.\ when frame orientation states admit a non-trivial isotropy group, the equivariant map $\tilde E$ is not invertible on the coherent state system and the frame orientations cannot provide a complete parametrization of $G$ orbits. Such frames $R$ will henceforth be called \emph{incomplete reference frames}, as they cannot resolve the orbits. Our construction below encompasses complete and incomplete quantum reference frames. One could also define \emph{overcomplete reference frames} as those on whose orientation states $G$ acts freely, but not transitively, however, we shall not consider them further in this manuscript. 

In order for the frame orientation states to admit a consistent probabilistic interpretation, we assume that the representation $U_R(g)$ and the seed coherent state $\ket{\phi(e)}$ are chosen such that they give rise to a resolution of the identity
\ba
\int_G dg\,\ket{\phi(g)}\!\bra{\phi(g)}_R=n\,\mathbf{1}_R \,,\label{resid}
\ea
where $n>0$ is a constant. This will also permit us to expand arbitrary states of $R$ in the coherent states. Since the Haar measure $dg$ is only defined up to a multiplicative constant, we can (and will)  absorb the normalization factor into $dg$ such that $n=1$. In contrast to a different convention that is often found in the literature, here we will thus in general have $\int_G dg\neq 1$ for compact groups $G$.

Owing to this normalization, the frame orientation states give rise to a \emph{covariant positive operator-valued measure} (POVM)~\cite{holevoProbabilisticStatisticalAspects1982,buschOperationalQuantumPhysics} furnishing the probabilistic interpretation. Indeed, for every $Y\subset\cb(G)$, where $\cb(G)$ denotes the Borel sets on the group $G$, we can define the positive effect operators 
\ba 
E_Y:=\int _Y dg\,\ket{\phi(g)}\!\bra{\phi(g)}_R\geq0 \, ,\label{covPOVM}
\ea 
which are normalized, i.e.\ $E_G=\I_R$, and satisfy the covariance property~\cite{holevoProbabilisticStatisticalAspects1982,buschOperationalQuantumPhysics}
\ba 
E_{h.Y}=U_R(h) E_Y U_R(h)^\dagger\,.
\ea 
To see the latter property,
write
 $U_R(h) E_Y U_R(h)^\dagger = \int_Y dg\,\ket{\phi(hg)}\!\bra{\phi(hg)}_R = \int_G dg \, \chi_Y (g) \ket{\phi(hg)}\!\bra{\phi(hg)}_R$, where $\chi_Y$ is the characteristic function of the subset $Y$. Then set $hg = g'$ and use the left invariance of $dg$ to write the last integral as $\int_G dg'\, \chi_Y(h^{-1}g') \ket{\phi(g') }\!\bra{\phi(g')}_R = \int_{hY} dg'\,\ket{\phi(g')}\!\bra{\phi(g')}_R = E_{h.Y}$. Covariant POVMs have been previously used to model quantum reference frames for the translation group or $\rm{U}(1)$, e.g.\ to model quantum clocks~\cite{Brunetti:2009eq,Smith:2017pwx,Smith:2019imm,Hoehn:2019owq,Hoehn:2020epv,Hoehn:2021wet,periodic}; here we generalize this to quantum frames for general unimodular groups. They were also introduced in, e.g., \cite{loveridgeSymmetryReferenceFrames2018a}, working for general locally compact groups in order to ``relativise'' non-invariant self-adjoint operators/POVMs.

Assumption~\eqref{resid} imposes non-trivial constraints on the representation and the coherent states. For the special case of compact groups $G$, these constraints can be described as follows. The proof is given in Appendix~\ref{SecProofEx1}.
\begin{example}
\label{Ex1Compact}
Suppose that $G$ is compact and that $\ch$ is finite-dimensional. Then we can decompose the representation and Hilbert space as follows
\be
   U_R(g)=\bigoplus_q U_R^{(q)}(g)\otimes\mathbf{1}^{(q)}\,,\quad
   \mathcal{H}=\bigoplus_q \mathcal{M}^{(q)}\otimes \mathcal{N}^{(q)}\,,
\ee
where the $U_R^{(q)}$ are mutually inequivalent irreducible representations.
Then Assumption~\eqref{resid} is satisfied if and only if
\begin{equation}
   \dim\,\mathcal{M}^{(q)}\geq\dim\,\mathcal{N}^{(q)}\quad\mbox{for all }q\,,
   \label{eqNecessaryId}
\end{equation}
and if the seed coherent state $\ket{\phi(e)}$ has the form
\begin{equation}
   \ket{\phi(e)}=\frac 1{\sqrt{\dim\,\ch}}\cdot\sum_q \sqrt{\dim\,\mathcal{M}^{(q)}}\sum_{j=1}^{\dim\,\mathcal{N}^{(q)}} \ket{j}_{\mathcal{M}}\otimes \ket{j}_{\mathcal{N}}\,,
   \label{eqNecessaryId2}
\end{equation}
where $\{\ket{j}_{\mathcal{M}}\}$ is an arbitrary orthonormal system of $\mathcal{M}^{(q)}$ (similarly $\{\ket{j}_{\mathcal{N}}\}$).
\end{example}
In the special case where the representation is irreducible, there is only a single $q$ and $\dim\,\mathcal{N}^{(q)}=1$. Thus, we always obtain a resolution of the identity, and every state can be chosen as $\ket{\phi(e)}$.

Assumption~\eqref{resid} also imposes certain conditions on the isotropy subgroup of non-compact groups $G$:
\begin{lemma}
\label{lemCompact}
 Convergence of the integral in~\eqref{resid} implies that the \textbf{isotropy subgroup}~\cite{Perelomov} of the coherent state system must be \textbf{compact}. This group $H$ is defined as
\be
   H=\{g\in G\,\,|\,\,|\phi(g)\rangle\langle\phi(g)|=|\phi(e)\rangle\langle\phi(e)|\}\,.
\ee
Consider the quotient space $X:=G/H$. We will see that the group averaging integral can be split into two integrals: one over $H$, and another one over $X$. To this end, note that every coset $x\in X$ can be written $x=g_x H$ for some suitable $g_x\in G$ (in fact, every $g_x\in x$ will work). Every such coset has the property that $\ket{\phi(g)}\!\bra{\phi(g)}=\ket{\phi(g')}\!\bra{\phi(g')}$ for all $g,g'\in x$, hence we can define $\ket{\phi(x)}:=\ket{\phi(g_x)}$, and $\ket{\phi(x)}\bra{\phi(x)}$ does not depend on the choice of $g_x$. Since we always assume that $G$ is unimodular, the existence of a left-invariant measure on the coset space\footnote{While the coset space $X$ is generally not a group, by a left-invariant measure on $X$ we mean a measure $d_Xx$ such that $d_Xgx=d_Xx$ for all $g\in G$.} $X:=G/H$ is equivalent to the unimodularity of the subgroup $H$~\cite{Nachbin}. Thus, let us \textbf{once and for all assume that $H$ is unimodular}. Then we can informally write
\be
\int_G dg\ket{\phi(g)}\!\bra{\phi(g)}=\int_X\left(\int_H \ket{\phi(g_x h)}\!\bra{\phi(g_x h)}d_H h\right)d_X x={\rm Vol}(H)\int_X \ket{\phi(x)}\!\bra{\phi(x)} d_X x\,,
\ee
where $d_H$ and $d_X$ are corresponding suitably normalized invariant measures on $H$ and on $X$, respectively, and ${\rm Vol}(H):=\int_H d_H h$
(this manipulation is indeed rigorously correct for the integration of continuous functions with compact support; here we assume that we can do the same for integrals over operators). For this to be finite, $H$ must have finite volume, i.e.\ be compact. If so, our assumption~\eqref{resid} becomes (via $dx:=d_X x$)
\be
   \int_X dx\, \ket{\phi(x)}\!\bra{\phi(x)}=n\cdot\mathbf{1}_R\,,
\ee
and we will choose the invariant measure $dx$ on $X$ such that the normalization constant is $n=1$.
\end{lemma}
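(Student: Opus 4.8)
The plan is to reduce the operator identity \eqref{resid} to a scalar finiteness statement, apply the quotient (Weil) integration formula to split the Haar integral over $G$ into an integral over $H$ times one over $X=G/H$, and read off that the Haar volume of $H$ must be finite, whence $H$ is compact.

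First I would record the elementary structural facts. Since $g\mapsto U_R(g)$ is strongly continuous, the map $g\mapsto\ket{\phi(g)}\!\bra{\phi(g)}_R$ is continuous, so $H=\{g\in G\mid\ket{\phi(g)}\!\bra{\phi(g)}=\ket{\phi(e)}\!\bra{\phi(e)}\}$ is the preimage of a closed singleton and hence a \emph{closed} subgroup of $G$; by Cartan's theorem it is itself a Lie (in particular locally compact) group. The transitivity relation $\ket{\phi(g'g)}_R=U_R(g')\ket{\phi(g)}_R$ shows that $g\in H$ forces $\ket{\phi(g)}=e^{i\theta(g)}\ket{\phi(e)}$, and that $\ket{\phi(g)}\!\bra{\phi(g)}$ depends only on the left coset $gH$; thus it descends to a well-defined operator $\ket{\phi(x)}\!\bra{\phi(x)}$ on $X:=G/H$, as already anticipated in the statement.

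The crux is the scalar reduction. Taking the matrix element of \eqref{resid} in the seed state gives the \emph{finite} number
\begin{equation}
\int_G dg\,\bigl|\braket{\phi(e)|\phi(g)}\bigr|^2 = n\,\braket{\phi(e)|\phi(e)} < \infty .
\end{equation}
The integrand $f(g):=|\braket{\phi(e)|\phi(g)}|^2$ is continuous, nonnegative, right-$H$-invariant (since $f(gh)=f(g)$ for $h\in H$), and strictly positive at $e$, where $f(e)=\braket{\phi(e)|\phi(e)}^2>0$. Because $G$ is unimodular and $H$ is assumed unimodular, $X=G/H$ carries a $G$-invariant measure $d_Xx$ and the quotient integration formula applies, yielding
\begin{equation}
\infty > \int_G f\,dg = \int_X\Bigl(\int_H f(g_x h)\,d_Hh\Bigr)d_Xx = {\rm Vol}(H)\int_X \bar f(x)\,d_Xx ,
\end{equation}
where $\bar f$ is the descent of $f$ to $X$. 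Since $\bar f\geq0$ is continuous and positive near the base point, $\int_X\bar f\,d_Xx>0$, so finiteness of the left-hand side forces ${\rm Vol}(H)<\infty$. A locally compact group with finite (left) Haar measure is compact, hence $H$ is compact. Feeding this back into \eqref{resid} at the operator level reproduces the claimed identity ${\rm Vol}(H)\int_X\ket{\phi(x)}\!\bra{\phi(x)}d_Xx=n\,\mathbf 1_R$, and rescaling $d_Xx$ absorbs ${\rm Vol}(H)$ to set $n=1$.

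The main obstacle is rigor at the operator level: the manipulation turning $\int_G$ into $\int_X\int_H$ of operator-valued integrands is only justified heuristically, so I would do the genuine work on the scalar function $f$ (for which Weil's formula is standard for continuous functions) and then transport the conclusion back to the operator identity. A secondary point to nail down is that $\int_X\bar f\,d_Xx$ is strictly positive, which one needs before ``dividing'' by it, and which follows from continuity and positivity of $f$ near $e$. If one wishes to avoid invoking the a priori unimodularity of $H$, I would replace the Weil step by a direct argument: choose a symmetric compact neighborhood $W\ni e$ with $f\geq c>0$ on $W$; if $H$ were non-compact one can inductively select $h_1,h_2,\ldots\in H$ with the right-translates $W'h_i$ pairwise disjoint (taking $W'$ symmetric with $W'W'\subseteq W$, using that $\bigcup_{i\le k}Wh_i\cap H$ is compact while $H$ is not), and since $G$ is unimodular each translate has the same positive Haar measure $v>0$, forcing $\int_G f\geq c\sum_i v=\infty$, a contradiction.
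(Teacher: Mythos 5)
Your proof is correct, and its core mechanism is the same as the paper's: split the Haar integral over $G$ into an integral over the isotropy group $H$ and one over the coset space $X=G/H$ (using unimodularity of $G$ and $H$ to get the invariant measure on $X$), and observe that finiteness of the group-averaged resolution of the identity forces ${\rm Vol}(H)<\infty$, hence $H$ compact. Where you genuinely differ is in execution, and your version buys rigor that the paper explicitly forgoes. The paper performs the $\int_G=\int_X\int_H$ decomposition directly on the \emph{operator-valued} integrand $\ket{\phi(g)}\!\bra{\phi(g)}$ and openly concedes that this manipulation is only known to be rigorous for continuous compactly supported scalar functions, assuming it extends to operators. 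You instead reduce to a scalar statement first, by pairing Eq.~\eqref{resid} with the seed state to get $\int_G dg\,|\braket{\phi(e)|\phi(g)}|^2<\infty$, apply Weil's quotient formula to the continuous, nonnegative, right-$H$-invariant function $f(g)=|\braket{\phi(e)|\phi(g)}|^2$ (where it is standard), and use positivity of $\int_X\bar f$ to extract ${\rm Vol}(H)<\infty$. Your closing packing argument is a further genuine departure: it establishes compactness of $H$ with no appeal to Weil's formula and, notably, without assuming $H$ unimodular -- an assumption the paper must impose up front to even write down the coset decomposition. That argument is sound as written: right-$H$-invariance of $f$ guarantees $f\geq c$ on each translate $W'h_i$, unimodularity of $G$ gives each translate equal positive measure, and disjointness follows from $h_{k+1}\notin\bigcup_{i\le k}Wh_i$ with $W'W'\subseteq W$.

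One caveat: your scalar reduction presumes $\braket{\phi(e)|\phi(e)}<\infty$, i.e.\ that the seed is a genuine normalizable vector. For distributional coherent states (as in the paper's Example~\ref{ex_2}) this pairing is ill-defined. The fix is cheap: pair instead with an arbitrary unit vector $\psi$ in the test space with $\braket{\psi|\phi(e)}\neq0$; since $\ket{\phi(gh)}=e^{i\theta(h)}\ket{\phi(g)}$ for $h\in H$, the function $g\mapsto|\braket{\psi|\phi(g)}|^2$ is still continuous, right-$H$-invariant, and positive at $e$, and the rest of your argument goes through verbatim.
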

Compactness of the isotropy group $H$ excludes some important cases, like certain \emph{periodic clocks}, where $G=(\mathbb{R},+)$ is the group of time translations, and $H=(\mathbb{Z},+)$ \cite{periodic}. On the other hand, it includes interesting cases of coherent state systems with non-trivial isotropy group such as the Weyl-Heisenberg group, as we will see in Example~\ref{ExWeylHeisenberg}.

In the general case, our coherent states $\ket{\phi(g)}$ will not be vectors in the Hilbert space $\ch_R$, but they will be distributions in the sense of rigged Hilbert spaces (for the general theory, see \cite{gel2016generalized,schwartz1950theorie}). This is the case in the following example (where $\ket{x}$ is an antilinear continuous functional on the Schwartz space $\mathcal{S} \subset L^2(\mathbb{R})$). We elaborate more on this example in Appendix~\ref{ss:gr}.
\begin{example}\label{ex_2}
Let $G=(\mathbb{R},+)$ be the translation group in one dimension, acting on $\ch_R=L^2(\mathbb{R})$, via $T_a\psi(x)=\psi(x-a)$. Then a system of coherent states is given by the position eigenstates $\ket{x}$ such that $\ket{\psi}=\int_{\mathbb{R}} \psi(x)\ket{x}\,dx$: we have the resolution of the identity $\mathbf{1}_R=\int_{\mathbb{R}} \ket{x}\!\bra{x}\, dx$.
\end{example}

We noted above that the coherent states modelling the frame orientation states are generally not perfectly distinguishable, in contrast to Example~\ref{ex_2}. A famous case in which this happens is described in the following example.

\begin{example}[The Weyl-Heisenberg group, part 1 of 2]
\label{ExWeylHeisenberg}
A single particle in one dimension is described by a wave function $\psi\in L^2(\mathbb{R})$, with suitably defined position and momentum operators
\be
   (\hat x \psi)(x)=x\cdot\psi(x)\,,\qquad (\hat P \psi)(x)=-i\frac { d} {{ d} x} \psi(x)\,.
\ee
These operators also generate translations of position or momentum: $\exp(-i x_0 \hat P)\psi(x)=\psi(x-x_0)$, and $\exp(i p_0 \hat x)\hat\psi(p)=\hat \psi(p-p_0)$, where $\hat \psi$ is the Fourier transform of $\psi$. However, translations of position and momentum do not commute. Specifically, for $\alpha\in\mathbb{C}$, define $x,p\in\mathbb{R}$ via $\alpha=(x+ip)/\sqrt{2}$, and then define the \emph{displacement operators}~\cite{Perelomov}
\be
   D(\alpha)=\exp(-ix\hat p - i p \hat x) \,.
\ee
These operators satisfy
\be
   D(\alpha)D(\beta)=e^{i {\rm Im}(\alpha\bar\beta)} D(\alpha+\beta) \,,
\ee
which motivates the definition of a group --- the Weyl-Heisenberg group $W_1$. Its group elements $g(t;\alpha)$ are characterized by a real number $t\in\mathbb{R}$ and a complex number $\alpha\in\mathbb{C}$. Writing $\alpha=\alpha_1+i\alpha_2$ with $\alpha_j\in\mathbb{R}$, the multiplication rule is
\be
   (s;\alpha)\cdot(t;\beta)=(s+t-\alpha_1\beta_2+\alpha_2\beta_1;\alpha+\beta)\,,
\ee
which arises from the identification $(s;\alpha)\leftrightarrow e^{is}D(\alpha)$. In this sense, the Weyl-Heisenberg group describes the interplay of position and momentum translations for a single Bosonic mode.\\
$\strut\qquad$The unitary representations of $W_1$ are characterized by a real number $\lambda$ such that \cite{Perelomov}
\be
   T^\lambda(s;0)=\exp(i\lambda s)\mathbf{\hat 1}\,.
\ee
For $\lambda=0$, there are many different irreducible representations, and they are all one-dimensional. These are irrelevant for our purpose. For every fixed $\lambda\neq 0$ (to which we focus our attention in the following), any two irreducible representations of $W_1$ are unitarily equivalent. One can directly check that, up to unitary equivalence,
\begin{eqnarray}
T^\lambda(0;\alpha)&=&D(\sqrt{\lambda}\,\alpha)\qquad (\lambda>0)\,,\nn\\
T^\lambda(0;\alpha)&=&D(\sqrt{|\lambda|}\,\bar\alpha)\qquad(\lambda<0)\,,
\end{eqnarray}
and necessarily $T^\lambda(s;\alpha)=T^\lambda(s;0)T^\lambda(0;\alpha)$. Now, since the Lebesgue measure on $\mathbb{R}\times\mathbb{C}\simeq\mathbb{R}^3$ is both left and right invariant, $W_1$ is a unimodular group. But suppose that we would like to define a coherent state system on our reference system $R$, as described in Lemma~\ref{lemCompact}. Then the isotropy group necessarily contains the subgroup $\{(t;0)\,\,|\,\, t\in\mathbb{R}\}\subset W_1$, which is not compact. Thus, at first sight, it seems as if the construction of this paper cannot be applied to $W_1$ due to the insight of Lemma~\ref{lemCompact}.\\
$\strut\qquad$However, this problem can be circumvented by considering, for instance, the representation $W'_1:=T^1(W_1)$ as a group in its own right. It is a group with elements of the form $(\Theta;\alpha)$, with $\Theta\in\mathbb{C}$ such that $|\Theta|=1$, and $\alpha\in\mathbb{C}$ arbitrary. The multiplication rule is
\be
   (\Theta;\alpha)\cdot(\Theta';\beta)=(\Theta\Theta' \exp(i(-\alpha_1\beta_2+\alpha_2\beta_1))\,;\alpha+\beta)\,.
\ee
The isotropy subgroup of $W'_1$ is now isomorphic to ${\rm U}(1)$, i.e.\ compact, and the methods of this paper can be applied to it. Specifically, we have $G=W'_1$, $H=\{(\Theta;0)\}$, and $X\equiv G/H=\mathbb{C}$.

It is well-known~\cite{Perelomov} that the following construction gives a coherent state system. Define $\ket{\alpha}:=D(\alpha)\ket{0}$, where $\ket{0}$ is the vacuum vector. Then
\be
   \mathbf{1}=\int_{\mathbb{C}}d\alpha\, \ket{\alpha}\!\bra{\alpha}\,,
\ee
where $d\alpha$ is the suitably normalized Lebesgue measure on $\mathbb{C}$. Thus, our approach covers the paradigmatic case of coherent states for the harmonic oscillator as a special case.

We will continue our discussion of the Weyl-Heisenberg group in Example~\ref{ExWH2} below.
\end{example}

\section{Perspective-neutral Hilbert space and relational observables}\label{sec_3}

In this section, we will  develop the necessary Hilbert space and algebraic structures underlying the perspective-neutral approach to quantum frame covariance. Exploiting these structures, we shall then explain how to `jump into an internal quantum reference frame perspective' and how to change quantum reference frame in the subsequent sections, before illustrating our findings in a few examples.

\subsection{Physical Hilbert space and inner product}\label{sec_ph}

We begin with the construction of the so-called physical Hilbert space of gauge-invariant states, i.e.\ the Hilbert space invariant under the gauge group action $U_{RS}$. This is the arena of the perspective-neutral approach \cite{Vanrietvelde:2018pgb,Vanrietvelde:2018dit,hoehnHowSwitchRelational2018,Hoehn:2018whn,Hoehn:2019owq,Hoehn:2020epv,castro-ruizTimeReferenceFrames2019,Giacomini:2021gei,Hoehn:2021wet,Hoehn:2021flk}. For unimodular groups $G$, the idea is essentially to decompose its representation $U_{RS}$ on $\ch_{\rm kin}$ into a direct sum (when $G$ is compact) or direct integral (when $G$ is non-compact) of irreducible representations and to then select the component carrying the trivial representation as a set out of which to construct the physical Hilbert space $\ch_{\rm phys}$. When $G$ is non-compact and one is dealing with a direct integral decomposition of $\ch_{\rm kin}$, states in the trivial representation will be distributions so that $\ch_{\rm phys}$ will \emph{not} be a linear subspace of $\ch_{\rm kin}$. For non-unimodular groups this strategy for constructing $\ch_{\rm phys}$ does not quite work and one has to proceed somewhat differently \cite{Giulini:1998kf}; specifically, one cannot find states invariant under the unitary action of $U_{RS}$ and instead has to construct a non-unitary representation from it which does admit invariant states. While interesting for gravity \cite{Giulini:1998kf,thiemannModernCanonicalQuantum2008}, incorporating this case would render our exposition less accessible, which is why we focus on unimodular groups as these encompass most cases of physical interest.

Before we delve into this, let us provide some intuition. Why should one proceed to the trivial representation and $\ch_{\rm phys}$? When one is dealing with a gauge system this is the space solving the constraints of the theory and the only physically meaningful one \cite{diracLecturesQuantumMechanics1964,Henneaux:1992ig}. More generally, interpreting the kinematical Hilbert space $\ch_{\rm kin}$ as the space of states distinguishable relative to an external frame (which is fictitious for gauge systems), the aim is to construct a Hilbert space of states that are distinguishable relative to an \emph{internal} and dynamical reference frame and without access to an external reference structure. This space can thus only contain states that carry purely relational information and this is the physical Hilbert space. As we shall see shortly, it is obtained via a \emph{coherent} group averaging procedure which removes external reference information. While there exist other external-frame-independent structures that are commonly used in quantum information theory \cite{Bartlett:2007zz}, e.g.\ \emph{incoherently} group-averaged states (which are necessarily mixed) or the multiplicity spaces $\cn^{(q)}$ in Example~\ref{Ex1Compact}, it was shown in \cite{Krumm:2020fws,Hoehn:2021flk} that the additional requirement of an internal description of $S$ relative to $R$ requires the physical Hilbert space. In \cite{Hoehn:2021flk} it was further demonstrated for finite groups that $\ch_{\rm phys}$ admits the information-theoretic characterization of being the maximal subspace of states that can be purified in an external-frame-independent manner. In contrast to incoherently group-averaged states, $\ch_{\rm phys}$ also admits an unambiguous \emph{invariant} partial trace\,---\,the relational trace \cite{Krumm:2020fws}.\footnote{The relational trace can be used to resolve \cite{Krumm:2020fws} the so-called `paradox of the third particle' \cite{angeloPhysicsQuantumReference2011a}.}

Key for our construction is that the (gauge) symmetry defined by $U_{RS}$ induces a \emph{redundancy in the description of the physical Hilbert space} $\ch_{\rm phys}$ in terms of the kinematical data on $\ch_{\rm kin}$ (physical states are equivalence classes of kinematical states). There are thus many different ways in which to describe the same invariant physical state and these correspond essentially to different gauges. The idea underlying the perspective-neutral approach \cite{Vanrietvelde:2018pgb,Vanrietvelde:2018dit,hoehnHowSwitchRelational2018,Hoehn:2018whn,Hoehn:2019owq,Hoehn:2020epv,castro-ruizTimeReferenceFrames2019,Giacomini:2021gei,Hoehn:2021wet,Hoehn:2021flk} is to associate these different descriptions with different internal quantum reference frame perspectives on the same physical situation (in our present setup, the system $S$ could contain other subsystems that may serve as quantum frames). In this sense, the physical Hilbert space encodes and links all internal frame perspectives, which is why it is also called the \emph{perspective-neutral Hilbert space}. This is the quantum analog of the tensorial structure in special relativity, which, as discussed in Sec.~\ref{ssec_SR}, is perspective-neutral in the same sense. We shall elaborate on this in Sec.~\ref{sec: QRF transformations as quantum coordinate changes} when discussing quantum reference frame changes. As we shall see, the redundancy will also be key for these transformations to be unitary even when dealing with non-ideal reference frames, i.e.\ when the frame orientation states are not perfectly distinguishable. In particular, this does not hold when applying the quantum frame transformations to kinematical states \cite{hamette2020quantum}. 

Let us now outline how to construct the physical Hilbert space. We will invoke the method of Refined Algebraic Quantization (RAQ) in the form of group averaging which has been developed for general Lie groups in some detail \cite{Giulini:1998kf,Giulini:1998rk,thiemannModernCanonicalQuantum2008} and applies to our case. Since the purpose of this manuscript is to explain what to do with $\ch_{\rm phys}$ in the context of quantum reference frames, rather than how precisely to construct it (and this is done elsewhere in the literature), we shall be somewhat informal in our exposition in the main body, employing a physicist's rather than more precise functional analytic notation. We shall just assume that the following group averaging procedures exist and refer to the above literature for a more rigorous account for when this is the case. We shall also illustrate the following construction in some examples and discuss, in Appendix~\ref{app_groupaverage}, what it might take in practice for a rigorous formulation. This construction is deeply intertwined with the program of constraint quantization of gauge systems going back to Dirac \cite{diracLecturesQuantumMechanics1964,Henneaux:1992ig}.

In a nutshell, we are looking for states $\ket{\psi_{\rm phys}}$ such that 
\ba 
U_{RS}(g)\,\ket{\psi_{\rm phys}}=\ket{\psi_{\rm phys}}\,,\q\q\q\forall\,g\in G\,.\label{physicalstatedef}
\ea
If the $U_{RS}(g)$ do not have a joint eigenvector of eigenvalue $\lambda=1$ (in particular, if at least one $U_{RS}(g)$ does not contain $\lambda=1$ in its discrete spectrum), then $\ket{\psi_{\rm phys}}$ is an improper eigenstate and will not be contained in $\ch_{\rm kin}$. Inspired by the theory of rigged Hilbert spaces, RAQ \cite{Giulini:1998kf,Giulini:1998rk,thiemannModernCanonicalQuantum2008} then proceeds to choose some dense linear subspace $\Phi\subset\ch_{\rm kin}$ which is left invariant by the unitary action of $U_{RS}$. (If the $U_{RS}(g)$ have a joint eigenvector with eigenvalue $\lambda=1$, as is typically the case with compact $G$, one may also have $\Phi=\ch_{\rm kin}$.) One then seeks solutions to Eq.~\eqref{physicalstatedef} in the algebraic dual $\Phi^*$ of $\Phi$, i.e.\ the set of \emph{all} (not just continuous) complex linear functionals on $\Phi$. Indeed, $\Phi^*$ admits a dual action of $U_{RS}$ defined by $\left(U_{RS}(g)\xi\right)[\psi]=\xi\big[U_{RS}(g^{-1})\psi\big]$ for $\xi\in\Phi^*$ and arbitrary $\psi\in\Phi$. In this case,  Eq.~\eqref{physicalstatedef} should be understood as $U_{RS}(g)\psi_{\rm phys}=\psi_{\rm phys}$ with $\psi_{\rm phys}\in\Phi^*$. Since $\Phi^*$ contains $\ch_{\rm kin}$, one has the triple  $\Phi\hookrightarrow\ch_{\rm kin}\hookrightarrow\Phi^*$ and solution to Eq.~\eqref{physicalstatedef} should be thought of as distributions on $\Phi$. For example, for the translation group $U(x)=e^{ix\hat p}$ on $L^2(\mathbb{R})$, the solution to Eq.~\eqref{physicalstatedef} is the zero-momentum eigenstate $\ket{p=0}$ which is a distribution.

RAQ then aims for an (anti-linear) `rigging map' $\eta:\Phi\rightarrow\Phi^*$ such that:
\begin{itemize}
    \item[(i)] The image of $\eta$ solves Eq.~\eqref{physicalstatedef}, i.e.\ $\eta(\phi)[U_{RS}(g)\psi]=\eta(\phi)[\psi]$, for all $\phi,\psi\in\Phi$ and $g\in G$.
    \item[(ii)] It gives rise to a well-defined inner product~(\ref{etaPIP}) on its image $\rm{Im}(\eta)\subset\Phi^*$. This requires reality $\eta(\phi)[\psi]=\overline{\eta(\psi)[\phi]}$, where the overline denotes complex conjugation, and positivity $\eta[\phi](\phi)\geq0$ for all $\phi,\psi\in\Phi$.
    \item[(iii)] $\eta$ commutes with invariant observables $O$ (that commute with $U_{RS}$ on $\Phi$) in the sense that $O(\eta(\phi))=\eta(O\phi)$, $\phi\in\Phi$.
\end{itemize}
$\ch_{\rm phys}$ is then the Cauchy completion of the image of $\eta$ under the physical inner product given by the rigging map itself (cf.\ (ii))
\ba 
\braket{\eta(\phi)|\eta(\psi)}_{\rm phys}:=\eta(\phi)[\psi]\,.\label{etaPIP}
\ea

How does one construct the rigging map $\eta$? Let us now argue informally and adopt a physicist's notation. One way to proceed is by defining the \emph{coherent group averaging} `projector' 
\ba
\Pi_{\rm phys}:=\int_G dg\,U_{RS}(g)\,\label{proj}
\ea
and setting \cite{Giulini:1998kf}
\ba 
\eta\ket{\psi_{\rm kin}}:=\bra{\psi_{\rm kin}}\Pi_{\rm phys}\,.\label{etagroup}
\ea 
If $G$ is unimodular,  we can informally argue that
\ba 
   \Pi_{\rm phys}^\dagger = \int_G dg\, U_{RS}(g^{-1})=\int_G dg\, U_{RS}(g)=\Pi_{\rm phys}\label{piphyssym},
\ea
thanks to $dg=d(g^{-1})$, where $\dag$ denotes the adjoint with respect to $\ch_{\rm kin}$. This enforces the reality condition for the resulting map $\eta$:
\[
   \eta(\phi)[\psi]=\langle\psi|\Pi_{\rm phys}|\psi\rangle=\overline{\langle\psi|\Pi_{\rm phys}^\dagger|\phi\rangle}=\overline{\eta(\psi)[\phi]}.
\]
For non-unimodular groups this informal calculation fails, i.e.\ $\Pi_{\rm phys}^\dag\neq\Pi_{\rm phys}$,\footnote{If $G$ is non-unimodular, we have $\Pi_{\rm phys}^\dagger = \int_G dg\, U_{RS}(g^{-1})=\int_G d g\, U_{RS}(g) \Delta_G (g^{-1})$, where $\Delta_G$ is the modular function that is only identically one on all of $G$ for unimodular groups.} which is related to the above comment that in this case one should look for physical states that are invariant under a \emph{non-unitary} representation of the group \cite{Giulini:1998kf}. While one can incorporate non-unimodular groups consistently in RAQ \cite{Giulini:1998kf}, this is the reason why henceforth we restrict to unimodular groups $G$, the symmetry of $\Pi_{\rm phys}$ simplifying the ensuing discussion.

For simplicity, we shall from now on slightly abuse notation,  writing
\ba
\ket{\psi_{\rm phys}}:=\Pi_{\rm phys}\,\ket{\psi_{\rm kin}}\,,
\ea
rather than the more precise Eq.~\eqref{etagroup}, minding that physical states may be distributions. Owing to the symmetry of $\Pi_{\rm phys}$, this will not cause problems. In the sequel, we shall just assume that this group averaging procedure converges and yields a non-trivial image, and discuss in Appendix~\ref{app_groupaverage} in more detail what this means in practice for $G=\mathbb{R}$.
Informally, $\Pi_{\rm phys}$ maps kinematical states to physical states because the image transforms trivially under $G$:
\ba
U_{RS}(g)\,\ket{\psi_{\rm phys}} = \int_G dg'\,U_{RS}(g g')\,\ket{\psi_{\rm kin}}=\int_G d(g g')\,U_{RS}(g g')\,\ket{\psi_{\rm kin}} = \ket{\psi_{\rm phys}}\,.\label{stateinv}
\ea
We have written `projector' in quotation marks above because $\Pi_{\rm phys}$ cannot literally be an idempotent linear map, unless $\Phi=\ch_{\rm kin}$ as in the case of compact groups: it is a map $\Pi_{\rm phys}:\Phi\to\Phi^*$, and so the expression $\Pi_{\rm phys}^2$ does not make sense if $\Phi\subsetneq\mathcal{H}_{\rm kin}$. Let us illustrate this with some informal calculations. We can view $\Pi_{\rm phys}$ as an improper projector when $G$ is non-compact. This means that its square does not satisfy the projector property $\Pi_{\rm phys}^2\neq\Pi_{\rm phys}$, but leads to a divergence because of integrating a constant over a non-compact gauge orbit:
\ba
\Pi_{\rm phys}^2\,\ket{\psi_{\rm kin}} = \int_G dg\,\ket{\psi_{\rm phys}} = \rm{Vol}(G)\,\ket{\psi_{\rm phys}}\,,\label{improj}
\ea
where $\rm{Vol}(G)=\int_Gdg$ is the volume of $G$. Owing to the last equation, we should normalize $\Pi_{\rm phys}$ in Eq.~\eqref{proj} by a pre-factor $1/\sqrt{\rm{Vol}(G)}$ in the case that $G$ is compact in order for $\Pi_{\rm phys}$ to satisfy the projector property. However, since we would like to cover both the compact and non-compact case at once, we refrain from doing so. 

We will instead ensure normalization of the resulting states $\ket{\psi_{\rm phys}}$ by introducing the physical inner product on the image of $\Pi_{\rm phys}$ along the lines of Eq.~\eqref{etaPIP}. Indeed, Eq.~\eqref{improj} is also the reason why physical states are \emph{not} elements of $\ch_{\rm kin}$ when $G$ is non-compact. Indeed, physical states are then not normalizable in the kinematical inner product because
\ba
\braket{\psi_{\rm phys}|\phi_{\rm phys}}_{\rm kin} = \braket{\psi_{\rm kin}|\,\Pi_{\rm phys}^2\,|\phi_{\rm kin}}_{\rm kin} = \rm{Vol}(G)\,\braket{\psi_{\rm kin}|\,\Pi_{\rm phys}\,|\phi_{\rm kin}}_{\rm kin}\,,
\ea
where $\braket{\cdot|\cdot}_{\rm kin}$ denotes the inner product on $\ch_{\rm kin}$. This is the primary reason why we need to introduce a new inner product on physical states. Note that here we have made use of unimodularity of $G$ which implies that $\Pi_{\rm phys}^\dag=\Pi_{\rm phys}$ is symmetric.

The new inner product on physical states, defined by Eq.~\eqref{etaPIP}, is now given by simply dropping the $\rm{Vol}(G)$ factor:
\ba
\braket{\psi_{\rm phys}|\phi_{\rm phys}}_{\rm phys}:=\braket{\psi_{\rm kin}|\,\Pi_{\rm phys}\,|\phi_{\rm kin}}_{\rm kin}\,,\label{PIP}
\ea
where $\ket{\phi_{\rm kin}}\in\ch_{\rm kin}$ is any representative of the equivalence class of kinematical states which maps under group averaging to the same physical state, and likewise for $\bra{\psi_{\rm kin}}$. Since $\Pi_{\rm phys}$ is symmetric for unimodular groups, the inner product satisfies the reality condition and the right-hand side only depends on the equivalence classes, i.e.\ on the physical states, and \emph{not} on the kinematical representatives of the equivalence classes. Indeed, suppose $\ket{\psi_{\rm kin}}\neq\ket{\psi'_{\rm kin}}$, but $\psi_{\rm kin}\sim\psi'_{\rm kin}$ in the sense that $\Pi_{\rm phys}\,\ket{\psi_{\rm kin}} = \Pi_{\rm phys}\,\ket{\psi'_{\rm kin}}$. Then 
\ba
\braket{\psi_{\rm kin}|\,\Pi_{\rm phys}\,|\phi_{\rm phys}}_{\rm kin} = \braket{\Pi_{\rm phys}\,\psi_{\rm kin}\,|\phi_{\rm kin}}_{\rm kin} = \braket{\Pi_{\rm phys}\,\psi'_{\rm kin}\,|\phi_{\rm kin}}_{\rm kin}=\braket{\psi'_{\rm kin}|\,\Pi_{\rm phys}\,|\phi_{\rm kin}}_{\rm kin}\,.
\ea
Positivity, on the other hand, i.e.\ why $\langle\psi_{\rm phys}|\psi_{\rm phys}\rangle_{\rm phys}\geq 0$ should hold, is a different question. It is currently not known whether group-averaging always yields a positive rigging map $\eta$. However, the uniqueness result in \cite{Giulini:1998kf} establishes that, when group averaging converges sufficiently quickly and a positive rigging map exists, the former defines the unique rigging map and physical inner product. This means that when group averaging does not give rise to a positive semi-definite inner product, a positive rigging map cannot exist and RAQ fails. To the best of our knowledge, examples of this have not yet been constructed in the literature, and our examples below will also yield a positive semi-definite physical inner product. In particular, note that in the special case of the regular representation of $G$ on $\ch_{\rm kin}$ (which in particular implies an ideal reference frame $R$ in our case), one can show that group averaging \emph{always} defines a positive rigging map \cite{Giulini:1998kf}.

The physical Hilbert space is now defined as the Cauchy-completion of the image of $\Pi_{\rm phys}$ with the physical inner product defined in Eq.~\eqref{PIP}. (This might require quotienting out zero-norm states.) We will discuss the observable properties (iii) of the group-averaging rigging map in the follow subsections.

Let us now consider examples. We will continue with the Weyl-Heisenberg group of Example~\ref{ExWeylHeisenberg}. For the translation group, see Appendix~\ref{app_groupaverage} and also \cite{Vanrietvelde:2018pgb}.

\begin{example}[The Weyl-Heisenberg group, part 2 of 2]
\label{ExWH2}
Recall Example~\ref{ExWeylHeisenberg}. Let us choose the reference system $R$ to carry the defining representation $T^1(W_1)$ of $W'_1$, and consider the different options that we have to define a representation on the remaining system $S$. If $S$ is simply a copy of $R$, carrying the same representation of $W'_1$ as $R$, then elements $\ket{\psi_{\rm phys}}$ of the physical Hilbert space would have to satisfy the invariance property $T^1(s;\alpha)\otimes T^1(s;\alpha)\ket{\psi_{\rm phys}}=\ket{\psi_{\rm phys}}$. In the special case $\alpha=0$, this would mean that $e^{2is}\ket{\psi_{\rm phys}}=\ket{\psi_{\rm phys}}$ which is impossible. Fortunately, there is a better choice: it turns out that $T^{-1}(W_1)$ is a representation of $W'_1$. Since $T^1(s;\alpha)\otimes T^{-1}(s;\alpha)=D(\alpha)\otimes D(\bar\alpha)$, the phase is annihilated and this problem disappears. Thus, let us choose the representation $T^{-1}(W_1)$ of $W'_1$ for $S$.\\
$\strut\qquad$Let us now determine the physical Hilbert space. A tedious but straightforward calculation shows that there are no $\ket{\psi_{\rm kin}}\in\ch_{\rm kin}$ with $D(\alpha)\otimes D(\bar\alpha)=\ket{\psi_{\rm kin}}$ for all $\alpha$, hence the elements of $\ch_{\rm phys}$ will all lie outside of $\ch_{\rm kin}$. Informally, up to an irrelevant constant factor, the projection onto the physical Hilbert space is given by
\be
   \Pi_{\rm phys}=\int_{\mathbb{C}}d\alpha\, D(\alpha)\otimes D(\bar \alpha)\,.
\ee
Setting again $\alpha=(x+ip)/\sqrt{2}=\alpha_1+i\alpha_2$ and using the representation in \cite{Potocek2015}, the action of $D(\alpha)$ can be expressed as
\begin{eqnarray}
D(\alpha):L^2(\mathbb{R})&\to& L^2(\mathbb{R}) \,,\nn\\
\psi(x)&\mapsto& e^{i\sqrt{2}\alpha_2(x-\alpha_1/\sqrt{2})}\psi(x-\sqrt{2}
\alpha_1)\,.
\end{eqnarray}
Then, defining $\ket{\psi_{\rm phys}}:=\Pi_{\rm phys}\ket{\psi_{\rm kin}}$, where $\ket{\psi_{\rm kin}}=\int_{\mathbb{R}} dx_1\int_{\mathbb{R}} dx_2 \psi(x_1,x_2)\ket{x_1}\otimes\ket{x_2}$, we get
\be
   \ket{\psi_{\rm phys}}=\pi\int_{\mathbb{R}} ds \int_{\mathbb{R}} dx \, \psi(x-s,x-s)\ket{x}\ket{x}
   =\pi\left(\int_{\mathbb{R}}ds\,\psi(s,s)\right)\int_{\mathbb{R}}dx\,\ket{x}\ket{x}\,.
\ee
This calculation shows that the physical Hilbert space $\ch_{\rm phys}$ is one-dimensional; it consists of all complex multiples of $\ket{f}:=\int_{\mathbb{R}}dx\,\ket{x}\ket{x}$. Certainly, $\ket{f}$ is not an element of the kinematical Hilbert space $\ch_{\rm kin}$; instead, it can be understood as a linear functional on a suitably defined subspace of $\mathcal{H}_{\rm kin}^*$, acting as
\be
   \langle\varphi|f\rangle=\int_{\mathbb{R}} dx\,\overline{\varphi(x,x)}\,.
\ee
Furthermore, even in this distributional sense, $\ket{\psi_{\rm phys}}$ cannot be well-defined for all $\ket{\psi_{\rm kin}}\in\ch_{\rm kin}$: instead, the domain of definition of $\Pi_{\rm phys}$ must be a set of $\ket{\psi_{\rm kin}}\in\ch_{\rm kin}$ for which $s\mapsto \psi(s,s)$ makes sense as a function (for example, for continuous $\psi$) and the integral $\int_{\mathbb{R}} ds\, \psi(s,s)$ converges. Such sets of states would be candidates for the dense subset $\Phi\subset\ch_{\rm kin}$.\\
$\strut\qquad$The inner product on the physical Hilbert space becomes
\be
   \langle\varphi_{\rm phys}|\psi_{\rm phys}\rangle_{\rm phys}=\langle\varphi|\psi_{\rm phys}\rangle=\pi\int_{\mathbb{R}}dx\,\overline{\varphi(x,x)} \int_{\mathbb{R}}dt\, \psi(t,t)\,,
\ee
and we see that $\langle\psi_{\rm phys}|\psi_{\rm phys}\rangle_{\rm phys}=\pi \left|\int_{\mathbb{R}} dx\, \psi(x,x)\right|^2\geq 0$.\\
$\strut\qquad$The one-dimensionality of the physical Hilbert space tells us that the set of relational states of two particles is, in this situation, basically trivial: intuitively, having translation symmetry of position \emph{and momentum} introduces such a strong notion of symmetry that almost no gauge-invariant physics remains. This can be best understood from a classical perspective: we start out with a four-dimensional kinematical phase space and have a two-dimensional gauge group acting on it. This amounts to two constraints and two gauge-fixing conditions that result in a unique gauge-invariant classical state and so a zero-dimensional reduced phase space \cite{Henneaux:1992ig}.\\
$\strut\qquad$The situation becomes more interesting if we consider a system $S$ that contains more than one mode. For example, let $S$ consist of three modes (it turns out that two is not enough to yield a non-trivial physical Hilbert space), carrying a representation
\be
   T(W'_1):=T^p(W_1)\otimes T^q(W_1)\otimes T^r(W_1) \,
\ee
of the Weyl-Heisenberg group $W_1$, where $p+q+r=-1$ and $p>0$, $q<0$ and $r<0$. It turns out that this can again be interpreted as a representation of the group $W'_1=T^1(W_1)$, and
\be
   T(\Theta;\alpha)=\Theta^{-1}D(\alpha)\otimes D(\bar\alpha)\otimes D(\bar\alpha)\,.
\ee
Hence, we now have
\be
   \Pi_{\rm phys}=\int_{\mathbb{C}}d\alpha\, D(\alpha)\otimes D(\alpha)\otimes D(\bar\alpha)\otimes D(\bar\alpha)\,,
\ee
and
\be
   \ket{\psi_{\rm phys}}=\pi\int_{\mathbb{R}^3}dx_1 dx_2 dx_3 \int_{\mathbb{R}}ds\,\psi(s,x_2-x_1+s,x_3-x_1+s,x_2-x_3+s)\ket{x_1}\ket{x_2}\ket{x_3}\ket{x_1+x_2-x_3}\,.
\ee
The functionals of this form span a non-trivial linear space $\ch_{\rm phys}$. By definition, the inner product on it satisfies
\begin{eqnarray*}
\langle\varphi_{\rm phys}|\psi_{\rm phys}\rangle_{\rm phys}&=&\pi\int_{\mathbb{R}^3}dx_1 dx_2 dx_3 \overline{\varphi(x_1,x_2,x_3,x_1+x_2-x_3)}\int_{\mathbb{R}}ds\,\psi(x_1-s,x_2-s,x_3-s,x_1+x_2-x_3-s)\\
&=& \pi\int_{\mathbb{R}}dt\int_{\mathbb{R}}du \left(\int_{\mathbb{R}}dx\,\overline{\varphi(x,x+t,x+u,x+t-u)}\right)\left(\int_{\mathbb{R}}dy\, \psi(y,y+t,y+u,y+t-u)\right),
\end{eqnarray*}
where the second line follows from the first via a series of suitable substitutions. This defines a non-negative inner product, since
\be
   \langle\psi_{\rm phys}|\psi_{\rm phys}\rangle_{\rm phys}=\pi \int_{\mathbb{R}}dt \int_{\mathbb{R}}du \left| \int_{\mathbb{R}}\psi(x,x+t,x+u,x+t-u)\right|^2\geq 0 \,.
\ee
It is thus natural to identify $\mathcal{H}_{\rm phys}$ with a subspace of $L^2(\mathbb{R}^2)$, containing the functions $\tilde\psi(t,u):=\int_{\mathbb{R}}dx\, \psi(x,x+t,x+u,x+t-u)$ for $\psi\in\Phi$, where $\Phi\subseteq\mathcal{H}_{\rm kin}$ is a suitably chosen subspace on which this map $\psi\mapsto\tilde\psi$ makes sense and converges.
\end{example}

\subsection{Dirac observables from group averaging}\label{sec_diracobs}

There is a useful alternative form in which we can write the physical inner product. To understand it, we first need to understand how to map kinematical operators from $\ca_{\rm kin}=\cl(\ch_{\rm kin})$ to gauge-invariant operators in $\ca_{\rm phys}:=\cl(\ch_{\rm phys})$, where $\cl(\ch)$ denotes the set of linear operators on a Hilbert space $\ch$.
We choose to work with general linear rather than bounded operators here, so as to accommodate unbounded operators such as, say, relative distances or momenta in the case of the translation group $G=(\mathbb{R},+)$. $\ca_{\rm phys}$ and $\ca_{\rm kin}$ are therefore not algebras without the further stipulation of some common domain. Nevertheless, with some abuse of terminology, we shall sometimes refer to these sets as observable algebras for simplicity.

We introduce the $G$-twirl, namely the \emph{incoherent group averaging} of kinematical operators $A\in\ca_{\rm kin}$ over the action $U_{RS}$ of the gauge group
\ba
\cg(A):=\int_G dg\,U_{RS}(g)\,A\,\left[U_{RS}(g)\right]^\dag\,,\label{Gtwirl}
\ea
to turn them into gauge-invariant operators. 

When $G$ is non-compact, the integral that defines $\cg(A)$ will not in general converge for all operators $A$. For example, it diverges if $A$ is already invariant, i.e.\ when $[A,U_{RS}(g)]=0$ for all $g\in G$. For a more rigorous discussion of convergence of the $G$-twirl, see Appendix~\ref{app_groupaverage}. Essentially, we define the domain of $\cg$ to be the subset of $\ca_{\rm kin}$ on which its converges. In the next subsection, we shall see a particular class of kinematical observables $A$ for which this integral should typically converge.

Before demonstrating invariance of the twirled operators, we introduce the weak equality `$\approx$' for notational simplicity, which indicates coincidence of operators on $\ch_{\rm phys}$ \cite{Hoehn:2019owq,Hoehn:2020epv}. 
Suppose we are given two kinematical operators $O_1,O_2\in\ca_{\rm kin}$ whose action on the dense subset $\eta(\Phi)\subset\ch_{\rm phys}$ is well-defined,\footnote{By this, we do not necessarily mean that the $O_i$ leave $\eta(\Phi)$ invariant; they may also map outside of this subset.} where $\eta$ is the group-averaging rigging map. We say that they are \emph{weakly equal}, denoted $O_1\approx O_2$, if and only if 
\ba\label{weakeq}
 O_1\,\ket{\psi_{\rm phys}} = O_2\,\ket{\psi_{\rm phys}}\, ,\q\forall\,\ket{\psi_{\rm phys}}\in\eta\left(\Phi\right)\subset\ch_{\rm phys}\,.
\ea
By contrast, borrowing the terminology of constrained systems \cite{diracLecturesQuantumMechanics1964,Henneaux:1992ig}, we call the usual algebraic equality `=' of operators within $\ca_{\rm kin}$ \emph{strong}. 

This enables us to distinguish weak and strong invariance of operators:
\begin{definition}
We say that an observable $A\in\ca_{\rm kin}$ is a \emph{strong Dirac observable} if it commutes strongly with the group actions, i.e.\ $[A,U_{RS}(g)]=0$ for all $g\in G$. By contrast, $A$ is called a \emph{weak Dirac observable} if it commutes weakly with the group action, i.e.\ $[A,U_{RS}(g)]\approx0$ for all $g\in G$.
\end{definition}
Clearly, a strong Dirac observable is also a weak Dirac observable, but the converse is not in general true.

We have:

\begin{lemma}\label{lem_dirobs}
For all $A$ for which Eq.~\eqref{Gtwirl} converges, $\cg(A)$ is a strong Dirac observable, i.e.\
\ba
[\cg(A),U_{RS}(g)]=0\,,\q\forall\,g\in G\,.\nn
\ea
\end{lemma}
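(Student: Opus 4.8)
The plan is to exploit the invariance of the Haar measure, exactly as in the special-relativistic analogy of Sec.~\ref{ssec_SR}, where the relational observables $T_{AB}$ are rendered gauge-invariant purely by the contraction structure. Since each $U_{RS}(g)$ is unitary, the commutation claim $[\cg(A),U_{RS}(g)]=0$ is equivalent to the conjugation invariance $U_{RS}(g)\,\cg(A)\,[U_{RS}(g)]^\dag=\cg(A)$, so I would prove the latter and then read off the commutator.

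First I would substitute the definition~\eqref{Gtwirl} and move the conjugating unitaries inside the integral,
\[
   U_{RS}(g)\,\cg(A)\,[U_{RS}(g)]^\dag=\int_G dg'\,U_{RS}(g)\,U_{RS}(g')\,A\,[U_{RS}(g')]^\dag\,[U_{RS}(g)]^\dag\,.
\]
Next I would use the representation property $U_{RS}(g)U_{RS}(g')=U_{RS}(gg')$ together with its adjoint $[U_{RS}(g')]^\dag[U_{RS}(g)]^\dag=[U_{RS}(gg')]^\dag$ to collapse the integrand into a single conjugation of $A$ by $U_{RS}(gg')$. Finally, I would set $g''=gg'$ and invoke the left-invariance of the Haar measure, $d(gg')=dg''$, which returns the integral to its original form and hence to $\cg(A)$ itself. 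This establishes the conjugation invariance, and unitarity of $U_{RS}(g)$ then yields $[\cg(A),U_{RS}(g)]=0$ for all $g\in G$.

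The only genuine obstacle is analytic rather than algebraic. Because we allow unbounded $A$ and non-compact $G$, the integral defining $\cg(A)$ need not converge, and both the interchange of conjugation with integration and the change of variables must be justified on the domain where $\cg$ is defined; this is precisely why the hypothesis already restricts to those $A$ for which~\eqref{Gtwirl} converges. On that domain the manipulations above constitute the content of the lemma, with the careful specification of the common dense domain and the functional-analytic details deferred to Appendix~\ref{app_groupaverage}. I would also remark that this argument uses only \emph{left}-invariance of $dg$, so the lemma does not actually require unimodularity, in contrast to the constructions resting on the symmetry $\Pi_{\rm phys}^\dag=\Pi_{\rm phys}$.
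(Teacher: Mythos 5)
Your proof is correct and follows essentially the same route as the paper's: both rest on the representation property $U_{RS}(g)U_{RS}(g')=U_{RS}(gg')$ and the left-invariance of the Haar measure under the substitution $g''=gg'$. Phrasing the claim as conjugation invariance $U_{RS}(g)\,\cg(A)\,[U_{RS}(g)]^\dag=\cg(A)$ rather than directly as $U_{RS}(g)\,\cg(A)=\cg(A)\,U_{RS}(g)$ is a trivially equivalent reformulation, and your closing observations (convergence as the only real issue, unimodularity not being needed) are both accurate.
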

\begin{proof}
We have for any $g\in G$, using the left-invariance of $dg$,
\ba
U_{RS}(g)\,\cg(A) &=&\int_G dg'\,U_{RS}(g g')\,A\,\left[U_{RS}(g')\right]^\dag\nn\\
&=&\int_G d(gg')\,U_{RS}(g g')\,A\,\left[U_{RS}(g')\right]^\dag\nn\\
&=&\int_G d\tilde g'\,U_{RS}(\tilde g')\,A\,\left[U_{RS}(g^{-1}\tilde g')\right]^\dag\nn\\
&=&\cg(A)\,U_{RS}(g)\,.\nn
\ea
\end{proof}

The following tells us that the coherent and incoherent group averaging of kinematical operators are weakly equal.
\begin{lemma}\label{lem_G}
For all $A$ for which Eq.~\eqref{Gtwirl} converges and which admit a well-defined action on $\eta(\Phi)$, the projection of $A$'s action onto the physical Hilbert space coincides weakly with its $G$-twirl:
\ba
\Pi_{\rm phys}\,A\approx \cg(A)\,.
\ea
\end{lemma}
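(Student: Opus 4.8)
The plan is to reduce everything to the definition of weak equality and then to exploit the single defining property of physical states. By~\eqref{weakeq}, establishing $\Pi_{\rm phys}\,A\approx\cg(A)$ amounts to checking that $\Pi_{\rm phys}\,A\,\ket{\psi_{\rm phys}}=\cg(A)\,\ket{\psi_{\rm phys}}$ for every $\ket{\psi_{\rm phys}}\in\eta(\Phi)\subset\ch_{\rm phys}$. First I would fix such a physical state and expand the right-hand side using the definition~\eqref{Gtwirl} of the $G$-twirl, keeping the integral over $G$ untouched.

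The key observation is that, by construction~\eqref{physicalstatedef}, physical states are invariant under the gauge action, $U_{RS}(g)\,\ket{\psi_{\rm phys}}=\ket{\psi_{\rm phys}}$ for all $g\in G$. Since the representation is unitary, $[U_{RS}(g)]^\dag=U_{RS}(g^{-1})$, and applying the invariance at $g^{-1}$ gives $[U_{RS}(g)]^\dag\,\ket{\psi_{\rm phys}}=\ket{\psi_{\rm phys}}$ as well. Substituting this into the $G$-twirl collapses the right-most factor, leaving exactly the coherent average:
\begin{align*}
\cg(A)\,\ket{\psi_{\rm phys}} &=\int_G dg\,U_{RS}(g)\,A\,[U_{RS}(g)]^\dag\,\ket{\psi_{\rm phys}}\\
&=\int_G dg\,U_{RS}(g)\,A\,\ket{\psi_{\rm phys}}=\Pi_{\rm phys}\,A\,\ket{\psi_{\rm phys}}\,,
\end{align*}
where the last equality is just the definition~\eqref{proj} of $\Pi_{\rm phys}$. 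This is precisely the claimed weak equality, so at the level of formal manipulation the lemma is a one-line consequence of gauge invariance of physical states.

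The main obstacle is therefore not the algebra but its rigour: when $G$ is non-compact, $\ket{\psi_{\rm phys}}$ is generally not a vector of $\ch_{\rm kin}$ but a distribution in $\Phi^*$, so the step $[U_{RS}(g)]^\dag\,\ket{\psi_{\rm phys}}=\ket{\psi_{\rm phys}}$ must be read via the dual action of $U_{RS}$ on $\Phi^*$ introduced below Eq.~\eqref{physicalstatedef}, rather than as an ordinary operator acting on a ket. I would make explicit that the hypotheses of the lemma are exactly what licenses the remaining informal moves: the assumption that $A$ admits a well-defined action on $\eta(\Phi)$ ensures $A\,\ket{\psi_{\rm phys}}$ is meaningful, and the assumed convergence of~\eqref{Gtwirl} (alongside convergence of the averaging defining $\Pi_{\rm phys}$) is what permits interchanging $\int_G dg$ with the action of $A$. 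A fully functional-analytic version would recast the identity in terms of the pairing $\eta(\phi)[\,\cdot\,]$ and the rigging-map properties (i)--(iii), but at the physicist's level of the section the argument above suffices.
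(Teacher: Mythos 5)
Your proof is correct. Where it differs from the paper's: the paper never touches physical states during the computation; it establishes the formal operator identity
\begin{equation*}
\Pi_{\rm phys}\,A\,\Pi_{\rm phys}=\cg(A)\,\Pi_{\rm phys}
\end{equation*}
by inserting $\left[U_{RS}(g)\right]^\dag U_{RS}(g)$ between $A$ and the second group average, combining $U_{RS}(g)U_{RS}(g')=U_{RS}(gg')$, and invoking left-invariance of the Haar measure to absorb the shift $g'\mapsto gg'$; only at the end does it act with this identity on $\Phi\subset\ch_{\rm kin}$, whose image under $\Pi_{\rm phys}$ is $\eta(\Phi)$, to read off the weak equality. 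You instead work state by state and collapse the right leg of the twirl via $\left[U_{RS}(g)\right]^\dag\ket{\psi_{\rm phys}}=U_{RS}(g^{-1})\ket{\psi_{\rm phys}}=\ket{\psi_{\rm phys}}$, i.e.\ you quote the already-established invariance of physical states, Eqs.~\eqref{physicalstatedef} and~\eqref{stateinv} (equivalently, property (i) of the rigging map, read through the dual action when states are distributional). Since Eq.~\eqref{stateinv} is itself proven by exactly the same Haar-shift computation, the two arguments share one mathematical core; the difference is in what is quoted versus re-derived. Your version is the more economical derivation of precisely the stated weak equality and makes its conceptual origin (gauge invariance of physical states) transparent; the paper's version packages the result as a sandwich identity between two projectors, which is the form that gets reused later, e.g.\ to justify well-definedness of the $*$-adjoint in Theorem~\ref{lem_homo} and in the idempotence step of Lemma~\ref{lem_Sproj}. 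Your closing remarks on rigour\,---\,the domain hypothesis on $A$, convergence of both averages, and reading the invariance through the dual action on $\Phi^*$\,---\,are exactly the right caveats and match the paper's own informal level of argument.
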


\begin{proof}
The statement is informally verified by direct computation
\ba
\Pi_{\rm phys}\,A\,\Pi_{\rm phys} &=&\int_G dg\, U_{RS}(g)\,A\,\int_G dg'\,U_{RS}(g')=\int_G dg\,U_{RS}(g)\,A\,\left[U_{RS}(g)\right]^\dag\,\int_G dg'\,U_{RS}(g g')\nn\\
&=&\cg(A)\,\Pi_{\rm phys}\,,\nn
\ea
where in the last step we have made use of the left-invariance of the Haar measure. The claim then follows from acting with the last identity on $\Phi\subset\ch_{\rm kin}$ and the definition of the weak equality in Eq.~\eqref{weakeq}.
\end{proof}

Being an invariant operator, restricting the action of $\cg(A)$ to the physical Hilbert space defines an element of $\ca_{\rm phys}$, i.e.\ $\cg(A)\upharpoonright\ch_{\rm phys}\in\ca_{\rm phys}$. In particular, informally, $\cg(A)$ thus commutes with coherent group averaging, i.e.
\ba 
[\cg(A),\Pi_{\rm phys}]=0\,
\ea
and so with the rigging map $\eta$. Informally at least, (iii) in Sec.~\ref{sec_ph} is thus satisfied for $G$-twirled kinematical observables.

\subsection{Relational Dirac observables}\label{ssec_relobs}

In Sec.~\ref{sec_diracobs} we explained how to construct Dirac observables from kinematical operators that are not already Dirac observables. We can now make use of our assumptions on the frame $R$ from Sec.~\ref{sec_QRF} in order to extend an especially useful class of Dirac observables to $G$-frames: \emph{relational Dirac observables} \cite{Rovelli:1989jn,Rovelli:1990jm,Rovelli:1990ph,Rovelli:1990pi,rovelliQuantumGravity2004,dittrichPartialCompleteObservables2007,Dittrich:2005kc,Dittrich:2006ee,Dittrich:2007jx,thiemannModernCanonicalQuantum2008,Tambornino:2011vg,Bojowald:2010qw,Bojowald:2010xp,Hohn:2011us,hoehnHowSwitchRelational2018,Hoehn:2018whn,Hoehn:2019owq,Hoehn:2020epv,Hoehn:2021wet,Krumm:2020fws,Hoehn:2021flk,Chataignier:2019kof,Chataignier:2020fys,Dittrich:2016hvj,Dittrich:2015vfa,Carrozza:2021sbk}.

The basic idea is to use the frame orientation states $\ket{\phi(g)}_R$ to single out a point or an equivalence class of points on $G$-orbits in $\ca_{\rm kin}$ or $\ch_{\rm kin}$, which is essentially a gauge-fixing. When $G$ acts regularly on the coherent state system $\{U_{R},\ket{\phi(g)}_R\}$, we can single out a point on the $G$-orbits. When it only acts transitively, i.e.\ the frame orientation states admit an isotropy group $H$, we can only single out equivalence classes of points on the $G$-orbits. As we shall see shortly, in that case, the quantum reference frame can only resolve the associated equivalence classes of states and observables of the system $S$.

Let us consider a special class of non-invariant kinematical observables that we can then group average: observables aligned to frame $R$ in orientation $g\in G$ \cite{Krumm:2020fws,Hoehn:2021flk}. Let $f_S$ be an arbitrary system operator in $\ca_S:=\cl(\ch_S)$. The observable ${\ket{\phi(g)}\!\bra{\phi(g)}_R\otimes f_S}\in\ca_{\rm kin}$ is aligned to $R$ and can be regarded as a non-invariant description of the system property $f_S$ conditional on $R$ being in orientation $g\in G$. Indeed, this operator transforms covariantly under the group, i.e.\ $U_{RS}(g')\left(\ket{\phi(g)}\!\bra{\phi(g)}_R\otimes f_S\right)U_{RS}^\dag(g')=\ket{\phi(g'g)}\!\bra{\phi(g'g)}_R\otimes U_S(g')f_SU^\dag_S(g')$, and so is a good candidate for a kinematical observable on which group averaging might converge. Note that when the coherent state system admits a non-trivial isotropy subgroup $H$ and $g'=h\in H$,  we have $U_{RS}(h)\left(\ket{\phi(g)}\!\bra{\phi(g)}_R\otimes f_S\right)U_{RS}^\dag(h)=\ket{\phi(g)}\!\bra{\phi(g)}_R\otimes U_S(h)f_SU^\dag_S(h)$, hence we obtain the same frame orientation, but in a general distinct system observable and therefore, in turn, a distinct aligned observable. Kinematically, these observables are thus generally distinguishable, but this will no longer be the case upon group averaging. Invoking the $G$-twirl in Eq.~\eqref{Gtwirl}, we can map the aligned observable into a Dirac observable, generalizing the construction in \cite{Loveridge:2019phw,Hoehn:2019owq,Hoehn:2020epv,Hoehn:2021wet,Krumm:2020fws,Hoehn:2021flk,Chataignier:2019kof,Chataignier:2020fys} to unimodular Lie groups:
\ba
F_{f_S,R}(g)&:=&\cg\left(\ket{\phi(g)}\!\bra{\phi(g)}_R\otimes f_S\right)\nn\\
&\approx&\Pi_{\rm phys}\left(\ket{\phi(g)}\!\bra{\phi(g)}_R\otimes f_S\right)\,,\label{qrelobs}
\ea
where the second line holds on account of Lemma~\ref{lem_G}. Thanks to Lemma~\ref{lem_dirobs}, $F_{f_S,R}(g)$ is informally a strong Dirac observable. We emphasize that this is a \emph{frame-orientation-conditional $G$-twirl for the gauge group} to distinguish it from the relation-conditional $G$-twirl for the symmetry group action later in Sec.~\ref{sec_relobschanges}. Encoding in an invariant manner the system property $f_S$ conditioned on $R$ being in orientation $g$, we call it a \emph{relational Dirac observable}. In \cite{Hoehn:2019owq,Hoehn:2020epv,periodic,Chataignier:2019kof,Chataignier:2020fys} it was shown for Abelian groups (and especially for relational dynamics) that Eq.~\eqref{qrelobs} indeed furnishes a quantization of classical relational observables \cite{Dittrich:2005kc,dittrichPartialCompleteObservables2007,thiemannModernCanonicalQuantum2008}. The construction of $F_{f_S,R}(g)$ for $g=e$ was also given in \cite{Bartlett:2007zz} for the case of the left regular representation of compact Lie groups, i.e.\ ideal reference frames, but not called a relational observable. Viewed as a function of $f_S$, this construction also coincides for $g=e$ with the `relativisation' map appearing in   \cite{loveridgeSymmetryReferenceFrames2018a,miyadera2016approximating,loveridge2017relativity}, defined with respect to an arbitrary $G$-covariant POVM of the reference.

Before we discuss the physical meaning of relational observables and the isotropy group further, let us address the question of whether they are well-defined, i.e.\ whether the involved group averaging converges. To this end, note that we can rewrite Eq.~\eqref{qrelobs}, using an integration variable transformation and right invariance of $dg$, in the following form\footnote{This frame-orientation-conditional gauge transformation of the system observable $f_S$ is the quantum analog of the classical relational observable $\big[\tilde E^{-1}(R)g^{-1}\big]^{-1}\triangleright f $ in \cite[Eqs.~(68\,\&\,70)]{Carrozza:2021sbk}. Here, $\tilde E$ is an equivariant map from the gauge group $G$ into the frame configurations labeled by $R$ and $\triangleright$ denotes the gauge group action on some kinematical phase space function $f$, depending on the remaining degrees of freedom. When the frame is complete, this map is invertible (if incomplete only on the coset space). Since $R$ is dynamical, the classical relational observable is thus also a frame-orientation-conditional gauge transformation of $f$ (see also the example of special relativity in Sec.~\ref{ssec_SR}). Now recall from Sec.~\ref{sec_QRF} that the coherent states modeling the quantum frame orientations constitute an equivariant map $\tilde E$ from the group into the frame Hilbert space too. Hence, Eq.~\eqref{qrelobs2} is the genuine quantum analog of the classical expression above. }
\ba 
F_{f_S,R}(g)&=&\int_Gdg'\,\ket{\phi(g'g)}\!\bra{\phi(g'g)}_R\otimes U_S(g')\,f_S\,U^\dag_S(g')\nn\\
&=&\int_Gdg'\,\ket{\phi(g')}\!\bra{\phi(g')}_R\otimes U_S(g'g^{-1})\,f_S\,U_S^\dag(g'g^{-1})\label{qrelobs2}\\
&=&\int_GE(dg')\otimes U_S(g'g^{-1})\,f_S\,U_S^\dag(g'g^{-1})\,,\nn
\ea 
where $E(dg)=dg\,\ket{\phi(g)}\!\bra{\phi(g)}_R$ are the effect densities
giving rise to the normalized $G$-covariant POVM $\cb(G)\ni Y\mapsto E_Y=\int_YE(dg)$ in Eq.~\eqref{covPOVM}. The group averaging inside $F_{f_S,R}(g)$ can thus be understood as integration with respect to this POVM. From here, the integral is defined on a dense subset of bounded operators in $\ca_S$, and on the whole space if $G$ is Abelian, e.g.\ see \cite[Sec.\ 4.1]{loveridgeSymmetryReferenceFrames2018a}. Relational observables are thus often well-defined (though the isotropy group $H$ needs to be compact, see below and Lemma~\ref{lemCompact}).

The covariant POVM construction in Eq.~\eqref{qrelobs2} also equips the relational observables with a consistent probabilistic interpretation that we shall make more precise in Sec.~\ref{ssec_condprob}. We shall see there that the relational observables encode the question ``what is the probability that $f_S$ has value $f$, given that the frame has orientation $g$?'' in a gauge-invariant manner, however, \emph{only} when acting on physical states (owing to the gauge symmetry induced redundancy in their description). This yields a completely relational (and external-frame-independent) definition of such a probability. This is the quantum generalization of the fact that classically such observables encode the value of $f_S$ when $R$ is in configuration $g$ \cite{Dittrich:2005kc,dittrichPartialCompleteObservables2007,thiemannModernCanonicalQuantum2008,Tambornino:2011vg,hoehnHowSwitchRelational2018,Hoehn:2018whn,Hoehn:2019owq,Hoehn:2020epv}.

\begin{example}\label{ex_5}
Let us consider once more the translation group $G=(\mathbb{R},+)$ in one dimension, as in Example~\ref{ex_2} and Refs.~\cite{Vanrietvelde:2018pgb,Vanrietvelde:2018dit}, and let us suppose $S$ is a second particle, so that $\ch_{\rm kin}=L^2(\mathbb{R})\otimes L^2(\mathbb{R})$. We are thus dealing with the regular representation. Denote by $q_R,q_S$ the position operators of $R$ and $S$. The relational observable encoding the position of the system $S$, conditioned on the frame particle $R$ being in orientation (i.e.\ position) $y$, is the relative distance:
\ba 
F_{q_S,R}(y)&=&\int_\mathbb{R}dx\,\ket{x}\!\bra{x}_R\otimes e^{i(x-y)p_S}\,q_S\,e^{i(x-y)p_S}=\int_\mathbb{R}dx\ket{x}\!\bra{x}_R\otimes\left(q_S+x-y\right)\nn\\
&=&q_S-q_R+y\,.\label{reldist}
\ea 
\end{example}

Let us now address the impact of a non-trivial isotropy group $H$. Generalizing Lemma~\ref{lemCompact}, the $G$-twirl in Eq.~\eqref{qrelobs2} can in this case be split into a separate group averaging over the coset $X=G/H$ and the isotropy group $H$ (assumed unimodular and compact as before):
\ba 
F_{f_S,R}(g)&=&\int_Xd_Xx\ket{g_xg}\!\bra{g_xg}_R\otimes U_S(g_x)\left(\int_H d_Hh\,U_S(h)\,f_S\,U_S^\dag(h)\right)U_S^\dag(g_x)\,.
\ea 
That is, we first have to average the system observable $f_S$ over the isotropy group $H$, before averaging the result together with the frame orientations over the coset. Note that this defines an equivalence class $[f_S]$ of system observables in $\ca_S$: 
\ba \label{HfS}
f_S\sim f_S'\q\q\Leftrightarrow\q\q f_S^{\rm avg}:=\frac 1 {{\rm Vol}(H)}\int_H d_Hh\,U_S(h)\,f_S\,U_S^\dag(h)=\frac 1 {{\rm Vol}(H)}\int_H d_Hh\,U_S(h)\,f'_S\,U_S^\dag(h)\,.
\ea 
$f_S\sim f_S'$ are then indistinguishable relative to the frame $R$: they give rise to the \emph{same} relational observables. In particular, $f_S^{\rm avg}$ is invariant under the isotropy group and an element of $\ca_S$ since $H$ is compact. We can thus also write $[f_S]=f_S^{\rm avg}$. Physically, this means that the \emph{frame $R$ can only resolve those properties of $S$ that are invariant under the isotropy group of the frame}. In this sense, gauge invariance enforces the isotropy properties of the frame onto the system, even if the system did not originally have those properties. In the case of periodic clocks this means that only periodic system observables can evolve relative to the clock in a gauge-invariant manner \cite{periodic}.

For later purpose, we note that since the label $g$ can run over the entire group $G$, the object $F_{f_S,R}(g)$, for fixed $f_S\in\ca_S$, constitutes, in fact, a $\dim \,G$-parameter family of Dirac observables. That is, \emph{every relational observable family defines an orbit in $\ca_{\rm phys}$}; this orbit is of dimension $\dim\,X$ or less, since $F_{f_S,R}(gh)=F_{f_S,R}(g)$ for all $h\in H$.

The construction of relational observables preserves the identity. This result will be useful below when rewriting the physical inner product in Eq.~\eqref{PIP} and is an extension of \cite[Lemma 4.1]{Hoehn:2014fka}. 
\begin{lemma}\label{lem_Rd}
The frame orientation `projector' $\ket{\phi(g)}\!\bra{\phi(g)}_R\otimes \mathbf{1}_S$ $G$-twirls to the identity,
\ba
\cg\left(\ket{\phi(g)}\!\bra{\phi(g)}_R\otimes \mathbf{1}_S\right)=\mathbf{1}_{RS}\,,\q\forall\,g\in G\,,\nn
\ea
where $\mathbf{1}_{RS}$ is the identity operator on $\ch_{\rm kin}$. 
\end{lemma}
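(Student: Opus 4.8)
The plan is a short direct computation that chains four facts: the tensor-product form $U_{RS}=U_R\otimes U_S$, unitarity of $U_S$, the equivariance of the coherent states, and the resolution of identity~\eqref{resid}. To avoid clashing with the dummy variable of the $G$-twirl in~\eqref{Gtwirl}, let me denote the fixed frame orientation by $g_0$ and the integration variable by $g'$.

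First I would insert $U_{RS}(g')=U_R(g')\otimes U_S(g')$ into the definition~\eqref{Gtwirl} of $\cg$ and use that the operator being twirled is a product across $R$ and $S$. The $S$-factor is then $U_S(g')\,\mathbf 1_S\,U_S(g')^\dag=\mathbf 1_S$ by unitarity, so it pulls out of the integral, leaving
\ba
\cg\left(\ket{\phi(g_0)}\!\bra{\phi(g_0)}_R\otimes\mathbf 1_S\right)=\left(\int_G dg'\,U_R(g')\ket{\phi(g_0)}\!\bra{\phi(g_0)}_R U_R(g')^\dag\right)\otimes\mathbf 1_S\,.\nn
\ea
Next I would apply the equivariance/transitivity property $U_R(g')\ket{\phi(g_0)}=\ket{\phi(g'g_0)}$ from Sec.~\ref{sec_QRF}, turning the $R$-integrand into $\ket{\phi(g'g_0)}\!\bra{\phi(g'g_0)}_R$. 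Substituting $g''=g'g_0$ and using right-invariance of the Haar measure (valid because $G$ is unimodular, and needed here since the fixed element sits on the right) reduces the $R$-integral to $\int_G dg''\,\ket{\phi(g'')}\!\bra{\phi(g'')}_R$, which is exactly $\mathbf 1_R$ by~\eqref{resid} with the normalization $n=1$. Reassembling gives $\mathbf 1_R\otimes\mathbf 1_S=\mathbf 1_{RS}$, as claimed.

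I do not anticipate a genuine obstacle: the statement is essentially~\eqref{resid} dressed by a conjugation that the identity on $S$ trivializes. The only two points demanding attention are that one invokes right- rather than left-invariance of $dg$ when absorbing $g_0$ (legitimate precisely by unimodularity), and that for non-compact $G$ or distributional coherent states the manipulations are formal in the same sense as the surrounding exposition, with convergence guaranteed at the level at which~\eqref{resid} itself is assumed to hold.
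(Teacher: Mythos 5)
Your proof is correct and follows essentially the same route as the paper's: the paper's one-line computation likewise uses the tensor-product form of $U_{RS}$, unitarity on the $S$-factor, equivariance to get $\ket{\phi(g'g)}\!\bra{\phi(g'g)}_R$, and then invokes unimodularity (right-invariance of $dg$) together with the resolution of the identity~\eqref{resid}. Your write-up merely makes explicit the change of variables and the role of right-invariance that the paper compresses into the closing sentence.
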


\begin{proof}
For arbitrary $g\in G$ we have
\ba
\cg\left(\ket{\phi(g)}\!\bra{\phi(g)}_R\otimes \mathbf{1}_S\right)&=&\int_G dg'\,U_{RS}(g')\left(\ket{\phi(g)}\!\bra{\phi(g)}_R\otimes \mathbf{1}_S\right)\left[U_{RS}(g')\right]^\dag=\int_G dg'\,\ket{\phi(g'g)}\!\bra{\phi(g'g)}_R\otimes \mathbf{1}_S\,.\nn
\ea
Since $G$ is unimodular, the resolution of the identity in Eq.~\eqref{resid} implies the claim.
\end{proof}

Next, we shall show that relational observables define an algebra homomorphism from a subset $\ca_S^{\rm phys}\subset\ca_S$ of the system observable algebra to the algebra $\ca_{\rm phys}$ of Dirac observables on $\ch_{\rm phys}$. To this end, we consider the following, \emph{a priori} group element dependent object:\footnote{ Expressions of the form $\ip{\varphi_R \otimes \mathbf{1}_S}{\Omega \varphi_R \otimes \mathbf{1}_S}$ can be understood as follows: the collection of operators $\sum_{i,j}A_i \otimes B_j$ is weakly dense in $B(\mathcal{H}_R \otimes \mathcal{H}_S)$ and therefore $\ip{\varphi_R \otimes \mathbf{1}_S}{\Omega \varphi_R \otimes \mathbf{1}_S}$ can be defined on pure tensors as $\ip{\varphi_R \otimes \mathbf{1}_S}{(A \otimes B) \varphi_R \otimes \mathbf{1}_S} = \ip{\varphi_R }{A \varphi_R }B$ and extended to the whole space by linearity and continuity. Alternatively the expression $\ip{\varphi_R \otimes \cdot}{\Omega \varphi_R \otimes \cdot}$ defines a real, bounded quadratic form on $\mathcal{H}_S \times \mathcal{H}_S$ and thus the unique
self-adjoint operator $\Omega_{\varphi_R}$. This extends to mixed states through the following construction: the map $\Gamma_{\sigma}: \mathcal{L}(\mathcal{H}_R \otimes \mathcal{H}_S) \to \mathcal{L}(\mathcal{H}_S)$ by $\rm{Tr} [\rho \Gamma_{\sigma}(\Omega)] = \rm{Tr} [ \sigma \otimes \rho \Omega ]$, to hold for all system states $\rho$, defines a completely positive normal conditional expectation. 
Setting $\sigma= \ket{\phi(g)}\bra{\phi(g)}$ and $\Omega = \Pi_{\rm phys}$, we see that $\Pi_S^{\rm phys}(g) = \Gamma_{\ket{\phi(g)}}  (\Pi_{\rm phys})$ (with a small abuse of notation), and thus the physical system Hilbert space defined by that projector acquires a $g$-dependence. From here, it is simple to show that $U(g')\Gamma_{\ket{\phi(g)}}(\Pi_{\rm phys})U(g')^{\dag} = \Gamma_{\ket{\phi(g'g)}}(\Pi_{\rm phys})$, i.e., $U(g')\Pi_S^{\rm phys}(g)U(g')^{\dag} = \Pi_S^{\rm phys}(g'g)$, from which Eq.~\eqref{projgind} follows. For $\Pi_S^{\rm phys}(g)$ to be independent of $g$, we require that $\Pi_S^{\rm phys} = \Pi_S^{\rm phys}(g)$ for all $g$, hence $[\Pi_S^{\rm phys},U(g)]=0$ for all $g$.}
\ba 
\Pi_S^{\rm phys}(g):=\left(\bra{\phi(g)}_R\otimes \mathbf{1}_S\right)\Pi_{\rm phys}\left(\ket{\phi(g)}_R\otimes\mathbf{1}_S\right)=\int_G dg'\braket{\phi(g)|\phi(g'g)}_R\,U_S(g')\,.\label{Sproj}
\ea

\begin{lemma}\label{lem_Sproj}
Let $G$ be unimodular, then $\Pi_S^{\rm phys}(g)$ is an orthogonal projector. This projector is $g$-independent if and only if
\ba 
\big[U_S(g),\Pi_S^{\rm phys}(e)\big]=0\,\q\q\forall\,g\in G\,,\label{projgind}
\ea 
which is equivalent to 
\ba 
\big[U_S(g),\Pi_S^{\rm phys}(g)\big]=0\,\q\q\forall\,g\in G\,.\label{projgindagain}
\ea
Furthermore, we necessarily have $\big[U_S(g),\Pi_S^{\rm phys}(e)\big]=0$ $\forall\,g\in H$, where $H$ is the isotropy subgroup, and for all $g$ in the center of $G$. 
\end{lemma}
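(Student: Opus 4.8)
\emph{Overall plan.} The whole lemma is driven by one covariance relation, so I would first set that up and then read off every claim from it. Writing $M_g:=\ket{\phi(g)}_R\otimes\mathbf{1}_S$ as a map $\ch_S\to\ch_{\rm kin}$, the definition reads $\Pi_S^{\rm phys}(g)=M_g^\dagger\,\Pi_{\rm phys}\,M_g$. Two elementary inputs then suffice. First, since $U_R(g')\ket{\phi(g)}_R=\ket{\phi(g'g)}_R$, one has the intertwining identity $U_{RS}(g')\,M_g=M_{g'g}\,U_S(g')$, i.e.\ $M_{g'g}=U_{RS}(g')\,M_g\,U_S(g')^\dagger$. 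Second, left- and right-invariance of the Haar measure make $\Pi_{\rm phys}$ commute with the representation, $U_{RS}(g')\,\Pi_{\rm phys}=\Pi_{\rm phys}=\Pi_{\rm phys}\,U_{RS}(g')$ (cf.\ Eq.~\eqref{stateinv}), so that $U_{RS}(g')^\dagger\,\Pi_{\rm phys}\,U_{RS}(g')=\Pi_{\rm phys}$. Substituting $M_{g'g}$ into $\Pi_S^{\rm phys}(g'g)=M_{g'g}^\dagger\,\Pi_{\rm phys}\,M_{g'g}$ and cancelling the gauge factors would give the covariance relation
\[
  U_S(g')\,\Pi_S^{\rm phys}(g)\,U_S(g')^\dagger=\Pi_S^{\rm phys}(g'g)\,,\qquad\forall\,g,g'\in G\,,
\]
which is the engine for the rest.

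\emph{Orthogonal projector.} Self-adjointness I would get for free: unimodularity gives $\Pi_{\rm phys}^\dagger=\Pi_{\rm phys}$ (Eq.~\eqref{piphyssym}), hence $\Pi_S^{\rm phys}(g)^\dagger=(M_g^\dagger\Pi_{\rm phys}M_g)^\dagger=M_g^\dagger\Pi_{\rm phys}M_g=\Pi_S^{\rm phys}(g)$. For idempotence the crux is the identity $\Pi_{\rm phys}\,(M_g M_g^\dagger)\,\Pi_{\rm phys}=\Pi_{\rm phys}$. Here I would note $M_g M_g^\dagger=\ket{\phi(g)}\!\bra{\phi(g)}_R\otimes\mathbf{1}_S$, recall that the computation behind Lemma~\ref{lem_G} yields $\Pi_{\rm phys}\,A\,\Pi_{\rm phys}=\cg(A)\,\Pi_{\rm phys}$, and then apply Lemma~\ref{lem_Rd}, which states precisely that $\cg(M_gM_g^\dagger)=\mathbf{1}_{RS}$; the right-hand side collapses to $\Pi_{\rm phys}$. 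With this in hand, $\Pi_S^{\rm phys}(g)^2=M_g^\dagger\,[\Pi_{\rm phys}\,M_gM_g^\dagger\,\Pi_{\rm phys}]\,M_g=M_g^\dagger\,\Pi_{\rm phys}\,M_g=\Pi_S^{\rm phys}(g)$.

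\emph{Equivalence with $g$-independence.} Putting $g=e$ in the covariance relation gives $\Pi_S^{\rm phys}(g')=U_S(g')\,\Pi_S^{\rm phys}(e)\,U_S(g')^\dagger$, so condition~\eqref{projgind} holds iff conjugation by every $U_S(g')$ fixes $\Pi_S^{\rm phys}(e)$, i.e.\ iff $\Pi_S^{\rm phys}(g')=\Pi_S^{\rm phys}(e)$ for all $g'$ — exactly $g$-independence. For~\eqref{projgindagain} I would set $g'=g$ to obtain $U_S(g)\,\Pi_S^{\rm phys}(g)\,U_S(g)^\dagger=\Pi_S^{\rm phys}(g^2)$, so that $[U_S(g),\Pi_S^{\rm phys}(g)]=0$ is equivalent to $\Pi_S^{\rm phys}(g^2)=\Pi_S^{\rm phys}(g)$. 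Rewriting both sides through $\Pi_S^{\rm phys}(e)$ and $U_S(g^2)=U_S(g)^2$, this reads $U_S(g)^2\,\Pi_S^{\rm phys}(e)\,U_S(g)^{-2}=U_S(g)\,\Pi_S^{\rm phys}(e)\,U_S(g)^{-1}$; cancelling one $U_S(g)$ from each side returns $U_S(g)\,\Pi_S^{\rm phys}(e)\,U_S(g)^{-1}=\Pi_S^{\rm phys}(e)$, which is~\eqref{projgind}. Thus~\eqref{projgind} and~\eqref{projgindagain} are both equivalent to $g$-independence, and hence to each other.

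\emph{Isotropy group and center, plus the hard part.} For $h\in H$ the defining property $\ket{\phi(h)}\!\bra{\phi(h)}=\ket{\phi(e)}\!\bra{\phi(e)}$ forces $\ket{\phi(h)}=e^{i\alpha}\ket{\phi(e)}$; the phases cancel in $M_h^\dagger\Pi_{\rm phys}M_h$, so $\Pi_S^{\rm phys}(h)=\Pi_S^{\rm phys}(e)$, and the covariance relation then gives $U_S(h)\,\Pi_S^{\rm phys}(e)\,U_S(h)^\dagger=\Pi_S^{\rm phys}(e)$, i.e.\ $[U_S(h),\Pi_S^{\rm phys}(e)]=0$. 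For central $g$ I would instead conjugate the integral form $\Pi_S^{\rm phys}(e)=\int_G dg'\,\braket{\phi(e)|\phi(g')}\,U_S(g')$ by $U_S(g)$: since $U_S(g)U_S(g')U_S(g)^\dagger=U_S(gg'g^{-1})=U_S(g')$ for central $g$, the integral is unchanged and the commutator vanishes. The one genuinely delicate step is the idempotence: one must be convinced that $\Pi_{\rm phys}(M_gM_g^\dagger)\Pi_{\rm phys}=\Pi_{\rm phys}$ even though $\Pi_{\rm phys}$ is only an improper projector for non-compact $G$ (where $\Pi_{\rm phys}^2$ diverges). Everything there hinges on Lemma~\ref{lem_Rd}: the frame projector $G$-twirls to the \emph{finite} identity, which is exactly where the normalization~\eqref{resid} tames the divergence. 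The remaining steps are bookkeeping on top of the covariance relation.
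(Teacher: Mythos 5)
Your proposal is correct and follows essentially the same approach as the paper: your covariance relation $U_S(g')\,\Pi_S^{\rm phys}(g)\,U_S(g')^\dagger=\Pi_S^{\rm phys}(g'g)$ is precisely the paper's Eq.~\eqref{projgind2}, your idempotence argument uses the identical combination of Lemmas~\ref{lem_G} and~\ref{lem_Rd} to get $\Pi_{\rm phys}\left(\ket{\phi(g)}\!\bra{\phi(g)}_R\otimes\mathbf{1}_S\right)\Pi_{\rm phys}=\Pi_{\rm phys}$, and your equivalence of \eqref{projgind} and \eqref{projgindagain} with $g$-independence is the same conjugation bookkeeping. The only minor deviation is the isotropy-group claim, which the paper proves by the change of variables $hg'h^{-1}\to g'$ in the integral form~\eqref{Sproj} using left-right invariance of the Haar measure, whereas you deduce it from the phase relation $\ket{\phi(h)}=e^{i\alpha}\ket{\phi(e)}$ combined with covariance; both arguments are valid.
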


\begin{proof}
The property $\left(\Pi_S^{\rm phys}(g)\right)^\dag=\Pi_S^{\rm phys}(g)$ is immediate. We continue with idempotence:
\ba 
\left(\Pi_S^{\rm phys}(g)\right)^2&=&\left(\bra{\phi(g)}_R\otimes \mathbf{1}_S\right)\left(\strut\Pi_{\rm phys}\left(\ket{\phi(g)}\!\bra{\phi(g)}_R\otimes\mathbf{1}_S\right)\Pi_{\rm phys}\right)\left(\ket{\phi(g)}_R\otimes\mathbf{1}_S\right)\nn\\
&=&\left(\bra{\phi(g)}_R\otimes \mathbf{1}_S\right)\Pi_{\rm phys}\left(\ket{\phi(g)}_R\otimes\mathbf{1}_S\right)=\Pi_S^{\rm phys}(g)\,,\nn
\ea 
where in going to the second line we have made use of Lemmas~\ref{lem_G} and~\ref{lem_Rd}.

Next, we note that
\ba 
\Pi_S^{\rm phys}(g)&=&\left(\bra{\phi(e)}_R U_R^\dag(g)\otimes\mathbf{1}_S\right)\Pi_{\rm phys}\left(U_R(g)\ket{\phi(e)}_R\otimes\mathbf{1}_S\right)\nn\\
&=&U_S(g)\,\Pi_S^{\rm phys}(e)\,U_S^\dag(g)\,,\label{projgind2}
\ea 
where in the last line we made use of $U_{RS}(g)\,\Pi_{\rm phys}=\Pi_{\rm phys}$ (cf.\ Eq.~\eqref{stateinv}). Hence, if condition \eqref{projgind} is fulfilled, we have $\Pi_S^{\rm phys}(g)=\Pi_S^{\rm phys}(e)$, $\forall\,g\in G$. Conjugating Eq.~\eqref{projgind2} with $U_S^\dag(g)(\cdot)U_S(g)$, the same conclusion follows if Eq.~\eqref{projgindagain} holds. Conversely, inserting $\Pi_S^{\rm phys}(g)=\Pi_S^{\rm phys}(e)$ for arbitrary $g\in G$ into the left hand side (resp.\ right hand side) of Eq.~\eqref{projgind2} implies Eq.~\eqref{projgind} (resp.\ Eq.~\eqref{projgindagain}).

Using Eq.~(\ref{Sproj}), the final statement follows from 
\ba 
U_S(h)\,\Pi_S^{\rm phys}(e)\,U_S^\dag(h)=\int_Gdg'\,\braket{\phi(h)|\phi(g'h)}_RU_S(g')=\Pi_S^{\rm phys}(e)\,,
\ea 
invoking a change of variables $hg'h^{-1}\to g'$ and left-right invariance of $dg'$.
\end{proof}

For general group representations, $\Pi_S^{\rm phys}(g)$ will project onto a $g$-dependent\,---\,i.e.\ frame orientation dependent\,---\,subspace $\ch_{S,g}^{\rm phys}\subset\ch_S$ of the kinematical system Hilbert space, which, generalizing \cite{Hoehn:2019owq,Hoehn:2020epv,periodic}, we call the \emph{physical system Hilbert space}. We shall see later that this is the reduced Hilbert space for $S$ of the extension of the Page-Wootters formalism \cite{pageEvolutionEvolutionDynamics1983,pageClockTimeEntropy1994,giovannettiQuantumTime2015,Smith:2017pwx,Smith:2019imm,Hoehn:2019owq,Hoehn:2020epv,periodic} to general groups. Eq.~\eqref{projgind2} shows us that as we change the orientation $g$ of the frame $R$, the physical system Hilbert space $\ch_{S,g}^{\rm phys}$ will `rotate' through $\ch_S$, unless Eq.~\eqref{projgind} is satisfied. We shall illustrate this effect in examples in Sec.~\ref{sec_examples}. The lemma also implies that the isotropy group $H$ of the frame orientation states is an isotropy group for the physical system Hilbert space $\ch_{S,g}^{\rm phys}$ inside $\ch_S$, as is the center of $G$.

Eq.~\eqref{Sproj} entails that the projector $\Pi_S^{\rm phys}(g)$ is $g$-independent, for example, when the coherent state system $\{U_R,\ket{\phi(g)}_R\}$ also admits a unitary action from the right $V_R$ of the group $G$:
\ba 
\braket{\phi(g)|\phi(g'g)}_R=\braket{\phi(g)|V_R(g^{-1})|\phi(g')}_R=\braket{\phi(e)|\phi(g')}_R\,.
\ea 
This is, in particular, the case for Abelian groups and also when $\ch_R$ carries the regular representation of $G$, i.e.\ when $\ch_R\simeq L^2(G,dg)$ and so $R$ constitutes an ideal reference frame. Notice that in the latter case of an ideal frame, we even have $\Pi_S^{\rm phys}=\mathbf{1}_S$ so that $\ch_S^{\rm phys}=\ch_S$, owing to $\braket{g|g'}_R=\delta(g,g')$.

Systems of coherent states $\{U_R,\ket{\phi(g)_R}\}$ which admit of a unitary action to the right $V_R$ of the group $G$ on the coherent states are called left-right (LR) systems of coherent states. We classify all systems of LR coherent states for compact $G$ in Sec.~\ref{ssec_LR}. This will become relevant when distinguishing physical symmetries as frame reorientations from gauge transformation below in Sec.~\ref{ssec_symgauge}.

We can use $\Pi_S^{\rm phys}$ to project $\ca_S$ to the algebra of observables $\ca_{S,g}^{\rm phys}:=\cl(\ch_{S,g}^{\rm phys})$ on the physical system Hilbert space:
\ba 
f_{S,g}^{\rm phys}:=\Pi_S^{\rm phys}(g)\,f_S\,\Pi_{S}^{\rm phys}(g)\,,\label{fsys}
\ea 
for arbitrary $f_S\in\ca_S$.\footnote{We label $f_{S,g}^{\rm phys}$ by an index $g$ rather than as a function of $g$ in order to distinguish the operator from the `Heisenberg picture' observables encountered later.} Depending on the group representation, $\ca_{S,g}^{\rm phys}$ may `rotate' through $\ca_S$ as we change the frame orientation $g$ of $R$. 

The role of the projector becomes clear through the following result, which is a generalization of \cite[Lemma 1]{Hoehn:2019owq}:

\begin{lemma}\label{lem_weakequiv}
Let $G$ be unimodular. The relational observables $F_{f_S,R}(g)$ and $F_{f_{S,g}^{\rm phys},R}(g)$ (as defined in Eq.~(\ref{qrelobs})) are weakly equal, i.e.
\ba 
F_{f_S,R}(g)\approx F_{f_{S,g}^{\rm phys},R}(g)\,,
\ea 
where
$f_{S,g}^{\rm phys}$ is given by Eq.~\eqref{fsys}. Hence, the relational observables form weak equivalence classes, where $F_{f_S,R}(g)$ and $F_{\tilde f_S,R}(g)$ are equivalent if $\Pi_S^{\rm phys}(g)\, f_S\,\Pi_S^{\rm phys}(g)=\Pi_S^{\rm phys}(g)\,\tilde f_S\,\Pi_S^{\rm phys}(g)$.
\end{lemma}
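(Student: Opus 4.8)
The plan is to reduce the weak equality to a single strong operator identity squeezed between two copies of $\Pi_{\rm phys}$, and then to collapse that identity using the projector property that already powers Lemma~\ref{lem_Sproj}. First I would fix convenient shorthand for the ``partial-isometry''-type maps attached to the frame orientation: write $K_g:=\ket{\phi(g)}_R\otimes\mathbf 1_S\colon\ch_S\to\ch_{\rm kin}$ and $K_g^\dagger=\bra{\phi(g)}_R\otimes\mathbf 1_S$. A one-line check gives the two factorizations I will lean on throughout,
\[
   \ket{\phi(g)}\!\bra{\phi(g)}_R\otimes f_S = K_g\,f_S\,K_g^\dagger\,,\qquad \Pi_S^{\rm phys}(g)=K_g^\dagger\,\Pi_{\rm phys}\,K_g\,,
\]
the second being just Eq.~\eqref{Sproj}.

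Next I would make the reduction explicit. Since $F_{f_S,R}(g)=\cg(\,\cdot\,)$ is a strong Dirac observable (Lemma~\ref{lem_dirobs}) it commutes with $\Pi_{\rm phys}$, and the computation in the proof of Lemma~\ref{lem_G} yields $\cg(A)\,\Pi_{\rm phys}=\Pi_{\rm phys}\,A\,\Pi_{\rm phys}$. Acting on an arbitrary $\ket{\psi_{\rm phys}}=\Pi_{\rm phys}\ket{\psi_{\rm kin}}\in\eta(\Phi)$ then gives $F_{f_S,R}(g)\ket{\psi_{\rm phys}}=\Pi_{\rm phys}\,(K_g f_S K_g^\dagger)\,\Pi_{\rm phys}\ket{\psi_{\rm kin}}$, and identically for $f_{S,g}^{\rm phys}$ from Eq.~\eqref{fsys}. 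Hence the claimed weak equality follows once I establish the strong operator identity
\[
   \Pi_{\rm phys}\,K_g\,f_S\,K_g^\dagger\,\Pi_{\rm phys}=\Pi_{\rm phys}\,K_g\,\big(\Pi_S^{\rm phys}(g)\,f_S\,\Pi_S^{\rm phys}(g)\big)\,K_g^\dagger\,\Pi_{\rm phys}\,.
\]

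The key lever is the identity $\Pi_{\rm phys}\,K_g K_g^\dagger\,\Pi_{\rm phys}=\Pi_{\rm phys}$, valid because $K_g K_g^\dagger=\ket{\phi(g)}\!\bra{\phi(g)}_R\otimes\mathbf 1_S$ $G$-twirls to the identity by Lemma~\ref{lem_Rd}, so that Lemma~\ref{lem_G} gives $\Pi_{\rm phys}(K_g K_g^\dagger)\Pi_{\rm phys}=\cg(K_g K_g^\dagger)\Pi_{\rm phys}=\Pi_{\rm phys}$. I would then substitute $\Pi_S^{\rm phys}(g)=K_g^\dagger\Pi_{\rm phys}K_g$ into the right-hand side, producing the string $\Pi_{\rm phys}K_g(K_g^\dagger\Pi_{\rm phys}K_g)f_S(K_g^\dagger\Pi_{\rm phys}K_g)K_g^\dagger\Pi_{\rm phys}$, and regroup the four outer factors on each side of the central $K_g f_S K_g^\dagger$ as $(\Pi_{\rm phys}K_g K_g^\dagger\Pi_{\rm phys})$. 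Applying the lever twice collapses each bracket to $\Pi_{\rm phys}$, leaving exactly the left-hand side. The concluding ``hence'' clause is then immediate: if $\Pi_S^{\rm phys}(g)f_S\Pi_S^{\rm phys}(g)=\Pi_S^{\rm phys}(g)\tilde f_S\Pi_S^{\rm phys}(g)$ then $f_{S,g}^{\rm phys}=\tilde f_{S,g}^{\rm phys}$, and chaining $F_{f_S,R}(g)\approx F_{f_{S,g}^{\rm phys},R}(g)=F_{\tilde f_{S,g}^{\rm phys},R}(g)\approx F_{\tilde f_S,R}(g)$ gives the equivalence.

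The hard part will not be the algebra, which telescopes cleanly, but keeping the non-compact bookkeeping honest: $\Pi_{\rm phys}$ is \emph{not} a genuine idempotent (Eq.~\eqref{improj} produces a $\mathrm{Vol}(G)$ factor), so I must route every step through the weak-equality definition and through Lemma~\ref{lem_G}'s $\cg(A)\Pi_{\rm phys}=\Pi_{\rm phys}A\Pi_{\rm phys}$ rather than ever write $\Pi_{\rm phys}^2=\Pi_{\rm phys}$. Relatedly, the lever $\Pi_{\rm phys}K_gK_g^\dagger\Pi_{\rm phys}=\Pi_{\rm phys}$ and the idempotence of $\Pi_S^{\rm phys}(g)=K_g^\dagger\Pi_{\rm phys}K_g$ both rest on the $G$-twirl converging, which silently invokes compactness of the isotropy group $H$ (Lemma~\ref{lemCompact}); I would flag this domain assumption. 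Finally, I would state the conclusion as a genuine weak equality on $\eta(\Phi)$, since all the displayed identities are asserted only on the dense invariant subspace rather than as a bounded-operator statement on all of $\ch_{\rm kin}$.
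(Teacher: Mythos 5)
Your proof is correct and follows essentially the same route as the paper's: both rest on the identity $\Pi_S^{\rm phys}(g)=\left(\bra{\phi(g)}_R\otimes\mathbf{1}_S\right)\Pi_{\rm phys}\left(\ket{\phi(g)}_R\otimes\mathbf{1}_S\right)$ of Eq.~\eqref{Sproj} together with the collapse $\Pi_{\rm phys}\left(\ket{\phi(g)}\!\bra{\phi(g)}_R\otimes\mathbf{1}_S\right)\Pi_{\rm phys}=\Pi_{\rm phys}$ supplied by Lemmas~\ref{lem_G} and~\ref{lem_Rd}. The paper carries out the two insertions and collapses sequentially through chained weak equalities, whereas you perform both at once inside a single sandwiched operator string written in terms of $K_g$; this is an organizational repackaging of the same argument, not a different one.
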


\begin{proof}
Using Lemma~\ref{lem_G}, direct computation yields
\ba
F_{f_{S,g}^{\rm phys},R}(g)&\approx&\Pi_{\rm phys}\left(\ket{\phi(g)}\!\bra{\phi(g)}_R\otimes \Pi_S^{\rm phys}(g)\,f_S\,\Pi_S^{\rm phys}(g)\right)\nn\\
&=&\Pi_{\rm phys}\left(\ket{\phi(g)}\!\bra{\phi(g)}_R\otimes \Pi_S^{\rm phys}(g)\right)\left(\mathbf{1}_R\otimes f_S\,\Pi_S^{\rm phys}(g)\right)\,.\nn
\ea 
We observe that
\ba 
\left(\ket{\phi(g)}\!\bra{\phi(g)}_R\otimes \Pi_S^{\rm phys}(g)\right)&=&\left(\ket{\phi(g)}\!\bra{\phi(g)}_R\otimes \mathbf{1}_S\right)\Pi_{\rm phys}\left(\ket{\phi(g)}\!\bra{\phi(g)}_R\otimes\mathbf{1}_S\right)\,.\label{bla1}
\ea 
Hence, using Lemmas~\ref{lem_G} and~\ref{lem_Rd}, we have
\ba 
F_{f_{S,g}^{\rm phys},R}(g)&\approx&\Pi_{\rm phys}\left(\ket{\phi(g)}\!\bra{\phi(g)}_R\otimes f_S\,\Pi_S^{\rm phys}(g)\right)\nn\\
&=&\Pi_{\rm phys}\left(\mathbf{1}_R\otimes f_S\right)\left(\ket{\phi(g)}\!\bra{\phi(g)}_R\otimes \Pi_S^{\rm phys}(g)\right)\nn\\
&\approx&\Pi_{\rm phys}\left(\ket{\phi(g)}\!\bra{\phi(g)}_R\otimes f_S\right)\nn\\
&\approx&  F_{f_S,R}(g)\,,\nn
  \ea 
  where in going from the second to the third line, we once more invoked Eq.~\eqref{bla1} and Lemmas~\ref{lem_G} and~\ref{lem_Rd}.
\end{proof}

Invoking these weak equivalence classes of relational observables, we are now in a position to demonstrate in which sense they define an algebra homomorphism.  The following theorem is a generalization of the homomorphisms established in \cite{Hoehn:2019owq,Hoehn:2020epv,periodic} for general representations of Abelian groups, and in \cite{Bartlett:2007zz} for the regular representation of compact Lie groups. For the regular representation, the equivalence classes are trivial so that $\ca_{S,g}^{\rm phys}=\ca_S$. The following result is also the quantum analog of the weak algebra homomorphism constructed in \cite{dittrichPartialCompleteObservables2007,Dittrich:2005kc} for classical relational observables.

\begin{theorem}\label{lem_homo}
For unimodular $G$, the relationalization map relative to reference system $R$ in orientation $g$
\ba
\mathbf{F}_{R,g}:\ca_{S,g}^{\rm phys}\rightarrow\cg\left(\ca_{\rm kin}\right) \, ,\nn\\
f_{S,g}^{\rm phys}\mapsto F_{f_{S,g}^{\rm phys},R}(g) \,, \nn
\ea
is a weak $*$-homomorphism, i.e.\ it preserves addition, multiplication, and the adjoint in the form:
\ba
F_{a_{S,g}^{\rm phys}+b_{S,g}^{\rm phys}\cdot c_{S,g}^{\rm phys},R}(g)&\approx& F_{a_{S,g}^{\rm phys},R}(g)+F_{b_{S,g}^{\rm phys},R}(g)\cdot F_{c_{S,g}^{\rm phys},R}(g)\,,\nn\\ F_{(a_{S,g}^{\rm phys})^\dag,R}(g)\upharpoonright \ch_{\rm phys}&=& \left(F_{a_{S,g}^{\rm phys},R}(g)\upharpoonright {\ch_{\rm phys}}\right)^*\,, \forall\, a_{S,g}^{\rm phys},b_{S,g}^{\rm phys},c_{S,g}^{\rm phys}\in\ca_{S,g}^{\rm phys}\,.\label{homo}
\ea
Here, $\dag,*$ denote the adjoints on $\ch_S$ and $\ch_{\rm phys}$, respectively, and $\upharpoonright\ch_{\rm phys}$ denotes restriction of the domain to $\ch_{\rm phys}$. (Note that Lemma~\ref{lem_G} implies $\Pi_{\rm phys}\,F\,\Pi_{\rm phys}=F\,\Pi_{\rm phys}$ for $F\in\mathcal{G}(\ca_{\rm kin})$, hence the $*$-adjoint in Eq.~(\ref{homo}) is well-defined.)
In the case that $U_R$ is the left regular representation of $G$\,---\,hence ideal reference frames\,---\,this is also a strong $*$-homomorphism, i.e.\ the weak equality $\approx$ in Eq.~\eqref{homo} can be replaced by the strong equality $=
$.
\end{theorem}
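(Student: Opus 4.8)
The plan is to verify the three defining properties of a $*$-homomorphism one at a time, with the entire difficulty concentrated in multiplicativity; additivity and the adjoint property are comparatively direct. Throughout I abbreviate $B_a:=\ket{\phi(g)}\!\bra{\phi(g)}_R\otimes a$, so that $F_{a,R}(g)=\cg(B_a)\approx\Pi_{\rm phys}B_a$ by Lemma~\ref{lem_G}, and each $F_{a,R}(g)$ is a strong Dirac observable commuting with $\Pi_{\rm phys}$.

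First I would dispatch addition and the adjoint. Additivity is immediate: both $\cg$ and $\Pi_{\rm phys}$ act linearly on the system slot, so $F_{a+b,R}(g)=F_{a,R}(g)+F_{b,R}(g)$ holds even strongly. For the adjoint, I would observe that unitarity of $U_{RS}$ together with invariance of $dg$ lets $\dagger$ pass through the twirl, giving the kinematical identity $F_{a,R}(g)^\dagger=\cg(B_a^\dagger)=\cg(B_{a^\dagger})=F_{a^\dagger,R}(g)$. It then remains to show this kinematical $\dagger$ reproduces the physical adjoint $*$ once restricted to $\ch_{\rm phys}$. I would compute $\braket{\psi_{\rm phys}|F_{a,R}(g)\phi_{\rm phys}}_{\rm phys}$ on kinematical representatives, use $\Pi_{\rm phys}B_a\Pi_{\rm phys}=\cg(B_a)\Pi_{\rm phys}$ from Lemma~\ref{lem_G}, and then invoke symmetry of $\Pi_{\rm phys}$ (unimodularity) together with $B_a^\dagger=B_{a^\dagger}$ to move the operator onto the bra, producing exactly $\braket{F_{a^\dagger,R}(g)\psi_{\rm phys}|\phi_{\rm phys}}_{\rm phys}$.

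The heart of the proof is multiplicativity. Since $F_{c,R}(g)$ is a Dirac observable preserving $\ch_{\rm phys}$, acting on a physical state and applying the weak equalities twice gives $F_{b,R}(g)F_{c,R}(g)\approx\Pi_{\rm phys}B_b\Pi_{\rm phys}B_c$ on $\ch_{\rm phys}$. The key computation is the middle sandwich: factoring $B_b=(\ket{\phi(g)}_R\otimes\mathbf 1_S)(\bra{\phi(g)}_R\otimes b)$ and $B_c=(\ket{\phi(g)}_R\otimes c)(\bra{\phi(g)}_R\otimes\mathbf 1_S)$ and recognizing the central factor $(\bra{\phi(g)}_R\otimes\mathbf 1_S)\Pi_{\rm phys}(\ket{\phi(g)}_R\otimes\mathbf 1_S)=\Pi_S^{\rm phys}(g)$ from Eq.~\eqref{Sproj}, one obtains $B_b\Pi_{\rm phys}B_c=\ket{\phi(g)}\!\bra{\phi(g)}_R\otimes\big(b\,\Pi_S^{\rm phys}(g)\,c\big)$. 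Here I would use that $b,c\in\ca_{S,g}^{\rm phys}$ obey $b\,\Pi_S^{\rm phys}(g)=b$ and $\Pi_S^{\rm phys}(g)\,c=c$, so the stray projector is absorbed, $b\,\Pi_S^{\rm phys}(g)\,c=bc$, and hence $\Pi_{\rm phys}B_b\Pi_{\rm phys}B_c=\Pi_{\rm phys}\big(\ket{\phi(g)}\!\bra{\phi(g)}_R\otimes bc\big)\approx F_{bc,R}(g)$. Combined with additivity this yields the displayed weak identity. For the strong claim with the left regular representation, orthogonality $\braket{g|g'}_R=\delta(g,g')$ forces $\Pi_S^{\rm phys}(g)=\mathbf 1_S$ and $\ca_{S,g}^{\rm phys}=\ca_S$, and I can instead compute the product directly as a double twirl $\int dg_1\,dg_2\,U_{RS}(g_1)B_bU_{RS}(g_1)^\dagger\,U_{RS}(g_2)B_cU_{RS}(g_2)^\dagger$, where the factor $\braket{g_1 g|g_2 g}_R=\delta(g_1,g_2)$ collapses one integral and fuses the system operators into $U_S(g_1)\,bc\,U_S(g_1)^\dagger$, giving $\cg(B_{bc})=F_{bc,R}(g)$ as a genuine operator equation.

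The main obstacle is conceptual rather than computational: the extra $\Pi_S^{\rm phys}(g)$ appearing in the middle of the product is precisely what would obstruct multiplicativity on all of $\ca_S$, and it is eliminated only on the projected subalgebra $\ca_{S,g}^{\rm phys}$. This is why the domain of $\mathbf F_{R,g}$ must be $\ca_{S,g}^{\rm phys}$ rather than $\ca_S$, a restriction rendered harmless by the weak equivalence of Lemma~\ref{lem_weakequiv}. A secondary difficulty is rigor for non-compact $G$: $\Pi_{\rm phys}$ is an improper projector and physical states are distributions, so every manipulation sandwiching $\Pi_{\rm phys}$ must be read in the weak, rigging-map sense of Sec.~\ref{sec_ph}, as is done informally throughout the paper.
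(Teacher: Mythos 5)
Your proof is correct and takes essentially the same approach as the paper's: your multiplicativity step rests on the same key identity $\left(\bra{\phi(g)}_R\otimes\mathbf{1}_S\right)\Pi_{\rm phys}\left(\ket{\phi(g)}_R\otimes\mathbf{1}_S\right)=\Pi_S^{\rm phys}(g)$, absorption of this projector by elements of $\ca_{S,g}^{\rm phys}$, and Lemma~\ref{lem_G}, merely run from the product $F_{b,R}(g)F_{c,R}(g)$ toward $F_{bc,R}(g)$ rather than in the paper's reverse direction. Your adjoint argument (kinematical representatives plus symmetry of $\Pi_{\rm phys}$ from unimodularity and $B_{a}^\dag=B_{a^\dag}$) and your regular-representation argument ($\braket{g|g'}_R=\delta(g,g')$ forcing $\Pi_S^{\rm phys}(g)=\mathbf{1}_S$, with the delta collapsing the double twirl) likewise match the paper's reasoning.
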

Note that we always have the trivial identity $F_{f_s^\dagger,R}(g)=\left(F_{f_s,R}(g)\right)^\dagger$ for $\dagger$ the adjoint on $\ca_{\rm kin}$; the above says that the analog of this is also true with respect to the physical inner product.
\begin{proof}
Verifying the homomorphism with respect to addition is trivial. As regards preservation of multiplicativity, using Eq.~(\ref{qrelobs}), we have
\ba
F_{b_{S,g}^{\rm phys}\cdot c_{S,g}^{\rm phys},R}(g) \approx\Pi_{\rm phys}\left(\ket{\phi(g)}\!\bra{\phi(g)}_R\otimes b_{S,g}^{\rm phys}\cdot c_{S,g}^{\rm phys}\right)\,.\nn
\ea
Since $c_{S,g}^{\rm phys}\in\ca_{S,g}^{\rm phys}$, we can rewrite this as
\ba 
F_{b_{S,g}^{\rm phys}\cdot c_{S,g}^{\rm phys},R}(g) &\approx&\Pi_{\rm phys}\left(\mathbf{1}_R\otimes b_{S,g}^{\rm phys}\right)\left(\ket{\phi(g)}\!\bra{\phi(g)}_R\otimes\Pi_S^{\rm phys}(g)\right)\left(\mathbf{1}_R\otimes c_{S,g}^{\rm phys}\right)\nn\\
&=&\Pi_{\rm phys}\left(\ket{\phi(g)}\!\bra{\phi(g)}_R\otimes b_{S,g}^{\rm phys}\right)\Pi_{\rm phys}\left(\ket{\phi(g)}\!\bra{\phi(g)}_R\otimes c_{S,g}^{\rm phys}\right)\nn\\
&\approx&F_{b_{S,g}^{\rm phys},R}(g)\cdot F_{c_{S,g}^{\rm phys},R}(g)\,,\nn
\ea
where we have invoked the identity in Eq.~\eqref{bla1} in going to the second line and made use of Lemma~\ref{lem_G} in the last line.

Lastly, preservation of the adjoint is checked directly in terms of the physical inner product \eqref{PIP}. Let $\ket{\psi_{\rm phys}},\ket{\phi_{\rm phys}}\in\ch_{\rm phys}$ be arbitrary. Then
\ba 
\braket{\phi_{\rm phys}|F_{(a_{S,g}^{\rm phys})^\dag,R}(g)|\psi_{\rm phys}}_{\rm phys}&=&\braket{\phi_{\rm phys}|\Pi_{\rm phys}\left(\ket{\phi(g)}\!\bra{\phi(g)}_R\otimes(a_{S,g}^{\rm phys})^\dag\right)|\psi_{\rm phys}}_{\rm phys}\nn\\
&=&\braket{\phi_{\rm kin}|\Pi_{\rm phys}\left(\ket{\phi(g)}\!\bra{\phi(g)}_R\otimes(a_{S,g}^{\rm phys})^\dag\right)|\psi_{\rm phys}}_{\rm kin}\nn\\
&=&\Big\langle\left(\ket{\phi(g)}\!\bra{\phi(g)}_R\otimes a_{S,g}^{\rm phys}\right)\phi_{\rm phys}\Big|\psi_{\rm phys}\Big\rangle_{\rm kin}\nn\\
&=&\Big\langle F_{a_{S,g}^{\rm phys},R}(g)\phi_{\rm phys}\Big|\psi_{\rm phys}\Big\rangle_{\rm phys}\,.\nn
\ea 
When proceeding from the second to the third line, we made use of the symmetry of $\Pi_{\rm phys}$ with respect to the kinematical inner product $\braket{\cdot|\cdot}_{\rm kin}$, which holds for unimodular groups, see Eq.~\eqref{piphyssym}. In the last line, we have invoked the definition of the physical inner product Eq.~\eqref{PIP} once more. Since $\ket{\psi_{\rm phys}},\ket{\phi_{\rm phys}}\in\ch_{\rm phys}$ were arbitrary, we conclude that 
\ba 
F_{(a_{S,g}^{\rm phys})^\dag,R}(g)\upharpoonright \ch_{\rm phys}&=& \left(F_{a_{S,g}^{\rm phys},R}(g)\upharpoonright \ch_{\rm phys}\right)^*\,.\nn
\ea 

The fact that the $*$-homomorphism is strong (and hence applies to the $G$-twirl expression of the relational observables) in the case that $U_R$ is the left regular representation of $G$ can be easily checked by noting that in this case $\braket{g|g'}=\delta(g,g')$ and $\Pi_S^{\rm phys}(g)=\mathbf{1}_S$. The proof of the strong homomorphism property with respect to multiplication and addition for this case can also be found in \cite{Bartlett:2007zz}.
\end{proof}

That is, the relational observables inherit the algebraic properties of the physical system observables in $\ca_{S,g}^{\rm phys}$, which we want to describe relative to the reference system $R$. By contrast, for arbitrary $B,C\in\ca_{\rm kin}$ it is \emph{not} true that $\cg(B\cdot C)\approx \cg(B)\cdot\cg(C)$ and so for general kinematical operators, the $G$-twirl does \emph{not} define an algebra homomorphism. The reason for this is that, as we shall see below, $\ch_{\rm phys}$ is isomorphic to $\ch_{S,g}^{\rm phys}$ and not to $\ch_{\rm kin}$.

\subsection{The physical inner product as a conditional inner product}\label{sec:phys_cond}

In Eq.~\eqref{PIP}, we have given the definition of the inner product on $\ch_{\rm phys}$ in terms of the one on $\ch_{\rm kin}$. Using our insights from the previous section on relational observables, we can now rewrite the physical inner product equivalently as a \emph{conditional inner product}. This will become useful later when generalizing the Page-Wootters formalism to general groups. The following is a generalization of \cite[Lemma 4.2]{Hoehn:2014fka} and \cite[Corollary 1]{Hoehn:2019owq} to non-Abelian groups.

\begin{corol}\label{cor_PIP}
Let $G$ be any unimodular Lie group. Then we can write the physical inner product in the equivalent forms
\ba
\braket{\psi_{\rm phys}|\phi_{\rm phys}}_{\rm phys}=\braket{\psi_{\rm kin}|\,\phi_{\rm phys}}_{\rm kin} = \braket{\psi_{\rm phys}|\left(\ket{\phi(g)}\!\bra{\phi(g)}_R\otimes \mathbf{1}_S\right)\,|\phi_{\rm phys}}_{\rm kin}\,,\q\forall\,g\in G\,.
\ea
\end{corol}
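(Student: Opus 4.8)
The plan is to prove the two equalities separately, the first being a trivial rewriting of the definition and the second the substantive one. For the first equality, $\braket{\psi_{\rm phys}|\phi_{\rm phys}}_{\rm phys}=\braket{\psi_{\rm kin}|\phi_{\rm phys}}_{\rm kin}$, I would simply unfold Eq.~\eqref{PIP}, writing $\ket{\phi_{\rm phys}}=\Pi_{\rm phys}\ket{\phi_{\rm kin}}$ and reading
\[
\braket{\psi_{\rm phys}|\phi_{\rm phys}}_{\rm phys}=\braket{\psi_{\rm kin}|\Pi_{\rm phys}|\phi_{\rm kin}}_{\rm kin}=\braket{\psi_{\rm kin}|\phi_{\rm phys}}_{\rm kin}\,.
\]

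For the second equality I would proceed by a direct computation that chains together the symmetry of $\Pi_{\rm phys}$ with the two key lemmas of Sec.~\ref{sec_diracobs} and Sec.~\ref{ssec_relobs}. Writing both physical states as group averages of kinematical representatives and using $\Pi_{\rm phys}^\dag=\Pi_{\rm phys}$ (Eq.~\eqref{piphyssym}, valid for unimodular $G$), I obtain
\[
\braket{\psi_{\rm phys}|\left(\ket{\phi(g)}\!\bra{\phi(g)}_R\otimes\mathbf{1}_S\right)|\phi_{\rm phys}}_{\rm kin}=\braket{\psi_{\rm kin}|\Pi_{\rm phys}\left(\ket{\phi(g)}\!\bra{\phi(g)}_R\otimes\mathbf{1}_S\right)\Pi_{\rm phys}|\phi_{\rm kin}}_{\rm kin}\,.
\]
I then invoke Lemma~\ref{lem_G} in the operator form $\Pi_{\rm phys}\,A\,\Pi_{\rm phys}=\cg(A)\,\Pi_{\rm phys}$ with $A=\ket{\phi(g)}\!\bra{\phi(g)}_R\otimes\mathbf{1}_S$, followed immediately by Lemma~\ref{lem_Rd}, which states precisely that this $A$ $G$-twirls to $\mathbf{1}_{RS}$. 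The right-hand side then collapses to $\braket{\psi_{\rm kin}|\Pi_{\rm phys}|\phi_{\rm kin}}_{\rm kin}=\braket{\psi_{\rm kin}|\phi_{\rm phys}}_{\rm kin}$, which by the first equality is $\braket{\psi_{\rm phys}|\phi_{\rm phys}}_{\rm phys}$. Since Lemma~\ref{lem_Rd} holds for every $g\in G$ and $g$ was arbitrary, the identity holds for all $g$.

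The computation is short, so the work is really in the bookkeeping rather than in any genuine obstacle. The main subtlety will be the distributional character of $\ket{\psi_{\rm phys}},\ket{\phi_{\rm phys}}$ when $G$ is non-compact: the manipulations must be read as holding on the dense domain $\Phi$ on which group averaging converges, and the intermediate expression $\Pi_{\rm phys}A\Pi_{\rm phys}$ should be handled at the level of the weak equality $\approx$ of Lemma~\ref{lem_G} rather than as a literal operator product (recall $\Pi_{\rm phys}^2$ need not make sense). Concretely, I would phrase the argument so that $\Pi_{\rm phys}$ always acts first to produce a genuine physical state, and Lemmas~\ref{lem_G} and~\ref{lem_Rd} are applied to that state, mirroring the informal operator identities used in their own proofs; no convergence input beyond what those lemmas already assume is needed.
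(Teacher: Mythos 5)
Your proposal is correct and follows essentially the same route as the paper: the paper likewise combines Lemma~\ref{lem_G} (in the operator form $\Pi_{\rm phys}\,A\,\Pi_{\rm phys}=\cg(A)\,\Pi_{\rm phys}$) with Lemma~\ref{lem_Rd} to obtain $\Pi_{\rm phys}=\Pi_{\rm phys}\left(\ket{\phi(g)}\!\bra{\phi(g)}_R\otimes\mathbf{1}_S\right)\Pi_{\rm phys}$ and then inserts this identity into Eq.~\eqref{PIP}, using unimodularity (symmetry of $\Pi_{\rm phys}$) exactly as you do. Your additional remark on the distributional/weak-equality reading for non-compact $G$ is a sensible clarification but not a departure from the paper's argument.
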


\begin{proof}
Lemmas~\ref{lem_G} and~\ref{lem_Rd} entail that $\Pi_{\rm phys}=\Pi_{\rm phys}\,\left(\ket{\phi(g)}\!\bra{\phi(g)}_R\otimes \mathbf{1}_S\right)\,\Pi_{\rm phys}$ for any $g\in G$ when $G$ is unimodular. Inserting this into the physical inner product in Eq.~\eqref{PIP} yields the claimed result.
\end{proof}

In the context of relational quantum dynamics and the Page-Wootters formalism (i.e.\ in the context of the translation group $G=(\mathbb{R},+)$), the 
conditional inner product $\braket{\psi_{\rm phys}|\left(\ket{\phi(g)}\!\bra{\phi(g)}_R\otimes \mathbf{1}_S\right)\,|\phi_{\rm phys}}_{\rm kin}$ was introduced in \cite{Smith:2017pwx,Smith:2019imm} and later shown to be equivalent to the physical inner product of constraint quantization \cite{Hoehn:2019owq,Hoehn:2020epv}.\footnote{For periodic clocks, i.e.\ $G= \rm{U}(1)$, the equivalence does not hold when the group action on the coherent states features an infinite isotropy group such as $(\mathbb{Z},+)$ \cite{periodic}.} The conditional inner product also appeared in \cite{Hoehn:2014fka} for lattice field theory models.

The physical inner product written in the form Eq.~\eqref{PIP} is manifestly gauge-invariant. By contrast, we can interpret the conditional inner product as a \emph{gauge-fixed} description of the manifestly gauge-invariant physical inner product. In line with this interpretation, the corollary tells us that (for unimodular groups) this gauge-fixed expression is actually independent of the gauge-fixing condition that the frame $R$ is in orientation $g\in G$. These observations will be crucial below in Sec.~\ref{ssec_condprob}, when establishing the conditional probability interpretation of relational observables.

\subsection{Left-right systems of coherent states}\label{ssec_LR}

We noted above that for LR systems of coherent states, the physical system Hilbert space will be frame-orientation-independent. But which coherent states systems are of the LR type? This question will also be relevant for the discussion of symmetries (as opposed to gauge transformations) in the next subsection.

\begin{definition}[Left-right system of coherent states]\label{def_LR_coherent}
A left-right (or LR) system of coherent states for a group $G$ is a (left) system of coherent states $\{U_R,\ket{\phi(g)_R}\}$ with (possibly trivial) isotropy subgroup $H$ where the action $\ket{\phi(g')} \mapsto \ket{\phi(g' g^{-1})}$ is unitary; namely there exists a unitary representation $V_R$ of $G$ which acts on the coherent states as $V_R(g)\ket{\phi(g')}=\ket{\phi(g' g^{-1})}$. We write $\{U_R,V_R,\ket{\phi(g)_R}\}$ for such a left-right system of coherent states.
\end{definition}

Before we analyze LR systems in more detail, let us begin with a clarification. Consider an LR system of coherent states $\{U_R,V_R,\ket{\phi(g)_R}\}$ for a group $G$ with a non-trivial isotropy subgroup $H$. As demonstrated in Lemma~\ref{lemCompact}, we can then turn to a system of states $\{\ket{\phi(x)}_R\}_{x\in X}$ on the cosets $X\simeq G/H$. The symmetry group acts as $U_R(g)\ket{\phi(x)}_R=e^{i\varphi}\ket{\phi(gx)}_R$ for some phase $\varphi\in\mathbb{R}$. It might be tempting to demand, by definition, that LR systems of coherent states should now also satisfy $V_R(g)\ket{\phi(x)}_R=e^{i\theta}\ket{\phi(xg^{-1})}_R$ for some phase $\theta\in\mathbb{R}$. But for this to make sense, we would need that $xg^{-1}$ is a valid coset for every coset $x$, i.e.\ that for every $g\in G$ there is some $g'\in G$ such that $Hg^{-1}=g'H$. Here, we will \emph{not} make this assumption. We will thus demand validity of
\begin{equation}
U_R(g)V_R(k)\ket{\phi(h)}_R = \ket{\phi(ghk^{-1})} \quad\mbox{for all }g,k\in G
\label{eqDoubleAction}
\end{equation}
only for group elements $h\in G$, and will not demand an analogous equation (even with phases) for cosets.

Let us now classify the LR systems of coherent states for compact Lie groups. Due to Eq.~(\ref{eqDoubleAction}), and since $\{\ket{\phi(g)}_R\}_{g\in G}$ spans the carrier space $\ch_R$, $[U_R(g),V_R(k)] = 0$ for all $g,k \in G$. Hence $\ch_R$ carries a representation $W_R$ of $G \times G$, where $W_R(g,e) = U_R(g)$ and $W_R(e,g) = V_R(g)$. $W_R$ acts on coherent states as $W_R(g,k)\ket{\phi(h)}_R = \ket{\phi(ghk^{-1})}_R$. This allows us to obtain the following classification.

\begin{lemma}
\label{LemLR}
Let $G$ be a compact Lie group, and consider a corresponding LR system of coherent states $\{U_R,V_R,\ket{\phi(g)}_R\}$ on $\ch_R$. Then this space decomposes as
\begin{align}
    \ch_R \simeq \bigoplus_{i \in \ci}  \ch_i \otimes \bar \ch_i  \ ,
\end{align}
such that the representation $W_R(g,k)=U_R(g)V_R(k)$ of $G\times G$ decomposes into irreducible representations
\begin{align}\label{eq-LR-rep}
    W_R(g,k) \simeq \bigoplus_{i \in \ci} \rho_i(g) \otimes \bar \rho_i(k)   \ ,
\end{align}
where $\bar \rho_i$ is the complex conjugate representation of $\rho_i$ and there are no repeated irreducible representations in $\ci$. The unitary representations $U_R$ and $V_R$ decompose as
\begin{align}
    U_R(g)\simeq \bigoplus_{i \in \ci} \rho_i(g) \otimes \I_{\bar \ch_i}
\end{align}
and 
\begin{align}
    V_R(g)\simeq \bigoplus_{i \in \ci} \I_{\ch_i} \otimes  \bar \rho_i(g).
\end{align}
The reference coherent state is of the form:
\begin{align}\label{eq-LR-seed}
    \ket{\phi(e)}_R = \sum_{i \in \ci} \alpha_i \ket{\phi(e)}_i  \ ,
\end{align}
where $\ket{\phi(e)}_i:=\frac{1}{\sqrt{\dim(\ch_i)}} \sum_{j = 1}^{\dim(\ch_i)} \ket{e_j}_i \otimes \ket{e_j}_i$ is just the maximally entangled state across $\ch_i \otimes \bar \ch_i$. 

Conversely, given any representation $W_R$ of $G \times G$ on $\ch$ of the form of Eq.~\eqref{eq-LR-rep}, there always exists a normalized vector $\ket{\phi(e)} \in \ch$ which will be a valid seed state. It is of the form Eq.~\eqref{eq-LR-seed}, where the requirement
\begin{align}
    \int_{g \in G} \ketbra{\phi(g)}{\phi(g)}_R dg = n \mathbf{1} \ ,
\end{align}
constrains the possible values of $\{\alpha_i\}_{i \in \ci}$ (for an exact description of these constraints, see Example~\ref{Ex1Compact}).
\end{lemma}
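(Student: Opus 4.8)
The plan is to reduce the whole statement to the representation theory of the compact group $G\times G$, exploiting the two structural facts established immediately before the lemma: that $W_R(g,k):=U_R(g)V_R(k)$ is a genuine unitary representation of $G\times G$ on $\ch_R$ (because $U_R$ and $V_R$ commute), and that it acts on coherent states by $W_R(g,k)\ket{\phi(h)}_R=\ket{\phi(ghk^{-1})}_R$, cf.\ \eqref{eqDoubleAction}. Two specializations of this action drive everything. Setting $h=e$ gives $W_R(g,k)\ket{\phi(e)}_R=\ket{\phi(gk^{-1})}_R$, and as $gk^{-1}$ ranges over all of $G$ these are exactly the coherent states, which span $\ch_R$ by the resolution of identity \eqref{resid}; hence $\ket{\phi(e)}_R$ is a \emph{cyclic} vector for $W_R$. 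Taking $k=g$ gives $W_R(g,g)\ket{\phi(e)}_R=\ket{\phi(e)}_R$, so $\ket{\phi(e)}_R$ is \emph{invariant} under the diagonal subgroup $\Delta:=\{(g,g):g\in G\}\simeq G$.

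For the forward direction I would decompose $\ch_R$ into $W_R$-isotypic components. Since $G\times G$ is compact, its irreducibles are the external tensor products $\rho\otimes\sigma$ of irreducibles of $G$, so $\ch_R\simeq\bigoplus_{(\rho,\sigma)}(\ch_\rho\otimes\ch_\sigma)\otimes\C^{m_{\rho,\sigma}}$, with $W_R$ acting as $(\rho\otimes\sigma)\otimes\I$ on the multiplicity space $\C^{m_{\rho,\sigma}}$. A short Schur-lemma computation shows that the $\Delta$-invariant subspace of $\ch_\rho\otimes\ch_\sigma$ has dimension $\dim\Hom_G(\bar\sigma,\rho)$, which equals $1$ if $\sigma\simeq\bar\rho$ and $0$ otherwise, and when nonzero is spanned by the normalized maximally entangled vector $\ket{\Omega_\rho}$ (this is precisely the statement that $(G\times G,\Delta)$ is a Gelfand pair). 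Because $\ket{\phi(e)}_R$ is $\Delta$-invariant, its projection onto each isotypic component lands in that invariant subspace, so only components with $\sigma\simeq\bar\rho$ can carry a nonzero projection, and cyclicity then forces exactly these to appear.

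The multiplicity-one claim is the heart of the argument, and I expect it to be the main obstacle to phrase cleanly. Within a surviving $(\rho,\bar\rho)$-component the seed projects to $\ket{\Omega_\rho}\otimes\ket{w}$ for some $\ket{w}\in\C^{m_{\rho,\bar\rho}}$; since $W_R$ acts trivially on the multiplicity factor, the cyclic subspace this vector generates is $\bigl(\mathrm{span}\{(\rho\otimes\bar\rho)(g,k)\ket{\Omega_\rho}\}\bigr)\otimes\C\ket{w}=(\ch_\rho\otimes\ch_{\bar\rho})\otimes\C\ket{w}$ by irreducibility of $\rho\otimes\bar\rho$. For $\ket{\phi(e)}_R$ to be cyclic this must fill the whole component, forcing $m_{\rho,\bar\rho}=1$ and $\ket{w}\neq0$. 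This yields $\ch_R\simeq\bigoplus_{i\in\ci}\ch_i\otimes\bar\ch_i$ with no repeated irreps, $W_R(g,k)\simeq\bigoplus_i\rho_i(g)\otimes\bar\rho_i(k)$, and $\ket{\phi(e)}_R=\sum_i\alpha_i\ket{\Omega_i}$ with $\ket{\Omega_i}=\ket{\phi(e)}_i$ as in \eqref{eq-LR-seed}. Evaluating $W_R(g,e)$ and $W_R(e,g)$ then immediately gives the stated forms of $U_R$ and $V_R$.

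For the converse I would start from any $W_R\simeq\bigoplus_i\rho_i\otimes\bar\rho_i$ of the form \eqref{eq-LR-rep}, set $U_R(g):=W_R(g,e)$ and $V_R(g):=W_R(e,g)$, and take the candidate seed $\ket{\phi(e)}_R=\sum_i\alpha_i\ket{\phi(e)}_i$. Each $\ket{\phi(e)}_i$ obeys $(\rho_i(g)\otimes\bar\rho_i(g))\ket{\phi(e)}_i=\ket{\phi(e)}_i$, so the seed is $\Delta$-invariant; from this one checks directly that $V_R(k)\ket{\phi(g)}_R=U_R(g)V_R(k)\ket{\phi(e)}_R=\ket{\phi(gk^{-1})}_R$, i.e.\ the LR property holds with a genuine unitary $V_R$. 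It then remains to enforce the resolution of identity: twirling $\ketbra{\phi(e)}{\phi(e)}_R$ over $U_R$ and invoking Schur orthogonality annihilates the off-diagonal ($i\neq j$) terms and sends each diagonal block to a multiple of $\I_{\ch_i\otimes\bar\ch_i}$ proportional to $|\alpha_i|^2/(\dim\ch_i)^2$, and equality of these multiples is exactly the constraint $|\alpha_i|\propto\dim\ch_i$. Since here $\dim\ch_i=\dim\bar\ch_i$, the necessary dimension condition \eqref{eqNecessaryId} holds with equality, so a valid seed always exists, and matching to \eqref{eqNecessaryId2} recovers precisely the constraints on $\{\alpha_i\}_{i\in\ci}$ recorded in Example~\ref{Ex1Compact}.
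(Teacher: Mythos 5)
Your proposal is correct and reaches every part of the lemma, but it differs from the paper's proof at the one step that is the crux of the classification: why no irrep of $G\times G$ can appear with multiplicity. The paper (Appendix~\ref{app:LR_coherent}) identifies the admissible irreps the same way you do\,---\,Schur's lemma, phrased there through the isomorphism $\ch_1\otimes\ch_2\cong\Hom(\bar\ch_2,\ch_1)$ rather than your counting of $\dim\,\Hom_G(\bar\sigma,\rho)$, which is the same computation\,---\,and then rules out repeated irreps by appealing to Example~\ref{Ex1Compact}: writing $U_R(g)=\bigoplus_i\rho_i(g)\otimes\I_{\bar\ch_i}\otimes\I_{\cn_i}$, the necessary condition \eqref{eqNecessaryId} for the resolution of the identity forces $\dim\ch_i\geq\dim(\bar\ch_i\otimes\cn_i)$, and since $\dim\ch_i=\dim\bar\ch_i$ the multiplicity space $\cn_i$ must be one-dimensional. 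You instead extract multiplicity-freeness from cyclicity: Eq.~\eqref{resid} makes the coherent states span $\ch_R$, the $W_R$-orbit of the seed is by \eqref{eqDoubleAction} exactly the coherent-state system, and inside an isotypic block $(\ch_\rho\otimes\bar\ch_\rho)\otimes\C^m$ the cyclic subspace generated by $\ket{\Omega_\rho}\otimes\ket{w}$ is only $(\ch_\rho\otimes\bar\ch_\rho)\otimes\C\ket{w}$, so $m=1$. Both arguments ultimately rest on the resolution of the identity, but yours invokes it only once (to get spanning) and then stays entirely inside the representation theory of $G\times G$, which makes the forward direction more self-contained and conceptually transparent; the paper's version instead reuses the full strength of Example~\ref{Ex1Compact}, which has the economy that the very same tool then delivers the constraints on the $\alpha_i$ in the converse direction. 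Your converse is essentially identical to the paper's (diagonal invariance of the seed gives the LR property, Schur orthogonality gives $|\alpha_i|\propto\dim\ch_i$), so the genuine divergence is confined to that multiplicity-one step.
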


This lemma is proven in full in Appendix~\ref{app:LR_coherent} and we sketch a proof here. A LR system of coherent states with irreducible $W_R(g,g')$ acts on $\ch_i \otimes \ch_j$ with $\ch_i$ and $\ch_j$ both irreducible.  We now restrict to the representation $W_R(g,g)$ of $G$ and make use of the isomorphisms $\ch_i \otimes \ch_j \cong \ch_i \otimes \bar \ch_j^* \cong \Hom(\bar \ch_j, \ch_i)$ with $\Hom(\bar \ch_j, \ch_i)$  the space of linear maps from $\bar \ch_j$ to $\ch_i$. The isomorphism maps the state $\ket{\phi(e)}_R \in \ch_i \otimes \ch_j$, which is invariant under $\rho_i(g) \otimes \rho_j(g)$, to an element $M_{\phi(e)} \in \Hom(\bar \ch_j, \ch_i)$  which is necessarily invariant under the action of $g$. This shows that the $G$-equivariant subspace $\Hom_G(\bar \ch_j, \ch_i)$ of $\Hom(\bar \ch_j, \ch_i)$ is non trivial, where the $G$-equivariant subspace is the space of all maps $M \in \Hom(\bar \ch_j, \ch_i)$ such that $\rho_i(g) M = M \bar \rho_j(g)$ for all $g \in G$. Due to the irreducibility of the representations involved, Schur's lemma entails that $\ch_j \cong \bar \ch_i$. For an irrep $\ch_i \otimes  \ch_j \cong \ch_i \otimes \bar \ch_i$  the specific form of $\ket{\phi(e)}_i$ follows directly from the fact that $M_{\phi(e)} \in \Hom(\bar \ch_j, \ch_i) \cong \Hom(\ch_i, \ch_i)$ is proportional to the identity matrix and applying the explicit isomorphism  $\Hom( \ch_i, \ch_i) \to \ch_i \otimes \bar \ch_i$ yields a state $\ket{\phi(e)}_i$ of the form given in Equation~\eqref{eq-LR-seed}.  The extension to reducible representations requires Example~\ref{Ex1Compact}.

We emphasise that the above lemma classifies systems of LR states \emph{up to isomorphism}. For instance the singlet state $\frac{1}{\sqrt{2}} (\ket{01} - \ket{10})$ generates a system of LR coherent states under the action of $U \otimes V$ with $U,V \in \SU(2)$, even though it is not explicitly of the form $\rho(g) \otimes \bar \rho(g)$ acting on a state $\frac{1}{\sqrt{2}} \sum_i e_i \otimes e_i$. However since these are self dual representations, namely $U \cong \bar U$ we have that the representation $U \otimes V \cong U \otimes \bar V$. Under this isomorphism the seed state $\frac{1}{\sqrt{2}}(\ket{01} - \ket{10})$  is mapped to $\frac{1}{\sqrt{2}}(\ket{00} + \ket{11})$ in keeping with the above lemma.

It is interesting to note that the possible seed states of an LR coherent state system decompose into the so-called \emph{fusion} or \emph{entangling products} of the irrep carrier spaces $\ch_i$ and $\bar\ch_i$. The fusion product of $\ch_i\otimes\bar\ch_i$ for compact $G$ is defined to be the subspace of it that is invariant under $U_R(g)\otimes V_R(g)$, for all $g\in G$. The fusion product can clearly be constructed using a coherent group averaging projector. We see that $\ket{\phi(e)}$ is invariant under the action of $U_R(g)\otimes V_R(g)$, for all $g\in G$ and so the same conclusion holds for the maximally entangled state $\ket{\phi(e)}_i$ across $\ch_i\otimes\bar\ch_i$. From the isomorphism $\ch_i \otimes \bar\ch_i \cong \Hom(\bar \ch_i^*, \ch_i) \cong \Hom(\ch_i, \ch_i)$ one can apply Schur's lemma to show that the G-equivariant subspace $\Hom_G(\ch_i, \ch_i)$  of  $\Hom(\ch_i, \ch_i)$ is one-dimensional. This implies (using the same isomorphism in the opposite direction) that the $G$-invariant subspace (i.e.\ the fusion product) of $\ch_i\otimes\bar\ch_i$ is the one-dimensional subspace  $\ket{\phi(e)}_i$. This shows that the seed state of an LR coherent state system is (up to the coefficients $\alpha_i$) just the direct sum of the fusion products of the irrep carrier spaces $\ch_i$ and $\bar\ch_i$.  Fusion products appear when fusing neighbouring subregions in gauge theories across their joint interface \cite{Donnelly:2016auv,Donnelly:2011hn,Donnelly:2014fua,Geiller:2019bti}.

\subsection{Symmetries as frame reorientations versus gauge transformations}\label{ssec_symgauge}

In gauge field theories it is standard to distinguish between gauge transformations and symmetries. Gauge transformations change the description of the physical state, but not the state itself, while symmetries change the physical situation and thus the physical state. As such, symmetries have to commute with gauge transformations. Gauge transformations are generated by the constraints of the theory, while symmetries are generated by so-called \emph{charges}, which are specific gauge-invariant observables. Typically, gauge transformations have to vanish asymptotically in field theory, while symmetries may act non-trivially there, changing physical asymptotic data that also influences the gauge-invariant data in the bulk. Extending the distinction between gauge transformations and symmetries to \emph{finite} subregions in gauge theories necessitates the introduction of so-called \emph{edge modes} \cite{Donnelly:2016auv}. From the point of view of the subregion, edge modes are additional degrees of freedom that populate its boundary and encode the fact that a finite boundary breaks \emph{a priori} gauge-invariance (e.g.\ of cross-boundary data such as cross-boundary Wilson loops). They turn out to be crucial for constructing a well-defined phase or Hilbert space for the subregion of interest \cite{Donnelly:2016auv,Donnelly:2011hn,Donnelly:2014fua,Donnelly:2014gva,Geiller:2019bti,Carrozza:2021sbk}. As shown in \cite{Donnelly:2016auv} for classical non-Abelian theories, symmetries and gauge-transformations act on ``opposite'' sides of the edge mode, e.g.\ gauge transformations by left and symmetries by right action. 

Of relevance for our discussion is that edge modes have recently been identified, at the classical level, as dynamical reference frames for the local gauge group $G$ \cite{Carrozza:2021sbk}. They are thus reference frames in the same sense in which the quantum reference frames here are reference systems for the group $G$ of interest; in particular, upon quantization, edge modes are nothing but field theory versions of quantum reference frames. Indeed, one can also construct mechanical analogs of edge modes, e.g.\ for a subgroup of particles subject to translation invariance, mimicking a gauge field in a finite subregion of spacetime \cite{Carrozza:2021sbk}. In either field theory or mechanical toy models, edge modes can be viewed as ``internalized'' external reference frames for the subregion or subgroup of particles of interest; they are dynamical degrees of freedom originating in the complement of the subregion or subgroup (e.g.\ an external particle). While there is no unique edge mode frame, relational observables relative to a choice of edge mode hence encode how the subregion or subgroup of interest relates to its complement. Importantly, symmetries were identified in \cite{Carrozza:2021sbk} as reorientations of the edge mode frame in both the field theoretic and mechanical context; as they only act on the frame (gauge transformations act on all degrees of freedom), they thus change the physical situation by changing the relation between frame and remaining degrees of freedom. It was further shown that edge frame changes at the level of relational observables are particular relation-conditional symmetries. We have seen the analogous situation in special relativity in Sec.~\ref{ssec_SR}, distinguishing between Lorentz ``gauge'' and ``symmetry'' transformations, acting on the observer's tetrad $e^\mu_A$, where the symmetries indeed amount to reorientations of the frame.

The aim of the present subsection is to import this identification of symmetries as frame reorientations and their distinction from gauge transformations into the quantum theory and thus the context of quantum reference frames. In Sec.\ \ref{sec_relobschanges}, we shall then demonstrate how symmetries that depend on the relation between the old and new frame give rise to the transformations from the relational observables relative to the first frame to those relative to the second. 

While gauge transformations act on \emph{all} degrees of freedom\,---\,here both $R$ and $S$\,---\,, symmetries only act on the frame\,---\,in our case $R$. Recalling that symmetries have to act on the ``opposite'' side of the frame from gauge transformations and that we chose the latter to act by left multiplication, since we can already anticipate that symmetries must act by right multiplication\footnote{We emphasize that although we call the group action $x \mapsto x g^{-1}$ an action from the right, it is technically a left group action. This is because $x \mapsto x (gh)^{-1} = (xh^{-1})g^{-1}$ and hence the element $(gh)$ acts on $x$ first by the action of $h$ then $g$, which is the definition of a left group action.} on the frame~\cite{superselection_kitaev_2004}. However, we can also understand more thoroughly why the left multiplication by the group does not give rise to physical reorientations of the frame $R$. Indeed, for non-Abelian $G$, the left transformation $U_R(g) \otimes \I_S$ of the frame does not commute with arbitrary gauge transformations $U_R(g')\otimes U_S(g')$ for any $g\in G$ and will thus also not in general commute with the coherent group averaging `projector' $\Pi_{\rm phys}=\int_Gdg\,U_{RS}(g)$. A left transformation $U_R(g) \otimes \I_S$ will thus map a physical state $\ket{\psi_{\rm phys}}$ outside the physical Hilbert space $\ch_{\rm phys}$, in violation of gauge-invariance, unless $g$ is contained in the center of $G$. Therefore, $U_R(g) \otimes \I_S$ does not in general correspond to a physical transformation (reorientation) of the frame $R$.

\begin{figure}[t]
\centering
\includegraphics[trim=0 100 0 50,clip,width= 
400pt]{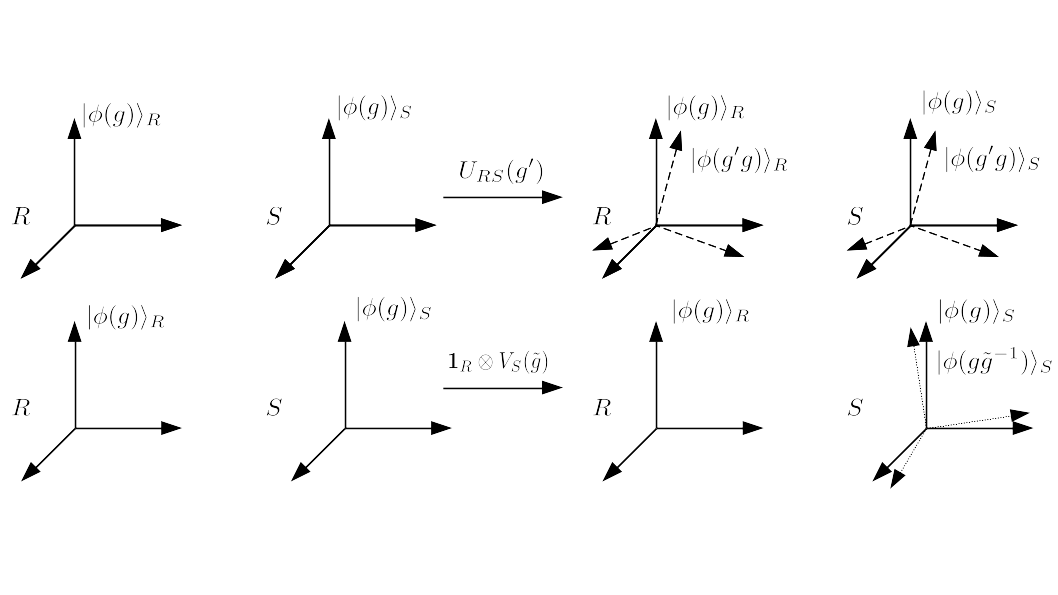}
\caption{The set of orthonormal axes in $\Rl^3$ is acted on regularly by $G \cong \SO(3)$. In the top, we see a reference $R$ and a system $S$ which are acted on by the left regular representation of $G$. This is a gauge transformation, and the relative orientation between $R$ and $S$ is preserved. On the bottom, we see that $R$ is acted on by the right regular action. This commutes with the gauge action $U_{RS}(g')$ and hence is a physical symmetry. The relative orientation between $R$ and $S$ is changed by this action. \label{fig:Left_and_right}}
\end{figure}

Suppose, therefore, that we are given an LR system $\{U_R,V_R,\ket{\phi(g)}_R\}$ of coherent states for the group $G$ and the frame $R$, according to Definition~\ref{def_LR_coherent}. The representation $V_R$, corresponding to the action from the right on the coherent states, does commute with arbitrary gauge transformations
\ba
[V_R(g)\otimes\I_S,U_{RS}(g')]=0\,,\q\q\forall\,g,g'\in G\,,
\ea
and so also with $\Pi_{\rm phys}$. Right transformations of the frame $R$ thus \emph{always} preserve the physical Hilbert space $\ch_{\rm phys}$. This means that the unitary $V_{R}(g)\otimes\I_S$ can be interpreted as a physical reorientation of the frame $R$ by the group element $g$, see Fig.~\ref{fig:Left_and_right} for an illustration in the case of a regular group action. In particular, the generators of the group action from the right $V_R$ will commute with all gauge transformations and thus constitute gauge-invariant observables of the theory; the Lie algebra of the action from the right of $G$ comprises the ``charges'' in $\ca_{\rm phys}$. For example, in models of relational dynamics \cite{pageEvolutionEvolutionDynamics1983,Hoehn:2019owq,Hoehn:2020epv}, the Hamiltonian of the clock would be such a charge, while in particle models subject to translation invariance \cite{Vanrietvelde:2018pgb,Vanrietvelde:2018dit,Carrozza:2021sbk} the frame particle's momentum would constitute the charge. By contrast, the generators of the gauge transformations $U_{RS}(g)$ constitute the constraints of the theory.

That $\cv_R^{\rm phys}(g):=V_R(g)\otimes\mathbf{1}_S\bullet V_R(g)^\dagger\otimes\mathbf{1}_S$ corresponds to physical frame reorientations can also be inferred from looking at its action on the relational observables relative to $R$ given in Eq.~\eqref{qrelobs}:
\ba 
\cv^\phys_R(g) \left(\strut F_{f_S,R}(g_1)\right) &=& (V_R(g)\otimes\mathbf{1}_S)\int_G dg'\, U_{RS}(g')\left(\strut \ket{\phi(g_1)}\!\bra{\phi(g_1)}_R\otimes f_S\right) U_{RS}(g')^\dagger (V_R(g)^\dagger \otimes\mathbf{1}_S) \nn\\
&=& F_{f_S,R}(g_1g^{-1})\,,\label{obsreorient}
\ea 
where we have used that $[U_{RS}(g'),V_R(g)\otimes\mathbf{1}_S]=0$ for all $g,g'\in G$. Note that $\cv_R^\phys$ is a unitary representation of $G$ on $\ca_\phys$. Hence, the frame orientation label is translated by $g^{-1}$ on the right. This is precisely the quantum analog of the classical situation in \cite[Eq.\ (93)]{Carrozza:2021sbk}. In particular, recall from Sec.~\ref{ssec_relobs} that the relational observable family $F_{f_S,R}(g_1)$ defines an orbit in $\ca_{\rm phys}$ of dimension $\dim\, X$ or less, where $X=G/H$ is the coset space. The frame reorientations generated by $V_R$ therefore constitute the transformations \emph{along} that orbit in the algebra, mapping the relational observable from the point $g_1$ on the orbit to the point $g_1g^{-1}$, see Fig.~\ref{fig:orbits}. Since the orbits are in general not of the same dimension (e.g., for an already invariant $f_S$ it is a point), they rather define a stratification of the physical observable algebra $\ca_{\rm phys}$.

\begin{figure}[t]
\centering
\includegraphics[width= 400pt]{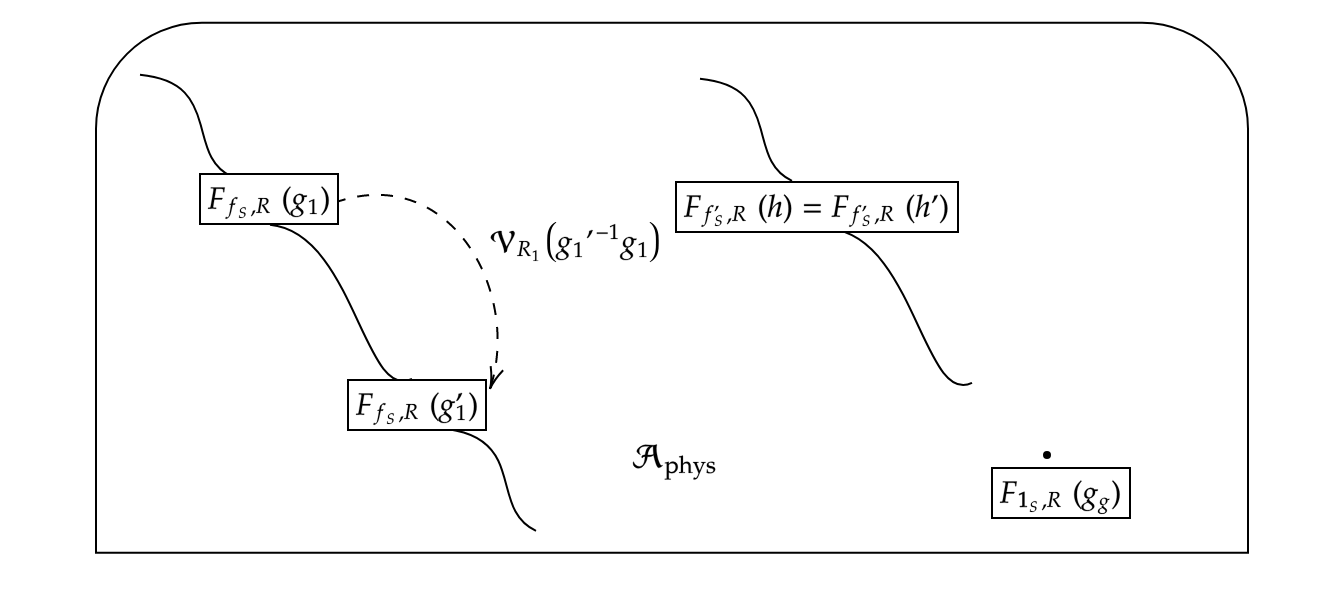}
\caption{This figure illustrates $G$-orbits of different relational observables relative to $R$ under $\cv_R$ in the space $\ca_\phys$. The orbit  $\{F_{\mathbf{1}_S,R}(g)\}_{g \in G}$ is a single point. The orbit $\{F_{f'_S,R}(g)\}_{g \in G}$   is the right line. We see that two observables $F_{f'_S,R}(h)$ and $F_{f'_S,R}(h')$ where $h,h'$ belong to the same right coset are equal. The orbit $\{F_{f_S,R}(g)\}_{g \in G}$ is represented by the left line.  One can transform between  $F_{f_S,R}(g_1)$ and $F_{f_S,R}(g_1')$ on this orbit with the operator $\cv_R^{\phys}({g_1'}^{-1} g_1)$ as represented by the dotted arrow. Different orbits generally being of different dimension, the relational observable families define a stratification of $\ca_{\rm phys}$. \label{fig:orbits}}
\end{figure}

We can thus conclude that $V_R$ defines the representation of the symmetry group (associated with $R$), whilst the global representation $U_{RS}$ defines the action of the gauge group; note, however, that abstractly both the symmetry and gauge group are $G$.
In other words, if the coherent state system which we employ to model the orientations of the quantum reference frame does not admit a unitary  action from the right, then the ensuing representation will not admit any symmetries in the above sense of frame reorientations. Recall that in that case the physical system Hilbert space and observable algebra become frame orientation dependent. In the next section, we shall elucidate why these two properties are intertwined.

\section{Jumping into a  QRF perspective I: extending the Page-Wootters formalism}\label{sec:gen_PW}

As shown in \cite{Hoehn:2019owq,Hoehn:2020epv,Hoehn:2021flk,periodic} for Abelian groups, there exist two unitarily equivalent ways to reduce from the perspective-neutral Hilbert space $\ch_{\rm phys}$ to an internal QRF perspective. In this section, we build up on these works to extend the  Page-Wootters formalism \cite{pageEvolutionEvolutionDynamics1983,pageClockTimeEntropy1994,giovannettiQuantumTime2015,Smith:2017pwx,Smith:2019imm} to general groups, yielding a `relational Schr\"odinger picture' in which states, but not observables of $S$ depend on the frame orientation $g$.\footnote{Apart from the fact that $\ca_{S,g}^{\rm phys}$ may depend on $g$.} In the next section, we will discuss the second method, yielding a `relational Heisenberg picture' for internal frame perspectives.

\subsection{Generalized Page-Wootters reduction of states and observables}

We recall the physical system Hilbert space $\ch_{S,g}^{\rm phys}\subset\ch_S$ from Sec.~\ref{ssec_relobs} and define the \emph{Schr\"odinger reduction map} $\calr_{\mathbf{S}}(g):\ch_{\rm phys}\rightarrow\ch_{S,g}^{\rm phys}$ from perspective-neutral states to the description of $S$ relative to $R$ being in orientation $g\in G$ by the conditioning operation 
\ba 
\calr_{\mathbf{S}}(g)=\bra{\phi(g)}_R\otimes\mathbf{1}_S\,.\label{Sredmap}
\ea 
The bold label $\mathbf{S}$ stands here for `Schr\"odinger picture', rather than the system $S$. 
We write the conditional image states as
\ba
\ket{\psi_S^{\rm phys}(g)} := \calr_{\mathbf{S}}(g)\,\ket{\psi_{\rm phys}} \label{condstate3}
\ea
since, in analogy to the Schr\"odinger equation of the Page-Wootters formalism, they satisfy the covariance property under  (gauge transformation induced) changes of the frame orientation label:
\ba
U_S(g')\,\ket{\psi_S^{\rm phys}(g)} &=& \braket{\phi(g)|_R\,\otimes U_S(g')\,|\psi_{\rm phys}} = \braket{\phi(g)|_R\,U_R^\dag(g')\otimes \mathbf{1}_S\,|\psi_{\rm phys}}=\calr_{\mathbf{S}}(g'g)\ket{\psi_{\rm phys}}\nn\\
&=&\ket{\psi_S^{\rm phys}(g'g)}\,,\label{condcov}
\ea
where we have made use of Eq.~\eqref{stateinv} in the second equality. To see that the conditional states indeed lie in the physical system Hilbert space $\ch_{S,g}^{\rm phys}$, we note that using the resolution of the identity \eqref{resid}, we can rewrite physical states as
\ba
\ket{\psi_{\rm phys}}=\int_G dg'\ket{\phi(g')}_R\otimes\ket{\psi_S^{\rm phys}(g')}\,,\label{psuse}
\ea
so that for a unimodular group $G$ a redefinition of the integration measure yields
\ba 
\ket{\psi_S^{\rm phys}(g)}=\calr_{\mathbf{S}}(g)\ket{\psi_{\rm phys}}=\int_G dg'\braket{\phi(g)|\phi(g'g)}\,U_S(g')\ket{\psi_S^{\rm phys}(g)}\underset{\eqref{Sproj}}{=}\Pi_{S}^{\rm phys}(g)\,\ket{\psi_S^{\rm phys}(g)}\,.\nn
\ea 
Lemma~\ref{lem_Sproj} tells us that, as we change the frame orientation according to Eq.~\eqref{condcov}, the physical system Hilbert space $\ch_{S,g}^{\rm phys}$ in which the reduced state lives, changes inside the kinematical system Hilbert space $\ch_S$, unless condition \eqref{projgind} is satisfied. We will explain this state of affairs from the point of view of symmetries shortly. 

However, first recall that Lemma~\ref{lem_Sproj} also tells us that the isotropy group $H$ of the frame orientation states is an isotropy group of the physical system Hilbert space $\ch_{S,g}^{\rm phys}$, i.e.\ stabilizing it inside $\ch_S$. Indeed, suppose $g'=ghg^{-1}$ with $h\in H$ (so that $g'$ is in the isotropy group of $\ket{\phi(g)}_R$) in Eq.~\eqref{condcov}, then we clearly have 
\ba
U_S(g')\,\ket{\psi_S^{\rm phys}(g)}=e^{i\varphi(h)}\,\ket{\psi_S^{\rm phys}(g)}\,,\label{Sstateisotropy}
\ea 
where $\varphi(h)$ is some phase. Hence, up to phase, physical system states are $H$-invariant, just like the system observables inside the relational observables relative to $R$ are effectively $H$-invariant, see Sec.~\ref{ssec_relobs}. More precisely, we can now also see why $f_S,f_S'$ with $f_S\sim f_S'$ in Eq.~\eqref{HfS} yield indistinguishable physical system observables. Noting that (\ref{Sstateisotropy}) implies $U_S(h)\Pi_S^{\rm phys}(e)=e^{i\varphi(h)}\Pi_S^{\rm phys}(e)$, and recalling from Lemma~\ref{lem_Sproj} that $[U_S(h),\Pi_S^{\rm phys}(e)]=0$, we have 
\ba 
\Pi_S^{\rm phys}(e)\,f_S\,\Pi_S^{\rm phys}(e)&=&\f{1}{{\rm Vol}(H)}\,\int_H d_H h\, U_S(h)\,\Pi_S^{\rm phys}(e)\, f_S\, \Pi_S^{\rm phys}(e)\,U^\dag_S(h)
\nn\\
&=&\f{1}{{\rm Vol}(H)}\,\Pi_S^{\rm phys}(e)\int_H d_H h\, U_S(h)\,f_S\, U^\dag_S(h)\, \Pi_S^{\rm phys}(e)\nn\\
&=&\Pi_S^{\rm phys}(e)\,f'_S\,\Pi_S^{\rm phys}(e)\,,\label{isotropyobs}
\ea 
where in arriving at the last line, we have invoked Eq.~\eqref{HfS}, before reversing the steps. Thus, $f_S$ and $f_S'$ project to the \emph{same} element $f_{S,e}^{\rm phys}\in\ca_{S,e}^{\rm phys}$. In other words, \emph{the frame $R$ can only resolve properties of $S$ that are also invariant under the frame orientation isotropy group}. 

Similarly, Lemma~\ref{lem_Sproj} entails that the center of $G$ is also a subgroup of the isotropy group of $\ch_{S,g}^{\rm phys}$ inside $\ch_S$, however one that now transforms physical system states non-trivially inside it, i.e.\ not just by a phase, and is therefore changing the physical situation.

Owing to the redundancy in the description of $\ch_{\rm phys}$, the Schr\"odinger reduction is invertible (this is not the case when applied to the full kinematical Hilbert space $\ch_{\rm kin}$). Due to the resolution of the identity~\eqref{resid}, this holds regardless of whether the frame isotropy group $H$ is trivial or not:
\begin{lemma} \label{lem_invred}
For unimodular groups $G$, the Schr\"odinger reduction maps are invertible, with the inverse $\calr^{-1}_{\mathbf{S}}(g):\ch_{S,g}^{\rm phys}\rightarrow\ch_{\rm phys}$ given by
\ba 
\calr_{\mathbf{S}}^{-1}(g):=\Pi_{\rm phys}\,\left(\ket{\phi(g)}_R\otimes \mathbf{1}_S\right)=\int_G dg'\ket{\phi(g'g)}_R\otimes U_S(g')\,.
\ea
\end{lemma}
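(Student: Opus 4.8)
The plan is to verify directly that the operator $\calr_{\mathbf{S}}^{-1}(g)$ displayed in the statement is a genuine two-sided inverse of the conditioning map $\calr_{\mathbf{S}}(g)=\bra{\phi(g)}_R\otimes\mathbf{1}_S$ between $\ch_{\rm phys}$ and $\ch_{S,g}^{\rm phys}$. First I would record that the two expressions given for $\calr_{\mathbf{S}}^{-1}(g)$ coincide: inserting the definition~\eqref{proj} of $\Pi_{\rm phys}$ and using $U_{RS}(g')=U_R(g')\otimes U_S(g')$ together with the transitivity relation $U_R(g')\ket{\phi(g)}_R=\ket{\phi(g'g)}_R$ of the coherent state system immediately yields $\Pi_{\rm phys}\left(\ket{\phi(g)}_R\otimes\mathbf{1}_S\right)=\int_G dg'\,\ket{\phi(g'g)}_R\otimes U_S(g')$. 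Since $\Pi_{\rm phys}$ maps into the physical Hilbert space, $\calr_{\mathbf{S}}^{-1}(g)$ indeed takes values in $\ch_{\rm phys}$, while the computation immediately preceding the lemma already shows that $\calr_{\mathbf{S}}(g)$ lands in $\ch_{S,g}^{\rm phys}$; so both maps have the asserted domains and codomains.

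For the first composition I would compute, acting on an arbitrary $\ket{\psi_{\rm phys}}\in\ch_{\rm phys}$,
\ba
\calr_{\mathbf{S}}^{-1}(g)\,\calr_{\mathbf{S}}(g)&=&\Pi_{\rm phys}\left(\ket{\phi(g)}\!\bra{\phi(g)}_R\otimes\mathbf{1}_S\right)\,.\nn
\ea
Restricted to $\ch_{\rm phys}$, this operator coincides with its $G$-twirl by Lemma~\ref{lem_G}, and Lemma~\ref{lem_Rd} gives $\cg\left(\ket{\phi(g)}\!\bra{\phi(g)}_R\otimes\mathbf{1}_S\right)=\mathbf{1}_{RS}$. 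Hence $\calr_{\mathbf{S}}^{-1}(g)\,\calr_{\mathbf{S}}(g)\,\ket{\psi_{\rm phys}}=\ket{\psi_{\rm phys}}$, i.e.\ the composition acts as the identity on $\ch_{\rm phys}$. It is precisely here that the resolution of the identity~\eqref{resid} enters (through Lemma~\ref{lem_Rd}), and this is what makes the inverse well-defined regardless of whether the isotropy group $H$ is trivial.

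For the second composition I would observe
\ba
\calr_{\mathbf{S}}(g)\,\calr_{\mathbf{S}}^{-1}(g)&=&\left(\bra{\phi(g)}_R\otimes\mathbf{1}_S\right)\Pi_{\rm phys}\left(\ket{\phi(g)}_R\otimes\mathbf{1}_S\right)=\Pi_S^{\rm phys}(g)\,,\nn
\ea
which is nothing but the definition~\eqref{Sproj} of $\Pi_S^{\rm phys}(g)$. By Lemma~\ref{lem_Sproj} this is the orthogonal projector onto $\ch_{S,g}^{\rm phys}$, so it restricts to the identity on $\ch_{S,g}^{\rm phys}$, the domain of $\calr_{\mathbf{S}}^{-1}(g)$. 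Combined with the previous paragraph, this establishes both $\calr_{\mathbf{S}}^{-1}(g)\,\calr_{\mathbf{S}}(g)=\mathbf{1}_{\ch_{\rm phys}}$ and $\calr_{\mathbf{S}}(g)\,\calr_{\mathbf{S}}^{-1}(g)=\mathbf{1}_{\ch_{S,g}^{\rm phys}}$, proving invertibility.

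The main obstacle I anticipate is not the algebra, which is short, but the functional-analytic bookkeeping when $G$ is non-compact: physical states are then distributions, $\Pi_{\rm phys}$ is only an improper projector, and each composition must be read as an identity on the dense domain $\eta(\Phi)\subset\ch_{\rm phys}$ on which the weak equalities of Lemmas~\ref{lem_G} and~\ref{lem_Rd} hold, and on the corresponding image in $\ch_{S,g}^{\rm phys}$. In the informal, physicist's reading adopted throughout this section the manipulations go through verbatim; a fully rigorous treatment would require fixing $\Phi$ so that all the group-averaging integrals converge, in the spirit of Lemma~\ref{lemCompact} and the discussion in Appendix~\ref{app_groupaverage}.
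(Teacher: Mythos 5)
Your proof is correct and follows essentially the same route as the paper's: the right inverse is the definition of $\Pi_S^{\rm phys}(g)$ (acting as the identity on $\ch_{S,g}^{\rm phys}$), and the left inverse is $\Pi_{\rm phys}\left(\ket{\phi(g)}\!\bra{\phi(g)}_R\otimes\mathbf{1}_S\right)\approx\mathbf{1}_{RS}$ via Lemmas~\ref{lem_G} and~\ref{lem_Rd}. Your additional remarks (equivalence of the two displayed forms of $\calr_{\mathbf{S}}^{-1}(g)$, the role of the resolution of the identity, and the domain caveats for non-compact $G$) are consistent elaborations of what the paper leaves implicit.
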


\begin{proof}
The right inverse property is immediate
\ba 
\calr_{\mathbf{S}}(g)\cdot\calr_{\mathbf{S}}^{-1}(g)=\left(\bra{\phi(g)}_R\otimes\mathbf{1}_S\right)\Pi_{\rm phys}\left(\ket{\phi(g)}_R\otimes\mathbf{1}_S\right)=\Pi_S^{\rm phys}(g)\,.\nn
\ea 
The left inverse property follows from
\ba
\calr_{\mathbf{S}}^{-1}(g)\cdot\calr_{\mathbf{S}}(g)=\Pi_{\rm phys}\left(\ket{\phi(g)}\!\bra{\phi(g)}_R\otimes \mathbf{1}_S\right)\approx\mathbf{1}_{RS}\,,\nn
\ea
where the last weak equality is entailed by Lemmas~\ref{lem_G} and~\ref{lem_Rd}.
\end{proof}

It is now also easy to see from the perspective of symmetries why $\ch_{S,g}^{\rm phys}$ `rotates' (resp.\ does not `rotate') inside $\ch_S$ when the frame does not (resp.\ does) admit symmetries. We have $\ch_{S,g}^{\rm phys}=\calr_{\mathbf{S}}(g)\big[\ch_{\rm phys}\big]$ and $\ch_{S,g'g}^{\rm phys}=\calr_{\mathbf{S}}(g'g)\big[\ch_{\rm phys}\big]$.
When $R$ admits symmetries, we can write 
\ba
\calr_\mathbf{S}(g'g)=\calr_\mathbf{S}\left(g(g^{-1}g'g)\right)=\calr_\mathbf{S}(g)\left(V_R(g^{-1}g'g)\otimes\I_S\right)\,.
\ea 
Now recall from Sec.~\ref{ssec_symgauge} that $V_{R}(g^{-1}g'g)\otimes\I_S$ is a \emph{physical} transformation that commutes with all gauge transformations and thus leaves $\ch_{\rm phys}$ invariant. Hence, 
\ba 
\ch_{S,g'g}^{\rm phys}=\calr_{\mathbf{S}}(g)\big[\left(V_R(g^{-1}g'g)\otimes\I_S\right)\ch_{\rm phys}\big]=\calr_{\mathbf{S}}(g)\big[\ch_{\rm phys}\big]=\ch_{S,g}^{\rm phys}\,.\label{hsggprime}
\ea 
By contrast, when $R$ does not admit any symmetries, we can only write
\ba 
\calr_\mathbf{S}(g'g)=\calr_\mathbf{S}(g)\left(U_R^\dag(g')\otimes\I_S\right)\,.\label{nonsym}
\ea 
If $g'$ is in the isotropy group of $\ket{\phi(g)}_R$, then $\calr_\mathbf{S}(g'g)=e^{i\varphi(h)}\calr_\mathbf{S}(g)$ by definition of $\calr_\mathbf{S}(g'g)$. When $g'$ is contained in the center of $G$, then $U_R^\dag(g')\otimes\I_S$ commutes with all gauge transformations and so is a physical transformation that leaves $\ch_{\rm phys}$ invariant. In both cases, reasoning as in Eq.~\eqref{hsggprime}, we conclude that $\ch_{S,g'g}^{\rm phys}=\ch_{S,g}^{\rm phys}$. However, if $g'$ is now any other element from a coset $x\in X$, then, as explained in Sec.~\ref{ssec_symgauge}, $U_R^\dag(g')\otimes\I_S$ is a non-gauge-invariant transformation that specifically does \emph{not} in general preserve $\ch_{\rm phys}$. In this case, we will in general have the following inequality:
\ba 
\ch_{S,g'g}^{\rm phys}=\calr_{\mathbf{S}}(g)\big[\left(U^\dag_R(g')\otimes\I_S\right)\ch_{\rm phys}\big]\neq\calr_{\mathbf{S}}(g)\big[\ch_{\rm phys}\big]=\ch_{S,g}^{\rm phys}\,.
\ea 

The invertibility of the Schr\"odinger reduction map allows us to define an encoding map ${\mathcal{E}_{\mathbf{S}}^g:\ca_{S,g}^{\rm phys}\rightarrow\ca_{\rm phys}}$, embedding physical system operators into Dirac observables on $\ch_{\rm phys}$ via
\ba 
\mathcal{E}_\mathbf{S}^g\left(f_{S,g}^{\rm phys}\right):=\calr_\mathbf{S}^{-1}(g)\,f_{S,g}^{\rm phys}\,\calr_{\mathbf{S}}(g)=\Pi_{\rm phys}\left(\ket{\phi(g)}\!\bra{\phi(g)}_R\otimes f_{S,g}^{\rm phys}\right)\,.\label{Sencode}
\ea 
This coincides (weakly) with the relational Dirac observables. More precisely, we have the following extension to general unimodular groups of corresponding results in \cite{Hoehn:2019owq,Hoehn:2020epv,periodic,Hoehn:2021flk}:
\begin{theorem}\label{thm_obs}
Let $f_S\in\ca_S$ be arbitrary. Reducing the relational observable $F_{f_S,R}(g)$ with the Schr\"odinger reduction map yields the corresponding physical system observable $f_{S,g}^{\rm phys}$ given in Eq.~\eqref{fsys}:
\ba
\calr_\mathbf{S}(g)\,F_{f_{S},R}(g)\,\calr_\mathbf{S}^{-1}(g) =\Pi_S^{\rm phys}(g)\,f_S\,\Pi_S^{\rm phys}(g)= f_{S,g}^{\rm phys}\,.
\ea
Conversely, let $f_{S,g}^{\rm phys}\in\ca_{S,g}^{\rm phys}$ be arbitrary. Then it embeds via the encoding map into $\ca_{\rm p hys}$ as its corresponding relational observable
\ba
\mathcal{E}_\mathbf{S}^g\left(f_{S,g}^{\rm phys}\right)\approx F_{f_{S,g}^{\rm phys},R}(g)\,.\nn
\ea
\end{theorem}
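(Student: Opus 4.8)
The plan is to prove the two statements separately, in both cases reducing everything to the \emph{exact} operator identity $\Pi_{\rm phys}\,A\,\Pi_{\rm phys}=\cg(A)\,\Pi_{\rm phys}$ established in the proof of Lemma~\ref{lem_G}, together with the closed form of the inverse reduction map $\calr_\mathbf{S}^{-1}(g)=\Pi_{\rm phys}\left(\ket{\phi(g)}_R\otimes\mathbf{1}_S\right)$ from Lemma~\ref{lem_invred} and the definition of $\Pi_S^{\rm phys}(g)$ in Eq.~\eqref{Sproj}.

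For the first (reduction) statement I would start by inserting the explicit inverse map, writing $\calr_\mathbf{S}(g)\,F_{f_S,R}(g)\,\calr_\mathbf{S}^{-1}(g)=\left(\bra{\phi(g)}_R\otimes\mathbf{1}_S\right)F_{f_S,R}(g)\,\Pi_{\rm phys}\left(\ket{\phi(g)}_R\otimes\mathbf{1}_S\right)$. Since $F_{f_S,R}(g)=\cg(A)$ with $A:=\ket{\phi(g)}\!\bra{\phi(g)}_R\otimes f_S$, the identity $\cg(A)\,\Pi_{\rm phys}=\Pi_{\rm phys}\,A\,\Pi_{\rm phys}$ replaces the $G$-twirl by the symmetric sandwich $\Pi_{\rm phys}\,A\,\Pi_{\rm phys}$. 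The crucial bookkeeping step is the factorization $A=\left(\ket{\phi(g)}_R\otimes\mathbf{1}_S\right)f_S\left(\bra{\phi(g)}_R\otimes\mathbf{1}_S\right)$, which splits the whole expression into $\left[\left(\bra{\phi(g)}_R\otimes\mathbf{1}_S\right)\Pi_{\rm phys}\left(\ket{\phi(g)}_R\otimes\mathbf{1}_S\right)\right]f_S\left[\left(\bra{\phi(g)}_R\otimes\mathbf{1}_S\right)\Pi_{\rm phys}\left(\ket{\phi(g)}_R\otimes\mathbf{1}_S\right)\right]$. Recognising each bracket as $\Pi_S^{\rm phys}(g)$ by Eq.~\eqref{Sproj} gives $\Pi_S^{\rm phys}(g)\,f_S\,\Pi_S^{\rm phys}(g)=f_{S,g}^{\rm phys}$ via Eq.~\eqref{fsys}. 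Because the Lemma~\ref{lem_G} identity holds exactly, this chain is an exact operator equality and not merely a weak one, matching the ``$=$'' asserted in the theorem.

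The second (encoding) statement is then essentially immediate. Using the closed form $\mathcal{E}_\mathbf{S}^g\left(f_{S,g}^{\rm phys}\right)=\Pi_{\rm phys}\left(\ket{\phi(g)}\!\bra{\phi(g)}_R\otimes f_{S,g}^{\rm phys}\right)$ already recorded in Eq.~\eqref{Sencode}, I would simply apply Lemma~\ref{lem_G} to the aligned operator $\ket{\phi(g)}\!\bra{\phi(g)}_R\otimes f_{S,g}^{\rm phys}$ to obtain $\Pi_{\rm phys}\left(\ket{\phi(g)}\!\bra{\phi(g)}_R\otimes f_{S,g}^{\rm phys}\right)\approx\cg\left(\ket{\phi(g)}\!\bra{\phi(g)}_R\otimes f_{S,g}^{\rm phys}\right)=F_{f_{S,g}^{\rm phys},R}(g)$, which is the claim. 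Here the weak (rather than strong) equality is expected and unavoidable, since $\mathcal{E}_\mathbf{S}^g$ yields the $\Pi_{\rm phys}$-representative whereas $F$ is the $G$-twirl, and the two agree only on $\ch_{\rm phys}$.

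I do not anticipate a genuine conceptual obstacle; the content is careful tensor-factor bookkeeping built from identities already in hand. The two points that most deserve careful statement are (i) well-definedness, i.e.\ convergence of $\cg(A)$ for the aligned observable, which is guaranteed by the covariant-POVM form in Eq.~\eqref{qrelobs2} together with compactness of the isotropy group; and (ii) the exact/weak distinction, which comes out exact in the first statement precisely because it rests on the exact identity of Lemma~\ref{lem_G} (and on $f_{S,g}^{\rm phys}=\Pi_S^{\rm phys}(g)\,f_S\,\Pi_S^{\rm phys}(g)$ being the definition Eq.~\eqref{fsys}), and only weak in the second.
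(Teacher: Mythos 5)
Your proof is correct and takes essentially the same route as the paper's: the paper likewise inserts the explicit maps $\calr_\mathbf{S}(g)=\bra{\phi(g)}_R\otimes\mathbf{1}_S$ and $\calr_\mathbf{S}^{-1}(g)=\Pi_{\rm phys}\left(\ket{\phi(g)}_R\otimes\mathbf{1}_S\right)$, uses the exact identity $\cg(A)\,\Pi_{\rm phys}=\Pi_{\rm phys}\,A\,\Pi_{\rm phys}$ from the proof of Lemma~\ref{lem_G} to rewrite $F_{f_S,R}(g)\,\Pi_{\rm phys}$, and then recognizes the two $\Pi_S^{\rm phys}(g)$ factors via Eq.~\eqref{Sproj}. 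The converse is likewise read off from Eqs.~\eqref{qrelobs} and~\eqref{Sencode} exactly as you do, including the correct exact-versus-weak distinction between the two statements.
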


\begin{proof}
The first statement follows from
\ba
\calr_\mathbf{S}(g)\,F_{f_{S},R}(g)\,\calr_\mathbf{S}^{-1}(g) &=& \left(\bra{\phi(g)}_R\otimes \mathbf{1}_S\right)\Pi_{\rm phys}\left(\ket{\phi(g)}\!\bra{\phi(g)}_R\otimes f_{S}\right)\,\Pi_{\rm phys}\,\left(\ket{\phi(g)}_R\otimes \mathbf{1}_S\right)\nn\\
&\underset{\eqref{Sproj}}{=}&\Pi_S^{\rm phys}(g)\,f_S\,\Pi_S^{\rm phys}(g)\,,\nn
\ea
while its converse is an immediate consequence of Eqs.~\eqref{qrelobs} and~\eqref{Sencode}.
\end{proof}
In conjunction with Theorem~\ref{lem_homo} and Lemma~\ref{lem_weakequiv}, this result establishes an equivalence between the observable algebras $\ca_{S,g}^{\rm phys}$ and $\ca_{\rm phys}$. In more detail, Corollary~\ref{CorIsometry} below shows that $\calr_\mathbf{S}(g)$ is an isometry, hence conjugation with it defines a $*$-isomorphism between $\ca_{S,g}^{\rm phys}$ and $\ca_{\rm phys}$.

Importantly, note that the first statement of Theorem~\ref{thm_obs} is the quantum analog of our observation in Sec.~\ref{ssec_SR}, case (A), that ``relational observables'' in special relativity look simple in the internal perspective of the frame relative to which they are formulated; in fact, their description coincides with the non-invariant quantity that is supposed to be described relative to the frame.

\subsection{Conditional probability interpretation of relational observables}\label{ssec_condprob}

The original Page-Wootters formalism for quantum clocks \cite{pageEvolutionEvolutionDynamics1983,pageTimeInaccessibleObservable1989,pageClockTimeEntropy1994,Smith:2017pwx,Smith:2019imm,giovannettiQuantumTime2015} provides a conditional probability interpretation for the relational quantum dynamics encoded in the physical states solving a Hamiltonian constraint. This conditional probability interpretation was subsequently criticised by Kucha\v{r} in \cite{kucharTimeInterpretationsQuantum2011a} as being inconsistent with the constraints of the theory, dynamics and relativistic covariance. Using the equivalence between the formulation of the relational dynamics in terms of relational observables on $\ch_{\rm phys}$ and the Schr\"odinger picture of the Page-Wootters formalism, these criticisms were, however, resolved in \cite{Hoehn:2019owq,Hoehn:2020epv}, confirming the consistency of the  interpretation (see also \cite{Dolby:2004ak,Gambini:2008ke,giovannettiQuantumTime2015} for earlier, but distinct proposals for a resolution). Here, we extend this conditional probability interpretation to perspective-neutral states of general unimodular groups. Specifically, this will equip the relational observables introduced in Eq.~\eqref{qrelobs2} with a consistent conditional probability interpretation \emph{on} physical states, extending \cite{Hoehn:2019owq,Hoehn:2020epv} (see also \cite{Chataignier:2020fys}).

To this end, we first note that Corollary~\ref{cor_PIP} and Eqs.~\eqref{Sredmap} and~\eqref{condstate3} imply a preservation of inner products:
\begin{corol}\label{CorIsometry}
For $G$ unimodular, the physical inner product in Eq.~\eqref{PIP} equals the inner product on the physical system Hilbert space $\ch_{S,g}^{\rm phys}$:
\ba 
\braket{\psi_{\rm phys}|\phi_{\rm phys}}_{\rm phys}=\braket{\psi_S^{\rm phys}(g)|\phi_S^{\rm phys}(g)}\,,\q\q\forall\,g\in G\,.\nn
\ea 
Given Lemma~\ref{lem_invred}, the perspective-neutral Hilbert space $\ch_{\rm phys}$ and the physical system Hilbert space $\ch_{S,g}^{\rm phys}$ are therefore isometrically isomorphic, and every $\calr_\mathbf{S}(g)$ is an isometry.
\end{corol}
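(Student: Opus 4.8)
The plan is to read the result off almost directly from Corollary~\ref{cor_PIP}, whose final expression already isolates the relevant conditional inner product. First I would compute the adjoint of the Schr\"odinger reduction map $\calr_{\mathbf{S}}(g)=\bra{\phi(g)}_R\otimes\mathbf{1}_S$ of Eq.~\eqref{Sredmap}, namely $\calr_{\mathbf{S}}(g)^\dagger=\ket{\phi(g)}_R\otimes\mathbf{1}_S$, viewed as a map $\ch_S\to\ch_{\rm kin}$. Composing the two gives the kinematical operator
\[
\calr_{\mathbf{S}}(g)^\dagger\,\calr_{\mathbf{S}}(g)=\ket{\phi(g)}\!\bra{\phi(g)}_R\otimes\mathbf{1}_S\,,
\]
which is precisely the frame-orientation `projector' appearing in Corollary~\ref{cor_PIP} and in Lemma~\ref{lem_Rd}.

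Next, using the definition of the conditional states in Eq.~\eqref{condstate3}, I would rewrite the inner product on $\ch_{S,g}^{\rm phys}$ (the restriction of the $\ch_S$ inner product to this subspace) by moving one reduction map across as its adjoint:
\[
\braket{\psi_S^{\rm phys}(g)|\phi_S^{\rm phys}(g)}=\braket{\calr_{\mathbf{S}}(g)\psi_{\rm phys}|\calr_{\mathbf{S}}(g)\phi_{\rm phys}}_S=\braket{\psi_{\rm phys}|\left(\ket{\phi(g)}\!\bra{\phi(g)}_R\otimes\mathbf{1}_S\right)|\phi_{\rm phys}}_{\rm kin}\,.
\]
The right-hand side is exactly the conditional form of the physical inner product established in Corollary~\ref{cor_PIP}, which equals $\braket{\psi_{\rm phys}|\phi_{\rm phys}}_{\rm phys}$. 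This yields the claimed identity for every $g\in G$; the $g$-independence of the right-hand side is inherited from Corollary~\ref{cor_PIP} and is where unimodularity of $G$ enters (via the symmetry of $\Pi_{\rm phys}$ and Lemmas~\ref{lem_G} and~\ref{lem_Rd} used there).

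For the second assertion I would simply combine this with Lemma~\ref{lem_invred}. The identity just shown states precisely that $\calr_{\mathbf{S}}(g)$ preserves inner products, i.e.\ it is an isometry from $\ch_{\rm phys}$ onto its image $\ch_{S,g}^{\rm phys}$, while Lemma~\ref{lem_invred} guarantees that it is invertible with explicit inverse $\calr_{\mathbf{S}}^{-1}(g)$. An invertible isometry is an isometric isomorphism, so $\ch_{\rm phys}$ and $\ch_{S,g}^{\rm phys}$ are isometrically isomorphic and every $\calr_{\mathbf{S}}(g)$ is an isometry, as asserted.

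The only genuine care needed\,---\,and the step I expect to be the main, if mild, obstacle\,---\,is the adjoint manipulation when $G$ is non-compact, since then the $\ket{\psi_{\rm phys}}$ are distributions rather than honest vectors in $\ch_{\rm kin}$, so both $\braket{\cdot|\cdot}_S$ and $\braket{\cdot|\cdot}_{\rm kin}$ must be read in the rigged-Hilbert-space sense and $\calr_{\mathbf{S}}(g)$ is strictly a map $\ch_{\rm phys}\to\ch_{S,g}^{\rm phys}$. Working at the level of rigor of the surrounding sections, this is controlled by the earlier observation preceding Lemma~\ref{lem_invred} that $\calr_{\mathbf{S}}(g)\ket{\psi_{\rm phys}}=\Pi_S^{\rm phys}(g)\ket{\psi_S^{\rm phys}(g)}$ indeed lands in $\ch_{S,g}^{\rm phys}$; beyond verifying this, the computation is purely formal and requires no further analytic estimates.
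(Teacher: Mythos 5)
Your proposal is correct and follows essentially the same route as the paper: the paper presents Corollary~\ref{CorIsometry} as an immediate consequence of Corollary~\ref{cor_PIP} together with Eqs.~\eqref{Sredmap} and~\eqref{condstate3}, which is precisely your factorization $\ket{\phi(g)}\!\bra{\phi(g)}_R\otimes\mathbf{1}_S=\calr_{\mathbf{S}}(g)^\dagger\,\calr_{\mathbf{S}}(g)$ applied inside the conditional inner product, plus Lemma~\ref{lem_invred} for the isomorphism claim. You have merely made explicit the adjoint manipulation and the rigged-Hilbert-space caveat that the paper leaves implicit.
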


Next, we note that also expectation values of relational observables are preserved (extending results in \cite{Hoehn:2019owq,Hoehn:2020epv,periodic,Hoehn:2021flk} to general unimodular groups).
\begin{corol}\label{cor_exp}
Let $f_{S,g}^{\rm phys}\in\ca_{S,g}^{\rm phys}$ be some physical system observable and $F_{f_{S,g}^{\rm phys},R}(g)$ its corresponding relational observable. Then the expectation values are preserved as follows
\ba
\braket{\psi_{\rm phys}|\,F_{f_{S,g}^{\rm phys},R}(g)\,|\phi_{\rm phys}}_{\rm phys} = \braket{\psi_S^{\rm phys}(g)|\,f_{S,g}^{\rm phys}\,|\phi_S^{\rm phys}(g)}\,,\nn
\ea
where $\ket{\phi_S^{\rm phys}(g)} = \calr_\mathbf{S}(g)\,\ket{\phi_{\rm phys}}\in\ch_{S,g}^{\rm phys}$ is the state of $S$ conditioned on the frame $R$ being in orientation $g$ (and similarly for $\ket{\psi_S^{\rm phys}(g)}$).
\end{corol}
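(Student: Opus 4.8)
The plan is to reduce the statement to the two results already in hand: the intertwining relation of Theorem~\ref{thm_obs} between relational observables on $\ch_{\rm phys}$ and their reduced counterparts $f_{S,g}^{\rm phys}$ on $\ch_{S,g}^{\rm phys}$, together with the isometry property of the Schr\"odinger reduction map $\calr_\mathbf{S}(g)$ recorded in Corollary~\ref{CorIsometry}. No new group-averaging estimate is needed; everything follows by chaining these two facts once the conditional states are recognized as the images of physical states under $\calr_\mathbf{S}(g)$, as in Eq.~\eqref{condstate3}.

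First I would apply Theorem~\ref{thm_obs} with the choice $f_S=f_{S,g}^{\rm phys}$. Since $f_{S,g}^{\rm phys}\in\ca_{S,g}^{\rm phys}$ and $\Pi_S^{\rm phys}(g)$ is an idempotent (Lemma~\ref{lem_Sproj}) that fixes it, the theorem collapses to
\[
\calr_\mathbf{S}(g)\,F_{f_{S,g}^{\rm phys},R}(g)\,\calr_\mathbf{S}^{-1}(g)=f_{S,g}^{\rm phys}\,.
\]
Multiplying on the right by $\calr_\mathbf{S}(g)$ and using the left-inverse property $\calr_\mathbf{S}^{-1}(g)\,\calr_\mathbf{S}(g)\approx\mathbf{1}_{RS}$ from Lemma~\ref{lem_invred}, I obtain the intertwining relation $\calr_\mathbf{S}(g)\,F_{f_{S,g}^{\rm phys},R}(g)=f_{S,g}^{\rm phys}\,\calr_\mathbf{S}(g)$ as an identity on $\ch_{\rm phys}$.

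Next I would set $\ket{\chi_{\rm phys}}:=F_{f_{S,g}^{\rm phys},R}(g)\,\ket{\phi_{\rm phys}}$, which again lies in $\ch_{\rm phys}$ because the relational observable is a strong Dirac observable (Lemma~\ref{lem_dirobs}) and hence preserves the physical Hilbert space. By definition~\eqref{condstate3} its reduction is $\calr_\mathbf{S}(g)\ket{\chi_{\rm phys}}=\calr_\mathbf{S}(g)\,F_{f_{S,g}^{\rm phys},R}(g)\,\ket{\phi_{\rm phys}}=f_{S,g}^{\rm phys}\,\ket{\phi_S^{\rm phys}(g)}$ via the intertwining relation just derived. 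Applying Corollary~\ref{CorIsometry} to the pair $\ket{\psi_{\rm phys}},\ket{\chi_{\rm phys}}$ then gives
\[
\braket{\psi_{\rm phys}|F_{f_{S,g}^{\rm phys},R}(g)|\phi_{\rm phys}}_{\rm phys}=\braket{\psi_{\rm phys}|\chi_{\rm phys}}_{\rm phys}=\braket{\psi_S^{\rm phys}(g)|\chi_S^{\rm phys}(g)}=\braket{\psi_S^{\rm phys}(g)|f_{S,g}^{\rm phys}|\phi_S^{\rm phys}(g)}\,,
\]
which is the claim.

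The only subtlety, and the step I would be most careful about, is that the inverse relations hold only weakly, i.e.\ as operator identities on $\ch_{\rm phys}$ rather than on all of $\ch_{\rm kin}$; since every bra and ket in the final computation is a physical state this causes no trouble, but one must confirm that the relational observable indeed maps $\ch_{\rm phys}$ into itself (guaranteed by its being a strong Dirac observable) before invoking the isometry to move the observable across the reduction. An essentially equivalent alternative conjugates $f_{S,g}^{\rm phys}$ back up with Theorem~\ref{thm_obs} and cancels a central factor $\calr_\mathbf{S}^{-1}(g)\calr_\mathbf{S}(g)$; this is a matter of taste and produces the same short chain of equalities.
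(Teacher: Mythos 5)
Your proof is correct and follows essentially the same route as the paper: both hinge on Theorem~\ref{thm_obs} to trade the relational observable $F_{f_{S,g}^{\rm phys},R}(g)$ for $f_{S,g}^{\rm phys}$ across the reduction map, together with the fact that $\calr_\mathbf{S}(g)$ preserves inner products on physical states. The only difference is bookkeeping --- the paper invokes the encoding (second) statement of Theorem~\ref{thm_obs} and then unpacks the definition~\eqref{PIP} of the physical inner product directly, whereas you derive an intertwining relation from the first statement (using Lemmas~\ref{lem_invred} and~\ref{lem_dirobs}) and then apply the isometry of Corollary~\ref{CorIsometry}; these two packagings are interchangeable.
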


\begin{proof}
Using the second statement of Theorem~\ref{thm_obs}, the definition Eq.~\eqref{PIP} of the physical inner product, and the final statement of Corollary~\ref{cor_exp}, we find
\ba
\braket{\psi_{\rm phys}|\,F_{f_{S,g}^{\rm phys},R}(g)\,|\phi_{\rm phys}}_{\rm phys} &=&\braket{\psi_{\rm kin}|\, F_{f_{S,g}^{\rm phys},R}(g)\,|\phi_{\rm phys}}_{\rm kin} = \braket{\psi_{\rm kin}|\,\calr_\mathbf{S}^{-1}(g)\,f_{S,g}^{\rm phys}\,\calr_\mathbf{S}(g)\,|\phi_{\rm phys}}_{\rm kin}\nn\\
&=&\braket{\psi_S^{\rm phys}(g)|\,f_{S,g}^{\rm phys}\,|\phi_S^{\rm phys}(g)}\,.\nn
\ea
\end{proof}

These observations enable us to develop a conditional probability interpretation for the relations between the system $S$ of interest and the frame $R$. In particular, suppose $f_{S,g}^{\rm phys}\in\ca_{S,g}^{\rm phys}$ is self-adjoint so that it permits a spectral decomposition
\ba
f_{S,g}^{\rm phys} = \int_{\sigma(f_{S,g}^{\rm phys})} f\, dE_f,
\ea
where $d E_f$ is the corresponding spectral measure, and $\sigma(f_{S,g}^{\rm phys})$ denotes the spectrum of $f_{S,g}^{\rm phys}$. Given any normalized state $\ket{\psi_{\rm phys}}\in\ch_{\rm phys}$, the expectation value of this operator with respect to the corresponding state $\ket{\psi_S^{\rm phys}(g)}=\mathcal{R}_{\mathbf{S}}(g)\ket{\psi_{\rm phys}}$ is
\ba 
   \braket{f_{S,g}^{\rm phys}(g)}=\int_{\sigma(f_{S,g}^{\rm phys})} f\, dP(f),
\ea 
where $dP(f):=\langle\psi_S^{\rm phys}(g)|d E_f|\psi_S^{\rm phys}(g)\rangle$ defines a probability measure on the spectrum. To see the meaning of this probability, consider finite intervals $I=[f_{\min},f_{\max}]$ with $f_{\min}\leq f_{\max}$ with the corresponding spectral projection $E^I=E^I_S$, and invoke Corollary~\ref{cor_exp} to obtain
\ba
P(f_{S,g}^{\rm phys}\in I \ \mbox{when} \ \text{$R$ is in orientation $g$})&=&\bra{\psi_S^{\rm phys}(g)}E^I_S\ket{\psi_S^{\rm phys}(g)}\nn\\
&=&\bra{\psi_{\rm phys}}F_{E^I_S,R}(g)\ket{\psi_{\rm phys}}_{\rm phys}\label{condprob}.
\ea
In extension of the Page-Wootters formalism \cite{pageEvolutionEvolutionDynamics1983}, we can interpret this as the conditional probability that the observable $f_{S,g}^{\rm phys}$ takes a value $f$ in the interval $I$ when the reference system $R$ is in orientation $g$. Owing to the last line, this conditional probability is a manifestly gauge-invariant quantity, despite the expression in terms of non-invariant operators in the first line. In other words, the expression in the first line can be regarded as a gauge-fixed expression of a gauge-invariant quantity (see \cite{Hoehn:2019owq} for further discussion). Note also that the relational Dirac observables $F_{E^I_S,R}(g)$ are spectral projections too, thanks to the homomorphism established in Theorem~\ref{lem_homo}. Eq.~\eqref{condprob} tells us that this relational observable encodes in a precise sense the question ``what is the probability that $f_{S,g}^{\rm phys}$ takes value $f$, given that the frame is in orientation $g$?''. On physical states, relational observables thus come with a clear conditional probability interpretation that underscores their relational nature. 

Furthermore, multi-event probabilities for two observables $f_{S,g}^{\rm phys},\tilde f_{S,g'}^{\rm phys}$ can also be defined \cite{Hoehn:2019owq}:
\ba 
   P(f^{\rm phys}_{S,g}\in I \mbox{ when }g\,\,|\,\, \tilde f_{S,\tilde g}^{\rm phys}\in\tilde I \mbox{ when }\tilde g)=
   \frac{\bra{\psi_{\rm phys}} F_{E_S^{\tilde I},R}(\tilde g) F_{E_S^I,R}(g) F_{E_S^{\tilde I},R}(\tilde g)\ket{\psi_{\rm phys}}_{\rm phys}
   }{\bra{\psi_{\rm phys}}F_{E_S^{\tilde I},R}(\tilde g)\ket{\psi_{\rm phys}}_{\rm phys}
   }
\ea 
This can be generalized to arbitrarily many events in an obvious manner.

\section{Jumping into a QRF perspective II: symmetry reduction}\label{sec_Heisenberg}

In the previous section, we have seen how to reduce from the perspective-neutral physical Hilbert space $\ch_{\rm phys}$ into a `relational Schr\"odinger picture', in which the states of the system of interest transform unitarily under (gauge transformation induced) frame orientation label changes, while observables remained fixed. Here, we shall introduce an \emph{a priori} different reduction procedure, which, however, will \emph{a posteriori} be unitarily equivalent on physical states and yield a corresponding `relational Heisenberg picture'. The equivalence of the two reduction methods has so far only been shown in the Abelian case \cite{Hoehn:2019owq,Hoehn:2020epv, periodic,Hoehn:2021flk}. The below is its extension to general unimodular groups. This reduction procedure is the quantum analog of a phase space reduction through gauge-fixing \cite{Vanrietvelde:2018pgb,Vanrietvelde:2018dit,Hoehn:2018whn,hoehnHowSwitchRelational2018}. The basic idea is to (kinematically) disentangle the frame $R$ from the system $S$ and subsequently to condition the frame on an arbitrary orientation. On account of the kinematical disentanglement, it is then clear that the resulting system state will no longer transform under (gauge induced) changes of frame orientation label.

To construct the (kinematical) disentangler, we first need to find some (not necessarily normalized) state
\ba 
\ket \theta_R = \int_G dg N(g) \ket{\phi(g)}_R\label{trivialstate}
\ea
such that $N(g)$ are phases which obey the following reproducing equality:
\ba
N(g)=  \braket{\phi(g)|\theta}_R = \int_G dg' N(g') \braket{\phi(g)|\phi(g')}_R \,. \label{eq:reproducing} 
\ea
This will be the state that the frame will be kinematically disentangled into. Invoking that the coherent state overlap $\braket{\phi(g)|\phi(g')}_R$ is a reproducing kernel, \cite[Eq.\ (2.3.8)]{Perelomov} guarantees that any coefficients of an expansion of a generic state $\ket \theta_R = \int_G dg N(g) \ket{\phi(g)}_R$ into the (possibly overcomplete) basis of coherent states $\{\ket{\phi(g)}\}_{g \in G}$ is such that the coefficients $N(g)$ obey this equation. The non-trivial condition we are imposing is that one can choose $N(g)$ to be a phase. For example,
in the case of $G$ compact this is always possible, as one can simply choose $\ket \theta_R = \int_G dg e^{i \alpha(g)} \ket{\phi(g)}_R\in\ch_R$ for arbitrary real function $\alpha(g)$, since $\braket{\theta|\theta}_R={\rm Vol}(G)$ is finite when Eq.~\eqref{eq:reproducing} is satisfied. For non-compact groups, the situation is more subtle as the coherent states are then distributions and we are essentially looking for plane wave states, hence $\ket{\theta}_R$ will be another distribution. For Abelian groups, it is also possible to find such states, in which case the phases $N(g)$ even constitute a unitary one-dimensional representation of $G$ \cite{Hoehn:2019owq,Hoehn:2020epv,Hoehn:2021wet}. It is also clear that Eq.~\eqref{eq:reproducing} can always be satisfied for \emph{any} phases $N(g)$ for the special case of the regular representation of an arbitrary unimodular $G$ since in that case $\braket{g|g'}_R=\delta(g,g')$; in fact, for ideal frames, we can even choose $N(g)=1$, $\forall\,g\in G$. This suggests that it should be possible to find states $\ket\theta_R$ such that Eq.~\eqref{eq:reproducing} can be satisfied also for non-compact non-Abelian unimodular groups beyond the regular representation. However, we have not attempted to characterize the representations for which this holds and shall just assume that some such state $\ket\theta_R$ can be found.

We can then define the kinematical disentangler (`trivialization map'), which is a frame-conditional transformation (e.g., see \cite{Hoehn:2021wet}) by
\ba
\ct_R:=\int_Gdg\,N(g)\ket{\phi(g)}\!\bra{\phi(g)}_R\otimes U_S^\dag(g)\,.
\ea
It draws its name from the following properties:
\begin{lemma}\label{lem_triv}
The disentangler $\ct_R$ has a weak left-inverse, which is given by
    \ba 
\ct_R^\dag=\int_Gdg\,N^*(g)\ket{\phi(g)}\!\bra{\phi(g)}_R\otimes U_S(g)\,.
\label{eqTRdagger}
\ea
That is,
    \ba 
    \ct_R^\dag\cdot\ct_R\approx\mathbf{1}_{RS}\,.
    \ea 
    Furthermore, the kinematical disentangler maps the physical projector $\Pi_{\rm phys}$ into a product operator
    \ba 
    \ct_R\Pi_{\rm phys}\ct_R^\dag =\ketbra{\theta}{\theta}_R \otimes \Pi_S^{\rm phys}(e)\,,
    \ea 
    where $\ket\theta_R$ is the state given in Eq.~\eqref{trivialstate}, and
     kinematically disentangles the frame $R$ from $S$ in arbitrary physical states
    \ba 
    \ct_R\ket{\psi_{\rm phys}}=\ket\theta_R\otimes \ket{\psi_S^{\rm phys}(e)}\,.
    \ea 
    Here $\ket{\psi_S^{\rm phys}(e)}$ is the physical system state when $R$ is ``in the origin''.
\end{lemma}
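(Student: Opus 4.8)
The plan is to evaluate $\ct_R\ket{\psi_{\rm phys}}$ directly from the definition of the disentangler and to collapse the resulting $g$-integral using the covariance property \eqref{condcov} of the conditioned states. First I would peel off the $R$-bra sitting inside $\ct_R$ and let it act on the physical state, writing
\ba
\ct_R\ket{\psi_{\rm phys}}&=&\int_Gdg\,N(g)\,\ket{\phi(g)}_R\otimes U_S^\dag(g)\left(\bra{\phi(g)}_R\otimes\mathbf{1}_S\right)\ket{\psi_{\rm phys}}\,.\nn
\ea
The bracketed factor is exactly the Schr\"odinger reduction map $\calr_\mathbf{S}(g)$ of Eq.~\eqref{Sredmap} acting on $\ket{\psi_{\rm phys}}$, i.e.\ the conditioned system state $\ket{\psi_S^{\rm phys}(g)}$ defined in Eq.~\eqref{condstate3}.

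The decisive step is then to invoke the covariance relation \eqref{condcov}: setting the reference orientation to the identity there yields $\ket{\psi_S^{\rm phys}(g)}=U_S(g)\,\ket{\psi_S^{\rm phys}(e)}$. Substituting this, the operator $U_S^\dag(g)$ already present in $\ct_R$ cancels against $U_S(g)$, removing all $g$-dependence from the system tensor factor:
\ba
\ct_R\ket{\psi_{\rm phys}}&=&\int_Gdg\,N(g)\,\ket{\phi(g)}_R\otimes U_S^\dag(g)\,U_S(g)\,\ket{\psi_S^{\rm phys}(e)}\nn\\
&=&\left(\int_Gdg\,N(g)\,\ket{\phi(g)}_R\right)\otimes\ket{\psi_S^{\rm phys}(e)}\,.\nn
\ea
The frame-independent state $\ket{\psi_S^{\rm phys}(e)}$ now factors out of the integral, and the remaining $R$-integral is by construction the state $\ket\theta_R$ of Eq.~\eqref{trivialstate}, giving the asserted product form $\ct_R\ket{\psi_{\rm phys}}=\ket\theta_R\otimes\ket{\psi_S^{\rm phys}(e)}$.

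I do not expect a genuine obstacle here, since the whole argument hinges on a cancellation that is handed to us by the covariance property already proved in Eq.~\eqref{condcov}; the computation is otherwise purely mechanical. The one point requiring care is that for non-compact $G$ the objects $\ket{\phi(g)}_R$ and $\ket\theta_R$ are rigged-Hilbert-space distributions rather than genuine vectors, so the interchange of the integral with the tensor factorization, and the step $U_S^\dag(g)U_S(g)=\mathbf{1}_S$ acting on a distributional state, should be read in the same informal (physicist's) sense as the surrounding material. Finally, this state-level identity is consistent with\,---\,and can alternatively be deduced from\,---\,the operator-level statement $\ct_R\Pi_{\rm phys}\ct_R^\dag=\ketbra{\theta}{\theta}_R\otimes\Pi_S^{\rm phys}(e)$ of the same lemma together with the weak left-inverse property $\ct_R^\dag\ct_R\approx\mathbf{1}_{RS}$, which provides a useful internal consistency check on the normalization conventions.
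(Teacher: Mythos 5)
Your computation establishes only the third claim of the lemma, and for that claim it is correct and essentially the same argument as the paper's. The paper uses gauge invariance of physical states directly, replacing $\mathbf{1}_R\otimes U_S^\dag(g)$ acting on $\ket{\psi_{\rm phys}}$ by $U_R(g)\otimes\mathbf{1}_S$ and then collapsing $\bra{\phi(g)}_R U_R(g)=\bra{\phi(e)}_R$; you instead invoke the covariance property \eqref{condcov}, i.e.\ $\ket{\psi_S^{\rm phys}(g)}=U_S(g)\ket{\psi_S^{\rm phys}(e)}$, which is itself derived from exactly that invariance. The two routes are the same cancellation written in different notation, so there is nothing to choose between them on this point.

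The genuine gap is that the lemma makes three assertions and you have proved one. The weak left-inverse property $\ct_R^\dag\ct_R\approx\mathbf{1}_{RS}$ and the product-operator identity $\ct_R\Pi_{\rm phys}\ct_R^\dag=\ketbra{\theta}{\theta}_R\otimes\Pi_S^{\rm phys}(e)$ are not corollaries of the disentangling statement; they require ingredients your argument never touches. The left-inverse computation hinges on the reproducing equality \eqref{eq:reproducing} together with the assumption that $N(g)$ is a phase and the resolution of the identity \eqref{resid}: one must contract the double integral $\int dg\,dg'\,N^*(g)N(g')\ket{\phi(g)}\braket{\phi(g)|\phi(g')}\bra{\phi(g')}_R\otimes U_S(gg'^{-1})$, trade $U_S$ for $U_R^\dag$ weakly, and then use \eqref{eq:reproducing} to perform the $g'$-integral. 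The conjugated-projector identity additionally requires unimodularity of $G$ and the definition \eqref{Sproj} of $\Pi_S^{\rm phys}(e)$ to recognize the inner group average as that projector. Treating these two statements as a "consistency check" on your result inverts the logical order: as written, you would be checking a proved statement against two unproved ones, and your closing suggestion that the third claim "can alternatively be deduced from" the first two is circular unless those are established independently. To complete the proof of the lemma you must supply both computations.
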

\begin{proof}
The first statement is verified by direct computation
\ba
\ct_R^\dag\cdot\ct_R&=&\int_G dg \int_G dg'\,N^*(g)N(g')\ket{\phi(g)}\!\braket{\phi(g)|\phi(g')}\!\bra{\phi(g')}_R \otimes U_S(gg'^{-1})\nn\\
&\approx& \int_G dg \int_G dg'\,N^*(g)N(g')\ket{\phi(g)}\!\braket{\phi(g)|\phi(g')}\!\bra{\phi(g')}_R U_R^\dagger(gg'^{-1}) \otimes \mathbf{1}_S \nn\\
&=& \int_G dg \int_G dg'\,N^*(g)N(g')\ket{\phi(g)}\!\braket{\phi(g)|\phi(g')}\!\bra{\phi(g)}_R \otimes \mathbf{1}_S \nn\\
&=& \int_G dg \,N^*(g) \ketbra{\phi(g)}{\phi(g)} \int dg' N(g')\braket{\phi(g)|\phi(g')}_R \otimes \mathbf{1}_S \nn\\
&=& \int_Gdg \,N^*(g)N(g) \ketbra{\phi(g)}{\phi(g)} \otimes \mathbf{1}_S  = \mathbf{1}_{RS}
\ea 
where we have used the fact that $\mathbf{1}_S \otimes U_R(g) \approx  U_S^\dagger(g) \otimes \mathbf{1}_R$ from the first to the second line and Equation~\eqref{eq:reproducing} in going from the penultimate line to the last line. Finally, we used that $N(g)$ is a phase and the resolution of the identity \eqref{resid}.

The second statement follows from
\ba
\ct_R\Pi_{\rm phys}\ct_R^\dag&=&\int_G dg \int_G dg' \int_G d\tilde g\,N(g)N^*(g')\braket{\phi(g)|U_R(\tilde g)|\phi(g')}_R\ket{\phi(g)}\!\bra{\phi(g')}_R\otimes U_S(g^{-1}\tilde g g')\nn\\
&=&\int_G dg \int_G dg'\,N(g)N^*(g')\ket{\phi(g)}\!\bra{\phi(g')}_R\otimes \int_Gd\tilde g\braket{\phi(e)|\phi(g^{-1}\tilde gg')}_R U_S(g^{-1}\tilde gg')\nn\\
&=&\int_G dg \int_G dg'\,N(g)N^*(g')\ket{\phi(g)}\!\bra{\phi(g')}_R\otimes\Pi_S^{\rm phys}(e)\nn\\
&=&\ketbra{\theta}{\theta}_R \otimes\Pi_S^{\rm phys}(e)\,,
\ea
using unimodularity and Eq.~\eqref{Sproj} when passing from the second to the third line.

The third statement follows from
\ba
\ct_R\ket{\psi_{\rm phys}} &=&\int_G dg\,N(g)\ket{\phi(g)}\!\bra{\phi(g)}_R \otimes U_S^\dagger(g) \ket{\psi_{\rm phys}}\nn\\
&=&\int_G dg\,N(g)\ket{\phi(g)}\!\bra{\phi(g)}_R  U_R(g) \otimes \mathbf{1}_S \ket{\psi_{\rm phys}}\nn\\
&=&\int_G dg\,N(g)\ket{\phi(g)}\!\bra{\phi(e)}_R  \otimes \mathbf{1}_S  \ket{\psi_{\rm phys}}\nn\\
&=&\ket\theta_R\otimes \ket{\psi_S^{\rm phys}(e)}\,.
\ea
\end{proof}

We can exploit this lemma, to define the Heisenberg reduction map $\calr_\mathbf{H}:\ch_{\rm phys}\rightarrow\ch_{S,e}^{\rm phys}$, which upon disentangling the frame, conditions it on an \emph{arbitrary} orientation $g\in G$:
\ba \label{Heisenbergred}
\calr_{\mathbf{H}}:=N^*(g)\left(\bra{\phi(g)}_R\otimes\mathbf{1}_S\right)\ct_R 
\ea 
Indeed, invoking Eq.~\eqref{eq:reproducing}, we find that when acting on physical states, this map is independent of $g$:
\ba \label{Heisstate}
\calr_{\mathbf{H}}\ket{\psi_{\rm phys}}=\ket{\psi_S^{\rm phys}(e)}\,.
\ea 
The following result establishes (weak) invertibility of the Heisenberg reduction map:
\begin{lemma}\label{lem_Hredinv}
The inverse $\calr_{\mathbf{H}}^{-1}:\ch_{S,e}^{\rm phys}\rightarrow\ch_{\rm phys}$ of the Heisenberg reduction map reads
\ba \label{RHinv}
    \calr_\mathbf{H}^{-1}=\ct_R^\dag\left(\ket\theta_R\otimes\mathbf{1}_S\right)
    \ea 
    and we have, for all $g\in G$,
    \ba 
    \calr_\mathbf{H}= U_S^\dag(g)\calr_\mathbf{S}(g)\,,\q\q\calr_\mathbf{H}^{-1}=\calr_\mathbf{S}^{-1}(g)\,U_S(g)\,.
    \ea 
\end{lemma}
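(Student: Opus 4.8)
The plan is to obtain the explicit inverse as a consequence of an intertwining relation between the Heisenberg and Schr\"odinger reduction maps, rather than by inverting $\calr_\mathbf{H}$ directly.

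First I would show that, as maps on $\ch_{\rm phys}$, one has $\calr_\mathbf{H} = U_S^\dag(g)\calr_\mathbf{S}(g)$ for every $g\in G$. Acting on an arbitrary $\ket{\psi_{\rm phys}}$, the left-hand side equals $\ket{\psi_S^{\rm phys}(e)}$ by Eq.~\eqref{Heisstate}. For the right-hand side, $\calr_\mathbf{S}(g)\ket{\psi_{\rm phys}}=\ket{\psi_S^{\rm phys}(g)}$ by definition, and the covariance property Eq.~\eqref{condcov} specialized to $g'=g^{-1}$ gives $U_S^\dag(g)\ket{\psi_S^{\rm phys}(g)}=U_S(g^{-1})\ket{\psi_S^{\rm phys}(g)}=\ket{\psi_S^{\rm phys}(e)}$. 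Since both sides agree on every physical state, the identity holds on $\ch_{\rm phys}$; note that this simultaneously re-confirms the $g$-independence of $\calr_\mathbf{H}$ there.

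Next I would invert this relation. Because $U_S(g)$ is unitary and $\calr_\mathbf{S}(g)$ is invertible with the inverse of Lemma~\ref{lem_invred}, the natural candidate is $\calr_\mathbf{H}^{-1}=\calr_\mathbf{S}^{-1}(g)\,U_S(g)$. I would verify the two inverse properties explicitly: for the right inverse, $\calr_\mathbf{H}\calr_\mathbf{H}^{-1}=U_S^\dag(g)\calr_\mathbf{S}(g)\calr_\mathbf{S}^{-1}(g)U_S(g)=U_S^\dag(g)\Pi_S^{\rm phys}(g)U_S(g)=\Pi_S^{\rm phys}(e)$, using the right-inverse property from Lemma~\ref{lem_invred} and then the conjugation relation Eq.~\eqref{projgind2}, with $\Pi_S^{\rm phys}(e)$ acting as the identity on the codomain $\ch_{S,e}^{\rm phys}$; for the (weak) left inverse, $\calr_\mathbf{H}^{-1}\calr_\mathbf{H}=\calr_\mathbf{S}^{-1}(g)U_S(g)U_S^\dag(g)\calr_\mathbf{S}(g)=\calr_\mathbf{S}^{-1}(g)\calr_\mathbf{S}(g)\approx\mathbf{1}_{RS}$, again by Lemma~\ref{lem_invred}.

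Finally, to match the closed form $\calr_\mathbf{H}^{-1}=\ct_R^\dag(\ket\theta_R\otimes\mathbf{1}_S)$, I would specialize the formula above to $g=e$, where it reads $\calr_\mathbf{H}^{-1}=\calr_\mathbf{S}^{-1}(e)$, and then evaluate $\ct_R^\dag(\ket\theta_R\otimes\mathbf{1}_S)$ directly. Expanding $\ct_R^\dag$ from Eq.~\eqref{eqTRdagger} and inserting $\braket{\phi(g')|\theta}_R=N(g')$ from the reproducing relation Eq.~\eqref{eq:reproducing}, together with $|N(g')|^2=1$, collapses the operator to $\int_G dg'\,\ket{\phi(g')}_R\otimes U_S(g')$, which is precisely $\calr_\mathbf{S}^{-1}(e)$ as given by Lemma~\ref{lem_invred}. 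The main obstacle is purely bookkeeping: one must track the domains and codomains carefully, in particular that $U_S^\dag(g)$ carries $\ch_{S,g}^{\rm phys}$ onto $\ch_{S,e}^{\rm phys}$ so that the composition is well-defined, and one must keep in mind that the inverse identities hold weakly, i.e.\ on $\ch_{\rm phys}$, not as strong operator identities on $\ch_{\rm kin}$.
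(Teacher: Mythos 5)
Your proof is correct, and it assembles essentially the same ingredients as the paper's proof---the intertwining relation from Eqs.~\eqref{condcov} and~\eqref{Heisstate}, the reproducing property~\eqref{eq:reproducing} together with $|N(g)|=1$, and Lemma~\ref{lem_invred}---but it organizes them in a genuinely different way. The paper starts from the closed form $\ct_R^\dag\left(\ket\theta_R\otimes\mathbf{1}_S\right)$, whose inverse property it takes as immediate from Lemma~\ref{lem_triv}, and then obtains $\calr_\mathbf{H}^{-1}=\calr_\mathbf{S}^{-1}(g')\,U_S(g')$ for \emph{general} $g'$ by direct integral manipulation: after the reproducing-kernel collapse to $\int_G dg\,\ket{\phi(g)}_R\otimes U_S(g)$, it writes $U_S(g)=U_S(gg'^{-1})U_S(g')$ and substitutes $\tilde g=gg'^{-1}$. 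You instead stop the explicit computation at $g=e$, identifying the closed form with $\calr_\mathbf{S}^{-1}(e)$, and reach the general-$g$ formula abstractly: you verify that each candidate $\calr_\mathbf{S}^{-1}(g)\,U_S(g)$ is a two-sided (weak) inverse of $\calr_\mathbf{H}$, using the conjugation identity~\eqref{projgind2} to get $U_S^\dag(g)\,\Pi_S^{\rm phys}(g)\,U_S(g)=\Pi_S^{\rm phys}(e)$, and then let uniqueness of two-sided inverses do the rest. What your route buys is that the $g$-independence of $\calr_\mathbf{S}^{-1}(g)\,U_S(g)$ comes for free from inverse uniqueness rather than from a change of variables in a formal group integral; what it costs is the additional reliance on Lemma~\ref{lem_Sproj}/Eq.~\eqref{projgind2} and the domain bookkeeping you correctly flag (that $U_S(g)$ carries $\ch_{S,e}^{\rm phys}$ onto $\ch_{S,g}^{\rm phys}$, so the compositions are well-defined and the inverse relations, holding on the physical spaces, are genuine identities there). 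Both arguments are sound, and your verification steps check out.
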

\begin{proof}
    The form of $\calr_\mathbf{H}^{-1}$ is immediate from Lemma~\ref{lem_triv}. The statement $\calr_\mathbf{H}= U_S^\dag(g)\calr_\mathbf{S}(g)$ for all $g\in G$ follows directly from Eq.~\eqref{condcov} and Eq.~\eqref{Heisstate}. Finally, using Eq.~(\ref{trivialstate}) and Eq.~(\ref{eqTRdagger}),
     \ba     
     \calr_\mathbf{H}^{-1}&=&\int_G dg' \int_G dg\,N^*(g)N(g') \ket{\phi(g)}\!\braket{\phi(g)|\phi(g')}_R \otimes U_S(g)\nn\\
    &=&\int_G dg N^*(g) N(g) \ket{\phi(g)}_R \otimes U_S(g) \nn\\
         &=&\int_G dg  \ket{\phi(g)}_R \otimes U_S(g {g'}^{-1}) U_S(g') \nn \\
          &=& \int_G d\tilde g  \ket{\phi(\tilde g g')}_R \otimes U_S(\tilde g) U_S(g') \nn \\
     &=& \calr_\mathbf{S}^{-1}(g')\,U_S(g'),
    \ea 
    where we have used Eq.~\eqref{eq:reproducing} in going from the first to second line, the assumption that $N(g)$ is a phase in going to line three, as well as the substitution $\tilde g = g {g'}^{-1}$ in line four, and finally Lemma~\ref{lem_invred} to get the last line.
\end{proof}

This lemma thereby establishes the weak unitary equivalence of the two reduction maps. Strong equivalence (on a larger domain of definition than $\ch_{\rm phys}$) only holds for the regular representation.

Let us now clarify why we call the result of $\calr_\mathbf{H}$ a relational Heisenberg picture. Consider the ``Heisenberg observables'' for the system $S$
\ba \label{Heisobs}
f_{S,g'}^{\rm phys}(g):=U_S^\dag(g)\,f_{S,g'}^{\rm phys}\,U_S(g)
\ea 
for arbitrary $f_{S,g'}^{\rm phys}\in\ca_{S,g'}^{\rm phys}$. Lemma~\ref{lem_Sproj} tells us that, unless condition~\eqref{projgind} is fulfilled, $f_{S,g'}^{\rm phys}(g)$ will no longer be contained in $\ca_{S,g'}^{\rm phys}$. Indeed, recalling Eq.~\eqref{fsys}, we have
\ba \label{Heisobs0}
f_{S,g'}^{\rm phys}(g)=U_S^\dag(g)\,\Pi_S^{\rm phys}(g')\,f_S\,\Pi_S^{\rm phys}(g')\,U_S(g)
\ea 
for some $f_S\in\ca_S$. Eq.~\eqref{projgind2} implies that we can rewrite this as
\ba\label{ggprime}
f_{S,g'}^{\rm phys}(g)=\Pi_S^{\rm phys}(g^{-1}g')\,f_S(g)\,\Pi_S^{\rm phys}(g^{-1}g')\,.
\ea 
Since the kinematical Heisenberg observable $f_S(g)$ is contained in $\ca_S$ for all $g\in G$, we have ${f_{S,g'}^{\rm phys}(g)\in\ca_{S,g^{-1}g'}^{\rm phys}}$. Changing the frame orientation $g$ of $R$ in the relational Heisenberg picture therefore `rotates' $\ca_{S,g}^{\rm phys}$ through the kinematical algebra $\ca_S$, unless condition Eq.~\eqref{projgind} is satisfied. Recall that, even if Eq.~\eqref{projgind} is not satisfied for all $g\in G$, if either $g$ is in the isotropy group of frame orientation state $\ket{\phi(g')}_R$, hence of the form $g=g'h^{-1}g'^{-1}$ for some $h\in H$, or if $g'$ resides in the center of $G$, then we still have $\ca_{S,g^{-1}g'}^{\rm phys}=\ca_{S,g'}^{\rm phys}$, cf.\ the discussion around Eqs.~\eqref{isotropyobs} and~\eqref{nonsym}. 

We can, however, consider the following `evolution', setting $g=g'$ on the right hand side in Eq.~\eqref{ggprime}:
\ba \label{Heisobs2}
f_{S,e}^{\rm phys}(g):=\Pi_S^{\rm phys}(e)\,f_S(g)\,\Pi_S^{\rm phys}(e)\,.
\ea
This projection of the kinematical Heisenberg `evolution' thus is guaranteed to remain inside $\ca_{S,e}^{\rm phys}$, which is why we denote it $f_{S,e}^{\rm phys}(g)$ (rather than $f_{S,g}^{\rm phys}(g)$). This is what we need for the following observable embedding map.

Let us consider the Heisenberg picture encoding map $\mathcal{E}_\mathbf{H}:\ca_{S,e}^{\rm phys}\rightarrow\ca_{\rm phys}$, in analogy to its Schr\"odinger picture counter part \eqref{Sencode}, to embed the physical Heisenberg observables in Eq.~\eqref{Heisobs2} into the algebra of relational observables:
\ba 
\mathcal{E}_\mathbf{H}\left(f_{S,e}^{\rm phys}(g)\right):=\calr_\mathbf{H}^{-1}\,f_{S,e}^{\rm phys}(g)\,\calr_\mathbf{H}\,.
\ea 
Since $\calr_{\mathbf{H}}$ and its inverse only map between $\ch_{\rm phys}$ and $\ch_{S,e}^{\rm phys}$, and $\ch_{S,e}^{\rm phys}$ and $\ch_{S,g}^{\rm phys}$ are distinct subspaces of $\ch_S$  when condition~\eqref{projgind} is violated and $g\notin H$ or $g$ is not an element of the center of $G$, we can only embed elements of $\ca_{S,e}^{\rm phys}$ into $\ca_{\rm phys}$ using the Heisenberg map. When condition~\eqref{projgind} is satisfied, we have $\ca_{S,e}^{\rm phys}=\ca_{S,g}^{\rm phys}$, for all $g\in G$ and Eq.~\eqref{Heisobs2} reduces to $f_{S,e}^{\rm phys}(g)=U_S^\dag(g)\,f_{S,e}^{\rm phys}\,U_S(g)$.

In analogy to Theorem~\ref{thm_obs}, we then have:
\begin{theorem}
   Let $f_S\in\ca_S$ be arbitrary. Reducing the relational observable $F_{f_S,R}(g)$ with the Heisenberg reduction map yields the corresponding physical system observable $f_{S,g}^{\rm phys}$ given in Eq.~\eqref{fsys}:
\ba
\calr_\mathbf{H}\,F_{f_{S},R}(g)\,\calr_\mathbf{H}^{-1} =\Pi_S^{\rm phys}(e)\,f_S(g)\,\Pi_S^{\rm phys}(e)= f_{S,e}^{\rm phys}(g)\,.
\ea
Conversely, let $f_{S,e}^{\rm phys}(g)\in\ca_{S,e}^{\rm phys}$ be arbitrary. Then it embeds via the encoding map into $\ca_{\rm p hys}$ as its corresponding relational observable
\ba
\mathcal{E}_\mathbf{H}\left(f_{S,e}^{\rm phys}(g)\right)\approx F_{f_{S,g}^{\rm phys},R}(g)\,.\nn
\ea 
\end{theorem}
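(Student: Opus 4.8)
The plan is to avoid recomputing any group integrals from scratch and instead to reduce both claims to the already-established Schr\"odinger-picture Theorem~\ref{thm_obs} by means of the intertwining relations of Lemma~\ref{lem_Hredinv}. The crucial observation is that Lemma~\ref{lem_Hredinv} supplies $\calr_\mathbf{H}=U_S^\dag(g)\calr_\mathbf{S}(g)$ and $\calr_\mathbf{H}^{-1}=\calr_\mathbf{S}^{-1}(g)\,U_S(g)$ for \emph{every} $g\in G$; I would therefore align this free group element with the orientation label $g$ carried by $F_{f_S,R}(g)$, so that the dictionary between the two reduction maps can be applied directly rather than with an independent dummy variable.

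For the first statement, I would substitute these two identities to obtain $\calr_\mathbf{H}\,F_{f_S,R}(g)\,\calr_\mathbf{H}^{-1}=U_S^\dag(g)\big[\calr_\mathbf{S}(g)\,F_{f_S,R}(g)\,\calr_\mathbf{S}^{-1}(g)\big]U_S(g)$ and recognize the bracket, via the first statement of Theorem~\ref{thm_obs}, as $f_{S,g}^{\rm phys}=\Pi_S^{\rm phys}(g)\,f_S\,\Pi_S^{\rm phys}(g)$. The remaining step is purely algebraic: using the covariance relation $\Pi_S^{\rm phys}(g)=U_S(g)\,\Pi_S^{\rm phys}(e)\,U_S^\dag(g)$ from Eq.~\eqref{projgind2}, the flanking $U_S^\dag(g)$ and $U_S(g)$ collapse both projectors to $\Pi_S^{\rm phys}(e)$ and conjugate $f_S$ into $f_S(g)=U_S^\dag(g)\,f_S\,U_S(g)$, yielding precisely $\Pi_S^{\rm phys}(e)\,f_S(g)\,\Pi_S^{\rm phys}(e)=f_{S,e}^{\rm phys}(g)$ as defined in Eq.~\eqref{Heisobs2}.

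The converse proceeds symmetrically. I would expand the definition $\mathcal{E}_\mathbf{H}\big(f_{S,e}^{\rm phys}(g)\big)=\calr_\mathbf{H}^{-1}\,f_{S,e}^{\rm phys}(g)\,\calr_\mathbf{H}$ and again substitute Lemma~\ref{lem_Hredinv} to get $\calr_\mathbf{S}^{-1}(g)\big[U_S(g)\,f_{S,e}^{\rm phys}(g)\,U_S^\dag(g)\big]\calr_\mathbf{S}(g)$. Writing $f_{S,e}^{\rm phys}(g)=\Pi_S^{\rm phys}(e)\,f_S(g)\,\Pi_S^{\rm phys}(e)$ and reusing Eq.~\eqref{projgind2}, the bracket becomes $\Pi_S^{\rm phys}(g)\,f_S\,\Pi_S^{\rm phys}(g)=f_{S,g}^{\rm phys}$, where one invokes $U_S(g)\,f_S(g)\,U_S^\dag(g)=f_S$. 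Hence $\mathcal{E}_\mathbf{H}\big(f_{S,e}^{\rm phys}(g)\big)=\calr_\mathbf{S}^{-1}(g)\,f_{S,g}^{\rm phys}\,\calr_\mathbf{S}(g)=\mathcal{E}_\mathbf{S}^g\big(f_{S,g}^{\rm phys}\big)$, and the second statement of Theorem~\ref{thm_obs} immediately supplies the weak equality $\mathcal{E}_\mathbf{S}^g\big(f_{S,g}^{\rm phys}\big)\approx F_{f_{S,g}^{\rm phys},R}(g)$.

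I do not expect a substantive obstacle, as both claims are formal consequences of the reduction-map dictionary together with the covariance of $\Pi_S^{\rm phys}$. The points that require care are purely bookkeeping ones: first, that Lemma~\ref{lem_Hredinv} is valid for arbitrary $g$, which is exactly what licenses setting the free element equal to the observable's orientation label; and second, the strong-versus-weak status of the two identities, since the first follows as a genuine operator equality on $\ch_{S,e}^{\rm phys}$ (inherited from the strong first part of Theorem~\ref{thm_obs}), whereas the converse can only be a weak equality, because $\mathcal{E}_\mathbf{S}^g$ already reproduces the relational observable only up to $\approx$. One should also keep in mind throughout that all compositions are read as maps between $\ch_{\rm phys}$ and $\ch_{S,e}^{\rm phys}$, so that every identity is understood to hold on physical states.
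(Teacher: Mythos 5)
Your proposal is correct and follows essentially the same route as the paper's own proof: both parts are reduced to Theorem~\ref{thm_obs} by substituting the intertwining relations $\calr_\mathbf{H}=U_S^\dag(g)\calr_\mathbf{S}(g)$, $\calr_\mathbf{H}^{-1}=\calr_\mathbf{S}^{-1}(g)U_S(g)$ from Lemma~\ref{lem_Hredinv} with the free group element aligned to the orientation label, and then collapsing the projectors via the covariance relation Eq.~\eqref{projgind2} (the paper cites Eq.~\eqref{Heisobs0} for this step in the converse, which is the same computation). Your remarks on the strong versus weak status of the two identities also match the paper's statement.
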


This theorem clarifies why we call the result of the reduction by $\calr_{\mathbf{H}}$ the relational Heisenberg picture: the reduced physical states are fixed under (gauge transformation induced) changes of orientation label of the frame, while the reduced relational observables transform in a projected covariant way under them. This result establishes another equivalence between the algebras $\ca_{\rm phys}$ and $\ca_{S,e}^{\rm phys}$, preserving algebraic properties, thanks to Theorem~\ref{lem_homo}.

\begin{proof}
To prove the first statement, we recall from Lemma~\ref{lem_Hredinv} that $\calr_\mathbf{H}^{-1}=\calr_\mathbf{S}^{-1}(g)U_S(g)$ and $\calr_\mathbf{H}= U_S^\dag(g)\calr_\mathbf{S}(g)$ for arbitrary $g\in G$. This permits us to write
\ba 
\calr_\mathbf{H}\,F_{f_{S},R}(g)\,\calr_\mathbf{H}^{-1} &=&U_S^\dag(g)\calr_\mathbf{S}(g)\,F_{f_S,R}(g)\,\calr_\mathbf{S}^{-1}(g)\,U_S(g)\nn\\
&=&U_S^\dag(g)\,\Pi_S^{\rm phys}(g)\, f_S\,\Pi_S^{\rm phys}(g)\,U_S(g)\nn\\
&\underset{\eqref{projgind2}}{=}&\Pi_S^{\rm phys}(e)\,U_S^\dag(g)\,f_S\,U_S(g)\,\Pi_S^{\rm phys}(e)\,,\nn
\ea 
where we have invoked Theorem~\ref{thm_obs} to get from the first to the second line.

For the converse statement, we once more invoke Lemma~\ref{lem_Hredinv} to write
\ba 
\mathcal{E}_\mathbf{H}\left(f_{S,e}^{\rm phys}(g)\right)&=&\calr_\mathbf{S}^{-1}(g)\,U_S(g)\,f_{S,e}^{\rm phys}(g)\,U_S^\dag(g)\,\calr_\mathbf{S}(g)\nn\\
&\underset{\eqref{Heisobs0}}{=}&\calr_\mathbf{S}^{-1}(g)\,\Pi_S^{\rm phys}(g)\,f_S\,\Pi_S^{\rm phys}(g)\,\calr_\mathbf{S}(g)\nn\\
&=&\mathcal{E}_\mathbf{S}^g\left(f_{S,g}^{\rm phys}\right)\nn\\
&\approx&F_{f_{S,g}^{\rm phys},R}(g)\,,\nn
\ea 
where the last line again follows from Theorem~\ref{thm_obs}.
\end{proof}

From Corollary~\ref{cor_exp}, it now also follows immediately that the expectation values of observables are preserved. In particular, since the homomorphism of Theorem~\ref{lem_homo} is also unital by Lemma~\ref{lem_Rd} and Eq.~\eqref{qrelobs}, this also means that the inner product is preserved and that the map $\calr_\mathbf{H}$ is therefore unitary.
\begin{corol}
Let $f_{S,g}^{\rm phys}\in\ca_{S,g}^{\rm phys}$ be some physical system observable and $F_{f_{S,g}^{\rm phys},R}(g)$ its corresponding relational observable. Then the expectation values are preserved as follows
\ba
\braket{\psi_{\rm phys}|\,F_{f_{S,g}^{\rm phys},R}(g)\,|\phi_{\rm phys}}_{\rm phys} = \braket{\psi_S^{\rm phys}(e)|\,f_{S,e}^{\rm phys}(g)\,|\phi_S^{\rm phys}(e)}\,,\nn
\ea
where $\ket{\phi_S^{\rm phys}(e)} = \calr_\mathbf{H}\,\ket{\phi_{\rm phys}}\in\ch_{S,e}^{\rm phys}$ is the relational Heisenberg state associated with the physical state $\ket{\phi_{\rm phys}}$ (and similarly for $\ket{\psi_S^{\rm phys}(e)}$).
\end{corol}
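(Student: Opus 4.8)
The plan is to deduce this directly from the Schr\"odinger-picture result of Corollary~\ref{cor_exp}, exploiting the fact that the two reduction maps are intertwined by the unitary $U_S(g)$. By Lemma~\ref{lem_Hredinv} we have $\calr_\mathbf{H}=U_S^\dag(g)\,\calr_\mathbf{S}(g)$ for every $g\in G$, so that the relational Heisenberg states are related to the Schr\"odinger states by $\ket{\psi_S^{\rm phys}(e)}=\calr_\mathbf{H}\ket{\psi_{\rm phys}}=U_S^\dag(g)\ket{\psi_S^{\rm phys}(g)}$, equivalently $\ket{\psi_S^{\rm phys}(g)}=U_S(g)\ket{\psi_S^{\rm phys}(e)}$ (this also follows from the covariance Eq.~\eqref{condcov} evaluated at the base point $e$), and likewise for $\phi$. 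So the only remaining task is to transport the system observable correctly between the two pictures.

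The operator counterpart of this intertwining is the identity $f_{S,e}^{\rm phys}(g)=U_S^\dag(g)\,f_{S,g}^{\rm phys}\,U_S(g)$, which I would establish first. Writing $f_{S,g}^{\rm phys}=\Pi_S^{\rm phys}(g)\,f_S\,\Pi_S^{\rm phys}(g)$ and conjugating by $U_S(g)$, the key input is the projector covariance Eq.~\eqref{projgind2}, namely $\Pi_S^{\rm phys}(g)=U_S(g)\,\Pi_S^{\rm phys}(e)\,U_S^\dag(g)$, which moves the factors $U_S(g)$ through the projectors and reproduces exactly $\Pi_S^{\rm phys}(e)\,f_S(g)\,\Pi_S^{\rm phys}(e)=f_{S,e}^{\rm phys}(g)$ as defined in Eq.~\eqref{Heisobs2}.

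The corollary then follows by a one-line substitution into the right-hand side of Corollary~\ref{cor_exp}:
\[
\braket{\psi_S^{\rm phys}(g)|\,f_{S,g}^{\rm phys}\,|\phi_S^{\rm phys}(g)}=\braket{\psi_S^{\rm phys}(e)|\,U_S^\dag(g)\,f_{S,g}^{\rm phys}\,U_S(g)\,|\phi_S^{\rm phys}(e)}=\braket{\psi_S^{\rm phys}(e)|\,f_{S,e}^{\rm phys}(g)\,|\phi_S^{\rm phys}(e)}\,,
\]
and since the left-hand side of Corollary~\ref{cor_exp} is the physical expectation value $\braket{\psi_{\rm phys}|F_{f_{S,g}^{\rm phys},R}(g)|\phi_{\rm phys}}_{\rm phys}$, the claim is proved. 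Alternatively, one can argue in strict parallel to the proof of Corollary~\ref{cor_exp}: invoke Corollary~\ref{cor_PIP} to pass to the kinematical pairing, replace $F_{f_{S,g}^{\rm phys},R}(g)$ by $\calr_\mathbf{H}^{-1}\,f_{S,e}^{\rm phys}(g)\,\calr_\mathbf{H}$ using the weak identity of the preceding (Heisenberg-picture) theorem, and absorb $\calr_\mathbf{H}$ and $\calr_\mathbf{H}^{-1}$ into the states via Eq.~\eqref{Heisstate} and its adjoint.

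I do not expect a genuine obstacle here; the content is essentially bookkeeping, and indeed the statement is advertised as an immediate consequence of Corollary~\ref{cor_exp}. The one point requiring care is the frame-orientation dependence of the physical system Hilbert space: one must check that conjugation by $U_S(g)$ genuinely maps $f_{S,g}^{\rm phys}\in\ca_{S,g}^{\rm phys}$ to an element $f_{S,e}^{\rm phys}(g)\in\ca_{S,e}^{\rm phys}$, and correspondingly the states between $\ch_{S,g}^{\rm phys}$ and $\ch_{S,e}^{\rm phys}$, which is precisely what Eq.~\eqref{projgind2} guarantees. Because $\calr_\mathbf{H}$ is an isometry from $\ch_{\rm phys}$ onto $\ch_{S,e}^{\rm phys}$ (Corollary~\ref{CorIsometry} together with Lemma~\ref{lem_Hredinv}), preservation of the inner product, and hence the unitarity of $\calr_\mathbf{H}$, is then automatic.
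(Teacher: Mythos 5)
Your proposal is correct and follows essentially the same route as the paper, which presents this corollary as an immediate consequence of Corollary~\ref{cor_exp} combined with the intertwining relation $\calr_\mathbf{H}=U_S^\dag(g)\,\calr_\mathbf{S}(g)$ from Lemma~\ref{lem_Hredinv}; your explicit verification of the operator identity $f_{S,e}^{\rm phys}(g)=U_S^\dag(g)\,f_{S,g}^{\rm phys}\,U_S(g)$ via Eq.~\eqref{projgind2} just spells out the bookkeeping the paper leaves implicit.
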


\section{Gauge induced QRF transformations: quantum coordinate changes} \label{sec: QRF transformations as quantum coordinate changes}

In order to consider changes of internal QRF perspectives, we need to introduce a second reference frame. This will also help to illustrate why the physical Hilbert space is called a perspective-neutral Hilbert space. We shall henceforth assume the total kinematical Hilbert space to be of the form\footnote{We can think of this as decomposing our original kinematical system Hilbert space as $\ch_S=\ch_{R_2}\otimes\ch_{S'}$ and then renaming $S'=S$. $S'$ may contain many more possible quantum reference systems.}
\ba 
\ch_{\rm kin}=\ch_{R_1}\otimes\ch_{R_2}\otimes\ch_S\nn 
\ea
where, as before, $\ch_S$ carries an arbitrary unitary representation of the  unimodular Lie group $G$ and every $\ch_{R_i}$, $i=1,2$, admits a coherent state system $\{U_{R_i},\ket{\phi(g)}_{R_i}\}$ on which $G$ acts transitively from the left so that $R_i$ constitutes a complete reference frame if $G$ acts also freely on its orientation states, and otherwise an incomplete reference frame. In particular, our analysis encompasses the situation that one of the $R_i$ is complete and the other is incomplete. As before, we will assume that each coherent state system generates a resolution of the identity. 

Since we now have two types of symmetries, gauge transformations and symmetries as frame reorientations (see Sec.~\ref{ssec_symgauge}), we will also find two distinct types of quantum reference frame transformations. In this section, we shall discuss the gauge transformation induced quantum reference frame transformations. They essentially correspond to a change of gauge as they will change the frame orientation conditioning of the gauge-invariant physical states and such a conditioning amounts to a gauge-fixing of a manifestly gauge-invariant quantity. As such, these quantum reference frame transformations will only change the description of the \emph{same} gauge-invariant states and observables and they will take the form of ``quantum coordinate transformations'', with the gauge-invariant perspective-neutral structure serving as the linking structure of all the gauge-fixed internal perspectives. In Sec.~\ref{sec_relobschanges}, we shall then construct the symmetry-induced quantum reference frame transformations and these will actually change gauge-invariant observables, mapping relational observables relative to $R_1$ to those relative to $R_2$.

Specifically, these sections will also clarify why the perspective-neutral formulation of quantum frame covariance is the quantum analog of classical special covariance, as discussed in Sec.~\ref{ssec_SR}, which likewise features passive and active frame changes.

\subsection{Frame changes in the `relational Schr\"odinger picture'}\label{sec:QRF_change_S}

Upon choosing $R_i$ as the frame, the entire previous discussion applies if one replaces the system $S$ in the previous sections with the composite system $R_jS$, $j\neq i$ here. In particular, the physical system projector now takes the form
\ba 
\Pi_{R_jS}^{\rm phys}(g):=\left(\bra{\phi(g)}_{R_i}\otimes\mathbf{1}_{R_jS}\right)\Pi_{\rm phys}\left(\ket{\phi(g)}_{R_i}\otimes\mathbf{1}_{R_jS}\right),
\ea 
and the physical system Hilbert space becomes a subspace $\ch_{R_jS,g}^{\rm phys}\subset\ch_{R_j}\otimes\ch_S$ of the kinematical Hilbert space for $R_jS$. It follows from \cite{Hoehn:2021wet} that $\ch_{R_jS,g}^{\rm phys}$ will not in general inherit a tensor factorization between $R_j$ and $S$ from the kinematical Hilbert space factors $\ch_{R_j}\otimes\ch_S$, however, this is not a problem for our purposes. The Schr\"odinger reduction map relative to frame $R_i$, $\calr_{\mathbf{S},R_i}(g):\ch_{\rm phys}\rightarrow\ch_{R_jS,g}^{\rm phys}$ maps the perspective-neutral Hilbert space into the physical system Hilbert space of $R_jS$ and reads
\ba 
\calr_{\mathbf{S},R_i}(g):=\bra{\phi(g)}_{R_i}\otimes\mathbf{1}_{R_jS}
\ea 
with the obvious inverse according to Lemma~\ref{lem_invred}. Applied to physical states, this is a gauge-fixing of the gauge-symmetry induced redundant kinematical information inside it. When $R_i$ is a complete reference frame, this amounts to a complete gauge-fixing of the $G$-orbits, while it is a partial gauge-fixing when $R_i$ is incomplete, featuring a non-trivial isotropy group $H_i$.\footnote{We emphasize that $H_i\neq H_j$ is possible, depending on the representations $U_{R_i}$ and $U_{R_j}$ of $G$.} The idea is now to interpret the reduction map $\calr_{\mathbf{S},R_i}$ as the \emph{quantum coordinate map} from the perspective-neutral Hilbert space $\ch_{\rm phys}$, assuming the role of the quantum analog of the frame-independent manifold, into the internal perspective of quantum reference frame $R_i$ on the remaining subsystems $R_j,S$ \cite{Vanrietvelde:2018pgb}. This is the quantum analog of our discussion of classical special covariance in Sec.~\ref{ssec_SR}. As such, we interpret $\ch_{R_jS,g}^{\rm phys}$ and $\ca_{R_jS,g}^{\rm phys}$ as the quantum coordinate description of the \emph{a priori} frame-choice-independent state space $\ch_{\rm phys}$ and observable algebra $\ca_{\rm phys}$ and we substantiate this interpretation shortly. As noted before, $R_i$ will only be able to resolve properties of $R_jS$ that are also invariant under its own isotropy group $H_i$. Specifically, $\calr_{\mathbf{S},R_i}(g_i)$ and $\calr_{\mathbf{S},R_j}(g_j)$ amount to two distinct gauge-fixings of $\ket{\psi_{\rm phys}}$, fixing distinct kinematical frame variables inside it.

It is therefore clear that the change of frame map $V_{R_i\to R_j}(g_i,g_j):\ch_{R_jS,g_i}^{\rm phys}\rightarrow\ch_{R_iS,g_j}^{\rm phys}$ for state descriptions from the perspective of frame $R_i$ in orientation $g_i$ to the new frame $R_j$ in orientation $g_j$ takes the compositional form of ``quantum coordinate changes'', generalizing the QRF transformations in  \cite{Vanrietvelde:2018dit,Vanrietvelde:2018pgb,hoehnHowSwitchRelational2018,Hoehn:2018whn,Hoehn:2019owq,Hoehn:2020epv,Hoehn:2021wet,Hoehn:2021flk,periodic} (see also \cite{castro-ruizTimeReferenceFrames2019,Giacomini:2021gei}) to general groups:
\ba 
V_{R_i\to R_j}(g_i,g_j):=\calr_{\mathbf{S},R_j}(g_j)\cdot\calr_{\mathbf{S},R_i}^{-1}(g_i)\,.
\ea 
By construction, we have $V_{R_i\to R_j}(g_i,g_j)\,\ket{\psi_{R_jS}^{\rm phys}(g_i)} = \ket{\psi_{R_iS}^{\rm phys}(g_j)}$. This map is unitary because the individual quantum coordinate maps are isometries on the relevant Hilbert spaces, regardless of the presence of isotropy groups. In analogy to coordinate changes on a manifold, the change of frame perspective always passes through the perspective-neutral physical Hilbert space since $\ch_{\rm phys}=\calr_{\mathbf{S},R_i}^{-1}(g_i)\left(\ch_{R_jS,g_i}^{\rm phys}\right)$, see Figs.~\ref{fig:H_space_relations} and~\ref{fig:H_space_map} for illustration. In this sense, the perspective-neutral structure links the various internal QRF perspectives on the physics. Since $R_iR_jS$ may contain many more possible choices of quantum reference frames and $\ch_{\rm phys}$ encodes all of them, we call the physical Hilbert space also the quantum reference frame perspective-neutral Hilbert space. This is in analogy to how a spacetime neighbourhood encodes all possible local spacetime frame choices without giving preference to any one of them, in this sense being spacetime-frame-neutral. We emphasize that this picture relies on the gauge-symmetry induced redundancy in the description of the manifestly gauge-invariant physical states $\ket{\psi_{\rm phys}}$ in terms of kinematical structure. This provides a platform for establishing the quantum analog of general covariance.

It is also clear that observable descriptions transform from $R_i$'s to $R_j$'s perspective according to the map $\Lambda_{R_i\to R_j}(g_i,g_j):\ca_{R_jS,g_i}^{\rm phys}\rightarrow\ca_{R_iS,g_j}^{\rm phys}$ given by
\ba
\Lambda_{R_i\to R_j}(g_i,g_j)\left(f_{R_jS,g_i}^{\rm phys}\right):=V_{R_i\to R_j}(g_i,g_j)\,f_{R_jS,g_i}^{\rm phys}\,   V_{R_j\to R_i}(g_j,g_i)\,.\label{obstrans1}
\ea 
This transformation maps the reduced observables relative to the frame perspective of $R_i$ via the perspective-neutral algebra $\ca_{\rm phys}$ into that of $R_j$. The image of this map will in general depend on both frame orientations $g_i,g_j$, even if $f_{R_jS,g_i}^{\rm phys}$ is independent of the orientation $g_i$ of the old frame $R_i$. As explained in \cite{Hoehn:2019owq,Hoehn:2020epv}, this can be interpreted as an indirect self-reference of the new frame $R_j$. This is the quantum analog of the observation in Sec.~\ref{ssec_SR}, case (A), that in special relativity ``relational observables'' look simple in the internal perspective of the frame relative to which the observables are formulated, and more complicated in other internal perspectives (in fact, depending on the relative frame orientation).

One might wonder why we call $\ca_{\rm phys}$ perspective-neutral, given that it contains relational observables such as $F_{f_S,R_i}(g_i)$, which describes the system property $f_S$ relative to frame $R_i$. However, since this encodes the relation between the $R_i$ and $S$ in a manifestly gauge-invariant manner, we can also reduce, i.e.\ gauge-fix it into any one of the internal frame perspectives, providing a description of the relation between $R_i$ and $S$ in some other frame's perspective (see case (A) in Sec.~\ref{ssec_SR} for the analogous situation in special relativity). Furthermore, and more importantly, since $\ca_{\rm phys}$ and the reduced algebras $\ca_{R_jS,g_i}^{\rm phys}$ and $\ca_{R_iS,g_j}^{\rm phys}$ are all isometrically isomorphic and related by the quantum coordinate maps, which thanks to Theorem~\ref{thm_obs} embed the reduced observables as relational observables into $\ca_{\rm phys}$, we can formally write any abstract element in $\ca_{\rm phys}$ as a relational observable describing some property of $R_jS$ relative to $R_i$ and \emph{also} as a relational observable describing some \emph{different} property of $R_iS$ relative to $R_j$. In other words, any abstract element in $\ca_{\rm phys}$ admits various physical interpretations and incarnations as a relational observable. In this sense, reference frames provide the ``context'' within which we can equip an abstract element of $\ca_{\rm phys}$ with a concrete physical interpretation, and in general there exist many such ``contexts''. Thus, when we call $\ca_{\rm phys}$ perspective-neutral we mean it in this abstract sense, i.e.\ prior to choosing a physical context within which to interpret its elements physically (see the analogy with case (B) in special relativity, Sec.~\ref{ssec_SR}). Once more, this is only possible due to the gauge-symmetry induced redundancy in the description of the manifestly gauge-invariant elements of $\ca_{\rm phys}$.

We can write the change of QRF perspective map in a more explicit form:
\begin{theorem}
For $i\neq j$, the change of frame map has the form
\ba
V_{R_i\to R_j}(g_i,g_j)=\int_G dg\,\ket{\phi(g g_i)}_{R_i}\otimes\bra{\phi(g^{-1} g_j)}_{R_j}\otimes U_S(g)\,.\nn
\ea
\end{theorem}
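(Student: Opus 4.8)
The plan is to unfold the definition $V_{R_i\to R_j}(g_i,g_j)=\calr_{\mathbf{S},R_j}(g_j)\cdot\calr_{\mathbf{S},R_i}^{-1}(g_i)$, substitute the explicit forms of the two reduction maps, and then simplify the resulting contraction on the $R_j$ factor using the equivariance of the coherent states. The key observation that makes everything routine is that when $R_i$ plays the role of the frame, the ``system'' of the earlier sections is the composite $R_jS$, carrying the product gauge representation $U_{R_j}\otimes U_S$. Thus Lemma~\ref{lem_invred}, applied with the replacements $S\mapsto R_jS$ and $U_S\mapsto U_{R_j}\otimes U_S$, immediately gives the inverse reduction map as
\[
\calr_{\mathbf{S},R_i}^{-1}(g_i)=\int_G dg\,\ket{\phi(gg_i)}_{R_i}\otimes U_{R_j}(g)\otimes U_S(g)\,.
\]

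First I would left-multiply this expression by $\calr_{\mathbf{S},R_j}(g_j)=\mathbf{1}_{R_i}\otimes\bra{\phi(g_j)}_{R_j}\otimes\mathbf{1}_S$ (written in the tensor ordering $R_i\otimes R_j\otimes S$). Since the $R_i$ factor $\ket{\phi(gg_i)}_{R_i}$ and the $S$ factor $U_S(g)$ are untouched, the only nontrivial contraction occurs on $R_j$, producing the operator $\bra{\phi(g_j)}_{R_j}U_{R_j}(g)$ inside the integral. The remaining step is to rewrite this contracted bra back into coherent-state form: using unitarity $U_{R_j}(g)^\dagger=U_{R_j}(g^{-1})$ together with the equivariance relation $U_{R_j}(g^{-1})\ket{\phi(g_j)}_{R_j}=\ket{\phi(g^{-1}g_j)}_{R_j}$ from Sec.~\ref{sec_QRF}, one finds $\bra{\phi(g_j)}_{R_j}U_{R_j}(g)=\bra{\phi(g^{-1}g_j)}_{R_j}$. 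Substituting this and retaining $g$ as the integration variable yields precisely
\[
V_{R_i\to R_j}(g_i,g_j)=\int_G dg\,\ket{\phi(gg_i)}_{R_i}\otimes\bra{\phi(g^{-1}g_j)}_{R_j}\otimes U_S(g)\,,
\]
which is the claimed form.

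The calculation itself is mechanical, so there is no substantive analytic obstacle; the points that require care are purely bookkeeping. The main one is correctly identifying $U_{R_j}\otimes U_S$ as the composite ``system'' representation when invoking Lemma~\ref{lem_invred}, so that the single factor $U_S(g)$ appearing in that lemma splits into $U_{R_j}(g)\otimes U_S(g)$, with only the $R_j$ part subsequently consumed by the bra $\bra{\phi(g_j)}_{R_j}$. I would also remark, as throughout the group-averaging construction, that the integral expressions are to be understood in the distributional/weak sense on $\ch_{\rm phys}$: the composition is well-defined and independent of the chosen kinematical representative because $\calr_{\mathbf{S},R_i}^{-1}(g_i)$ lands in $\ch_{\rm phys}$, on which $\calr_{\mathbf{S},R_j}(g_j)$ acts as a genuine reduction, isometric by Corollary~\ref{CorIsometry}. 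Hence no convergence issues arise beyond those already assumed for the individual reduction maps, and the compositional ``quantum coordinate change'' structure is manifest in the final expression.
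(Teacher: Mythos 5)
Your proposal is correct and follows essentially the same route as the paper: both unfold $V_{R_i\to R_j}(g_i,g_j)=\calr_{\mathbf{S},R_j}(g_j)\,\calr_{\mathbf{S},R_i}^{-1}(g_i)$, substitute $\calr_{\mathbf{S},R_i}^{-1}(g_i)=\Pi_{\rm phys}\left(\ket{\phi(g_i)}_{R_i}\otimes\mathbf{1}_{R_jS}\right)$ (the paper keeps $\Pi_{\rm phys}$ explicit, you use its expanded integral form from Lemma~\ref{lem_invred}), and conclude via the left action of the representations on the coherent states, i.e.\ $U_{R_i}(g)\ket{\phi(g_i)}_{R_i}=\ket{\phi(gg_i)}_{R_i}$ and $\bra{\phi(g_j)}_{R_j}U_{R_j}(g)=\bra{\phi(g^{-1}g_j)}_{R_j}$. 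The only difference is bookkeeping, not substance.
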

\begin{proof}
The definitions of the reduction maps entail
\ba
V_{R_i\to R_j}(g_i,g_j)=\left(\bra{\phi(g_j)}_{R_j}\otimes \mathbf{1}_{R_iS}\right)\,\Pi_{\rm phys}\left(\ket{\phi(g_i)}_{R_i}\otimes \mathbf{1}_{R_jS}\right)\,.\nn
\ea
Noting that $\Pi_{\rm phys}=\int_G dg\,U_{R_1}(g)\otimes U_{R_2}(g)\otimes U_{S}(g)$, the result follows immediately from the left action of $U_{R_i}$ on $\ket{\phi(g)}_{R_i}$, $i=1,2$.
\end{proof}

When considering the special case of systems $L^2(G)$ carrying left and right regular representations of $G$, restricting the change of frame map to $g_i=g_j=e$ and appending the reference system states, we find
\ba
\ket{\phi(e)}_{R_j}\otimes\ket{\psi_{R_iS}^{\rm phys}(e)} = \mathrm{SWAP}_{i,j}\,\otimes V_{R_i\to R_j}(e,e)\left(\ket{\phi(e)}_{R_i}\otimes\ket{\psi_{R_jS}^{\rm phys}(e)}\right)\,, \label{QRF coordinate change}
\ea
where $\rm{SWAP}_{i,j}$ exchanges the $i,j$ labels.  This is the form of the QRF transformations established in \cite{hamette2020quantum} for general groups except that $V_{R_i\to R_j}(e,e)$ acts with the left regular representation on the systems $\ch_k$, for $k \neq i, j$. In \cite{hamette2020quantum} the systems $\ch_k$ are acted on by the right regular representation when changing reference system. One would recover the exact change of quantum reference frame of~\cite{hamette2020quantum} by requiring that the physical Hilbert space be invariant under the right regular representation. The choice of acting on the left or the right on the coherent states as corresponding to gauge is just conventional (as long as symmetries always act ``on the other side''). Similarly, the expression for $V_{R_i\to R_j}(g_i,g_j)$ coincides with the quantum reference frame transformations established for ideal frames and finite groups in \cite{hamette2020quantum,Krumm:2020fws,Hoehn:2021flk} (when replacing the integral with a sum $\sum_{g\in G}$). Our present construction thus significantly generalizes previous constructions in the literature.

\begin{figure}
    \centering
    \includegraphics[width= 400pt]{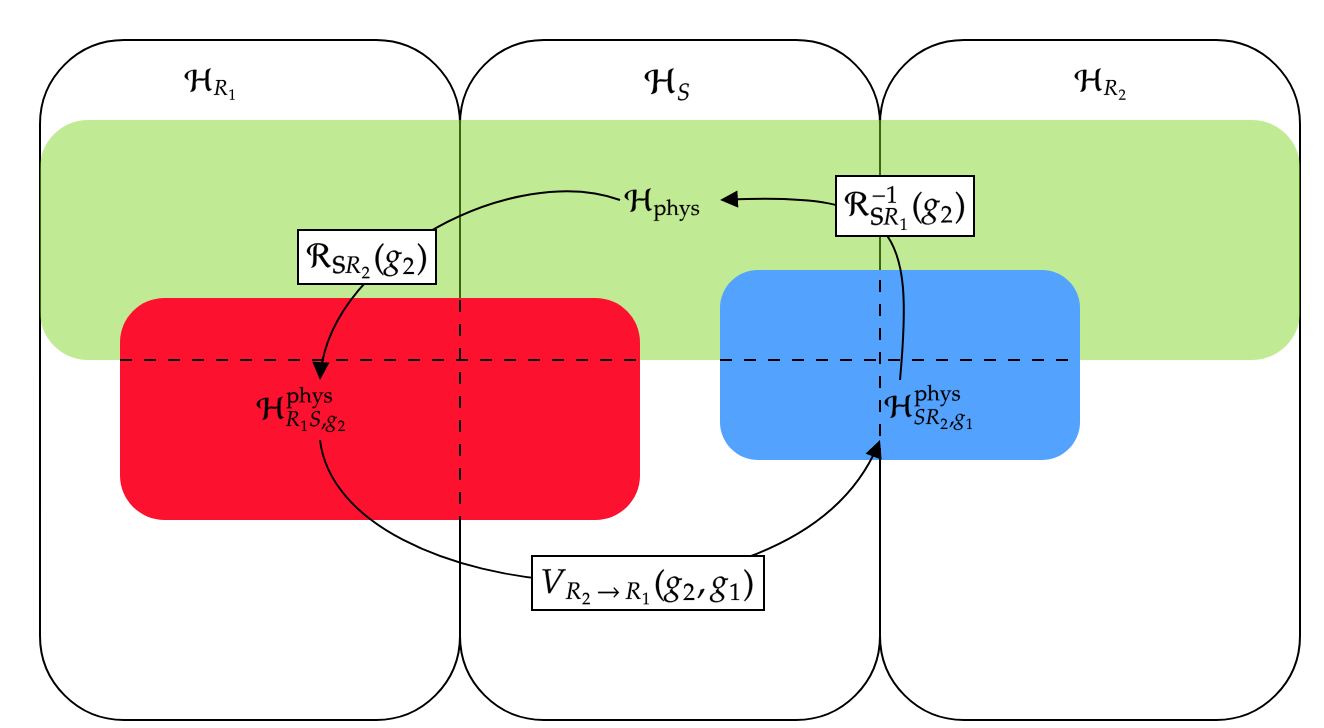}
    \caption{This figure illustrate the relations between different physical and kinematical subsystem Hilbert spaces, as well as their embedding in the larger Hilbert space $\ch_\kin \cong \ch_{R_1} \otimes \ch_S \otimes \ch_{R_2}$ and $\ch_\phys$. In the figure we depict the case $\ch_\phys \subset \ch_\kin$ for simplicity. The physical subsystem space $\ch^\phys_{R_1 S,g_2}$ of subsystems $R_1S$ conditional on $R_2$ in orientation $g_2$ is a subspace of $\ch_{R_1} \otimes \ch_{S}$. Similarly for  $\ch^\phys_{R_2 S,g_1}$ and $\ch_{S} \otimes \ch_{R_2}$. One can map directly between the physical system spaces of $R_1S$ relative to $R_2$ and $R_2S$ relative to $R_1$ via the change of reference system operator $V_{R_1 \to R_2}(g_2,g_1)$. As $R_2$ changes orientation $g_2$, the subspace $\ch_{R_1S,g_2}^{\rm phys}$ ``rotates'' through the kinematical $\ch_{R_1}\otimes\ch_S$ if symmetries are absent (similarly with $R_1$ and $R_2$ interchanged).}
    \label{fig:H_space_relations}
\end{figure}

\begin{figure}
    \centering
    \includegraphics[width= 500pt]{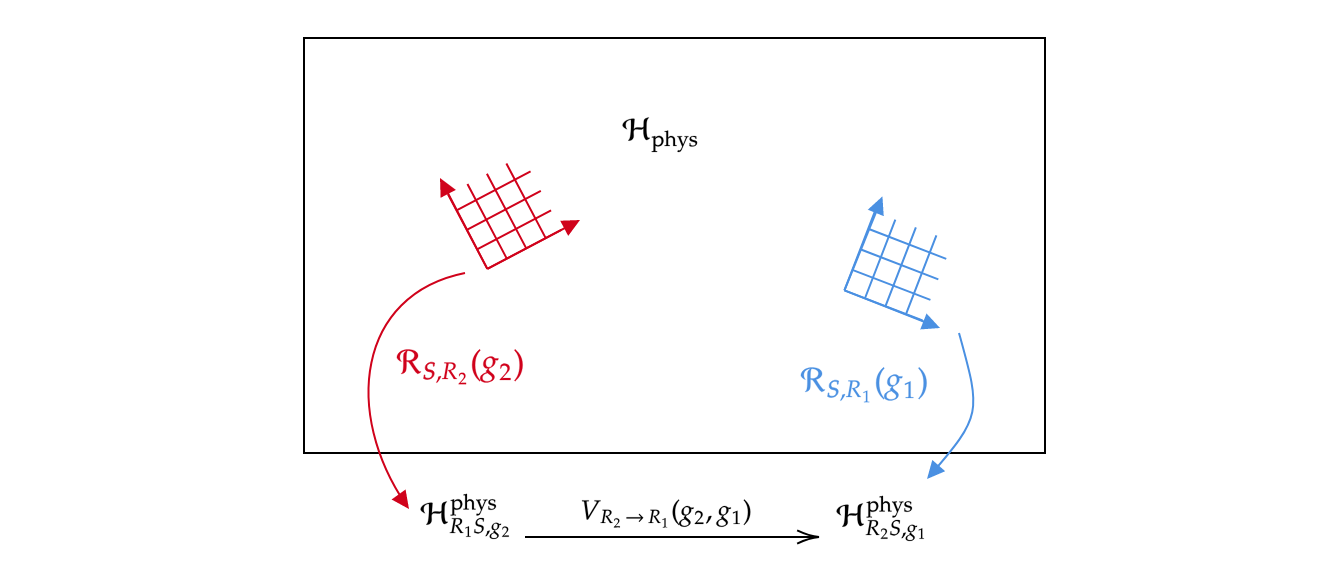}
    \caption{In analogy with special relativity (recall Sec.~\ref{ssec_SR}), one can see the physical Hilbert space $\ch_\phys$ as the global manifold. The reduced Hilbert space $\ch_{R_1S,g_2}$ is the ``quantum coordinate'' description of the physical degrees of freedom from the perspective of $R_2$ in configuration $g_2$, which is analogous to a coordinate patch on the global manifold. The reduced Hilbert space $\ch_{R_1S,g_2}^{\rm phys}$ is obtained from $\ch_\phys$ via the map $\calr_{\mathbf{S},R_2}(g_2)$ which corresponds to a coordinate map in our analogy. Hence, the map $V_{R_2\to R_1}(g_2,g_1)$ is analogous to a change of coordinates.}
    \label{fig:H_space_map}
\end{figure}

\subsection{Frame changes in the `relational Heisenberg picture'}\label{sec:QRF_change_H}

We can repeat the previous calculation also in the relational `Heisenberg picture' of Sec.~\ref{sec_Heisenberg}, i.e.\ invoking the Heisenberg reduction maps $\calr_{\mathbf{H},R_i}:\ch_{\rm phys}\rightarrow\ch_{R_jS,e}^{\rm phys}$ now, instead of Eq.~\eqref{Heisenbergred}, given by
\ba 
\calr_{\mathbf{H},R_i}:=N_i^*(g_i)\left(\bra{\phi(g_i)}_{R_i}\otimes\I_{R_jS}\right)\ct_{R_i}\,,
\ea
where the kinematical disentangler of $R_i$ reads
\ba 
\ct_{R_i}=\int_Gdg\,N_i(g)\ket{\phi(g)}\!\bra{\phi(g)}_{R_i}\otimes U^\dag_{R_jS}(g)\,.
\ea
We then find the quantum coordinate changes $\ch_{R_jS,e}^{\rm phys}\rightarrow\ch_{R_iS,e}^{\rm phys}$ in the form 
\ba 
\tilde V_{R_i\to R_j}&=&\calr_{\mathbf{H},R_j}\cdot\calr_{\mathbf{H},R_i}^{-1}\nn\\
&\approx& U_{R_iS}^\dag(g_j)\calr_{\mathbf{S},R_j}(g_j)\cdot\calr_{\mathbf{S},R_i}^{-1}(g_i)U_{R_jS}(g_i)\\
&=&V_{R_i\to R_j}(e,e)\,,
\ea
where in the second line we have made use of Lemma~\ref{lem_Hredinv}. On account of the unitary equivalence of the two pictures, we thus obtain the same transformations as before, restricted to frame orientations corresponding to the origin. 
It is also clear that one can mix the pictures, e.g.\ mapping in an obvious manner from the relational `Heisenberg picture' relative to $R_i$ into the relational `Schr\"odinger picture' of frame $R_j$.

\section{Symmetry induced QRF transformations: changes of relational observables}\label{sec_relobschanges}

\subsection{Relation-conditional frame reorientations}

The transformation in Eq.~\eqref{obstrans1} shows how to change from the reduced \emph{description} of a \emph{fixed} relational observable from one frame to another. Indeed, Theorem~\ref{thm_obs} implies that $f_{R_jS,g_i}^{\rm phys}$ on the right hand side is the ``quantum coordinate'' description of the relational observable $F_{f_{R_jS},R_i}(g_i)$ from the internal perspective of frame $R_i$. By contrast, the left hand side of Eq.~\eqref{obstrans1} is the ``quantum coordinate'' description of that \emph{same} relational observable from the internal perspective of frame $R_j$. Since the quantum coordinate maps involve conditionings on frame orientations, the different internal descriptions ultimately correspond to different gauge-fixing conditions. Changing the conditioning $\bra{\phi(g_i)}_{R_i}\to\bra{\phi(g_i')}_{R_i}$, but not the frame, amounts to \emph{configuration-independent} gauge transformations. By contrast, changing also the frame in this conditioning, i.e.\ $\bra{\phi(g_i)}_{R_i}\to\bra{\phi(g_j)}_{R_j}$, can be interpreted as \emph{relation-conditional gauge transformations} that depend on the relation between the old and new frame \cite{Krumm:2020fws}\footnote{In \cite{Krumm:2020fws}, the transformations were called relation-conditional symmetries, adhering to quantum information terminology. However, in our context, they would be the gauge transformations.} (see also \cite{Vanrietvelde:2018pgb,Vanrietvelde:2018dit,hoehnHowSwitchRelational2018,Hoehn:2018whn}). This is the quantum analog of the passive frame changes in the form of relation-conditional Lorentz ``gauge'' transformations in special relativity in Sec.~\ref{ssec_SR}.

There is, however, another kind of transformations that too corresponds to frame changes and, instead of changing the ``quantum coordinate'' description of a fixed relational observable, actually changes the relational observable.  Namely, what are the transformations from the relational observables relative to frame $R_1$ to those relative to frame $R_2$? For instance, how does one transform from $F_{f_S,R_1}(g_1)$ to $F_{f_S,R_2}(g_2)$?  For fixed $f_S$, these two families of relational observables correspond to two distinct orbits of dimension smaller or equal to $\dim\,X$  in the physical algebra $\ca_{\rm phys}$. In Sec.~\ref{ssec_symgauge}, we saw that the (configuration independent) symmetries, i.e.\ frame reorientations, defined the transformations \emph{along} these orbits. Here, we are thus looking for the transformations that take one such orbit into the other for arbitrary $f_S$, see Fig.~\ref{fig:orbits2} below. We shall now see that these frame change transformations are given by \emph{relation-conditional frame reorientations}. In particular, this will be the quantum analog of the active frame changes in the form of relation-conditional Lorentz ``symmetry'' transformations in special relativity in Sec.~\ref{ssec_SR}. Indeed, in the classical case this has also been established in both a mechanical and gauge field context in \cite{Carrozza:2021sbk}, and we shall here extend this to the quantum theory. That is, the quantum reference frame transformations in the form of quantum coordinate changes correspond to gauge transformations, while in the form of relational observable changes correspond to symmetries.

The basic idea is to transform the old reference frame $R_1$ by its relation with the new frame $R_2$. This relation is encoded in a relational observable. To build up intuition, let us consider a simple example. Suppose $R_1,R_2$ and $S$ are three particles in one dimension subject to translation invariance. Recalling Example~\ref{ex_5}, we can then consider the relational observables ${F_{q_S,R_1}(0)=q_S-q_1}$ and ${F_{q_2,R_1}(0)=q_2-q_1}$, corresponding to the position of $S,R_2$ when $R_1$ is in the origin; they are just the relative distances to particle $R_1$. In order to transform these relational observables into the corresponding ones relative to the new frame particle $R_2$, i.e.\ ${F_{q_S,R_2}(0)=q_S-q_2}$ and ${F_{q_2,R_2}(0)=q_2-q_2=0}$, we need to translate the old frame position $q_1$ by the relative distance $F_{q_2,R_1}(0)$ between the two frame particles. Note that this relative distance is a translation group-valued operator; since it only acts on the frame $R_1$ this is a relation-conditional frame reorientation. To formalize this, note that the projection onto states with $q_2-q_1=q$ can be written
\ba
F_{\ket{q}\!\bra{q}_2,R_1}(0)=\int_\mathbb{R} dx\,\ket{x}\!\bra{x}_1\otimes\ket{x+q}\!\bra{x+q}_2\otimes\I_S\,,
\ea 
where $\ket{x}_i$ are the eigenstates of $q_i$. We can thus write the translation of $R_1$ conditional on $q_2-q_1$ as a super-operator, which is nothing but a $G$-twirl for the symmetry rather than gauge group in Eq.~\eqref{obsreorient}:
\ba 
\cv_{R_1 \to R_2}(\cdot)&:=&\int_\mathbb{R}dq \int_{\mathbb{R}} dx\,\left(\ket{x}\!\bra{x}_1\otimes\ket{x+q}\!\bra{x+q}_2\otimes\I_S\right)\left(V_{R_1}(q)\otimes\I_{R_2S}\right)\left(\cdot\right)\left(V_{R_1}^\dag(q)\otimes\I_{R_2S}\right)\nn\\
&=&\int_\mathbb{R}dq \int_{\mathbb{R}} dx\,\left(\ket{x}\!\bra{x}_1\otimes\ket{x+q}\!\bra{x+q}_2\otimes\I_S\right)\left(e^{iqp_1}\otimes\I_{R_2S}\right)\left(\cdot\right)\left(e^{-iqp_1}\otimes\I_{R_2S}\right)\,.\label{obsreorient2}
\ea 
This operator is unital $\cv_{R_1\to R_2}(\I)=\I$. As one can check, it yields $\cv_{R_1\to R_2}(q_1\otimes\I_{R_2S})=q_2\otimes\I_{R_1S}$, $\cv_{R_1\to R_2}(\I_{R_1}\otimes f_{R_2S})=\I_{R_1}\otimes f_{R_2S}$ for any $R_2S$ operator $f_{R_2S}$ and in particular
\ba 
\cv_{R_1\to R_2}(F_{q_S,R_1}(0))=F_{q_S,R_2}(0)=q_S-q_2\,,\q\q \cv_{R_1\to R_2}(F_{q_2,R_1}(0))=F_{q_2,R_2}(0)=0\,,\label{relcondsym1}
\ea 
as desired, where we invoke again Example~\ref{ex_5} for the explicit form of the relational observables.

This simple example already highlights that the relation-conditional symmetry transformation $\cv_{R_1\to R_2}$ can \emph{not} be unitary; e.g.\ it maps the non-trivial operator $F_{q_2,R_1}(0)=q_2-q_1$ to zero. This is, however, not a problem as we are not transforming states, only what observable we evaluate in a given physical state. It should thus not be viewed as a physical transformation, but as a change of question that is being asked (similar to the transformation from the energy-momentum tensor $T'_{A'B'}$ associated with observer $O'$ to $T_{AB}$ associated with observer $O$ in special relativity in Sec.~\ref{ssec_SR}). In general, this can also be anticipated from the gauge-invariant subsystem relativity established in \cite{Hoehn:2021wet}: the subalgebra $\ca_{S|R_1}^{\rm phys}\subset\ca_{\rm phys}$ of relational observables $F_{f_S,R_1}(g_1)$ describing the system $S$ relative to frame $R_1$ does not coincide with the subalgebra $\ca_{S|R_2}^{\rm phys}$ of relational observables $F_{f_S,R_2}(g_2)$ describing $S$ relative to $R_2$ and this may alter commutation relations for a fixed $f_S$.

Let us now generalize this construction to a general unimodular group $G$. For simplicity, and as in the example of the translation group above, we shall restrict to the regular representation for both frames in this section. To highlight this, we drop the $\phi$ label from the frame orientation states and simply write $\ket{g}_{R_i}$, minding that we now have $\braket{g|g'}_{R_i}=\delta(g,g')$. We shall explain below why we impose this restriction here.

The configuration of the frame $R_2$ will be determined by observables of the form
$
Q^a_2=\int_Gdg\,Q^a_2(g)\,\ket{g}\!\bra{g}_2
$, where $Q^a_2(g)$, $a=1,\ldots,\dim\,G$, are some functions on the group. For example, $Q^a_2$ could be the $a^{\rm th}$ coordinate on $G$. The relation between the frames is then encoded in the relational observables
\ba 
F_{Q^a_2,R_1}(g_1)&=&\int_Gdg\,dg'Q^a_2(g')\,\ket{gg_1}\!\bra{gg_1}_1\otimes\ket{gg'}\!\bra{gg'}_2\otimes\I_S\,,
\ea 
where the projector onto eigenvalue $Q_2^a(g')$ is
\ba 
F_{\ket{g'}\!\bra{g'}_2,R_1}(e)&=&\int_Gdg\,\ket{g}\!\bra{g}_1\otimes\ket{gg'}\!\bra{gg'}_2\otimes\I_S\,.
\ea 
The generalization of the relation-conditional frame reorientation in Eq.~\eqref{obsreorient2}, taking also non-trivial frame orientations $g_1$ of $R_1$ and $g_2$ of $R_2$ into account, is (cf.\ Eq.~\eqref{obsreorient})
\ba 
\cv_{R_1 \to R_2}^{g_1 \to g_2}(\cdot):=\int_G dg\,dg'\,\left(\ket{g}\!\bra{g}_1\otimes\ket{gg'}\!\bra{gg'}_2\otimes\I_S\right)\left(V_{R_1}(g'g_2^{-1}g_1)\otimes\I_{R_2S}\right)\left(\cdot\right)\left(V_{R_1}^\dag(g'g_2^{-1}g_1)\otimes\I_{R_2S}\right)\,.\label{relcondreorient2}
\ea 
In particular, recalling Eq.~\eqref{obsreorient}, this yields for an arbitrary $R_2S$ observable $f_{R_2S}$
\ba 
\cv_{R_1 \to R_2}^{g_1 \to g_2}\left(F_{f_{R_2S},R_1}(g_1)\right) = \int_G dg\,dg'\,\left(\ket{g}\!\bra{g}_1\otimes\ket{gg'}\!\bra{gg'}_2\otimes\I_S\right)\,F_{f_{R_2S},R_1}(g_2g'^{-1})\,,\label{relcondreorient3}
\ea 
i.e.\ a relation-conditional relabeling of the parameters $g_1$ in the relational observables $F_{f_{R_2S},R_1}(g_1)$, specifying the frame orientation.
It is straightforward to verify that this super-operator is unital, $\cv_{R_1}(\I)=\I$ and produces the desired generalization of Eq.~\eqref{relcondsym1}
\ba 
\cv_{R_1 \to R_2}^{g_1 \to g_2}\left(F_{f_S,R_1}(g_1)\right)&=&F_{f_S,R_2}(g_2)\,,\nn\\\cv_{R_1 \to R_2}^{g_1 \to g_2}\left(F_{Q^a_2,R_1}(g_1)\right)&=&F_{Q_2^a,R_2}(g_2)=\int_Gdg\,U_{R_2}(g)\,Q_2^a\ket{g_2}\!\bra{g_2}_2\,U_{R_2}^\dag(g)\otimes\I_{R_1S}=Q^a_2(g_2)\,\I\,,\label{relcondreorient4}
\ea 
for arbitrary $f_S\in\ca_S$. The expression on the right hand side of the second line is a tautological observable, describing the frame $R_2$ relative to itself, and is thus proportional to the identity. It encodes $Q^a_2$ when $R_2$ is in orientation $g_2$. For example, when we set $g_2=e$ and choose $Q^a_2$ to be coordinate functions on $G$, we obtain the coordinate of the identity element in the last expression, which for the translation group is zero, cf.\ Eq.~\eqref{relcondsym1}. For this reason, the transformation fails to be unitary.

The relation-conditional frame reorientation in Eq.~\eqref{relcondreorient2} is a $G$-twirl for the symmetry group. Just like the $G$-twirl for the gauge group in Eq.~\eqref{qrelobs}, which is a frame-orientation-dressed gauge transformation, generates gauge-invariant observables, the relation-conditional frame reorientation produces symmetry-invariant observables, i.e.\ observables invariant under $R_1$ reorientations. Indeed, the output of $\cv_{R_1 \to R_2}^{g_1 \to g_2}$, as defined above, no longer depends on the old frame $R_1$, so that for an arbitrary $R_2S$ observable $f_{R_2S}$ one finds
\ba
\Big[\left(V_{R_1}^\dag(g)\otimes\I_{R_2S}\right),\cv_{R_1 \to R_2}^{g_1 \to g_2}\left(F_{f_{R_2S},R_1}(g_1)\right)\Big]=0\,,\q\forall\,g\in G\,.
\ea 

Let us consider again the tautological observables, in this case $F_{Q_1^a,R_1}(g_1)=Q_1^a(g_1)\,\I$, describing $R_1$ relative to itself. Since $\cv^{g_1\to g_2}_{R_1 \to R_2}$, as defined in Eq.~\eqref{relcondreorient2}, is unital, it maps this tautological observable into itself. We can, however, slightly modify the definition of the relation-conditional frame reorientation of relational observables: if, instead of Eq.~\eqref{relcondreorient2}, we define it as the relation-conditional orientation label transformation in Eq.~\eqref{relcondreorient3} \emph{for all} relational observables relative to $R_1$, i.e.\ including the tautological ones, then we obtain a non-trivial transformation of the latter, while maintaining the transformation of the remaining relational observables $F_{f_{R_2S},R_1}(g_1)$ as before. In particular, we then find that the tautological observable
\ba 
\cv_{R_1 \to R_2}^{g_1 \to g_2}\left(F_{Q_1^a,R_1}(g_1)\right) &=& \int_G dg\,dg'\,\left(\ket{g}\!\bra{g}_1\otimes\ket{gg'}\!\bra{gg'}_2\otimes\I_S\right)\,Q_1^a(g_2g'^{-1})\nn\\
&=&\int_Gdg\,dg'\,Q_1^a(g')\ket{gg'}\!\bra{gg'}_1\otimes\ket{gg_2}\!\bra{gg_2}_2\otimes\I_S\nn\\
&=&F_{Q_1^a,R_2}(g_2)
\ea 
transforms (upon a variable transformation) into the corresponding non-trivial relational observable encoding the relation between the two frames, however now relative to $R_2$. Defined this way, the relation-conditional  frame reorientations map all the relational observables correctly from $R_1$ to $R_2$. In this sense, it is invertible, despite being non-unitary.

This construction translates the frame change induced observable transformations of \cite{Carrozza:2021sbk} into the quantum theory. In conclusion, configuration-independent symmetries associated with a frame generate the orbits in $\ca_{\rm phys}$ corresponding to the relational observables relative to that frame, while symmetries conditional on the relation between frames $R_1$ and $R_2$ change from the observable orbits in $\ca_{\rm phys}$ associated with $R_1$ to the orbits associated with $R_2$, see Fig.~\ref{fig:orbits2} for illustration. 

\begin{figure}[t]
\centering
\includegraphics[width= 400pt]{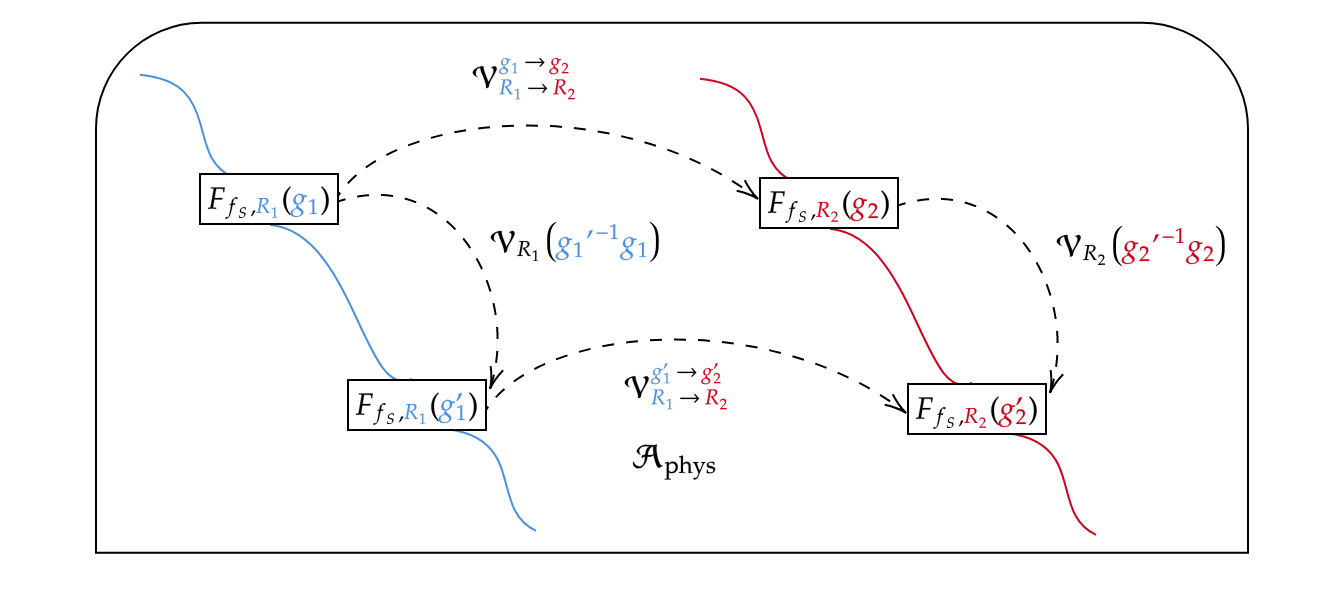}
\caption{This figure illustrate various orbits of observable in $\ca_\phys$. In blue on the left is the orbit of relational observables $\{F_{f_S,R_1}(g)\}_{g \in G}$  describing the system $S$ relative to the frame $R_1$. As can be seen, any two elements $F_{f_S,R_1}(g_1)$ and $F_{f_S,R_1}(g_1')$ in the same orbit are related by the (configuration-independent) symmetry transformation $\cv^{\rm phys}_{R_1}({g_1'} ^{-1} g_1)$ in Eq.~\eqref{obsreorient}. Similarly, on the right in blue, one has the orbit of relational observables $\{F_{f_S,R_2}(g)\}_{g \in G}$  describing the system $S$ relative to the frame $R_2$. One can map between points  $F_{f_S,R_1}(g_1)$ and $F_{f_S,R_2}(g_2)$ on different orbits with the operator $\cv_{R_1 \to R_2}^{g_1 \to g_2}$, which is a relation-conditional symmetry. \label{fig:orbits2}}
\end{figure}

Let us elucidate why we restricted to the regular representation in this section, in contrast to the remainder of this article. As we have seen, for the construction of relational observables and their algebraic properties \emph{on} the physical Hilbert space (i.e.\ their properties defined by weak equalities) it did \emph{not} matter whether the frame configuration states $\ket{\phi(g)}_R$ are orthogonal or not. Neither did it matter for the construction of the reduced theories along either the Page-Wootters or symmetry reduction procedures and, as we have seen, the quantum reference frame changes in the sense of the ``quantum coordinate transformations'' are unitary regardless. The reason is that all of these structures were defined using the gauge group and by acting on physical states, which feature a gauge symmetry induced redundancy in their description in terms of the ambient kinematical structures on $\ca_{\rm kin}$ and $\ch_{\rm kin}$. In particular, the conditioning of the invariant physical states on the non-invariant frame orientations $\ket{\phi(g)}_R$ (which also resides in the relational observables) only projects out \emph{redundant} information. This is also the reason why the reduction maps of both the relational Schr\"odinger and Heisenberg pictures are invertible when restricted to the physical Hilbert space. For this it does not matter whether the $\ket{\phi(g)}_R$ are perfectly distinguishable or not, only that they give rise to a resolution of the identity. 

Now for symmetry-induced observable transformations the situation is different. Since symmetries are \emph{physical transformations} that change the relation between the frame and the remaining degrees of freedom, a generic physical state $\ket{\psi_{\rm phys}}$, while by construction gauge-invariant, will \emph{not} be symmetry-invariant. As such, it does \emph{not} feature a symmetry related redundancy. For the relation-conditional frame reorientations it therefore does matter whether the frame orientation states are orthogonal or not, even when acting on physical states; after all, we are now conditioning on eigenstates of invariant relational observables in Eq.~\eqref{relcondreorient3} and their information certainly is not redundant. While the expression Eq.~\eqref{relcondreorient3} can be easily extended for non-regular representations by simply replacing all instances of $\ket{g}_{R_i}$ by $\ket{\phi(g)}_{R_i}$, it does not provide the correct transformation of relational observables, not even weakly. The construction procedure of $\cv_{R_1 \to R_2}$ is thus different for non-regular representations and we leave it for future work to establish it.

\subsection{Symmetries explain the quantum relativity of subsystems}\label{sec:subsys_rel}

The construction also highlights the quantum frame relativity of gauge-invariant subsystems, as revealed in \cite{Hoehn:2021wet}. The relation-conditional frame reorientation $\cv_{R_1 \to R_2}^{g_1 \to g_2}$ maps the subalgebra $\ca_{S|R_1}^{\rm phys}\subset\ca_{\rm phys}$ of relational observables, describing the system $S$ relative to the first frame $R_1$, into $\ca_{S|R_2}^{\rm phys}$, describing $S$ relative to the second frame $R_2$, i.e.
\ba 
\cv_{R_1 \to R_2}^{g_1 \to g_2}\left(\ca_{S|R_1}^{\rm phys}\right) = \ca_{S|R_2}^{\rm phys}\,.
\ea 
Note that $\cv_{R_1\to R_2}^{g_1\to g_2}$ extends for \emph{fixed} $g_1,g_2$ to the full algebras $\ca_{S|R_1}^{\rm phys}$ and $\ca_{S|R_2}^{\rm phys}$, despite Eq.~\eqref{relcondreorient4}. This is because, using Eq.~\eqref{qrelobs2}, we can rewrite any relational observable as $F_{f_S,R_1}(g)=F_{U_S(g_1g^{-1})f_SU_S^\dag(g_1g^{-1}),R_1}(g_1)$. 

For Abelian groups it was shown in \cite{Hoehn:2021wet} (also beyond the regular representation) that $\ca_{S|R_1}^{\rm phys}$ and $\ca_{S|R_2}^{\rm phys}$ are \emph{non-coinciding} subalgebras of the algebra $\ca_{\rm phys}$ of all gauge-invariant observables. Using the notion of symmetries as frame reorientations, we can now understand this observation in a simpler manner, while also extending it to general unimodular groups. Indeed, as noted before, configuration-independent reorientations of the frame $R_1$ define the transformations \emph{along} the orbits in $\ca_{S|R_1}^{\rm phys}$ defined by each family $F_{f_S,R_1}(g_1)$. That is, configuration-independent symmetries associated with $R_1$, while preserving the subalgebra $\ca_{S|R_1}^{\rm phys}$, act non-trivially on it. In particular, a generic observable in $\ca_{S|R_1}^{\rm phys}$ will fail to commute with the charges generating $R_1$ reorientations, which are elements of $\ca_{\rm phys}$. By contrast, as noted above, the frame change map $\cv_{R_1 \to R_2}^{g_1 \to g_2}$ produces $R_1$-frame-reorientation-invariant observables. That is, \emph{all} observables in $\ca_{S|R_2}^{\rm phys}$ commute with the charges of the $R_1$ reorientations. Hence, $\ca_{S|R_1}^{\rm phys}$ and $\ca_{S|R_2}^{\rm phys}$ are \emph{distinct} subalgebras of $\ca_{\rm phys}$ and can overlap at most in a subset that is invariant under reorientations of \emph{both} $R_1$ and $R_2$.

For the regular representation, one can easily check that $[F_{f_{R_2},R_1}(g_1),F_{f_S,R_1}(g_1')]=0$ for all $f_{R_2}\in\ca_{R_2}$ and $f_S\in\ca_S$. Hence, since the relational observables leave $\ch_{\rm phys}$ invariant, also their restrictions to the physical Hilbert space will commute
and so $\ca_{R_2|R_1}^{\rm phys}$ and $\ca_{S|R_1}^{\rm phys}$ constitute commuting subalgebras of $\ca_{\rm phys}$. Since they also generate $\ca_{\rm phys}$, one can infer by adapting arguments from \cite{Hoehn:2021wet}, that they induce a gauge-invariant tensor factorization in $\ca_{\rm phys}$ and thereby also $\ch_{\rm phys}$. The analogous state of affairs holds if one swaps the frames $R_1$ and $R_2$. Since $\ca_{S|R_1}^{\rm phys}$ and $\ca_{S|R_2}^{\rm phys}$ are distinct subalgebras, this also means that the relational observables relative to the two frames induce two \emph{distinct} tensor factorizations of $\ca_{\rm phys}$ and $\ch_{\rm phys}$. In this sense, the notion of gauge-invariant subsystems depends on the choice of quantum reference frame. This explains why quantum correlations and superpositions are contingent on the quantum frame \cite{giacominiQuantumMechanicsCovariance2019,Vanrietvelde:2018pgb,castro-ruizTimeReferenceFrames2019,Hoehn:2019owq,hamette2020quantum,Hoehn:2021wet}. 

As shown in \cite{Hoehn:2021wet} for Abelian groups, for non-regular representations the relational observables of, say, $R_2$ and $S$ relative to $R_1$ generally do \emph{not} generate commuting subalgebras of $\ca_{\rm phys}$ and so will \emph{not} induce gauge-invariant tensor factorizations on the perspective-neutral Hilbert space and algebra. The kinematical subsystem partitioning thus does not in general survive in a gauge-invariant manner on $\ch_{\rm phys},\ca_{\rm phys}$. We will see an example of this in Sec.~\ref{sec: U(1) example} below. This warrants a rethinking of the notion of subsystems and correlations in those more general cases.

In conclusion, the frames $R_1$ and $R_2$ refer to \emph{different} gauge-invariant degrees of freedom when they describe the kinematical system $S$. At the gauge-invariant level, $S$ is defined in a relational manner and there are multiple distinct ways in which to define a subsystem relationally. This is quantum subsystem relativity.\footnote{In fact, the same applies to the classical theory when defining subsystems relationally on phase space in terms of Poisson-commuting subalgebras of relational observables.}

\section{Examples}\label{sec_examples}

Here, we explicitly illustrate the procedure of generating a coherent state system and finding the physical Hilbert space $\ch_\phys$, and demonstrate the frame orientation dependence of the physical system Hilbert space $\ch_{S,g}^{\rm phys}$. We start by considering three simple quantum systems transforming under the symmetry group $\U(1)$. Since this is an Abelian group and thus condition \eqref{projgind} is satisfied, we find that the physical system subspace does not depend on the frame orientation. However, it constitutes an example for the quantum relativity of subsystems. We then introduce systems of coherent states for the spin $j=1$ representation of $\su2$ and give a first example of three spin $j=1$ kinematical systems. After identifying the physical state space, we find that not just the conditional states depend on the orientation of the frame, but also the physical system Hilbert space itself. However, since $\ch_\phys$ is only one-dimensional in this example, we present a final example of four spin $j=1$ kinematical systems, which displays the change of a three-dimensional physical system Hilbert space inside a 27-dimensional kinematical system Hilbert space as we change the frame orientation. More detailed explanation and proofs can be found in Appendix \ref{sec: app proofs of examples}.

\subsection{Example: U(1) and the quantum relativity of subsystems} \label{sec: U(1) example}
Let us start with a simple example of the compact Abelian group $\U(1)$. We introduce a constraint operator $\hat{C}$, requiring that physical states $\ket{\psi_\phys}$ are annihilated by it.
Consider a system consisting of two identical qubits $\A$ and $\B$, and a qutrit $\C$, subject to
\begin{equation}
      \hat{C}=\frac{\hat{\sigma}_\A^{(z)}}{2} + \frac{\hat{\sigma}_\B^{(z)}}{2} + \hat{\Sigma}_\C = \frac{1}{2}\left( \hat{\sigma}_\A^{(z)} + \hat{\sigma}_\B^{(z)} + \hat{Z}_\C \right) \ ,
\end{equation}
where $\hat{\sigma}_j^{(z)}=\ket{1}\!\bra{1}_j-\ket{-1}\!\bra{-1}_j$ is the Pauli $z$ matrix corresponding to system $j$ and $\hat{Z}_\C=2\hat{\Sigma}_\C=2\cdot\ket{2}\!\bra{2}_\C+0\cdot\ket{0}\!\bra{0}_\C-2\ket{-2}\!\bra{-2}_\C$. Here, we use the notation where eigenstates of an operator are labelled by their eigenvalue, i.e.\ $\hat{C}_i\ket{c_i}=c_i\ket{c_i}$, in contrast to using the computational basis. As a consequence, the kinematical state space of the qubits $\ch_\A$ and $\ch_\B$, respectively, is spanned by the orthonormal basis states $\{ \ket{1}, \ket{-1} \}$, while $\{ \ket{2}, \ket{0}, \ket{-2} \}$ forms an orthonormal basis for the three-level system $\ch_\C$. Hence, given two two-level and one three-level system, $\ch_{\rm kin}=\ch_\A \otimes \ch_\B \otimes \ch_\C$ and $\dim \ch_{\rm kin}=12$. The systems carry the unitary representations of $\U(1)$ generated by the three observables above.

In order to construct the system of coherent states and to express the physical states in terms of the latter, let us start from the seed states $\ket{e}_j \equiv \ket{+}_j = \frac{1}{\sqrt{2}}(\ket{1}_j+\ket{-1}_j)$ for $j=\A,\B$ and $\ket{e}_\C \equiv\ket{+}_\C=\frac{1}{\sqrt{3}}(\ket{2}_\C+\ket{0}_\C+\ket{-2}_\C)$. Then, we let the unitary representations of $\U(1)$ carried by the different systems act on these seed states. They define the coherent states:
\begin{equation}
    \ket{\phi(\theta)}_j=U_j(\theta)\ket{+}_j = e^{i\hat{\sigma}_j^{(z)}\theta}\ket{+}_j = \frac{1}{\sqrt{2}}\left(e^{i\theta}\ket{1}_j+e^{-i\theta}\ket{-1}_j \right) \label{u(1)cohqubitstate}
\end{equation}
for $j=\A,\B$. Similarly, 
\begin{equation}
    \ket{\phi(\theta)}_\C=U_\C(\theta)\ket{+}_\C=e^{i\hat{Z}\theta}\ket{+}_\C = \frac{1}{\sqrt{3}}\left(e^{i2\theta}\ket{2}_\C+\ket{0}_\C+e^{-i2\theta}\ket{-2}_\C \right) \ .
\end{equation}
One can easily check that $\ket{\phi(\theta=0)}_k=\ket{e}_k$ for $k=\A,\B,\C$. These states are chosen such that they yield resolutions of the identity on the respective subsystems, as in Eq.~\eqref{resid}. Note that the representations $U_k$ are irreducible and that the set of coherent states $\{ U_k(\theta),\ket{+}_k\}$ span $\mathbb{C}^2$ and $\mathbb{C}^3$ respectively.

The physical states $\ket{\psi_{\rm phys}} \in \ch_{\rm phys}$ need to satisfy the constraint condition $\hat{C}\ket{\psi_{\rm phys}} =0$. Identifying the subspace of $\ch_{\rm kin}$ with eigenvalue $0$, we find that $\ch_{\rm phys}$ is spanned by 
\begin{equation}
    \{ \ket{1,1,-2}_{\A\B\C}, \ket{1,-1,0}_{\A\B\C}, \ket{-1,1,0}_{\A\B\C}, \ket{-1,-1,2}_{\A\B\C}\} \ .
\end{equation}
Note that an arbitrary physical state $\ket{\psi_{\rm phys}}$ takes the form 
\begin{equation}
    \ket{\psi_{\rm phys}} =\sum_{c_\A+c_\B+c_\C=0} \psi(c_\A,c_\B,c_\C) \ket{c_\A,c_\B,c_\C}_{\A\B\C} \ . \label{u(1)physstate}
\end{equation}

Now, let us apply the Schr\"odinger reduction map onto the physical states. We will analyze both the case in which we condition on the two-level system $\A$ and the one in which we condition on the qutrit $\C$.

Conditioning on system $\A$, the two-level system figures as the reference system $R$. Then, $\calr_{\mathbf{S},\A}(\theta):\ch_{\rm phys}\rightarrow\ch_{\B\C,\theta}^{\rm phys}$ projects physical states onto the \emph{physical system Hilbert space} $\ch_{S,\theta}^{\rm phys}$ with $S=\B\C$, in which the reduced states on $\B$ and $\C$ live, given that the reference frame $\A$ has orientation $g=e^{i\theta}$.

Take an arbitrary physical state of the form \eqref{u(1)physstate} (or explicitly: $\ket{\psi_{\rm phys}} = \alpha \ket{1,1,-2}+\beta\ket{1,-1,0}+\gamma \ket{-1,1,0}+\delta \ket{-1,-1,2}$ with $|\alpha|^2+|\beta|^2+|\gamma|^2+|\delta|^2=1$) and condition on system $A$ being in orientation $\theta$. Thus, we apply the reduction map $\calr_{\mathbf{S},\A}(\theta)=\sqrt{2} \bra{\phi(\theta)}_\A \otimes \mathbf{1}_{\B\C}$. 

Making use of Eq.~\eqref{u(1)cohqubitstate}, we find 
\begin{align}
    \calr_{\mathbf{S},\A}(\theta) \ket{\psi_{\rm phys}} &= e^{-i\theta}\left( \alpha \ket{1,-2}_{\B\C} + \beta \ket{-1,0}_{\B\C} \right) + e^{i\theta}\left( \gamma \ket{1,0}_{\B\C} + \delta \ket{-1,2}_{\B\C} \right) \label{U(1)condstate} \\ &= e^{-i\theta} \sum_{c_\B+c_\C=-1} \psi(c_\B,c_\C) \ket{c_\B,c_\C}_{\B\C} + e^{i\theta} \sum_{c'_\B+c'_\C=1} \psi(c'_\B,c'_\C) \ket{c'_\B,c'_\C}_{\B\C} \ .
\end{align}
We observe that the conditional state of $\B$ and $\C$ depends on the orientation $\theta$ of system $\A$. However, the physical system subspace $\ch_{\B\C,\theta}^{\rm phys}$ does not depend on $\theta$. In fact, using Eq.~\eqref{Sproj}, the projector onto this subspace is given by
\begin{equation}
    \Pi_\A^{\rm phys}(\theta) = \int d\theta' \  _\A\braket{\phi(\theta) | \phi(\theta' + \theta) }_\A U_{\B\C}(\theta') =\int d\theta' \cos(\theta') U_{\B\C}(\theta') \ .\label{U1physproj}
\end{equation}
We can see that the physical system subspace $\ch_{\B\C,\theta}^{\rm phys}$ does not depend on $\theta$ by checking condition \eqref{projgind}:
\begin{align}
    [U_{\B\C}(\theta), \Pi_\A^{\rm phys}(\theta)] = 0 \ \forall\, e^{i\theta} \in \U(1) \ ,
\end{align}
because $U_{\B\C}(\theta)$ and $U_{\B\C}(\theta')$ commute. In fact, this is clear already from the fact that $\U(1)$ is Abelian.

As we have seen above, the reduced physical Hilbert space $\ch_{\B\C}^{\rm phys}$ is spanned by states of the form \eqref{U(1)condstate} and is thus $4$-dimensional. The kinematical space of $\B$ and $\C$, a qubit and a qutrit, is $6$-dimensional. As discussed already in Ref.~\cite{Hoehn:2021wet}, it does not hold in general that the reduced physical Hilbert space can be factorized into the same subsystems that make up the tensor factorization of the kinematical Hilbert space. The present case is an example of this:   as can be seen from the expression on the right side of Eq.~\eqref{U1physproj}, the physical system projector $\Pi_{\A}^{\rm phys}$ does \emph{not} factorize across the \emph{kinematical} subsystems $\B,\C$. It is thus \emph{not} possible to factorize $\ch^{\rm phys}_{\B\C}=\Pi_\A^{\rm phys}\left(\ch_\B\otimes\ch_\C\right)$ into a product space of the form $\ch_{\B}^{\rm phys} \otimes \ch_{\C}^{\rm phys}$, where the factors correspond to $\B$ and $\C$ ``as seen'' from $\A$'s perspective.

Conditioning on system $\C$ instead, the map $\calr_{\mathbf{S},\C}(\theta):\ch_{\rm phys}\rightarrow\ch_{\A\B}^{\rm phys}$ projects physical states onto the physical system Hilbert space $\ch_{\A\B}^{\rm phys}$, given that the reference system $\C$ has orientation $\theta$. Applying $\calr_{\mathbf{S},\C}(\theta)=\sqrt{3} \bra{\phi(\theta)}_\C \otimes \mathbf{1}_{\A\B}$ to $\ket{\psi_{\phys}}$ from above, we find 
\begin{align}
    \calr_{\mathbf{S},\C}(\theta) \ket{\psi_{\rm phys}} &= e^{-2i\theta} \alpha \ket{1,1}_{\A\B} + \beta \ket{1,-1}_{\A\B}  +  \gamma \ket{-1,1}_{\A\B}+ e^{2i\theta} \delta \ket{-1,-1}_{\A\B} \ .
\end{align}
Thus, the reduced physical Hilbert space $\ch_{\A\B}^{\rm phys}$ is $4$-dimensional, just like the kinematical Hilbert space of the two qubits $\ch_{\A}\otimes \ch_{\B}$. It follows that $\ch_{\A\B}^{\rm phys}$ can be factorized into the same systems that constitute the tensor factors of the kinematical Hilbert space and $\ch_{\A}^{\rm phys} \simeq \ch_{\B}^{\rm phys}\simeq \Cl^2$ with basis states $\{\ket{1},\ket{-1}\}$. As such, this example displays the frame-dependent factorizability mentioned in Ref.~\cite{Hoehn:2021wet}.

\subsection{Systems of $j=1$ $\SU(2)$ coherent states}\label{sec:j_1_coherent_states}

In this section, we introduce a system of $\SU(2)$ coherent states which do not admit of a unitary action by multiplication from the right and thus do not admit of symmetries as frame reorientations. These systems of coherent states will allow us to construct explicit examples exhibiting frame orientation dependence of the physical system Hilbert space.

Let us consider the spin $j=1$ representation of $\su2$ generated by the following matrices: 

\begin{align}
    \rho_{j=1}(\sigma_X) = 
    \begin{pmatrix}
        0 & \sqrt{2} & 0 \\
        \sqrt{2} & 0 & \sqrt{2} \\
        0 & \sqrt{2} & 0 
    \end{pmatrix} , \ 
      \rho_{j=1}(\sigma_Y) = i
    \begin{pmatrix}
        0 & -\sqrt{2} & 0 \\
        \sqrt{2} & 0 & -\sqrt{2} \\
        0 & \sqrt{2} & 0 
    \end{pmatrix}, \ 
     \rho_{j=1}(\sigma_Z) = 
     \begin{pmatrix}
         2 & 0 & 0 \\
         0 & 0 & 0 \\
         0 & 0 & -2
     \end{pmatrix} .
\end{align}

We use as a basis for  $\ch_R=\Cl^3$ the eigenstates of $\rho_{j=1}(\sigma_Z)$ denoted $\ket{-2}$, $\ket{0}$ and $\ket{2}$.

As shown in Appendix~\ref{app:j_1_coherent_states}, this representation can be used to define an $\SU(2)$ system of coherent states in $\Cl^3$ as follows:
\begin{align}
    \ket{\phi(e)} &= \ket{-2} +  \ket 0 + \ket 2 \\
    \ket{\phi(g = e^{i (c_X \sigma_X + c_Y \sigma_Y + c_Z \sigma_Z)})} &= e^{i (c_X \rho_{j=1}(\sigma_X) + c_Y \rho_{j=1}(\sigma_Y) + c_Z \rho_{j=1}(\sigma_Z))} (\ket{-2} +  \ket 0 + \ket 2) \ .
\end{align}
We observe that the choice of $\ket{\phi(e)}$ is not arbitrary. Certain choices have non-trivial isotropy groups; an example of such a state is $\ket 2$, which is stabilized by the $\U(1)$ subgroup $e^{i  \rho_{j=1}(\sigma_Z)t}$. By contrast, $\ket{\phi(e)}$ above features a trivial isotropy group.

It follows from Lemma~\ref{LemLR} that this system of coherent states does not admit of a unitary action by right multiplication on the coherent states, as $\dim\ch_R=3$ and $\ch_R$ already furnishes an irrep of $\SU(2)$.

\subsubsection{An example of frame orientation dependence of physical subsystems: three kinematical spin $j=1$  systems} \label{sec: SU(2) example with 3 particles}

Let us consider three quantum systems $\Cl^3$ each carrying a $j=1$ irreducible representation of $\SU(2)$. Using standard decomposition rules we have $1 \otimes 1 \otimes 1 = 0 \oplus 1^{\oplus 3} \oplus 2^{\oplus 2} \oplus 3$, showing that the physical subspace is one-dimensional.

As shown in Appendix~\ref{app:three_spin_1}, the one-dimensional physical Hilbert space is spanned by:
\begin{align}
    \ket{\psi_\phys} = \frac{1}{\sqrt{6}} ( \ket{0,-2,2} - \ket{2,-2,0}) + (\ket{2,0,-2} - \ket{0,2,-2} ) + (\ket{-2,2,0}-  \ket{-2,0,2} ) \ .
\end{align}

We can condition on the state $\ket{\phi(e)}_\A = \frac{1}{\sqrt{3}} (\ket{2}_\A + \ket{0}_\A + \ket{-2}_\A)$ to obtain the reduced state:
\begin{align}
    \ket{\psi^{\rm phys}_{\B\C}(e)} = \frac{1}{\sqrt{18}} (\ket{-2,2} - \ket{-2,0} + \ket{0,-2} - \ket{2,-2}  + \ket{2,0}-  \ket{0,2})  \ .
\end{align}

We therefore have

\begin{align}
    \ch^{\phys}_{\B\C, e} := \spann( \ket{\psi^{\rm phys}_{\B\C}(e)}) \simeq \Cl^1 \ .
\end{align}

We can obtain $\ch^{\phys}_{\B\C, g}$ by the following map: $\bra{\phi(g)}_\A \otimes \I_{\B\C}: \ch_\phys \to \ch^{\phys}_{\B\C, g}$. This is equivalent to the image of $\ch^{\phys}_{\B\C, e}$ under $\rho_{j=1}^\B(g) \otimes \rho_{j=1}^\C(g)$. 

Since $\ch^{\phys}_{\B\C, e}$ is a subspace of $\ch_B \otimes \ch_\C$ we can determine the image of $\ch^{\phys}_{\B\C, e}$ under $\rho_{j=1}^\B(g) \otimes \rho_{j=1}^\C(g)$ by decomposing $\rho_{j=1}^\B(g) \otimes \rho_{j=1}^\C(g)$ into irreducible representations of $\SU(2)$. The support of $\ch^{\phys}_{\B\C, e}$ on these irreducible representations will tell us how it transforms under the action of $\rho_{j=1}^\B(g) \otimes \rho_{j=1}^\C(g)$. We observe that if and only if $\ch^{\phys}_{\B\C, e}$ has full support on a trivial representation of $\SU(2)$ then $\ch^{\phys}_{\B\C, g}$ is independent of $g$. If $\ch^{\phys}_{\B\C, e}$ has support on a non-trivial irrep, then it will not be preserved under the action of $\rho_{j=1}^\B(g) \otimes \rho_{j=1}^\C(g)$, and hence will not be equal to $\ch^{\phys}_{\B\C, g}$ (since here the reduced spaces are one-dimensional).

Using standard representation theory, we find that $\rho_{j=1}^\B(g) \otimes \rho_{j=1}^\C(g)$ decomposes into $0 \oplus 1 \oplus 2$. By explicitly finding the bases for these irreducible representations in terms of the initial tensor product bases we can find the support of $\ket{\psi^{\rm phys}_{\B\C}(e)}$. This is done in Appendix~\ref{app:three_spin_1} where the state $\ket{\psi^{\rm phys}_{\B\C}(e)}$ is shown to have support on the spin $j=1$ irreducible representation. 
From this, it follows that 
$ \ch_{\B\C,G} = \spann(\{\ch_{\B\C,g}|g \in \SU(2)\}) \cong \Cl^3\ ,$
while $\ch^{\phys}_{\B\C, e} \simeq \Cl^1 $. Conceptually, this means that not only the conditional state $\ket{\psi_{\B\C}^{\rm phys}(g)}$ depends on the frame orientation of the reference system $\A$, but the physical system Hilbert space itself. As mentioned previously, the physical system Hilbert space in which the reduced state lives, changes inside the kinematical system Hilbert space $\ch_\B\otimes\ch_\C$. This is not very surprising in the case of this example as the physical system Hilbert space is only one-dimensional. This is why a second, more involved example is given in the next section.

Moreover, note that we can re-write $\ket{\psi^{\rm phys}_{\B\C}(e)}$ in terms of the irreducible decomposition basis as:

\begin{align}
    \ket{\psi^{\rm phys}_{\B\C}(e)} = \ket{Z=0, j=1} - \ket{Z=2, j=1} - \ket{Z=-2, j=1} \ . 
\end{align}

Under the $\SU(2)$ group action this state generates a system of $\SU(2)$ coherent states in the $j=1$ irrep of $\ch_B \otimes \ch_C$  as shown in Appendix~\ref{app:three_spin_1}.

\subsubsection{An example of frame orientation dependence of physical subsystems: four kinematical spin $j=1$  systems} \label{sec: SU(2) example with 4 particles}

We want to give another example for frame orientation dependence of physical subsystems, for which the dependence of the physical system Hilbert space on the frame orientation is even more noticeable.
Let us now consider four quantum systems $\Cl^3$, each carrying a $j=1$ irreducible representation of $\SU(2)$. Again, we can use the standard $\SU(2)$ tensor product decomposition rules to get $1 \otimes 1 \otimes 1 \otimes 1 = 0^{\oplus 3} \oplus 1^{\oplus 6} \oplus 2^{\oplus 6} \oplus 3^{\oplus 3} \oplus 4$. Thus, the physical Hilbert space is three-dimensional and all physical states satisfy the following constraint:
\begin{align}
    \rho_{j=1}(K) \otimes \I^{\otimes 3} + \I \otimes \rho_{j=1}(K) \otimes \I^{\otimes 2} + \I^{\otimes 2} \otimes \rho_{j=1}(K) \otimes \I + \I^{\otimes 3} \otimes \rho_{j=1}(K)  \ket{\psi_\phys} = 0 , \ \forall K \in \su2 \ .
\end{align}
In order to find a basis for $\ch_{\rm phys}$, we need to identify the subspace which is invariant under all three elements of $\su2$. We find that $\ch_{\rm phys}$ is spanned by three orthogonal basis vectors (see Appendix \ref{app: Four spin j=1 systems} for details), such that 
\begin{align}
    \ket{\psi_\phys}=\alpha \ket{v_1}+\beta \ket{v_2}+\gamma \ket{v_3}\ ,\ \alpha, \beta, \gamma \in \Cl,\ |\alpha|^2+|\beta|^2+|\gamma|^2=1 \ .
\end{align}

Each system has coherent states $\{\ket{\phi(g)}|g \in \SU(2)\}$ with $\ket{\phi(e)} = \ket{-2} + \ket{0} + \ket{1}$. A conditional state of the system $\B\C\D$, conditioned on the reference system $\A$ being in state $\ket{\phi(e)}$, is given by
\begin{align}
    \ket{\psi^{\rm phys}_{\B\C\D}(e)} = \alpha \ket{v_1'} + \beta \ket{v_2'} + \gamma \ket{v_3'} \ .
\end{align}
A basis for $\ch^\phys_{\B\C\D,e}$ is given by $\ket{v_i'} = \bra{\phi(e)}_\A \ket{v_i}_{\A\B\C\D}$ for $i = 1,2,3$:

The physical system Hilbert space $\ch^\phys_{\B\C\D,e}$ is three-dimensional. We now show that $\ch^\phys_{\B\C\D,g}$ is $g$-dependent. In other words, we show that $\ch^\phys_{\B\C\D,e}$ is not closed under the action of $U_\B(g) \otimes U_\C(g) \otimes U_\D(g)$.

To see this, note that the action of $U_\B(g) \otimes U_\C(g) \otimes U_\D(g)$ decomposes into irreducible representations as $1 \otimes 1 \otimes 1 = 0 \oplus 1^{\oplus 3} \oplus 2^{\oplus 2} \oplus 3$. The general strategy is to find a basis for each of the irreps and to check which of the invariant subspaces the conditional state $\ket{\psi^{\rm phys}_{\B\C\D}(e)}$ has support in. The reason for this is that each of these irreps gives rise to an invariant subspace. Applying the twirl will then rotate the state inside this invariant subspaces, eventually spanning the entire irrep subspaces the state has support in. By checking which of the subspaces the conditional state has support in, we can find the dimension of $\ch^{\rm phys}_{\B\C\D,G}=\spann(\{\ch^{\rm phys}_{\B\C\D,g}|g \in \SU(2)\})$. We prove frame orientation dependence of the physical system Hilbert space by showing that it is different from $\dim \ch^{\rm phys}_{\B\C\D,g}=3$.

In Appendix~\ref{app: Four spin j=1 systems} we show that the conditional state $\ket{\psi^{\rm phys}_{\B\C\D}(e)}$ has support in the highest weight subspaces of the representations $ 1^{\oplus 3}$,  $2^{\oplus 2}$ and $3$ of the total representation $1 \otimes 1 \otimes 1 = 0 \oplus 1^{\oplus 3} \oplus 2^{\oplus 2} \oplus 3$. This implies that  $\dim \ch^{\rm phys}_{\B\C\D,G}=26$ and hence the physical system Hilbert space is frame orientation dependent. Moreover we observe that  it is only one dimension short of the full kinematical system Hilbert space  $\ch_\B \otimes \ch_\C \otimes \ch_\D$ (which is $3^3=27$ dimensional).

Let us see what this implies conceptually. For each choice of the frame orientation of the reference system $\A$, the resulting physical system Hilbert space $\ch^{\rm phys}_{\B\C\D,g}$ is three-dimensional. However, the physical state \emph{space} of the system actually depends on the frame orientation. So, for a different choice of frame orientation of the first particle, different conditional states of the system are possible (or \emph{not} possible). All in all, almost any of the $27$ basis states of the kinematical system Hilbert space is a possible basis state for the physical system state (in fact every state except for the one spanning the $j=0$ irrep). However, for each choice of reference frame orientation at a time, the state space is only three-dimensional.

This is different from the case of $\U(1)$: there, we naturally observe the dependence of the specific conditional state on the orientation of the reference frame. However, the \emph{possibilities} of (physical) states that the system can be found in, captured by the physical system Hilbert space, are always the same and \emph{independent} of the frame orientation.

\section{Perspective-neutral vs.\ purely perspective-dependent approaches}\label{sec:perspective_neutral_vs_dependent}

The changes of quantum reference frames introduced in \cite{giacominiQuantumMechanicsCovariance2019,giacominiRelativisticQuantumReference2019,hamette2020quantum,Ballesteros:2020lgl,Mikusch:2021kro} did not make use of a perspective-neutral structure (in particular, did not invoke a gauge symmetry principle), but rather mapped directly between what are here reduced descriptions, i.e.\ frame perspectives. Let us therefore refer to these approaches as purely perspective-dependent approaches. The original changes of reference frame of \cite{giacominiQuantumMechanicsCovariance2019} were derived in \cite{Vanrietvelde:2018pgb} within the perspective-neutral approach in the context of the regular representation of the translation group. Similarly, as discussed in Sec.~\ref{sec: QRF transformations as quantum coordinate changes}, the present work recovers the quantum reference frame transformations for general groups established in \cite{hamette2020quantum} within the perspective-neutral approach in the form of quantum coordinate changes; this encompasses the transformations of \cite{giacominiQuantumMechanicsCovariance2019,Vanrietvelde:2018pgb} as a special case.

Let us elucidate, why the purely perspective-dependent approaches \cite{giacominiQuantumMechanicsCovariance2019,giacominiRelativisticQuantumReference2019,hamette2020quantum,Ballesteros:2020lgl,Mikusch:2021kro} are nonetheless \emph{not} in general equivalent to the perspective-neutral approach \cite{Vanrietvelde:2018pgb,Vanrietvelde:2018dit,hoehnHowSwitchRelational2018,Hoehn:2018whn,Hoehn:2019owq,Hoehn:2020epv,Hoehn:2021wet,Hoehn:2021flk,castro-ruizTimeReferenceFrames2019,Giacomini:2021gei} significantly generalized here:\\~

\noindent\textbf{(1)} First of all, the quantum reference frame transformations of the purely perspective-dependent approach for general groups were shown in \cite{hamette2020quantum} to be unitary if and only if the unitary representation carried by the systems are the left and right regular representations acting on states in the Hilbert space $L^2(G)$. In other words, the change between two quantum frames in \cite{hamette2020quantum} is unitary only if both are \emph{ideal} frames, i.e.\ $G$ acts regularly on itself and the frame orientations $\{\ket{\psi((g)}_{R_i}\}$ are perfectly distinguishable.\footnote{Note that \emph{ideal} and \emph{non-ideal} frames were referred to in \cite{hamette2020quantum} as \emph{perfect} and \emph{imperfect} quantum reference frames.} This also applies to the case of spin quantum reference frames proposed in \cite{Mikusch:2021kro}, which effectively constitute an ideal frame (owing to $j\to\infty$). By contrast, in the present work, the quantum coordinate changes are unitary regardless of whether the frames are ideal, complete or even incomplete. Moreover,~\cite{hamette2020quantum} showed that adopting a purely perspective-dependent approach leads to multiple (inequivalent) ways of defining changes of quantum reference frame for non-trivial isotropy groups $H$, whereas the perspective-neutral approach used here allows one to \emph{derive} the unique change of reference frame. 

These results are, however, not in contradiction because the frame change maps are not applied to the same sets of states in the two types of approaches. In the purely perspective-dependent approach of \cite{hamette2020quantum}, the frame transformations are applied to \emph{kinematical}, i.e.\ non-gauge-invariant states of the form $\ket{\phi(e)}_{R_i}\otimes\ket{\psi_{R_jS}}$,\footnote{Note that this state will also not be contained in $\ch_{\rm kin}$ for non-compact $G$, in which case $\ket{\phi(e)}_R$ is a distribution. However, the point is that these states are not coherently group averaged and thus not contained in $\ch_{\rm phys}$.} where $\ket{\psi_{R_jS}}$ is an \emph{arbitrary} state in the \emph{kinematical} system (and second frame) Hilbert space $\ch_{R_j}\otimes\ch_S$; such states are also called \emph{alignable} \cite{Krumm:2020fws,Hoehn:2021flk}. By contrast, in the perspective-neutral approach, the quantum coordinate transformations are always applied to states $\ket{\psi_{R_jS}^{\rm phys}(g)}$ in the \emph{physical} system Hilbert space $\ch_{R_jS,g}^{\rm phys}$, which is generally a strict subspace of $\ch_{R_j}\otimes\ch_S$, as explained in this manuscript (and in \cite{Hoehn:2019owq,Hoehn:2020epv,Hoehn:2021wet} for Abelian groups);\footnote{In fact, for periodic clocks ($\U(1)$ temporal reference frames), it is even possible that the physical system Hilbert space $\ch_{R_jS}^{\rm phys}$ is \emph{not} a subspace of the kinematical one $\ch_{R_j}\otimes\ch_S$, in which case one has to invoke a rigged Hilbert space construction and physical system states are distributions \cite{periodic}. This is completely analogous to the discussion of RAQ in Sec.~\ref{sec_ph} where $\ch_{\rm phys}$ is not a subspace of $\ch_{\rm kin}$ for non-compact groups, except now pertaining to the system Hilbert spaces only.} it is the subspace consistent with the physical Hilbert space and hence with gauge-invariance. As observed in \cite{Hoehn:2021wet} and the example of Sec.~\ref{sec: U(1) example}, this subspace need not even factorize into the same subsystems constituting the kinematical complement of the frame. It is only in the special case of ideal reference frames that these spaces coincide, i.e.\ that $\ch_{R_j}\otimes\ch_S=\ch_{R_jS,g}^{\rm phys}$, for all $g\in G$. Indeed, as seen in Sec.~\ref{sec: QRF transformations as quantum coordinate changes}, in this special case both approaches yield the same unitary quantum reference frame transformations. Otherwise, the internal perspective Hilbert space one would \emph{assume} in the purely perspective-dependent approach is  different to the space \emph{obtained} in the perspective-neutral approach from the physical Hilbert space.

Our exposition also explains why transformations are not unitary outside the physical Hilbert spaces (extending the observations in \cite{Hoehn:2019owq,Hoehn:2020epv,Hoehn:2021wet} for Abelian groups): as shown in Sec.~\ref{sec_3}, most of the identities crucial to the perspective-neutral approach only hold \emph{weakly}, i.e.\ only on the perspective-neutral Hilbert space $\ch_{\rm phys}$ (which is isometrically isomorphic to the physical system Hilbert spaces $\ch_{R_jS,g}^{\rm phys}$). This is due to the oft-mentioned gauge symmetry induced redundancy in the description of physical states in terms of kinematical data; since one is only removing redundant kinematical  information when conditioning (gauge-fixing) the manifestly gauge-invariant perspective-neutral states $\ket{\psi_{\rm phys}}$ on the non-invariant frame orientation, it does not matter whether these frame orientation states are perfectly distinguishable or not, only that they produce a resolution of the identity~\eqref{resid} to ensure invertibility of the quantum coordinate maps $\calr_{\mathbf{S},R_i}(g_i)$. This is in general not possible with kinematical states which feature no such gauge symmetry induced redundancy. As we have also seen, regular representations are special cases where the relevant operator identities hold also strongly, i.e.\ even on the kinematical Hilbert space $\ch_{\rm kin}$; this is because of the perfect distinguishability of ideal frame orientation states. However, also in that case, the quantum coordinate maps are only invertible on the physical Hilbert space, but importantly their image is all of $\ch_{R_j}\otimes\ch_S$.

Therefore, the perspective-neutral and perspective-dependent approaches are not equivalent for non-ideal quantum reference frames. It is only the perspective-neutral approach which yields unitary and reversible changes of QRF for such systems.
\\~

\noindent\textbf{(2)} We note also that the counter-intuitive property of frame-orientation-dependence of the physical system Hilbert space shows that \emph{one cannot in general assume that there is a fixed physical system Hilbert space from the perspective of a reference system}, as one would in the perspective-dependent approaches. One must therefore not begin from a given perspective of a reference frame $R$ when defining the relative Hilbert spaces of the physical system $\ch_{S|R}$, since this Hilbert space is in general not fixed for different orientations of $R$. Rather, one must first proceed to the perspective-neutral Hilbert space $\ch_{\rm phys}$ (which is fixed) and make use of the tools described in this paper, distinguishing kinematical and physical subsystems, to find how the frame-dependent physical system Hilbert space $\ch^{\rm phys}_{S,g}$ lies in $\ch_S$.\\~

\noindent\textbf{(3)} The fact that the perspective-dependent approaches invoke kinematical system states, rather than only physical ones, also has conceptual implications. Recall that kinematical states/observables can be interpreted as the states/observables of a system that are distinguishable/accessible relative to an external (possibly fictitious) reference frame \cite{Krumm:2020fws,Hoehn:2021flk}. Physical states/observables, on the other hand, are those that are distinguishable/accessible for an internal frame without access to external structures; all external information has been group-averaged out coherently. The perspective-dependent approaches thus tacitly invoke some background frame information and are, in this sense, arguably not fully relational beyond the regular representations. By contrast, the pillar of the perspective-neutral approach is manifest gauge-invariance; it is thus fully relational by construction. The above mentioned non-unitarity of quantum frame transformations in the perspective-dependent approaches is only detectable with at least some access to external structures. Unitarity therefore requires one to go fully relational.

\section{Conclusion and outlook}\label{sec:conclusion}

The main achievements of this work can be split into two broad categories: 
\begin{itemize}
    \item[(i)] We have extended a number of results in the quantum reference frame literature, which were known for special cases (such as specific types of reference frame), to the general case of unimodular Lie groups (incl.\ finite ones). In particular, we have expanded the perspective-neutral approach to quantum frame covariance \cite{Hoehn:2017gst,Vanrietvelde:2018pgb,Vanrietvelde:2018dit,hoehnHowSwitchRelational2018,Hoehn:2018whn,Hoehn:2019owq,Hoehn:2020epv,Hoehn:2021wet,Krumm:2020fws,Hoehn:2021flk,periodic,castro-ruizTimeReferenceFrames2019,Giacomini:2021gei,yang2020switching} to encompass such symmetry groups.
    \item[(ii)] In doing so, we have uncovered previously unknown, and even counter-intuitive, properties of quantum reference frames. 
\end{itemize}

The only previous work on quantum frame covariance for general groups \cite{hamette2020quantum} used a purely perspective-dependent framework, not invoking gauge-invariance. It established reversible changes of quantum reference frame for ideal frames only and found ambiguous transformations for non-trivial isotropy groups, the latter of which were restricted to be normal subgroups. This is consistent with the unitary frame changes for effectively ideal spin frames proposed  in \cite{Mikusch:2021kro} and for an Abelian subgroup of the Lorentz group exhibited in \cite{giacominiRelativisticQuantumReference2019,streiter2021relativistic}, both formulated within a purely perspective-dependent framework. The perspective-neutral approach, on the other hand, was so far restricted to Abelian groups (incl.\ non-regular representations) 
\cite{Vanrietvelde:2018pgb,hoehnHowSwitchRelational2018,Hoehn:2018whn,Hoehn:2019owq,Hoehn:2020epv,Hoehn:2021wet,Krumm:2020fws,Hoehn:2021flk,periodic,castro-ruizTimeReferenceFrames2019,Giacomini:2021gei,yang2020switching}
or the Euclidean group \cite{Vanrietvelde:2018dit}.
As such, the  setting of this work, namely general unimodular groups acting on complete or incomplete frames (where the only restriction on the isotropy subgroup is that it is compact), is a substantial generalization of existing results and approaches. We list some of the results which we have  generalized here to general unimodular Lie groups:
\begin{itemize}
    \item We model quantum frame orientations as fairly general systems of coherent states, subject to the group acting transitively and the orientation states comprising a resolution of the identity. 
    This includes non-trivial  isotropy groups and gives rise to a covariant POVM. 
    \item We provide a construction of  relational observables and equip these with a transparent conditional probability interpretation, extending \cite{Hoehn:2019owq,Hoehn:2020epv,periodic,Chataignier:2019kof,Chataignier:2020fys}.
    \item We generalize the Page-Wootters formalism and quantum symmetry reduction procedure, establish their unitary equivalence and show that they yield a relational ``Schr\"odinger'' and ``Heisenberg picture'', respectively. The latter constitute the internal quantum frame perspectives and since they are also unitarily equivalent to the perspective-neutral formulation, we extend the previous equivalence of these formulations for quantum clocks (``trinity of relational quantum dynamics'') \cite{Hoehn:2019owq,Hoehn:2020epv,periodic} (see also \cite{Hoehn:2021flk}) to general groups. 
    \item We derive the gauge induced changes of quantum reference frame in the form of ``quantum coordinate transformations'' for complete and incomplete reference frames and show that they are always unitary.
    \item We extend the quantum relativity of subsystems \cite{Hoehn:2021wet} to ideal reference frames for general unimodular groups, using the newly introduced concept of physical symmetries.
\end{itemize}
Let us now turn our attention to the novel physical insights generated by this work. 
\begin{itemize}
\item We illustrated how the concept of special covariance in special relativity can be understood in terms of perspective-neutral structure, finding completely analogous structures and relations as in the quantum case. This underscores that the perspective-neutral approach to quantum frame covariance, as formulated in this article for quantum mechanics, constitutes a genuine quantum analog of special covariance. In particular, our construction encompasses the Lorentz group and the quantum reference frames could therefore be the quantum version of the tetrad vectors $e^\mu_A$, which are group-valued and furnish an ideal frame. That is, the regular representation of the Lorentz group for the subsystems would establish the quantum version of the discussion in Sec.~\ref{ssec_SR}.
    \item We imported the distinction between gauge transformations and physical symmetries from gauge theories \cite{Donnelly:2016auv,Geiller:2019bti,Carrozza:2021sbk} and identified the latter as frame reorientations, following the observations in the classical case in \cite{Carrozza:2021sbk}. These frame reorientations generate orbits of relational observables that stratify the algebra of physical observables.
    \item For compact groups, we fully classified the quantum frames admitting such symmetries.
    \item Whether a reference frame admits of a unitary action of the symmetry group (in addition to the gauge group) has important physical consequences. When there is no such action, we showed the existence of a novel physical phenomenon: the physical system Hilbert space is frame orientation dependent, i.e.\ ``rotates'' through the kinematical one as the frame changes orientation. Hence, in this case, not only is the notion of subsystem dependent on the quantum frame, but also on its orientation.
     \item If a reference frame carries a non-trivial isotropy group, it can only resolve properties of other subsystems that are invariant under this isotropy group too. This can be seen as the reference frame enforcing its isotropy properties onto the other systems even if they did not originally have these properties.
\item We established a novel, active type of quantum reference frame changes induced by the symmetries that map from the relational observables relative to one frame to those relative to another. As such, they map between the different stratifications of the physical observable algebra, linking orbits corresponding to different frames.   
    \item The results generated in this work show that the purely perspective-dependent and perspective-neutral approaches are in fact inequivalent, more specifically they are inequivalent  for non-ideal reference frames.

\end{itemize}

Our construction relied on a tensor product representation of the unimodular group so that there are no coupling terms between the frames and subsystems inside the generators. This restriction must be overcome to incorporate truly general situations. For example, in gravitational systems, the generators are the Hamiltonian and diffeomorphism constraints which, owing to the universality of gravity, typically involve couplings between the various degrees of freedom, such as matter and geometric variables. As this becomes quickly very challenging, there exist only preliminary explorations for quantum frame covariance in such scenarios, see e.g.\ \cite{Hohn:2011us} for a semiclassical and \cite{castro-ruizTimeReferenceFrames2019} for a perturbative treatment. Different internal times have also been considered in quantum cosmological models with interactions in \cite{Gielen:2020abd,Gielen:2021igw}, establishing an internal time dependence of singularity resolution; however, temporal frame changes have not been explored. Interactions and different clocks have also been discussed without frame changes in the Page-Wootters formalism in \cite{Smith:2017pwx,Smith:2019imm}.

A natural next step is to  extend the perspective-neutral approach to quantum gauge field theories, given that it is formulated in the language of gauge systems and now encompasses general gauge groups as well. The seemingly most promising avenue is in the context of edge modes \cite{Donnelly:2016auv,Donnelly:2011hn,Donnelly:2014fua,Donnelly:2014gva,Geiller:2019bti,Carrozza:2021sbk,Freidel:2020xyx,Wieland:2021vef,Wieland:2017cmf,Riello:2021lfl}, as they have recently been identified as dynamical reference frames in the sense of this article \cite{Carrozza:2021sbk} and only appear on co-dimension-one surfaces, rather than the full bulk. A proper extension to gravity, on the other hand, will require grappling with the infinite-dimensional diffeomorphism group.

\medskip

\emph{Note:} During the completion of this paper, we became aware of related work by E.\ Castro-Ruiz and O.\ Oreshkov \cite{esteban}, which appears on the arXiv simultaneously.

\section*{Acknowledgements}
We thank Alexander R.\ H.\ Smith for collaboration in the initial stages of this project. This work was supported in part by funding from Okinawa Institute of Science and Technology Graduate University. This research was also supported  by Perimeter
Institute for Theoretical Physics. 
Research at Perimeter Institute is supported in part by the Government of Canada through the Department of Innovation, Science and Economic Development and by the Province of Ontario through the Ministry of Colleges and Universities. PAH is grateful for support from the Foundational Questions Institute under grant number FQXi-RFP-1801A. This work was furthermore supported by the Austrian Science Fund (FWF) through BeyondC (F7103-N48), the John Templeton Foundation (ID\# 61466) as part of The Quantum Information Structure of Spacetime (QISS) Project (qiss.fr) and the European Commission via Testing the Large-Scale Limit of Quantum Mechanics (TEQ) (No.\ 766900) project.

\bibliography{genQRFs}
\bibliographystyle{utphys}

\appendix

\section{Proof of Example~\ref{Ex1Compact}}
\label{SecProofEx1}
As in the statement of Example~\ref{Ex1Compact}, label the inequivalent irreps of $G$ that appear in the decomposition of $U_R(g)$ on $R$ by $q$. Then, every normalized state $\ket{\phi(e)}$ can be decomposed as
\[
   \ket{\phi(e)}=\bigoplus_q e^{i\theta_q}\beta_q\ket{\phi(e)_q},
\]
where $\beta_q\geq 0$, $\theta_q\in\mathbb{R}$, and the $\ket{\phi(e)_q}$ are normalized. Furthermore, via Schmidt decomposition, there are non-negative coefficients $\lambda_i^{(q)}$ with $\sum_i \lambda_i^{(q)}=1$, and orthonormal systems $\{\ket{i}_{\mathcal{M}^{(q)}}\}\subset\mathcal{M}^{(q)}$ and $\{\ket{i}_{\mathcal{N}^{(q)}}\}\subset\mathcal{N}^{(q)}$ such that
\[
   \ket{\phi(e)_q}=\sum_{i=1}^{d_q} \sqrt{\lambda_i^{(q)}}\ket{i}_{\mathcal{M}^{(q)}}\otimes\ket{i}_{\mathcal{N}^{(q)}},
\]
where $d_q:=\min\{\dim\,\mathcal{M}^{(q)},\dim\,\mathcal{N}^{(q)}\}$. For the sake of this proof, let us normalize the Haar measure such that $\int_G dg=1$. Then, using Schur's lemma, we obtain
\begin{eqnarray*}
\int_G dg U_R(g)\ket{\phi(e)}\bra{\phi(e)} U_R(g)^\dagger&=&\bigoplus_q \beta_q^2 \int_G dg U_R^{(q)}(g)\otimes\mathbf{1}^{(q)}\ket{\phi(e)_q}\bra{\phi(e)_q} U_R^{(q)}(g)^\dagger\otimes\mathbf{1}^{(q)}\\
&=& \bigoplus_q \beta_q^2 \sum_{i,j=1}^{d_q} \sqrt{\lambda_i^{(q)}\lambda_j^{(q)}} \int_G dg U_R^{(q)}(g)\ket{i}\bra{j}_{\mathcal{M}^{(q)}} U_R^{(q)}(g)^\dagger\otimes \ket{i}\bra{j}_{\mathcal{N}^{(q)}}\\
&=& \bigoplus_q \beta_q^2 \frac{\mathbf{1}_{\mathcal{M}^{(q)}}}{\dim \mathcal{M}^{(q)}}\otimes \sum_{i=1}^{d_q}\lambda_i^{(q)} \ket{i}\bra{i}_{\mathcal{N}^{(q)}}.
\end{eqnarray*}
For the right-hand side to become a positive multiple of the identity, we need that $\beta_q>0$ and that all $\lambda_i^{(q)}$ for a fixed $q$ are equal, hence $(\lambda_i^{(q)})^{-1}=d_q$. Furthermore, the sum on the right-hand side must be proportional to the identity on $\mathcal{N}^{(q)}$, thus $d_q=\dim\,\mathcal{N}^{(q)}$. This proves that the inequality~\eqref{eqNecessaryId} is a necessary condition. And then, for the expression above to equal $n\cdot\mathbf{1}$ for some $n>0$, we need $\beta_q^2/(\dim\,\mathcal{M}^{(q)}\dim\,\mathcal{N}^{(q)})$ to be constant in $q$. Substituting into the definition of $\ket{\phi(e)}$, and absorbing the phases $e^{i\theta_q}$ into the definition of $\ket{i}_{\mathcal{M}}$ proves that~\eqref{eqNecessaryId2} is a necessary condition, too.

Conversely, direct calculations easily shows that the given conditions are sufficient for obtaining a resolution of the identity.

\section{Proof of Lemma~\ref{LemLR}}\label{app:LR_coherent}

In this section we prove Lemma~\ref{LemLR} via a sequence of lemmas.

Let us observe that systems of coherent states $\{\ket{\phi(g)},U_R,V_R\}$ are such that $\ch$ contains at least one vector which is invariant under $U_R(g)V_R(g)$ for all $g \in G$, namely $\ket{\phi(e)}$.

We first find which irreps admit of a vector which is invariant under $U_R(g)V_R(g)$ for all $g \in G$ and then show that this vector, which is unique up to a constant, does indeed generate a system of LR coherent states when acted on by the irreps.

Finally we extend these lemmas to the case where $V_R$ and $W_R$ are reducible, providing a full classification of all systems of LR coherent states for compact $G$.

\begin{lemma}
	The irreducible representations $\rho_1 \otimes \rho_2$ of $G \times G$ which admit of a vector $v \in \ch_1 \otimes \ch_2$ which is invariant under the action $\rho_1(g) \otimes \rho_2(g)$ of $G \subset G \times G$ are those where $\ch_2 \cong \bar \ch_1$ for all irreducible representations $\ch_1$ of $G$.
\end{lemma}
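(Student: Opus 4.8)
The plan is to translate the existence of a diagonally invariant vector in $\ch_1 \otimes \ch_2$ into a statement about $G$-equivariant maps, and then settle the latter by Schur's lemma. The only structural input needed is the standard identification of tensor products with spaces of homomorphisms, together with the fact that for a unitary irreducible representation $\rho$ on $\ch$ the contragredient representation on $\ch^*$ is unitarily equivalent to the complex conjugate representation $\bar\rho$ on $\bar\ch$; I would record this identification $\ch^* \cong \bar\ch$ at the outset, since it is what lets the conjugate representation enter the final condition.

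First I would invoke the canonical isomorphism $\ch_1 \otimes \ch_2 \cong \Hom(\bar\ch_2, \ch_1)$, exactly as in the sketch of Lemma~\ref{LemLR} in the main text (via $\ch_1 \otimes \ch_2 \cong \ch_1 \otimes \bar\ch_2^{\,*} \cong \Hom(\bar\ch_2,\ch_1)$). The key step, which I would carry out explicitly but briefly, is to track how the diagonal group action is transported: the action $\rho_1(g) \otimes \rho_2(g)$ on $\ch_1 \otimes \ch_2$ corresponds to the action $M \mapsto \rho_1(g)\, M\, \bar\rho_2(g)^{-1}$ on $\Hom(\bar\ch_2, \ch_1)$. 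Hence a vector $v \in \ch_1 \otimes \ch_2$ is fixed by $\rho_1(g)\otimes\rho_2(g)$ for all $g$ if and only if its image $M_v$ satisfies $\rho_1(g)\, M_v = M_v\, \bar\rho_2(g)$ for all $g\in G$, i.e. precisely when $M_v \in \Hom_G(\bar\ch_2, \ch_1)$ is an intertwiner.

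Second, with $\ch_1$ and $\bar\ch_2$ both irreducible, Schur's lemma gives that $\Hom_G(\bar\ch_2,\ch_1)$ is nonzero exactly when $\bar\ch_2 \cong \ch_1$, and that in this case it is one-dimensional. Taking complex conjugates of $\bar\ch_2 \cong \ch_1$ yields the equivalent condition $\ch_2 \cong \bar\ch_1$, which is the claim. I would also record the one-dimensionality at this point, since it is what is needed downstream in Lemma~\ref{LemLR} to pin down the seed state $\ket{\phi(e)}_i$ up to a scalar.

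The argument is essentially routine once the translation is set up, so I expect no serious obstacle; the one point requiring care is the bookkeeping of dualization and conjugation. In particular I must make sure the transported action carries $\bar\rho_2$ (and not $\rho_2$ or $\rho_2^*$) on the right, so that the intertwining condition relates $\bar\ch_2$ to $\ch_1$ rather than $\ch_2$ to $\ch_1$; getting this wrong would produce the spurious condition $\ch_2 \cong \ch_1$. I would therefore verify the transported action by a one-line check on simple tensors before invoking Schur's lemma.
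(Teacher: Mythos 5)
Your proposal is correct and follows essentially the same route as the paper's own proof: the identification $\ch_1\otimes\ch_2 \cong \ch_1\otimes\bar\ch_2^{\,*} \cong \Hom(\bar\ch_2,\ch_1)$, the transported diagonal action $M \mapsto \rho_1(g)\,M\,\bar\rho_2(g^{-1})$ turning invariant vectors into intertwiners, and Schur's lemma forcing $\bar\ch_2 \cong \ch_1$, i.e.\ $\ch_2 \cong \bar\ch_1$. Your explicit care about which conjugate/dual appears on the right, and your recording of the one-dimensionality of the intertwiner space, match what the paper does (the latter being used there to pin down the seed state in the subsequent lemma).
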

\begin{proof}
An isomorphism of representations is given by a vector space isomorphism $\phi$ which is $G$ equivariant.
	
	Observe that $ \ch_1 \otimes  \ch_2 \cong  \ch_1 \otimes \bar  \ch_2^* \cong \Hom(\bar \ch_2, \ch_1)$ as representations. Here $\ch^*$ is the dual representation, while $\bar \ch$ is the complex conjugate representation. The dual representation $\rho^*(g)$ is related to the representation $\rho(g)$ as $\rho^*(g) f = f \rho(g^{-1}$ for all $f \in \ch^*$. The complex conjugate representation is just obtained by taking the complex conjugate $\bar \rho(g)$ of the matrix $\rho(g)$, where we remember that the space $\bar \ch$ has the same elements as $\ch$, but the scalar product is defined as $\lambda \cdot v = \bar \lambda v$. The space $\Hom(\bar \ch_2, \ch_1)$ is the space of linear maps from $\bar \ch_2$ to $\ch_1$. Let us label the isomorphism $\varphi: \ch_1 \otimes  \ch_2 \mapsto  \Hom(\bar \ch_2, \ch_1)$.

	The representation $\Hom(\bar \ch_2,\ch_1)$ carries a representation $\rho_3(g)$ which acts as $(\rho_3(g) M) v_2 = \rho_1(g) (M \bar \rho_2(g^{-1}) v_2)$ with $v_2 \in \bar \ch_2$ and $M \in \Hom(\bar \ch_2,\ch_1)$. 
	Since $\varphi$ is equivariant we have that $\phi((\rho_1(g) \otimes \rho_2(g)) \cdot) = \rho_3(g)\phi(\cdot)$.

	We now show that there exists an element in $\Hom(\bar \ch_2,\ch_1)$ which is invariant under the action of $\rho_3(g)$. This is therefore an element of $\Hom_G(\bar \ch_2,\ch_1)$.

	Observe that the system of LR coherent states admits of a state $\ket{\phi(e)} \in \ch_1 \otimes \ch_2$ which is invariant under the action of $\rho_1(g) \otimes \rho_2(g)$. The image of this under $\varphi$ is given by $M_{\ket{\phi(e)}}$. Since $\varphi$ is equivariant  this implies that $\rho_3(g) M_{\ket{\phi(e)}} = M_{\ket{\phi(e)}}$ for all $g \in G$.
	
	Since we have assumed $\rho_1$ and $\rho_2$ (and therefore $\bar \rho_2$) to be irreducible we can use Schur's lemma to conclude that $\rho_1 \cong \bar \rho_2$. Schur's lemma states that for complex irreducible representation $\ch$ and $\ch'$ then the space $\dim \ \Hom_G(\ch,\ch') = 1$ when $\ch \cong \ch'$ and $\dim \ \dim \ \Hom_G(\ch,\ch') = 0$ otherwise.
\end{proof}

We have established that the only irreducible representations which admit of an invariant vector under $G$ are those of the form $\ch \otimes \bar \ch$. We now find an explicit form of this vector.

\begin{lemma}
	Given a Hilbert space $\ch \otimes \bar \ch$ carrying a representation $\rho(g) \otimes \bar \rho(h)$ of $G \times G$ then the vector $v(e) = \sum_i e_i \otimes e_i$, where $\{e_i\}_i$ an orthonormal basis for $V$ and $V^*$ is invariant under $\rho(g) \otimes \bar \rho(g)$
\end{lemma}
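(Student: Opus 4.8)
The statement to prove is elementary: given a Hilbert space $\ch \otimes \bar\ch$ carrying the representation $\rho(g) \otimes \bar\rho(h)$ of $G \times G$, the maximally entangled vector $v(e) = \sum_i e_i \otimes e_i$ is invariant under the diagonal action $\rho(g) \otimes \bar\rho(g)$. The plan is to verify this directly by computing the action of $\rho(g)\otimes\bar\rho(g)$ on $v(e)$ and using unitarity of $\rho(g)$ together with the defining property of the complex conjugate representation.

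First I would fix an orthonormal basis $\{e_i\}$ and write $\rho(g)$ in components, $\rho(g)\, e_i = \sum_k \rho_{ki}(g)\, e_k$, so that the complex conjugate representation acts as $\bar\rho(g)\, e_j = \sum_l \overline{\rho_{lj}(g)}\, e_l$. Then I would compute
\begin{align}
\left(\rho(g) \otimes \bar\rho(g)\right) v(e) &= \sum_i \left(\rho(g)\, e_i\right) \otimes \left(\bar\rho(g)\, e_i\right) \nn\\
&= \sum_{i,k,l} \rho_{ki}(g)\, \overline{\rho_{li}(g)}\; e_k \otimes e_l\,.
\end{align}
The key step is to recognize that the coefficient $\sum_i \rho_{ki}(g)\, \overline{\rho_{li}(g)} = \sum_i \rho_{ki}(g)\, \big(\rho^\dagger(g)\big)_{il} = \big(\rho(g)\rho^\dagger(g)\big)_{kl} = \delta_{kl}$, where I have used that $\rho(g)$ is unitary. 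Substituting $\delta_{kl}$ collapses the double sum back to $\sum_k e_k \otimes e_k = v(e)$, which establishes invariance.

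The proof is essentially a one-line computation, so there is no serious obstacle; the only point requiring care is the bookkeeping of indices and the correct identification of $\bar\rho(g)$ as having matrix entries $\overline{\rho_{lj}(g)}$ in the given basis (consistent with the convention introduced in the preceding lemma, where $\bar V$ has the same elements as $V$ but conjugated scalar multiplication). I would make sure to state explicitly that $\rho$ is unitary (guaranteed since $G$ is compact, or more generally by our standing assumption that $U_R, V_R$ are unitary representations), since this is precisely what converts the contraction into the Kronecker delta. A coordinate-free phrasing is also available: under the isomorphism $\ch \otimes \bar\ch \cong \Hom(\ch,\ch)$ used in the surrounding discussion, the vector $v(e)$ corresponds to the identity map $\I_\ch$, and the diagonal action $\rho(g)\otimes\bar\rho(g)$ corresponds to conjugation $M \mapsto \rho(g) M \rho(g)^{-1}$, under which $\I_\ch$ is manifestly fixed. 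I would present the index computation as the main argument and mention the conjugation picture as a conceptual remark, since the latter ties directly into the application of Schur's lemma in the proof of Lemma~\ref{LemLR}.
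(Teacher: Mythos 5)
Your proof is correct, and your conventions are consistent with the paper's: the computation $\sum_i \rho_{ki}(g)\,\overline{\rho_{li}(g)} = \big(\rho(g)\rho^\dagger(g)\big)_{kl} = \delta_{kl}$ is exactly where unitarity must enter, and it cleanly establishes the claimed invariance. However, you should know that your route is genuinely different from the paper's. The paper never verifies invariance of $v(e)$ by direct computation; it runs the argument in the opposite direction. Starting from the preceding lemma, it knows via Schur's lemma that the image of the invariant seed vector under the equivariant isomorphism $\ch \otimes \bar\ch \cong \Hom(\bar\ch^{*},\ch) \cong \Hom(\ch,\ch)$ must be proportional to the identity matrix; it then checks equivariance of that isomorphism explicitly (this is where unitarity enters for the paper, in the step $\bar\rho^{T}(g) = \rho(g^{-1})$), and finally pulls the identity back through the inverse isomorphism to obtain $c\sum_i e_i \otimes e_i$. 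In other words, what you relegate to a ``conceptual remark'' --- the conjugation picture $M \mapsto \rho(g)\,M\,\rho(g)^{-1}$ manifestly fixing $\I$ --- is essentially the paper's actual proof, while your main argument is the elementary verification the paper omits. The trade-off is worth noting: your direct computation proves the literal statement with minimal machinery and no reliance on the preceding lemma, whereas the paper's isomorphism argument buys more, namely uniqueness of the invariant vector up to scale (one-dimensionality of $\Hom_G(\ch,\ch)$ by Schur), which is the fact actually needed later in the classification of Lemma~\ref{LemLR} to pin down the form of the seed state. If you wanted your write-up to serve that purpose too, you would need to supplement the index computation with the Schur-type uniqueness argument.
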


\begin{proof}
Using the isomorphism $\phi: \ch_1 \otimes  \ch_2 \mapsto  \Hom(\bar \ch_2, \ch_1)$ we concluded that for irreducible representations the image $\phi(\ket{\phi(e)}) = M_{\ket{\phi(e)}}$ was proportional to the identity matrix. Hence it is of the form
\begin{align}
     M_{\ket{\phi(e)}} = c \sum_i  e_i \otimes e_i^*
\end{align}
with $\{e_i\}$ an orthonormal basis for $\ch_1$ and $\{e_i^*\}$ the dual basis for $\bar \ch_2^* \cong \ch_2$. 

The explicit form of the equivariant isomorphism $\varphi$ is $\varphi: e_i \otimes e_j \mapsto e_i \otimes e_j^*$. One can check equivariance explicitly:
\begin{align}
    \phi((\rho(g) \otimes \bar \rho(g)) e_i \otimes e_j) &= \phi(\rho(g)e_i \otimes \bar \rho(g)e_j)\nn \\
    &= \rho(g) e_i \otimes (\bar \rho(g) e_j)^* \nn\\
    &= \rho(g) e_i \otimes (e_j)^* \bar \rho^T(g)\nn\\
    &=  \rho(g) e_i \otimes e_j^*  \rho(g^{-1})\nn\\
    &= \rho(g) e_i \otimes \rho^*(g)e_j^*  
\end{align}
Applying the inverse of $\phi$ to this explicit decomposition of  $M_{\ket{\phi(e)}}$ gives:
\begin{align}
    \ket{\phi(e)} = c \sum_i  e_i \otimes e_i
\end{align}
with $\{e_i\}$ an orthonormal basis for $\ch_1$ and $\ch_2 \cong \bar \ch_1$.
\end{proof}

\begin{lemma}
    Given $\ch \otimes \bar \ch$ carrying an representation $\rho(g) \otimes \bar \rho(h)$ of $G \times G$ then the invariant vector $\ket{\phi(e)}$ generates a system of LR coherent states.  
\end{lemma}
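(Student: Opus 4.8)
The plan is to exhibit explicit left and right group actions and verify the two defining properties of Definition~\ref{def_LR_coherent} directly. Concretely, I would set $U_R(g):=\rho(g)\otimes\I_{\bar\ch}$ and $V_R(g):=\I_\ch\otimes\bar\rho(g)$, and define the coherent states as the orbit $\ket{\phi(g)}:=U_R(g)\ket{\phi(e)}$ of the invariant seed vector $\ket{\phi(e)}=c\sum_i e_i\otimes e_i$ established in the preceding lemma. With these definitions the left-system axioms are immediate: transitivity and the equivariance relation $\ket{\phi(g'g)}=U_R(g')\ket{\phi(g)}$ hold because the states are by construction a single $U_R$-orbit, and $U_R$ is a unitary representation because $\rho$ is.

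The crux of the argument is to show that $V_R$ is a unitary representation of $G$ that implements right translation on the coherent states, i.e.\ $V_R(g)\ket{\phi(g')}=\ket{\phi(g'g^{-1})}$. That $V_R$ is a unitary representation follows from the elementary facts that the complex conjugate $\bar\rho$ of a unitary representation is again a unitary representation ($\bar\rho(g)\bar\rho(h)=\overline{\rho(gh)}=\bar\rho(gh)$, and complex conjugation preserves unitarity). The right-translation property rests on the \emph{transpose trick} for the maximally entangled vector, $(\I\otimes A)\sum_i e_i\otimes e_i=(A^{T}\otimes\I)\sum_i e_i\otimes e_i$, valid for any operator $A$, where $A^{T}$ denotes the transpose in the basis $\{e_i\}$. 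Applying this with $A=\bar\rho(g)$ and invoking unitarity in the form $\bar\rho(g)^{T}=\rho(g)^\dagger=\rho(g^{-1})$ gives $V_R(g)\ket{\phi(e)}=(\I\otimes\bar\rho(g))\ket{\phi(e)}=(\rho(g^{-1})\otimes\I)\ket{\phi(e)}=\ket{\phi(g^{-1})}$. Since the two tensor factors commute, $V_R(g)\ket{\phi(g')}=(\rho(g')\otimes\I)(\I\otimes\bar\rho(g))\ket{\phi(e)}=(\rho(g'g^{-1})\otimes\I)\ket{\phi(e)}=\ket{\phi(g'g^{-1})}$, as required by Definition~\ref{def_LR_coherent}.

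Finally I would record the full double action of Eq.~\eqref{eqDoubleAction}: the same commutation of the two factors together with the transpose trick yields $U_R(g)V_R(k)\ket{\phi(h)}=(\rho(gh)\otimes\bar\rho(k))\ket{\phi(e)}=(\rho(ghk^{-1})\otimes\I)\ket{\phi(e)}=\ket{\phi(ghk^{-1})}$ for all $g,k,h\in G$, confirming that $\{U_R,V_R,\ket{\phi(g)}\}$ is a genuine LR system. The main obstacle is essentially bookkeeping rather than conceptual: one must keep the dual, transpose and complex-conjugate representations straight and invoke unitarity at exactly the right place so that the conjugate right action collapses to right translation. Everything else is a direct consequence of the maximally entangled structure of $\ket{\phi(e)}$.
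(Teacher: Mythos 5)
Your proof is correct and follows essentially the same route as the paper's: both define the coherent states as the $\rho(g)\otimes\I$ orbit of the maximally entangled seed and then use the transpose identity $\I\otimes\bar\rho(g)\ket{\phi(e)}=\rho^\dagger(g)\otimes\I\ket{\phi(e)}$ to convert the conjugate action on the second factor into right translation, yielding $U_R(g)V_R(k)\ket{\phi(h)}=\ket{\phi(ghk^{-1})}$. The only difference is that you spell out the unitarity of $\bar\rho$ and the intermediate step $\bar\rho(g)^{T}=\rho(g^{-1})$ explicitly, which the paper leaves implicit.
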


\begin{proof}
Let us generate the set of states $\ket{\phi(g)} = U(g) \otimes \mathbf{1} \ket{\phi(g)}$. We now show that $U(g') \otimes \bar U(\tilde g)$ acts on $\ket{\phi(g)}$ as the actions from the left and the right:
\begin{align}
    (\rho(g') \otimes \bar \rho(\tilde g))\ket{\phi(g)} &=   (\rho(g') \otimes\bar \rho(\tilde g)) \rho(g) \otimes \mathbf{1} \ket{\phi(e)} \nn \\
    &= \rho(g'g) \otimes \bar \rho(\tilde g)  \ket{\phi(e)} \nn \\
    &= \rho(g'g)\rho^\dagger(\tilde g) \otimes \mathbf{1} \ket{\phi(e)}  \nn \\
    & = \rho(g'g\tilde g^{-1})\ket{\phi(e)}  \nn \\
    & = \ket{\phi(g'g\tilde g^{-1})}
\end{align}
where we made use of the fact that $\mathbf{1} \otimes \bar \rho(g) \ket{\phi(e)} = \rho^\dagger(g) \otimes \mathbf{1}\ket{\phi(e)}$.
\end{proof}

We now extend these proofs to reducible representations.  

Given a system of coherent states with a reducible representation $W_R(g,g') = U_R(g) V_R(g')$, we note that such a representation decomposes into irreducible representations as
\begin{align}
    W_R(g,g') = \bigoplus_i W_k(g,g') \otimes \mathbf{1}_{\cn_i} = \bigoplus_i \rho_1^i(g) \otimes \rho_2^i(g') \otimes \mathbf{1}_{\cn_i}\ ,
\end{align}
where $ \mathbf{1}_{\cn_i} $ is the identity on the multiplicity factor $\cn_i$ of each irreducible representation $W_i$.

Moreover the seed reference state $\ket{\phi(e)}_R$ decomposes as:
\begin{align}
  \ket{\phi(e)}_R = \sum_i \alpha_i \ket{\phi(e)}_i  \ , 
\end{align}
where each $\ket{\phi(e)}_i$ is necessarily invariant under $W_i(g,g)$.

Vectors in $\ch_1^i \otimes \ch_2^i \otimes \ch_{\cn_i}$ invariant under  $W_i(g,g)$ only exist when $\ch_1^i \cong \bar \ch_2^i$ by the above lemma. 

This shows that the general form of the representation is 
\begin{align}\label{eq-LR-rep2}
    W_R(g,k) \simeq \bigoplus_{i \in \ci} \rho_i(g) \otimes \bar \rho_i(k) \otimes \mathbf{1}_{\cn_i} \ ,
\end{align}
However we now show there cannot be any repeated representations in $\ci$, namely that each $\cn_i$ is one dimensional.

Observe that when decomposing the representation $U_R(g)$ one has
\begin{align}
    U_R(g) = \bigoplus_i \rho_i(g) \otimes \mathbf{1}_{\bar \ch_i} \otimes \mathbf{1}_{\cn_i} \ ,
\end{align}
where $\bar \ch_i \otimes \cn_i$ is the multiplicity space of the irreducible representation $\ch_i$.

By Example~\ref{Ex1Compact} one has that the dimension of $\ch_i$ must be greater or equal to the dimension of $\bar \ch_i \otimes \cn_i$. Since $\dim \ \ch_i= \dim \ \bar \ch_i$ this implies that $\cn_i$ is trivial (one dimensional).

Let us show that there given a representation of the form given in Equation~\eqref{eq-LR-rep2} and a seed state of the form 
\begin{align}
    \ket{\phi(e)}_R = \sum_{i \in \ci} \alpha_i \ket{\phi(e)}^i  \ ,
\end{align}
the requirement 
\begin{align}
    \int_{g \in G} \ketbra{\phi(g)}{\phi(g)}_R dg =  n \mathbf{1}_R \ ,
\end{align}
constrains the possible coefficients.
For arbitrary coefficients one has (following the proof of Example~\ref{Ex1Compact}):
\begin{align}
    \int_{g \in G} \ketbra{\phi(g)}{\phi(g)}_R dg = 
    \bigoplus \frac{1}{\dim(\ch_i \otimes \bar \ch_i)}|\alpha_i|^2 \mathbf{1}_{(\ch_i \otimes \bar \ch_i)} 
\end{align}
Requiring this to be equal to $ n \mathbf{1}_R $ constrains the possible values $\{\alpha_i\}_{i \in \ci }$ proving the last part of the lemma.

\section{Spin $j =1$ $\SU(2)$ coherent states }\label{app:j_1_coherent_states}

Let us consider a quantum system $\Cl^3$ acted on by $\SU(2)$. We choose as generators of $\su2$ the following:

\begin{align}
    \rho_{j=1}(\sigma_X) = 
    \begin{pmatrix}
        0 & \sqrt{2} & 0 \\
        \sqrt{2} & 0 & \sqrt{2} \\
        0 & \sqrt{2} & 0 
    \end{pmatrix} , \ 
      \rho_{j=1}(\sigma_Y) = i
    \begin{pmatrix}
        0 & -\sqrt{2} & 0 \\
        \sqrt{2} & 0 & -\sqrt{2} \\
        0 & \sqrt{2} & 0 
    \end{pmatrix}, \ 
     \rho_{j=1}(\sigma_Z) = 
     \begin{pmatrix}
         2 & 0 & 0 \\
         0 & 0 & 0 \\
         0 & 0 & -2
     \end{pmatrix} .
\end{align}

Let us find a ray $\ket \psi \in \Cl^3$ which is not annihilated by any element $c_X \rho_{j=1}(\sigma_X) + c_Y \rho_{j=1}(\sigma_Y) + c_Z \rho_{j=1}(\sigma_Z)$ of $\su2$. This state $\ket \psi$ will therefore have a trivial stabilizer subgroup under the action of $\SU(2)$ generated by this representation of $\su2$.

Consider the action of an arbitrary element $c_X \rho_{j=1}(\sigma_X) + c_Y \rho_{j=1}(\sigma_Y) + c_Z \rho_{j=1}(\sigma_Z)$ on the state $ \ket{-2} +  \ket 0 + \ket 2$:

\begin{align}
    & \ket{-2} +  \ket 0 + \ket 2 \mapsto -2 c_Z  \ket{-2}  + 2 c_Z \ket 2 \nn \\
    + & c_X \sqrt 2 \ket 0 + c_X \sqrt{2} (\ket{-2} + \ket 2 ) + c_X \sqrt 2 \ket 0 \nn \\
    + & i c_Y \sqrt 2 \ket 0 + i c_Y \sqrt{2} (-\ket{2} + \ket{-2}) -i c_Y \sqrt 2 \ket 0 \nn \\
    = & \ket{2} (2 c_Z + \sqrt 2 c_X - i \sqrt 2 c_Y) + \ket 0 (c_X \sqrt{2} ) + \ket{-2} (- 2 c_Z + c_X \sqrt{2} + i c_Y \sqrt{2}) \ . \label{eq:zero_state}
\end{align}

The vector $\ket{-2} +  \ket 0 + \ket 2$ has a trivial isotropy subgroup under the representation $\rho_{j=1}$ of $\su2$ if and only if it is stabilized by only the identity. Observe that a vector $v$ is invariant under $e^{i Xt}$ if and only if $X v = 0$. Therefore $\ket{-2} +  \ket 0 + \ket 2$ has a trivial isotropy subgroup if and only if the only  element $c_X \rho_{j=1}(\sigma_X) + c_Y \rho_{j=1}(\sigma_Y) + c_Z \rho_{j=1}(\sigma_Z)$ of the Lie algebra which maps it to $0$ is the element where $c_X =c_Y =c_Z = 0$.

Setting this state of equation~\eqref{eq:zero_state} to $0$ gives $c_X =0$ from the middle term. The first term gives $ c_Z = i \frac{\sqrt 2}{2} c_Y$ and the last term gives $c_Z = i  \frac{\sqrt 2}{2} c_Y$. These two together imply that $c_Z = c_Y = 0$ for $c_Z,c_Y,c_X \in \Rl$. We note that when  $c_Z,c_Y,c_X \in \Cl$ (i.e. the algebra is $\sl2c$ instead of $\mathfrak{su}_2$) then the condition $c_Z = i  \frac{\sqrt 2}{2} c c_Y$ can be met, showing that the vector has a one parameter isotropy subgroup for $\sl2c$, but a trivial isotropy subgroup for $\su2$.

Let us check that the action of the group on $ \ket{-2} +  \ket 0 + \ket 2$ leaves the ray invariant, and not just the vector. This amounts to showing that there is no set of coefficients $c_Z,c_Y,c_X $ (where at least one is non-zero) such that $c_X \rho_{j=1}(\sigma_X) + c_Y \rho_{j=1}(\sigma_Y) + c_Z \rho_{j=1}(\sigma_Z) (\ket{-2} +  \ket 0 + \ket 2) = C (\ket{-2} +  \ket 0 + \ket 2)$. This is equivalent to the following equalities:

\begin{align}
    2 c_Z + \sqrt 2 c_X - i \sqrt 2 c_Y = c_X \sqrt{2} \ , \\
    c_X \sqrt{2} = - 2 c_Z + c_X \sqrt{2} + i c_Y \sqrt{2} \ .
\end{align}

The first equality gives $ 2 c_Z = - i \sqrt 2 c_Y$ which never holds for real non-zero coefficients. Hence this implies that $c_Z = c_Y = 0$. This in turn implies $c_X = 0$. Therefore there is no non-zero element $c_X \rho_{j=1}(\sigma_X) + c_Y \rho_{j=1}(\sigma_Y) + c_Z \rho_{j=1}(\sigma_Z)$ which leaves the ray $(\ket{-2} +  \ket 0 + \ket 2)$ invariant. 

The ray $ \ket{-2} +  \ket 0 + \ket 2$ therefore can be used as a seed state to generate the manifold $\SU(2)$ embedded in $\Cl^3$. 
We can generate a system of left coherent states:

\begin{align}
    \ket{\phi(e)} &= \ket{-2} +  \ket 0 + \ket 2 \\
    \ket{\phi(g = e^{i (c_X \sigma_X + c_Y \sigma_Y + c_Z \sigma_Z)})} &= e^{i (c_X \rho_{j=1}(\sigma_X) + c_Y \rho_{j=1}(\sigma_Y) + c_Z \rho_{j=1}(\sigma_Z))} \ket{-2} +  \ket 0 + \ket 2
\end{align}

\section{Proofs of examples in Section~\ref{sec_examples}}\label{sec: app proofs of examples}

\subsection{Representation theory of $\su2$}\label{app:rep_theory_su2}

The Lie Algebra $\su2$ of $\SU(2)$ is spanned by the Pauli matrices:
\begin{align}
    \sigma_Z = \begin{pmatrix}
        1 & 0 \\
        0 & -1
    \end{pmatrix}  , \ 
     \sigma_X = \begin{pmatrix}
        0 & 1 \\
        1 & 0
    \end{pmatrix}  , \ 
    \sigma_Y = \begin{pmatrix}
        0 & -i \\
        i & 0
    \end{pmatrix}  .
\end{align}

We note that  $\su2$ is a real vector space. The complexification of $\su2$ is isomorphic to $\sl2c$, the Lie Algebra of the special linear group on $\Cl^2$: $\su2 \otimes \Cl \simeq \sl2c$. A basis for  $\sl2c$ (a complex vector space) is given by:
\begin{align}
    Z = \begin{pmatrix}
        1 & 0 \\
        0 & -1
    \end{pmatrix}  , \ 
     X = \begin{pmatrix}
        0 & 1 \\
        0 & 0
    \end{pmatrix}  , \ 
    Y = \begin{pmatrix}
        0 & 0 \\
        1 & 0
    \end{pmatrix}  ,
\end{align}

with commutation relations
\begin{align}
     [Z,X] = 2 X, \ [Z, Y] = -2 Y,  \  [X, Y] = Z .
\end{align}

The fundamental representation of $\sl2c$ is given by the action of these matrices on $\Cl^2$. This is the spin $j= \frac{1}{2}$ representation of $\su2$. The space $\Cl^2$ decomposes into two eigenspaces of $Z$: 
\begin{align}
    \Cl^2 \simeq \ch_1 \oplus \ch_{-1} \ .
\end{align}

Let us denote the eigenbasis by $\ket{1}$ and $\ket{-1}$. In this basis $X$ acts as a raising operator, and $Y$ as a lowering operator: 
\begin{align}
    X \ket{-1} = \ket{1} ,\ X \ket{1} = 0\ ,
\end{align}
\begin{align}
    Y \ket{-1} = 0,\ Y \ket{1} = \ket{-1}\ .
\end{align}

Let us consider the spin $j=1$ representation of $\su2$ generated by the following matrices: 
\begin{align}
    \rho_{j=1}(\sigma_X) = 
    \begin{pmatrix}
        0 & \sqrt{2} & 0 \\
        \sqrt{2} & 0 & \sqrt{2} \\
        0 & \sqrt{2} & 0 
    \end{pmatrix} , \ 
      \rho_{j=1}(\sigma_Y) = i
    \begin{pmatrix}
        0 & -\sqrt{2} & 0 \\
        \sqrt{2} & 0 & -\sqrt{2} \\
        0 & \sqrt{2} & 0 
    \end{pmatrix}, \ 
     \rho_{j=1}(\sigma_Z) = 
     \begin{pmatrix}
         2 & 0 & 0 \\
         0 & 0 & 0 \\
         0 & 0 & -2
     \end{pmatrix} .
\end{align}

We can define a basis for a representation of $\sl2c$ as: 
\begin{align}
    \rho_{j=1}(X) = 
    \begin{pmatrix}
        0 & \sqrt{2} & 0 \\
        0& 0 & \sqrt{2} \\
        0 & 0 & 0 
    \end{pmatrix} , \ 
      \rho_{j=1}(Y) = i
    \begin{pmatrix}
        0 & 0 & 0 \\
        \sqrt{2} & 0 & 0 \\
        0 & \sqrt{2} & 0 
    \end{pmatrix}, \ 
     \rho_{j=1}(\sigma_Z) = 
     \begin{pmatrix}
         2 & 0 & 0 \\
         0 & 0 & 0 \\
         0 & 0 & -2
     \end{pmatrix} ,
\end{align}
where $X$ and $Y$ are the raising and lowering operators respectively.
The space  $\Cl^3$ decomposes into the following $Z$ eigenspaces: 
\begin{align}
    \Cl^3  \simeq \ch_{-2} \oplus \ch_0 \oplus \ch_2\ .
\end{align}

The action of the above operators on the $Z$ eigenstates $\ket{-2},\ket{0},\ket{2}$ can be computed directly.

\subsection{Proof for three spin $j=1$ systems}\label{app:three_spin_1}

Let us consider three quantum systems $\Cl^3$ each carrying a $j=1$ irreducible representation of $\SU(2)$.
The physical subspace is given by the following constraint:
\begin{align}
    \rho_{j=1}(K) \otimes \I \otimes \I + \I \otimes \rho_{j=1}(K)+ \otimes \I + \I \otimes \I \otimes \rho_{j=1}(K) \ket{\psi_\phys} = 0 , \ \forall K \in \sl2c \ .
\end{align}

Since $X,Y,Z$ form a basis we just need to find the subspace invariant under these three elements of $\sl2c$. 
Let us find the subspace which is invariant under the $Z$ operator: $ \rho_{j=1}(Z) \otimes \I \otimes \I + \I \otimes \rho_{j=1}(Z)+ \otimes \I + \I \otimes \I \otimes \rho_{j=1}(Z)$. This is spanned by:
\begin{align}
    \ket{0,0,0}, \ \ket{0,-2,2} , \ \ket{0,2,-2} , \ \ket{-2,0,2} , \ \ket{2,0,-2} , \ \ket{-2,2,0} ,  \ \ket{2,-2,0} .
\end{align}

We need to find a vector in this subspace which is invariant under $ \rho_{j=1}(\sigma_X) \otimes \I \otimes \I + \I \otimes \rho_{j=1}(\sigma_X)+ \otimes \I + \I \otimes \I \otimes \rho_{j=1}(\sigma_X)$ and $ \rho_{j=1}(\sigma_Y) \otimes \I \otimes \I + \I \otimes \rho_{j=1}(\sigma_Y)+ \otimes \I + \I \otimes \I \otimes \rho_{j=1}(\sigma_Y)$ also.  
One can easily check that the following state is in the kernel of the three generators of the $\mathfrak{su}_2$ Lie algebra (not taking into account normalization here and in the following):
\begin{align}
    \ket{\psi_\phys} = ( \ket{0,-2,2} - \ket{2,-2,0}) + (\ket{2,0,-2} - \ket{0,2,-2} ) + (\ket{-2,2,0}-  \ket{-2,0,2} ) \ .
\end{align}

\subsubsection{Frame dependence of reduced system subspaces}

We can condition $\ket{\psi_\phys}$ on the state $\ket{\phi(e)}_\A = \ket{2}_\A + \ket{0}_\A + \ket{-2}_\A$ to obtain the reduced state $\ket{\psi(e)}_{\B\C|\A}$:
\begin{align}
    \ket{\psi(e)}_{\B\C|\A} = \ket{-2,2} - \ket{-2,0} + \ket{0,-2} - \ket{2,-2}  + \ket{2,0}-  \ket{0,2}  \ .
\end{align}

Let us define 
\begin{align}
    \ch_{\B\C|\A,G} = \spann(\{\ch_{\B\C|\A,g}|g \in \SU(2)\})\ .
\end{align}

The action of $\rho_{j=1}^\B(g) \otimes \rho_{j=1}^\C(g)$ decomposes into the following irreducible subspaces of $\SU(2)$: $0 \oplus 1 \oplus 2$. By finding which subspaces $\ket{\psi}_{\B\C|\A}$ has support in we can then determine $\ch_{\B\C|\A,G}$.

Let us decompose  $0 \oplus 1 \oplus 2$ into subspaces of $\rho_{j=1}^\B(Z) \otimes \I_\C + \I_\B \otimes \rho_{j=1}^\C(Z)$:
\begin{align}
    \ch_{-4} \oplus \ch_{-2} \oplus \ch_0 \oplus \ch_{2} \oplus \ch_{4} \ ,
\end{align}
where $\dim(\ch_{-4}) = \dim(\ch_{4}) = 1$, $\dim(\ch_{-2}) = \dim(\ch_{2}) = 2$ and $\dim(\ch_0) = 3$. 

\begin{align}
    \ch_{-4} &= \spann(\ket{-2,-2}) , \\
    \ch_{-2} &= \spann(\ket{-2,0}, \ket{0,-2}) ,  \\
    \ch_{0} &= \spann(\ket{0,0}, \ket{2,-2}, \ket{-2,2}) ,  \\
    \ch_{2} &= \spann(\ket{2,0}, \ket{0,2}) ,  \\
    \ch_{4} &= \spann(\ket{2,2}) ,
\end{align}

The trivial irrep $j = 0$ is spanned by a vector in $\ch_0$ which is annihilated by $\rho_{j=1}^\B(X) \otimes \I_\C + \I_\B \otimes \rho_{j=1}^\C(X)$. Hence it is spanned by:
\begin{align}
    \ket{Z=0, j=0} = \ket{0,0} - \ket{2,-2} - \ket{-2,2} \ .
\end{align}

The highest weight vector of the $j = 1$ irrep is the vector in $\ch_2$ which is annihilated by $\rho_{j=1}^\B(X) \otimes \I_\C + \I_\B \otimes \rho_{j=1}^\C(X)$ which is:
\begin{align}
    \ket{Z=2, j=1} = \ket{0,2} - \ket{2,0} \ .
\end{align}

We can obtain the other basis vectors for this representation by applying $\rho_{j=1}^\B(Y) \otimes \I_\C + \I_\B \otimes \rho_{j=1}^\C(Y)$:
\begin{align}
    \ket{Z=0, j=1} = \ket{-2,2} + \ket{0,0} - \ket{0,0} - \ket{2,-2} = \ket{-2,2} - \ket{2,-2} \ ,
\end{align}
which is orthogonal to $\ket{Z=0, j=0}$ as expected. Applying  $\rho_{j=1}^\B(Y) \otimes \I_\C + \I_\B \otimes \rho_{j=1}^\C(Y)$ once more gives:
\begin{align}
    \ket{Z=-2, j=1} =  \ket{-2,0} - \ket{0,-2} 
\end{align}

We therefore find the basis for the $j = 2$ representation:
\begin{align}
    \ket{Z=4, j=2} &= \ket{2,2}  \\
    \ket{Z=2, j=2} &= \ket{0,2} + \ket{2,0} \\
    \ket{Z=0, j=2} &= 2\ket{0,0} + \ket{2,-2} + \ket{-2,2} \\
    \ket{Z = -2, j = 2} &=\ket{-2,0} + \ket{0,-2} \\
    \ket{Z=-4, j=2} &= \ket{-2,-2} 
\end{align}

The state $\ket{\psi(e)}_{\B\C|\A} = \ket{-2,2} - \ket{-2,0} + \ket{0,-2} - \ket{2,-2}  + \ket{2,0}-  \ket{0,2}$ has no support on the $j=0$ irrep, it has support on the $j=1$ irrep and no support on the $j = 2$ irrep. Therefore:
\begin{align}
    \ch_{\B\C|\A,G} \cong \Cl^3 \ .
\end{align}

Since 
\begin{align}
    \ch_{\B\C|\A,G} \not \cong \ch_{\B\C|\A,g}\ ,
\end{align}
we have shown that the physical system subspace is dependent of the frame orientation. This is not surprising since the physical system Hilbert space is one dimensional, and conditioning with different $\bra{\phi(g)}$ gives different states.

We can re-write $\ket{\psi(e)}_{\B\C|\A}$ in the irrep labelled basis as:

\begin{align}
    \ket{\psi(e)}_{\B\C|\A} = \ket{Z=0, j=1} - \ket{Z=2, j=1} - \ket{Z=-2, j=1} 
\end{align}

Observe that by relabelling $\ket{2} \mapsto -\ket 2$ and $\ket{-2} \mapsto -\ket{-2}$ in Equation~\eqref{eq:zero_state} it follows that $\ket{\psi(e)}_{\B\C|\A}$ has a trivial stabilized under the action of $\SU(2)$ and therefore generates a set of $\SU(2)$ coherent states. This in turn implies

\begin{align}
    \ch_{\B\C|\A,g} \neq \ch_{\B\C|\A,k}, \forall g \neq k \in \SU(2) \ .
\end{align}

\subsection{Proof for four spin $j=1$ systems} \label{app: Four spin j=1 systems}
We consider four quantum systems $\Cl^3$, each carrying a $j=1$ irreducible representation of $SU(2)$. We find that the physical Hilbert space is three-dimensional and that all physical states satisfy the constraint
\begin{align}
    [ \rho_{j=1}(K) \otimes \I^{\otimes 3} + \I \otimes \rho_{j=1}(K) \otimes \I^{\otimes 2} + \I^{\otimes 2} \otimes \rho_{j=1}(K) \otimes \I + \I^{\otimes 3} \otimes \rho_{j=1}(K) ] \ket{\psi_\phys} = 0 , \ \forall K \in \sl2c \ .
\end{align}

The following three vectors form an orthogonal basis of the physical Hilbert space (not taking normalization into account here):
\begin{align}
\ket{v_1}=&\ket{2,-2,2,-2}-\ket{2,-2,0,0}+\ket{2,-2,-2,2}-\ket{0,0,2,-2}+\ket{0,0,0,0}-\ket{0,0,-2,2} \nonumber\\ &+\ket{-2,2,2,-2}-\ket{-2,2,0,0}+\ket{-2,2,-2,2}\ , \\
\ket{v_2}=&\ket{2,0,0,-2}-\ket{2,0,-2,0}-\ket{2,-2,2,-2}+\ket{2,-2,-2,2}-\ket{0,2,0,-2}+\ket{0,2,-2,0} \nonumber\\
&+\ket{0,-2,2,0}-\ket{0,-2,0,2}+\ket{-2,2,2,-2}-\ket{-2,2,-2,2}-\ket{-2,0,2,0}+\ket{-2,0,0,2}\ , \\
\ket{v_3}=&6\ket{2,2,-2,-2}-3\ket{2,0,0,-2}-3\ket{2,0,-2,0}+\ket{2,-2,2,-2}+2\ket{2,-2,0,0}+\ket{2,-2,-2,2}\nonumber\\ &+2\ket{0,0,2,-2}+4\ket{0,0,0,0}+2\ket{0,0,-2,2}-3\ket{0,-2,2,0} -3\ket{0,2,0,-2}-3\ket{0,2,-2,0} \nonumber\\ &-3\ket{0,-2,0,2}+\ket{-2,2,2,-2}+2\ket{-2,2,0,0}+\ket{-2,2,-2,2} -3\ket{-2,0,2,0}-3\ket{-2,0,0,2}\nonumber\\
&+6\ket{-2,-2,2,2} \ ,
\end{align}
such that every pure physical state is of the form 
\begin{align}
    \ket{\psi_\phys}=\alpha \ket{v_1}+\beta \ket{v_2}+\gamma \ket{v_3}\ ,\ \alpha, \beta, \gamma \in \Cl,\ |\alpha|^2+|\beta|^2+|\gamma|^2=1 \ .
\end{align}

Each system has coherent states $\{\ket{\phi(g)}|g \in \SU(2)\}$ with $\ket{\phi(e)} = \ket{-2} + \ket{0} + \ket{1}$. By conditioning on the first system $\A$ being in state $\ket{\phi(e)}_\A= \ket{-2} + \ket{0} + \ket{1}$, we get the conditional state
\begin{align}
    \ket{\psi(e)}_{\B\C\D|\A} = \alpha \ket{v_1'} + \beta \ket{v_2'} + \gamma \ket{v_3'} , 
\end{align}
where $\ket{v'_i}=\bra{\phi(e)}_\A \ket{v_i}$.
The conditional state $\ket{\psi(e)}_{\B\C\D|\A}$ is acted on by $U_\B(g)\otimes U_\C(g) \otimes U_\D(g)$ which decomposes as $1 \otimes 1 \otimes 1 = 0 \oplus 1^{\oplus 3} \oplus 2^{\oplus 2} \oplus 3$. As mentioned in the main text, we find a basis for each of these irreps in order to find the span of $U_\B(g)\otimes U_\C(g) \otimes U_\D(g)\ket{\psi(e)}_{\B\C\D|\A}$ for different $\ket{\psi(e)}_{\B\C\D|\A}$. \\

Let us start with the $j = 0$ irrep. As the following state is annihilated by 
$\rho_{j=1}^\B(K) \otimes \I_\C  \otimes \I_\D + \I_\B \otimes \rho_{j=1}^\C(K)  \otimes \I_\D + \I_\B \otimes  \I_\C \otimes \rho_{j=1}^\D(K)$ for $K=X,Y,Z$:
\begin{align}
    \ket{Z=0,j=0}=-\ket{2,0,-2}+\ket{2,-2,0}+\ket{0,2,-2}-\ket{0,-2,2}-\ket{-2,2,0}+\ket{-2,0,2} \ ,
\end{align}
it forms the basis vector for the one-dimensional invariant subspace. The conditional state $\ket{\psi(e)}_{\B\C\D|\A}$ has zero overlap with this state.\\

Now let us turn to the three $j = 1$ irreps.
The highest weight $Z=2$ subspace of the $j=1$ irrep is three-dimensional and spanned by the following three orthonormal basis vectors: 
\begin{align}
    \ket{1,Z=2,j=1}&= \ket{2,2,-2}-\ket{2,0,0}+\ket{2,-2,2} , \\
    \ket{2,Z=2,j=1}&= \ket{2,2,-2}-\ket{2,-2,2}-\ket{0,2,0}+\ket{0,0,2} , \\
    \ket{3,Z=2,j=1}&= \ket{2,2,-2}+2\ket{2,0,0}+\ket{2,-2,2}-3\ket{0,2,0}-3\ket{0,0,2}+6\ket{-2,2,2} \ .
\end{align}
By applying the lowering operator, we generate the full basis consisting of $9$ states. A straightforward calculation shows that the conditional state $\ket{\psi(e)}_{\B\C\D|\A}$ has non-zero overlap with all of these states. In addition, one can check that the overlap of the conditional state with all three highest-weight vectors is different. This means that applying $U_\B(g)\otimes U_\C(g) \otimes U_\D(g)$ to the conditional state does indeed span the full $9$-dimensional space.\\

Now, we consider the two $j = 2$ irreps. The $Z = 4$ subspace of the $j = 2$ irreps is two dimensional, and spanned by the following orthogonal basis vectors:
\begin{align}
\ket{1, Z=4, j=2}&=\ket{0,2,2} - \ket{2,2,0} \ , \\
\ket{2, Z=4, j=2}&=-\ket{0,2,2} - \ket{2,2,0} + 2\ket{2,0,2}\ .
\end{align}

By applying the lowering operator to each of these vectors, we can generate a full basis for the two $j = 2$ irreps which consists of $2\cdot(2j+1)=10$ vectors.
A straightforward calculation yields that the conditional state has non-vanishing, pairwise different overlap with three of these vectors. Applying the representation on it generates the full $10$-dimensional space.\\

Finally, we can find a basis of $2j+1=7$ vectors for the $j = 3$ irrep by taking the highest weight state $\ket{Z = 6, j = 3}=\ket{2,2,2}$ and applying the lowering operator $Y$ multiple times.
The conditional state has non-vanishing, pairwise different overlap with three of these vectors. Applying the representation on it generates the full $7$-dimensional space.\\

All in all, it turns out that the conditional state $\ket{\psi(e)}_{\B\C\D|\A}$ has support in the invariant subspace of the irreps $j=1$, $j=2$ and $j=3$ while it has no support in $j=0$. Now, let us see what this means in terms of the decomposition $1 \otimes 1 \otimes 1 = 0 \oplus 1^{\oplus 3} \oplus 2^{\oplus 2} \oplus 3$.

As there is only one copy of the three-dimensional irrep in the decomposition, it is easy to see that this gives rise to $2\cdot3+1=7$ dimensions of  $\ch_{\B\C\D|A,G}$ already. However, if there are several copies of the same irrep, the argument becomes sightly more involved. It is not immediately clear that every single vector contained in the basis contributes an additional dimension. Note that there are as many highest weight vectors as there are copies of the irrep in the decomposition. It is important to check whether the overlap that the conditional state has with the various highest weight vectors of the same irrep is different. Otherwise, if the overlap is the same for the highest weight vectors $u_1$ and $u_2$ of two copies of the same irrep, these could be rewritten in terms of $u_1+u_2$ and $u_1-u_2$. The first would still be a highest weight vector, and by applying $U_\B(g) \otimes U_\C(g) \otimes U_\D(g)$, one can still span the full invariant subspace. However, the conditional state would then have zero overlap with $u_1-u_2$. Then it follows that applying the representation will only cover the space of one copy of this irrep, giving rise to fewer additional dimensions of $\ch_{\B\C\D|A,G}$. 

Since the overlap with the three highest weight vectors of $j=1$ is pairwise different, just like the overlap with the two highest weight vectors of $j=2$, we get a contribution of $3\cdot (2\cdot1+1)=9$ additional dimensions due to $j=1$ and $2\cdot (2\cdot2+1)=10$ further ones due to $j=2$. Adding these to the $7$ dimensions we get from the $j=3$ irrep, we get $\dim \ch_{\B\C\D|A,G}=26$ in total, which is only one dimension short of the full kinematical system Hilbert space $\ch_\B \otimes \ch_\C \otimes \ch_\D$ (which is $3^3=27$ dimensional).

\section{Remarks on Group Averaging}\label{app_groupaverage}

A rigorous account of the integrals \eqref{proj} and \eqref{Gtwirl} presents certain mathematical difficulties which we address in brief in this appendix, focussing on the setting $G=\mathbb{R}$, referring to the main text and \cite{Giulini:1998kf} for discussion of the general case. We wish to provide some further intuition to the group averaging procedures, particularly the coherent group averaging and the associated map to, and construction of, the physical Hilbert space, highlighting some concrete aspects of the construction. The first task is to give meaning to the integral expressions. One possibility is to attempt to view both as Bochner integrals (i.e., with values in the Banach space of bounded operators on $\mathcal{H})$. However, this fails immediately for both \eqref{proj} and  \eqref{Gtwirl} if $G$ is not compact: as a function  $G \to B(\mathcal{H})$, $U$ is not Bochner integrable, since $f: G \to B(\mathcal{H})$ is Bochner integrable if and only if $\int ||f|| dg $ is finite, and $||U(g)|| =1$. Similarly, $||U(g)AU(g)^{\dagger}|| = ||A||$. As we shall see, even the milder
weak interpretation of the integrals, i.e., $\int \ip{\varphi}{U(g) \phi} dg$ do not converge in general, and even when they do, they do not necessarily give rise to a well defined Hilbert space operator. Instead, the setting of distributions can be used, which we now discuss for the case $G=R$.

\subsection{Coherent Group Averaging}

We consider the expression $\int_GU(g)dg$, where $U$ is a (strongly continuous) true (i.e., not projective) unitary representation in $\mathcal{H}$ of the locally compact unimodular second countable group $G$ and $dg$ represents Haar measure. If $G$ is compact, then we can define the (bounded) operator $\Pi$ through
\begin{equation}\label{eq:avapp}
    \ip{\varphi}{\Pi \phi} = \int_G \ip{\varphi}{U(g) \phi}dg,
\end{equation}
holding for all of $\varphi, \phi \in \mathcal{H}$.\footnote{Note that since $\mathcal{H}$ is complex we need actually only consider the case $\varphi = \phi$.} The function $f_{\varphi, \phi}:G \to \mathbb{C}$ defined by $f_{\varphi, \phi}(g) = \ip{\varphi}{U(g) \phi}$ is integrable since $|f_{\varphi, \phi}(g)| \leq ||\varphi||||\phi||$ and $\int_G dg$ is finite. For the same reason the sequilinear form $\Omega (\varphi,\phi)$ defined as the right hand side of \eqref{eq:avapp} is bounded and therefore defines a unique bounded operator (i.e., $\Pi$) in $\mathcal{H}$. If $G$ is not compact but $X \in \mathcal{B}(G)$ is a compact (Borel) set, then since Haar measure is a Radon measure the expression $\int_X U(g)dg$ can be understood in exact analogy to above and the integral converges.
In both cases, the resulting operator is then a well-defined bounded self-adjoint linear mapping  $\mathcal{H} \to \mathcal{H}$, and, for compact $G$, $\Pi$ in~(\ref{eq:avapp}) coincides with $\Pi_{\rm phys}$ in the main text. Moreover, as can be easily confirmed, any $\varphi$ in the image of $\Pi$ is invariant under all the $U(g)$. The idempotence is most easily confirmed by utilising the invariance of the vectors in the range of $\Pi$; given that $\Pi(\varphi)$ is invariant it is immediate that $\Pi (\Pi (\varphi)) = \Pi(\varphi)$. The physical Hilbert space can then be then identified with the image of $\Pi$, which is  closed by the boundedness and idempotence of $\Pi$, with the inherited inner product. 

In case $G$ is not compact, various difficulties arise. One is the identification, if one exists, of a set of vectors in $\mathcal{H}\times \mathcal{H}$ for which \eqref{eq:avapp} converges; another is finding a (dense) domain $\mathcal{D}(\Pi) \subset \mathcal{H}$ on which $\Pi$ is defined as a possibly unbounded operator. Both possibilities depend on $U$ --- for instance,  if $U$ is such that $U(g) = \mathbf{1}$ for all $g$, then the integral converges only if $\langle\varphi|\phi\rangle=0$, and no non-trivial subspace of vectors $S$ can be found such the integral converges for all ${\varphi},{\psi}\in S$. Perhaps surprisingly, even if $U$ is non-trivial, in general there is no dense domain on which $\Pi$ can be defined as a Hilbert space operator, as will be seen below from the simple analysis of the regular representation of the translation group on $L^2(\mathbb{R})$, and therefore there is no hope that the methods of unbounded operators can be used there or in the general case. Instead, as we will see, the theory of distributions provides the flexibility and generality required to both make sense of $\Pi$ as a map whose codomain is larger than $\mathcal{H}$, and for the existence of invariant states serving to define the physical Hilbert space.

\subsection{$G=\mathbb{R}$, single particle}\label{ss:gr}

Much of the general idea for how to avoid the obstacles described above can already be seen in the setting $G=\mathbb{R}$, realised as translations in $L^2(\mathbb{R})$, i.e., $(U(y)\varphi)(x)= \varphi (x-y)$. If $\mathbb{R}$ is the (gauge) symmetry group with $U$ the corresponding representation, physical states should be invariant under all the $U(y)$. The following standard fact indicates that for such invariance to be possible, we need more general objects that those afforded by the Hilbert space framework.

\begin{proposition}\label{prop:four}
Let $U:\mathbb{R}\to B(L^2(\mathbb{R}))$ be a strongly continuous unitary representation of $\mathbb{R}$, defined by $(U(y)\varphi)(x)= \varphi (x-y)$. Then if $y \neq 0$ $U(y)$ has no eigenvectors in $L^2$.
\end{proposition}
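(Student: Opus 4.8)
The plan is to exploit the fact that an eigenvalue equation $U(y)\varphi = \lambda\varphi$ forces $|\varphi|$ to be periodic with period $|y|$, which is incompatible with $\varphi$ being a nonzero element of $L^2(\mathbb{R})$. The whole argument is elementary once one translates the eigenvalue relation into a statement about $|\varphi|^2$.

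First I would record that, since $U(y)$ is unitary and $\varphi\neq 0$, any eigenvalue must satisfy $|\lambda|=1$: from $U(y)\varphi=\lambda\varphi$ one gets $\|\varphi\|=\|U(y)\varphi\|=|\lambda|\,\|\varphi\|$. Writing the eigenvalue equation explicitly as $\varphi(x-y)=\lambda\varphi(x)$ for almost every $x$ and taking moduli then yields $|\varphi(x-y)|=|\varphi(x)|$ a.e., so that $|\varphi|^2$ is (almost everywhere) invariant under translation by $y$, hence under translation by $ny$ for every $n\in\mathbb{Z}$.

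Next I would integrate against this periodicity. Assuming without loss of generality $y>0$, I decompose $\mathbb{R}$ into the disjoint intervals $[ny,(n+1)y)$, $n\in\mathbb{Z}$, and use the invariance of $|\varphi|^2$ to obtain
\[
\int_{\mathbb{R}}|\varphi(x)|^2\,dx=\sum_{n\in\mathbb{Z}}\int_0^y|\varphi(x+ny)|^2\,dx=\sum_{n\in\mathbb{Z}}\int_0^y|\varphi(x)|^2\,dx .
\]
Setting $c:=\int_0^y|\varphi(x)|^2\,dx$, if $c>0$ the right-hand side equals $\sum_n c=+\infty$, contradicting $\varphi\in L^2$. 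Hence $c=0$, so $\varphi=0$ a.e.\ on $[0,y)$, and by the periodicity of $|\varphi|$ across all translates, $\varphi=0$ a.e.\ on $\mathbb{R}$, contradicting $\varphi\neq 0$.

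There is no deep obstacle here; the only points demanding care are measure-theoretic. One must phrase the eigenvalue equation and the resulting periodicity of $|\varphi|$ as almost-everywhere identities, and justify the interchange of sum and integral, which is legitimate by Tonelli's theorem since the integrand is nonnegative. As an alternative route I could pass to the Fourier transform, under which $U(y)$ becomes multiplication by $e^{-ipy}$; the eigenvalue equation then reads $(e^{-ipy}-\lambda)\hat\varphi(p)=0$ a.e., and since for $y\neq 0$ the zero set $\{p: e^{-ipy}=\lambda\}$ is discrete and hence Lebesgue-null, one concludes $\hat\varphi=0$ a.e.\ and therefore $\varphi=0$. Both arguments make the same structural point: translation by a nonzero amount has purely continuous spectrum, so invariant (or eigen-) vectors can only live in a larger distributional space, which is precisely the motivation for the rigged-Hilbert-space treatment discussed in the main text.
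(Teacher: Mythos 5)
Your proof is correct, but your primary argument takes a genuinely different route from the paper's. The paper proves the proposition entirely on the Fourier side: writing the eigenvalue equation as $\hat\varphi_y(p)=e^{ipy}\hat\varphi(p)=\alpha\hat\varphi(p)$, it observes that for $y\neq 0$ the set $\{p : e^{ipy}=\alpha\}$ is countable, hence Lebesgue-null, so $\hat\varphi=0$ a.e.\ and therefore $\varphi=0$ --- this is precisely the ``alternative route'' you sketch in your closing paragraph. Your main argument instead stays in position space: unitarity forces $|\lambda|=1$, so $|\varphi|^2$ is a.e.\ invariant under translation by $y$, and tiling $\mathbb{R}$ by the intervals $[ny,(n+1)y)$ shows that $\|\varphi\|_2^2$ is either $0$ or $+\infty$, a contradiction either way. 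Both arguments are sound, and the measure-theoretic care you flag (phrasing the periodicity as an a.e.\ identity, invoking Tonelli for the sum--integral interchange, taking a countable union of null sets over $n\in\mathbb{Z}$) is exactly what is needed. As for what each approach buys: your tiling argument is more elementary --- it uses no Fourier theory and isolates the real obstruction, namely that a nonzero function whose modulus is periodic cannot be square-integrable over the line. The paper's Fourier argument is shorter and is the natural choice in context, since the surrounding text immediately re-reads the same computation as the statement that the plane waves $e^{ipx}$ are \emph{generalized} eigenvectors living in $\mathcal{S}'$; the Fourier picture is thus needed anyway for the rigged-Hilbert-space construction that the proposition is meant to motivate.
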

\begin{proof}
Suppose $\varphi \in L^2$ and satisfies $(U(y)\varphi)(x)= \varphi (x-y) \equiv \varphi_y(x) = \alpha \varphi (x)$ for some $\alpha \in \mathbb{C}$. Writing $\hat{\varphi}$ for the Fourier transform of $\varphi$, we find that $\hat{\varphi}_y(p)=e^{ipy}\hat{\varphi}(p) = \alpha \hat{\varphi}(p)$. Since $\alpha \in \mathbb{C}$ is fixed, $\hat{\varphi}$ is only non-zero at countably many points, and since it must be $L^2$, we must have $\varphi$ vanishing.
\end{proof}
However, as is well known to all physicists, there are functions (necessarily outside $L^2$) which do satisfy the eigenvalue equation, e.g., the functions $e^{ipx}$. Viewed as continuous linear functionals on the Schwartz space $\mathcal{S}$ (of smooth functions of rapid decrease), identified with a dense subspace of $L^2$, the functions
$e^{ipx}$ can be seen as {\it generalized eigenvectors} of the translation operator $U(y)$.
These are defined to be any linear functional $\Lambda$ on $\mathcal{S}$ for which
$\Lambda (U(y) \varphi) = \lambda \Lambda (\varphi)$ for all $\varphi \in \mathcal{S}$; $\lambda$ is the generalized eigenvalue. The functional $\Lambda$ corresponding to $e^{ipx}$ is $\Lambda(\varphi) = \int e^{ipx} \varphi (x) dx = \hat{\varphi}(p)$, and therefore $\Lambda(U_y\varphi) = \int e^{ipx} \varphi (x-y) dx = e^{ipy}\hat{\varphi}(p) = e^{ipy} \Lambda (\varphi)$. Note also that the Fourier transform is the arena for realising the regular representation as a direct integral decomposition of irreducible "components" $e^{ipx}$, which as we have noted do not lie in $L^2$ but play a major role in the harmonic analysis of $\mathbb{R}$ as a locally compact group.

These observations point directly to the theory of distributions as the natural setting
for solving the problem of finding physical "states" under the constraints of symmetry.  We therefore consider both Schwartz space $\mathcal{S}$ and its topological dual---the space of tempered distributions---which is large enough to contain both the invariant "states" and the delta "functions". As discussed in the main text, a concrete approach to constructing the physical Hilbert space of invariant states is through the coherent group averaging construction, i.e., the integral \eqref{eq:avapp}.

We must first address the issues of the convergence of the integral and the domain of $U$. It is clear from considering 
\begin{equation}\label{eq:gar}
    \int \ip{\varphi}{e^{ixP }\phi} dx = \int \left(\int \overline{\varphi (y)} \phi(y-x)dy \right)dx = \left(\int \overline{\varphi(x)}dx\right)\left(\int \phi(x)dx\right),
\end{equation} 
that the group average of $U(x)$, interpreted weakly, does not converge everywhere, e.g., for 
$\varphi(x)=\frac{1}{1+|x|} \in L^2(\mathbb{R}) \setminus L^1 (\mathbb{R})$. On the other hand, the right hand side shows that the integral does exist for any  $\varphi,\phi \in L^2(\mathbb{R})\cap L^1(\mathbb{R})$, which contains $\mathcal{S}$.
However, this is not enough to define $\Pi$ as a (possibly unbounded) linear operator acting in a dense domain of $L^2$. 

In fact, consider $\phi \in L^2 \cap L^1$ and suppose $\Pi \phi \in L^2$. Then $\Pi \phi$ defines the continuous (bounded) linear functional $\omega_{\Pi \phi}:(L^2)^* \to \mathbb{C}$ defined by the right hand side of \eqref{eq:gar}, holding for any $L^2$ function $\varphi$. However, this map is clearly not bounded, and therefore there is no
$\Pi \phi \in L^2$ corresponding to the functional. Thus we must either restrict further the domain of $\Pi$, or expand the possibilities for the range of $\Pi$, i.e., make precise the notion that $\Pi \phi \notin L^2$. Since $\varphi \in L^2$ is arbitrary the linear functional $\omega_{\Pi \phi}$ is only bounded when 
$\int \phi(x)dx = 0$ (on which $\Pi \phi = 0$), and such a collection of functions $\phi$ is not dense
in $L^2$. Another avenue is to not demand that there be an operator corresponding to 
$\Pi$: view $\omega_{\Pi \phi}$ as a bounded linear functional on some dense domain
of $L^2$, we will take the Schwartz functions, which is well defined as a functional
by the right hand side of $\eqref{eq:gar}$. Thus we can think of $\Pi \phi$ as living in $\mathcal{S}'$, i.e., $\Pi : \mathcal{S} \to \mathcal{S}'$ as an antilinear map. The resulting triple $(S,\mathcal{H},S')$ is of course an example of a rigged Hilbert space. The space $S'$ contains the familiar objects $\bra{x}, \bra{p}$, and $S^{\times}$ $\ket{x},\ket{p}$. Note that the space of tempered distributions is invariant under the appropriately extended Fourier transform. Note also that whereas evaluation at a point of $L^2$ functions is nonsense, it is perfectly sensible in $\mathcal{S}$. The elements of $S^{\times}$ can be viewed as generalized states. 

The extension $\tilde{U}$ of $U$ to $S'$ is given as $(\tilde{U}(x)\phi)(f) = \phi (U(-x)f)$, with those $\phi$ for which $\tilde{U}(x) \phi = \phi$ being in bijective correspondence with the generalized states which satisfy the invariance condition. The existence of such $\phi$ is readily established by noting that any polynomially bounded Riemann integrable (on $[-\ell, \ell]$ for all $\ell > 0$) function $\phi$ is a tempered distribution through the association $\phi(f) = \int \phi(x)f(x)dx$, and therefore we have $(\tilde{U}(x)\phi)(f) = \int \phi(y-x)f(y)dy$, and so setting $\phi (x-y)=K$ (as a function) satisfies the condition.

To summarise, we observe that $\Pi$ is ${\Pi}_{\rm phys}: S \to S'$ from the main text, which is exactly the rigging map in this example, defined by 
\begin{equation}
    ({\Pi}_{\rm phys} (f_1))(f_2):= \int dx \ip{f_1}{U(x)f_2},
\end{equation}
which converges because $f_{1,2} \in \mathcal{S}$. That $({\Pi}_{\rm phys} (f_1))$ is invariant under $\tilde{U}(x)$ follows directly from writing 
$(\tilde{U}(x)(f_1))(f_2) = \Pi(f_1)(U(-x)f_2)$ and exploiting the invariance of $dx$.

Note that the question of idempotence of ${\Pi}_{\rm phys}$ is not well posed. We also comment that, from a physical point of view, we should deal directly with $S^{\times}$, which is the actual home of the physical states, but all statements can be translated through the obvious bijection $\mathcal{S}' \to \mathcal{S}^{\times}$.  In the image of ${{\Pi}}_{\rm phys}$ we can introduce the inner product $({{\Pi}}_{\rm phys}(f_1),{{\Pi}}_{\rm phys}(f_2))= {{\Pi}}_{\rm phys}(f_1)(f_2)$. This is positive, i.e., $(\tilde{{\Pi}}_{\rm phys}(f),{{\Pi}}_{\rm phys}(f)) \geq 0$,
since, by Fubini's theorem, \eqref{eq:gar} can be written $\vert \int dx f(x) \vert ^2 \geq 0$. Up to the distinction between $\mathcal{S}'$ and $\mathcal{S}^{\times}$, this corresponds to the physical inner product. The closure of the image of ${{\Pi}}_{\rm phys}$ is then a Hilbert space corresponding (again, up to bijection) to the space of physical states. As a vector space the collection of invariant tempered distributions is easily seen to be one dimensional, which is not very interesting from a physical point of view. The two particle case, however, which follows the same constructions as above but with some superficial complications, gives rise to richer possibilities, pointing to the importance of
including quantum reference frames in the physical description.

\subsection{Incoherent Group Averaging/G-twirl}

The incoherent group averaging, or $G$-twirl given in \eqref{Gtwirl}, is a well defined mapping $B(\mathcal{H}) \to B(\mathcal{H})$ when $G$ is compact, since
$$ \left \vert \int_G \ip{ \varphi }{ U(g)AU(g)^{\dag} \varphi} dg \right \vert \leq \int_G |\ip{ \varphi }{ U(g)AU(g)^{\dag} \varphi}| dg \leq \int_G ||A|| ||\varphi||^2 dg < \infty,$$ by the boundedness of $A$ and the finiteness of Haar measure for compact groups. For unbounded (but self-adjoint) $A$ one can work directly with the spectral measure. If $G$ is non-compact however, the integral clearly does not converge in general (for instance, when $A$ commutes with all the $U(g)$.) Whenever $f^{\varphi}_A(g):= \langle \varphi |U(g) A U(g)^{\dag} \varphi \rangle$ for $f_A^{\varphi}:G \to \mathbb{R}$ has compact support for all $|\varphi \rangle$, the integral converges since Haar measure is a Radon measure and is therefore finite on compact sets. 

There are again various options regarding how to proceed. The first is, once $A$ is given, to hope to find some dense set of vectors in $\mathcal{H}$ on which \eqref{Gtwirl} does converge, define the resulting operator to be an unbounded operator with the same domain, and proceed in a similar way to above. However, it is not known for which $A$ such a procedure is feasible, and in the setting of this paper, the following strategy suffices. Informally write $A=\ket{\phi(g)}\bra{\phi(g)} \otimes f_S$; then $$\cg(A) = \int_G dg \ket{\phi(g'g)}\bra{\phi(g'g)} \otimes U_S(g')f_SU_S(g')^{\dagger}.$$ Now we use the fact that 
$X \mapsto \int_X \ket{\phi(g)}\bra{\phi(g)} dg \equiv E^G(X)$ is a (possibly not normalised, covariant) POVM, and therefore we can do integration with respect to $E^G$. Specifically, with, $g=e$, we note that the above integral can be written
$$ \int_G E(dg) \otimes  U_S(g)f_SU_S(g)^{\dagger}.$$
From here, the integral is defined on a dense subset of bounded operators on
$\mathcal{H}_S$, and on the whole space if $G$ is abelian - see Sec. 4.1 of \cite{loveridgeSymmetryReferenceFrames2018a}. The case $g \neq e$ causes no additional difficulties.

\end{document}